
\documentclass[11pt]{article}
 
\usepackage{fullpage,graphicx,amssymb,amsmath,url,xcolor}

\newtheorem{remark}{Remark}
\newtheorem{definition}{Definition}
\newtheorem{theorem}{Theorem}
\newtheorem{assumption}{Assumption}
\newtheorem{lemma}{Lemma}
\newtheorem{proposition}{Proposition}

\newenvironment{proof}{Proof. \ }{\ \hfill QED.}

\sloppy

\def\leftsup#1{{}^{#1}}
\def\calLC{\mathcal{LC}}
\def\lc{\mathrm{lc}}
\def\wc{\mathrm{wc}}
\def\HM{\mathrm{HM}}
\def\JS{\mathrm{JS}}
\def\TV{\mathrm{TV}}
\def\LC{\mathrm{LC}}
\def\BC{\mathrm{BC}}
\def\Bhat{\mathrm{Bhat}}
\def\dz{\mathrm{d}z}
\def\Im{\mathrm{Im}}
\def\Re{\mathrm{Re}}
\def\FR{\mathrm{FR}}
\def\SL{\mathrm{SL}}

\def\KC{\mathrm{KC}}
\def\bbC{\mathbb{C}}

\def\bbD{\mathbb{D}}

\def\calL{\mathcal{L}}
\def\calH{\mathcal{H}}

\def\arccosh{\mathrm{arccosh}}

\def\dl{\mathrm{d}l}
\def\ds{\mathrm{d}s}

\def\dtheta{\mathrm{d}\theta}
\def\JS{\mathrm{JS}}

\def\Cauchy{\mathrm{Cauchy}}

\def\calL{\mathcal{L}}

\def\calC{\mathcal{C}}

\def\LS{\mathrm{LS}}

\def\dmu{\mathrm{d}\mu}
\def\bbH{\mathbb{H}}
\def\dx{\mathrm{d}x}
\def\dy{\mathrm{d}y}
\def\KL{\mathrm{KL}}

\def\bbR{{\mathbb{R}}}

\def\calX{\mathcal{X}}
\def\calY{\mathcal{Y}}
\def\st{\ :\ }
\def\SO{\mathrm{SO}}

\def\mattwotwo#1#2#3#4{\left[\begin{array}{cc}#1 & #2\\ #3 & #4\end{array}\right]}

\def\cc{\mathrm{cc}}
\def\Cauchy{\mathrm{Cauchy}}


\begin{document}
\title{On $f$-divergences between Cauchy distributions\thanks{The second author was supported by JSPS KAKENHI 19K14549.}
}

\author{
Frank Nielsen\\
Sony Computer Science Laboratories Inc.\\
E-mail: {\tt Frank.Nielsen@acm.org}\\
\and 
Kazuki Okamura\\
Department of Mathematics, Faculty of Science, Shizuoka University\\
E-mail:  {\tt okamura.kazuki@shizuoka.ac.jp}
}
 
 \date{}

\sloppy
\maketitle              
\begin{abstract}
We prove that the $f$-divergences between univariate Cauchy distributions are all symmetric, and can be expressed as strictly increasing   scalar functions of the symmetric chi-squared divergence. 
We report the corresponding scalar functions for the total variation distance, the Kullback-Leibler divergence, 
 the squared Hellinger divergence, and the Jensen-Shannon divergence among others.
Next, we give conditions to expand the $f$-divergences as converging infinite series of higher-order power chi divergences, and illustrate the criterion for converging Taylor series expressing the $f$-divergences between Cauchy distributions.
We then show that the symmetric property of $f$-divergences holds for multivariate location-scale families with prescribed matrix scales provided that the standard density is even which includes the cases of the multivariate normal and Cauchy families. 
However, the $f$-divergences between multivariate Cauchy densities with different scale matrices are shown asymmetric. 
Finally, we present several metrizations of $f$-divergences between univariate Cauchy distributions 
and further report geometric embedding properties of the Kullback-Leibler divergence.
\end{abstract}
\noindent{Keywords}: Univariate and multivariate location-scale families; Cauchy distributions; Circular Cauchy distributions; Wrapped Cauchy distributions; Log-Cauchy distributions; Complex analysis; Maximal invariant; Information geometry; Divergence; Hilbert embeddings; Elliptic integrals.

\setcounter{tocdepth}{1}
\tableofcontents

\section{Introduction}

Let $\bbR$, $\bbR_{+}$  and $\bbR_{++}$ be  the sets of real numbers, non-negative real numbers, and  positive real numbers, respectively.
The probability density function of a Cauchy distribution (also called a Lorentzian distribution~\cite{filipovic2012two} in physics) is
$$
p_{l,s}(x) := \frac{1}{\pi s\left(1+\left(\frac{x-l}{s}\right)^2\right)}=\frac{s}{\pi (s^2+(x-l)^2)},
$$
where $l\in\bbR$ denotes the location parameter and $s \in \bbR_{++}$ the scale parameter of the Cauchy distribution, and $x\in\bbR$.
The space of Cauchy distributions form a location-scale family 
$$
\calC=\left\{p_{l,s}(x):=\frac{1}{s} p\left(\frac{x-l}{s}\right) \st (l,s)\in \bbR\times\bbR_{++}\right\},
$$ 
with standard density 
\begin{equation}\label{eq:stdcauchy}
p(x):=\frac{1}{\pi(1+x^2)}.
\end{equation}
To measure the dissimilarity between two continuous probability distributions $P$ and $Q$,
we consider the class of statistical $f$-divergences~\cite{Csiszar-1967,fdivchi-2013} between their corresponding probability densities functions $p(x)$ and $q(x)$ assumed to be strictly positive on $\bbR$:
$$
I_f(p:q) :=\int_{\bbR} p(x) f\left(\frac{q(x)}{p(x)}\right) \dx,
$$
where $f(u)$ is a convex function on $(0,\infty)$, strictly convex at $u=1$ (to ensure reflexivity $I_f(p:q)=0$ iff $p=q$), and satisfying $f(1)=0$ (to ensure positive-definiteness $I_f(p,q)\geq 0$ since by Jensen's inequality we have $I_f(p:q)\geq f(1)=0$).
The Kullback-Leibler divergence (KLD also called relative entropy) is an $f$-divergence obtained for $f_\KL(u)=-\log u$.
In general, the $f$-divergences are oriented dissimilarities: $I_f(p:q) \not=I_f(q:p)$ (eg., the KLD). 
The reverse $f$-divergence $I_f(q:p)$ can be obtained as a forward $f$-divergence for the conjugate function $f^*(u):=uf\left(\frac{1}{u}\right)$ (convex with $f^*(1)=0$): $I_f(q:p)=I_{f^*}(p:q)$. 
We have $I_f=I_g$ when there exists $\lambda\in\bbR$ such that $f(u)=g(u)+\lambda(u-1)$.
Thus an $f$-divergence is symmetric when there exists a real $\lambda$ such that $f(u)=uf\left(\frac{1}{u}\right)+\lambda(u-1)$, and $f$-divergences can always be symmetrized by taking the generator 
$s_f(u)=\frac{1}{2}(f(u)+uf\left(\frac{1}{u}\right))$.
In general, calculating the definite integrals of $f$-divergences is non trivial:
For example, the formula for the KLD between Cauchy densities was only recently obtained~\cite{KLCauchy-2019}:
\begin{eqnarray*}
D_\KL(p_{l_1,s_1}:p_{l_2,s_2}) &:=& I_{f_\KL}(p:q)=\int p_{l_1,s_1}(x)\log\frac{p_{l_1,s_1}(x)}{p_{l_2,s_2}(x)}\dx\\
&=& \log\left( \frac{\left(s_{1}+s_{2}\right)^{2}+\left(l_{1}-l_{2}\right)^{2}}{4 s_{1} s_{2}}\right).
\end{eqnarray*}
Let $\lambda=(\lambda_1=l,\lambda_2=s)$. 
Then we can rewrite the KLD formula as
\begin{equation}\label{eq:kldcauchy}
D_\KL(p_{\lambda_1}:p_{\lambda_2}) = \log \left( 1+ \frac{1}{2}\chi(\lambda_1,\lambda_2)\right),
\end{equation}
where
$$
\chi(\lambda,\lambda'):=\frac{(\lambda_1-\lambda_1')^2+(\lambda_2-\lambda_2')^2}{2\lambda_2\lambda_2'}=\frac{\|\lambda-\lambda'\|^2}{2\lambda_2\lambda_2'}.
$$ 
See \eqref{eq:chidef} for complex representations.

We observe that the KLD between Cauchy distributions is symmetric: $D_\KL(p_{l_1,s_1}:p_{l_2,s_2}) =D_\KL(p_{l_2,s_2}:p_{l_1,s_1})$.
Let 
$$
D_\chi^N(p:q):=\int \frac{(p(x)-q(x))^2}{q(x)}\dx \textup{ and } D_\chi^P(p:q):=\int \frac{(p(x)-q(x))^2}{p(x)}\dx
$$ 
denote the Neyman and Pearson chi-squared divergences between densities $p(x)$ and $q(x)$.
These divergences are $f$-divergences~\cite{fdivchi-2013} for the generators $f_\chi^P(u)=(u-1)^2$ 
and $f_\chi^N(u)=\frac{1}{u}(u-1)^2$, respectively.
The $\chi^2$-divergences between Cauchy densities are symmetric~\cite{CauchyVoronoi-2020}:
$$
D_\chi(p_{\lambda_1}:p_{\lambda_2}):= D_\chi^N(p_{\lambda_1}:p_{\lambda_2})=D_\chi^{P} (p_{\lambda_1}:p_{\lambda_2})=\chi(\lambda_1,\lambda_2), 
$$
hence the naming of the function $\chi(\cdot,\cdot)$. 
Notice that we have
$$
\chi(p_{\lambda_1}:p_{\lambda_2})=\rho(\lambda_1)\rho(\lambda_2)\, \frac{1}{2} D_E^2(\lambda_1,\lambda_2),
$$
where $D_E(\lambda_1, \lambda_2):=\sqrt{(\lambda_2-\lambda_1)^\top(\lambda_2-\lambda_1)}$ 
and $(\lambda_2-\lambda_1)^\top$ denotes the transpose of the vector $(\lambda_2-\lambda_1)$. 
That is, the function $\chi$ is a conformal half squared Euclidean divergence~\cite{nock2015conformal,tJ-2015} 
 with conformal factor $\rho(\lambda):=\frac{1}{\lambda_2}$.
When the Neyman and Pearson chi-squared divergences are not symmetric, we define the chi-squared symmetric divergence as 
$$
D_\chi(p:q)=D_\chi^N(p:q)+D_\chi^P(p:q)=\int \frac{(p(x)+q(x))(p(x)-q(x))^{2}}{p(x) q(x)} \dx.
$$

In this work, we first prove in \S\ref{sec:fdivsymmetic} that all $f$-divergences between univariate Cauchy distributions 
are symmetric (Theorem~\ref{thm:fdivsymmetric}) and can be expressed as a strictly increasing scalar function of the chi-squared divergence (Theorem~\ref{thm:fdivchisquared}). 
We illustrate this result by reporting the corresponding functions for the total variation distance, the Kullback-Leibler divergence, 
the LeCam-Vincze divergence, the squared Hellinger divergence, and the Jensen-Shannon divergence.
Further results for the $f$-divergences between the circular Cauchy, wrapped Cauchy and log-Cauchy distributions based on the invariance properties of 
the $f$-divergences are presented in~\S\ref{sec:logcauchy}.
We report conditions to expand the $f$-divergences as infinite series of higher-order chi divergences and instantiate the results for the Cauchy distributions in~\S\ref{sec:Taylor}.
In~\S\ref{sec:fdivasymmetric}, we then show that the symmetric property  of $f$-divergence holds for multivariate location-scale families including the normal and Cauchy families with prescribed matrix scales provided that the standard density is even, but does not hold for general case of different matrix scales.
We consider metrizations of the square roots of the KLD and the Bhattacharyya divergences in \S\ref{sec:metrization}. 
Finally in \S\ref{sec:properties} we investigate geometric properties of these metrics. 

In the appendix, we first recall the information geometry of the Cauchy family in \S\ref{app:sympstatmdf}, explain the relationship of Cauchy distributions with the M\"obius and Boole transformations in \S\ref{sec:Knight}, give alternative simpler proofs of the Kullback-Leibler divergence (\S\ref{sec:KLDCauchy}) and chi-squared divergence (\S\ref{sec:chisquared}) between Cauchy distributions, report a closed-form formula for the total variation distance between densities of a location-scale family in \S\ref{sec:tvlf}.
In \S\ref{sec:cei}, we also recall the complete elliptic integrals, which are used in the proof of the metrization of the square root of the Bhattacharyya divergence.
We discuss isometric embedding into a Hilbert space of the square root of the KLD in \S\ref{sec:FH}. 
We finally give a code snippet for calculating some converging truncated Taylor series of $f$-divergences between Cauchy distributions in \S\ref{sec:maximataylor}.

\section{Symmetric property of the $f$-divergences between univariate Cauchy distributions}\label{sec:fdivsymmetic}

Consider the location-scale non-abelian group $\LS(2)$ which can be represented as a matrix group~\cite{infproj-2021}.
A group element $g_{l,s}$ is represented by a matrix element $M_{l,s}=\mattwotwo{s}{l}{0}{1}$ for $(l,s)\in\bbR\times\bbR_{++}$.
The group operation $g_{l_{12},s_{12}}=g_{l_1,s_1}\times g_{l_2,s_2}$ corresponds to a matrix multiplication $M_{l_{12},s_{12}}=M_{l_1,s_1}\times M_{l_2,s_2}$ (with the group identity element $g_{0,1}$ being the matrix identity).
A location-scale family is defined by the action of the location-group on a standard density $p(x)= p_{0,1}(x)$.
That is, density $p_{l,s}(x)=g_{l,s}.p(x)$ where `.' denotes the action.
We have the following invariance for the $f$-divergences between any two densities of a location-scale family~\cite{infproj-2021} (including the Cauchy family):
$$
I_f(g.p_{l_1,s_1} : g.p_{l_2,s_2}) = I_f(p_{l_1,s_1} : p_{l_2,s_2}), \forall g\in \LS(2).
$$
Thus we have 
$$
I_f(p_{l_1,s_1} : p_{l_2,s_2}) =  I_f\left( p : p_{ \frac{l_2-l_1}{s_1} , \frac{s_2}{s_1}}\right)
=I_f\left(p_{ \frac{l_1-l_2}{s_2},\frac{s_1}{s_2}}:p\right).
$$ 
Therefore, we may always consider the calculation of the $f$-divergence between the standard density and another density of the location-scale family.
For example, we check that 
$$
\chi((l_1,s_1),(l_2,s_2))=\chi\left((0,1),\left(\frac{l_2-l_1}{s_1},\frac{s_2}{s_1}\right)\right)
$$ 
since
$\chi((0,1),(l,s))=\frac{(s-1)^2+l^2}{2s}$.
If we assume that the standard density $p$ is such that $E_p[X]= \int x p(x) \dx = 0$ and $E_p[X^2]=\int x^2 p(x) \dx =1$ (hence unit variance), then the random variable $Y= \mu+\sigma X$ has mean $E[Y]=\mu$ and standard deviation $\sigma(Y)=\sqrt{E[(Y-\mu)^2]}=\sigma$.
However, the expectation and variance of Cauchy distributions are not defined, hence we preferred $(l,s)$ parameterization over the $(\mu,\sigma^2)$ parameterization, where $l$ denotes the median and $s$ the probable error for the Cauchy location-scale family~\cite{McCullagh1993}.

\subsection{$f$-divergences between densities of a location family}

Let us first prove that $f$-divergences between densities of a location family with {\em even} standard density are symmetric:

\begin{proposition}\label{prop:fdivlocation}
Let $\calL_p=\{p(x-l)\ :\ l\in\bbR\}$ denote a location family with even standard density (i.e., $p(-x)=p(x)$) on the support $\calX=\bbR$.
Then all $f$-divergences between two densities $p_{l_1}$ and $p_{l_2}$ of $\calL$ are symmetric: 
$I_f(p_{l_1}:p_{l_2})=I_f(p_{l_2}:p_{l_1})$.
\end{proposition}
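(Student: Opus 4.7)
The plan is to reduce the claim to a simple reflection argument by exploiting the translation invariance of $f$-divergences and the evenness of the standard density $p$.

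First, I would use translation invariance to normalize one of the two parameters to zero. Concretely, applying the change of variable $y = x - l_1$ in the defining integral yields
\[
I_f(p_{l_1}:p_{l_2}) = \int_{\bbR} p(y)\, f\!\left(\frac{p(y-(l_2-l_1))}{p(y)}\right) \dy = I_f(p : p_a),
\]
where $a := l_2 - l_1$. By the same substitution with $y = x - l_2$, one obtains $I_f(p_{l_2}:p_{l_1}) = I_f(p : p_{-a})$. So the proposition reduces to proving that $I_f(p:p_a) = I_f(p:p_{-a})$ for every $a \in \bbR$.

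Next, I would perform the reflection substitution $u = -x$ in the integral defining $I_f(p:p_a)$. The Jacobian is unity (the minus sign is absorbed by reversing the limits on $\bbR$), so
\[
I_f(p:p_a) = \int_{\bbR} p(-u)\, f\!\left(\frac{p(-u - a)}{p(-u)}\right) \du.
\]
Now the hypothesis $p(-x) = p(x)$ gives $p(-u) = p(u)$ and $p(-u - a) = p(u + a)$, so the right-hand side equals $\int_{\bbR} p(u) f(p(u+a)/p(u))\, \du = I_f(p:p_{-a})$. Combining with the reduction step above yields $I_f(p_{l_1}:p_{l_2}) = I_f(p_{l_2}:p_{l_1})$.

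There is really no serious obstacle: the argument is just translation invariance plus one reflection, and the evenness hypothesis is used exactly once to turn the reflected integrand back into the required form. The only thing worth flagging is that in reducing $I_f(p_{l_1}:p_{l_2})$ to $I_f(p:p_a)$ I am implicitly using the translation case of the location-scale group invariance already stated in the paper; alternatively one can simply present the single chain of substitutions $y = -(x - l_1) - a$ (or equivalently $y = 2l_1 - x$ after noting $a = l_2 - l_1$), which carries out both steps at once and makes the evenness hypothesis the only nontrivial ingredient.
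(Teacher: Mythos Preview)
Your proof is correct and follows essentially the same idea as the paper's: the paper performs the single change of variable $y = l_1 + l_2 - x$ (a reflection about the midpoint of $l_1$ and $l_2$) and invokes evenness, which is exactly your translation-then-reflect argument composed into one step. You even note this equivalence in your closing remark, so there is nothing substantive separating the two approaches.
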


\begin{proof}
Consider the change of variable $l_1-x=y-l_2$ (so that $x-l_2=l_1-y$) with $\dx=-\dy$ and let us use the property that $p(z-l_1)=p(l_1-z)$ since $p$ is an even standard density.
We have:

\begin{eqnarray*}
I_f(p_{l_1}:p_{l_2}) &:=& \int_{-\infty}^{+\infty} p(x-l_1) f\left(\frac{p(x-l_2)}{p(x-l_1)}\right) \dx,\\
&=& \int_{+\infty}^{-\infty} p(l_1-x) f\left(\frac{p(x-l_2)}{p(l_1-x)}\right)  (-\dy),\\
&=&  \int_{-\infty}^{+\infty} p(y-l_2) f\left(\frac{p(x-l_2)}{p(y-l_2)}\right)  \dy,\\
&=& \int_{-\infty}^{+\infty} p(y-l_2) f\left(\frac{p(l_1-y)}{p(y-l_2)}\right)  \dy,\\
&=& \int_{-\infty}^{+\infty} p(y-l_2) f\left(\frac{p(y-l_1)}{p(y-l_2)}\right)  \dy,\\
&=:& I_f(p_{l_2}:p_{l_1}).
\end{eqnarray*}
\end{proof}

Thus $f$-divergences between location Cauchy densities are symmetric since $p(x)=p(-x)$ for the standard Cauchy density of Eq.~\ref{eq:stdcauchy}.

\subsection{$f$-divergences between Cauchy distributions are symmetric}

Let $\|\lambda\|=\sqrt{\lambda_1^2+\lambda_2^2}$ denote the Euclidean norm of a 2D vector $\lambda=(\lambda_1,\lambda_2)\in\bbR^2$.
We state the main theorem:
\begin{theorem}\label{thm:fdivsymmetric}
All $f$-divergences between univariate Cauchy distributions $p_\lambda$ and $p_{\lambda'}$ with $\lambda=(l,s)$ and $\lambda'=(l',s')$ are symmetric and can be expressed as 
$$
I_f(p_{\lambda} : p_{\lambda'}) = h_f\left(\chi(\lambda,\lambda')\right)$$ 
where
$$
\chi(\lambda,\lambda'):= \frac{\|\lambda-\lambda'\|^2}{2\lambda_2\lambda_2'}
$$ 
and $h_f:\bbR_{+}\rightarrow\bbR_+$ is a function (with $h_f(0)=0$). 
\end{theorem}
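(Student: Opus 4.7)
The plan is to identify the Cauchy parameter $\lambda=(l,s)$ with the complex number $z=l+is$ in the upper half plane $\bbH$ (so that the standard density corresponds to $z=i$), and to exploit the action of $\SL(2,\bbR)$ on $\bbH$ by M\"obius transformations. The crucial ingredient is that M\"obius transformations of the real line push Cauchy distributions forward to Cauchy distributions: if $\phi(x)=(ax+b)/(cx+d)$ with $ad-bc=1$ and $X\sim p_z$, then $\phi(X)\sim p_{\phi(z)}$, where the right-hand side uses the \emph{same} M\"obius formula evaluated at $z\in\bbH$. Since $f$-divergences are invariant under any invertible measurable reparameterization of the underlying random variable, this lifts to the invariance
$$ I_f(p_z:p_{z'}) = I_f(p_{\phi(z)}:p_{\phi(z')}), \qquad \phi\in\SL(2,\bbR). $$

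First I would verify the pushforward formula $\phi_* p_z = p_{\phi(z)}$; this reduces to the complex identities $\phi(x)-\phi(z) = (x-z)/((cx+d)(cz+d))$ and $\Im(\phi(z)) = \Im(z)/|cz+d|^2$, combined with the Jacobian $|\phi'(x)|=(cx+d)^{-2}$ in the change-of-variables formula. Next I would observe that the function in the statement of the theorem can be rewritten as
$$ \chi(\lambda,\lambda') \;=\; \frac{|z-z'|^2}{2\,\Im(z)\,\Im(z')} \;=\; \cosh d_{\mathrm{hyp}}(z,z') - 1, $$
where $d_{\mathrm{hyp}}$ is the Poincar\'e hyperbolic distance on $\bbH$. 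In particular, $\chi$ is $\SL(2,\bbR)$-invariant and a strictly increasing function of $d_{\mathrm{hyp}}$.

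To finish, I would invoke the classical fact that $\SL(2,\bbR)$ acts transitively on pairs $(z,z')\in\bbH\times\bbH$ of fixed hyperbolic distance. Thus any two pairs with the same $\chi$ value are related by some $\phi$, which forces $I_f(p_\lambda:p_{\lambda'})$ to depend on $(\lambda,\lambda')$ only through $\chi(\lambda,\lambda')$. Defining $h_f(t) := I_f(p_{(0,1)}:p_{(0,r(t))})$, where $r(t)\geq 1$ is the unique root of $(r-1)^2/(2r)=t$, yields a well-defined function with $h_f(0)=0$ such that $I_f(p_\lambda:p_{\lambda'})=h_f(\chi(\lambda,\lambda'))$. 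Symmetry $I_f(p_\lambda:p_{\lambda'})=I_f(p_{\lambda'}:p_\lambda)$ is then immediate from $\chi(\lambda,\lambda')=\chi(\lambda',\lambda)$; alternatively one can exhibit an explicit involution in $\SL(2,\bbR)$ (a half-turn of $\bbH$ about the hyperbolic midpoint of $z$ and $z'$) that swaps the two points.

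The main obstacle I anticipate is the careful verification of the pushforward $\phi_* p_z = p_{\phi(z)}$, because a generic M\"obius transformation has a pole on the real axis at $x=-d/c$; one must split $\bbR$ around this pole, track the sign of the Jacobian on each piece, and check that the two pieces reassemble into a genuine Cauchy density on $\bbR$ with the predicted complex parameter. Once this equivariance is established, the structural conclusions — invariance, scalar dependence on $\chi$, and symmetry — follow essentially automatically from the transitivity of the hyperbolic isometry group on equidistant pairs.
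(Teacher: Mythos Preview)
Your proposal is correct and follows essentially the same route as the paper: establish $\SL(2,\bbR)$-invariance of $I_f(p_z:p_{z'})$ via the M\"obius pushforward $\phi_*p_z=p_{\phi(z)}$ (the paper's Lemma~\ref{lemma:finv}), and then use that $\chi$ is a \emph{maximal invariant}---equivalently, that $\SL(2,\bbR)$ acts transitively on pairs at fixed $\chi$-value (the paper's Proposition~\ref{prop:McCullagh-mi}, proved in \S\ref{sec:maxinvariant})---to conclude that $I_f$ factors through $\chi$. The only cosmetic differences are that the paper phrases the second step via Eaton's maximal-invariant formalism rather than hyperbolic transitivity, and proves invariance by a direct change-of-variable computation rather than by invoking the pushforward property abstractly; your anticipated obstacle about the pole at $x=-d/c$ is handled implicitly in that computation.
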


The proof does not yield explicit closed-form formula for the $f$-divergences as it can be in general difficult to calculate in closed forms, and relies on
 McCullagh's complex parametrization~\cite{McCullagh1993} $p_\theta$ of the parameter of the Cauchy density $p_{l,s}$ with $\theta=l+is$:
$$
p_\theta(x)=\frac{|\Im(\theta)|}{\pi |x - \theta|^2},
$$ 
since $|x-(l+is)|^2=((x-l)+is)((x-l)-is)=(x-l)^2+s^2$.
The parameter space $\theta$ is the complex plane $\bbC$ where we identify $\bar\theta$ with $\theta$, and the Cauchy distributions are degenerated to Dirac distributions $\delta_{l}(x)$ whenever $s=0$.

We make use of the special linear group $\SL(2, \bbR)$ for $\theta$ the complex parameter:
$$
\SL(2, \bbR) := \left\{ \mattwotwo{a}{b}{c}{d} \st a,b,c,d\in\bbR, ad-bc=1  \right\}.
$$
 
Let $A.\theta:=\frac{a\theta+b}{c\theta+d}$ (real linear fractional transformations) be the action of $A=\mattwotwo{a}{b}{c}{d}\in\SL(2, \bbR)$.
McCullagh proved that if $X\sim\Cauchy(\theta)$ then $A.X\sim\Cauchy\left(A.\theta\right)$, where $\theta\in\bbC$ is identified with $\bar{\theta}$ (hence $\lambda(\theta)=(\Re(\theta),|\Im(\theta)|)$).
For example, if $X\sim\Cauchy(is)$ then $\frac{1}{X}\sim\Cauchy(\frac{1}{is})=\Cauchy(-\frac{i}{s})\equiv \Cauchy(\frac{1}{s})$.
Using the $\lambda=(l,s)$ parameterization, we have
\begin{eqnarray*}
l_A &=&\frac{\left(a l+b\right)\left(cl+d\right)+a c s^{2}}{\left(cl+d\right)^{2}+c^{2} s^{2}},\\
s_A &=& \left|\frac{(a d-b c) s}{\left(c l+d\right)^{2}+c^{2} s^{2}}\right|.
\end{eqnarray*}

We can also define an action of $\SL(2, \mathbb{R})$ to the real line $\bbR$ by 
$x \mapsto \frac{ax+b}{cx+d}, \ \ \ x \in \bbR$,
where we interpret $-\frac{d}{c} \mapsto \frac{a}{c}$
if $c \not= 0$. 
We remark that $d \not= 0$ if $c = 0$. 
This map is bijective between $\bbR$. 
We have the following invariance:

\begin{lemma}[Invariance of Cauchy $f$-divergence under $\SL(2, \bbR)$]\label{lemma:finv}
For any $A\in\SL(2, \bbR)$ and $\theta_1,\theta_2\in\bbH$, we have
$$
I_f(p_{A.\theta_1}:p_{A.\theta_2})=I_f(p_{\theta_1}:p_{\theta_2}).
$$
\end{lemma}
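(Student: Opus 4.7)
The plan is to reduce this to the general invariance property of $f$-divergences under measurable bijections of the sample space, and then appeal to McCullagh's pushforward formula recalled in the paragraph preceding the lemma. Concretely, write $A=\mattwotwo{a}{b}{c}{d}$ and define the map $\phi_A: \bbR\to\bbR$ by $\phi_A(x)=\frac{ax+b}{cx+d}$, with the convention $\phi_A(-d/c)=a/c$ and $\phi_A(\infty)=a/c$ when $c\neq 0$. Since $ad-bc=1$, this map is a bijection from $\bbR$ to itself (up to a single point that is a null set for every $p_\theta$), and off that point it is differentiable with Jacobian
$$
\phi_A'(x) \;=\; \frac{ad-bc}{(cx+d)^2} \;=\; \frac{1}{(cx+d)^2}\;>\;0.
$$

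The first step is to record the pushforward identity in density form: McCullagh's statement that $\phi_A(X)\sim p_{A.\theta}$ whenever $X\sim p_\theta$ is equivalent to the pointwise relation
$$
p_{A.\theta}(\phi_A(x))\,|\phi_A'(x)| \;=\; p_\theta(x), \qquad x\in\bbR,
$$
which one can also verify directly from the closed form $p_\theta(x)=|\Im(\theta)|/(\pi|x-\theta|^2)$ after a short algebraic manipulation. The crucial consequence is that the Jacobian cancels in the likelihood ratio:
$$
\frac{p_{A.\theta_2}(\phi_A(x))}{p_{A.\theta_1}(\phi_A(x))} \;=\; \frac{p_{\theta_2}(x)}{p_{\theta_1}(x)}.
$$

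The second step is the change of variable $y=\phi_A(x)$ in the defining integral for $I_f(p_{A.\theta_1}:p_{A.\theta_2})$. On each monotonicity interval of $\phi_A$ (either all of $\bbR$ when $c=0$, or the two open intervals $(-\infty,-d/c)$ and $(-d/c,\infty)$ when $c\neq 0$), substituting $y=\phi_A(x)$ and using the pointwise relation above gives
$$
\int p_{A.\theta_1}(y)\,f\!\left(\frac{p_{A.\theta_2}(y)}{p_{A.\theta_1}(y)}\right)\dy
\;=\; \int p_{\theta_1}(x)\,f\!\left(\frac{p_{\theta_2}(x)}{p_{\theta_1}(x)}\right)\dx,
$$
where the two pieces in the $c\neq 0$ case glue back together because $\phi_A$ maps their union bijectively onto $\bbR$ minus a null set. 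Summing (or rather, combining) the pieces yields the claim.

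The only delicate point I expect is the bookkeeping around the pole $x=-d/c$ when $c\neq 0$: one must check that the image of each monotonicity interval under $\phi_A$ covers $\bbR$ (minus the single point $a/c$) exactly once, so that the substitution formula on the two half-lines reassembles to an integral over all of $\bbR$ with no double-counting and no missing mass. This is a standard verification from the signs of the one-sided limits of $\phi_A$ at $-d/c$ and at $\pm\infty$, and carries no real content beyond case analysis. Everything else is routine: the core of the argument is the Jacobian cancellation in the likelihood ratio, which is exactly the content of McCullagh's invariance of the Cauchy family under $\SL(2,\bbR)$.
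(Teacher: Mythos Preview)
Your proposal is correct and takes essentially the same approach as the paper: both proofs perform the change of variable $y=\phi_A(x)=A.x$ in the defining integral and use that the Cauchy density transforms covariantly under this substitution. The only cosmetic difference is that the paper verifies the identity $p_{A.\theta}(A.y)\,|\phi_A'(y)|=p_\theta(y)$ by explicitly computing $\Im(A.\theta)=\Im(\theta)/|c\theta+d|^2$ and $|A.y-A.\theta|^2=|y-\theta|^2/(|cy+d|^2|c\theta+d|^2)$, whereas you obtain the same identity by invoking McCullagh's pushforward statement quoted just before the lemma; the Jacobian cancellation in the likelihood ratio and the handling of the pole at $-d/c$ are identical in spirit.
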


\begin{proof} 
We prove the invariance by the change of variable in the integral. 
Let $D(\theta_1: \theta_2) := I_f \left(p_{\theta_1} : p_{\theta_2} \right)$.
We have 
$$
D(A.\theta_1: A.\theta_2) = \int_{\bbR} 
\frac{\Im(A.\theta_1)}{\pi |x- A.\theta_1|^2} f\left( \frac{\Im(A.\theta_2)  |x- A.\theta_1|^2}{\Im(A.\theta_1)  |x- A.\theta_2|^2}\right)  \dx. 
$$

Since $A \in \SL(2, \mathbb{R})$, 
we have 
$$
\Im(A.\theta_i) = \frac{\Im(\theta_i)}{|c\theta_i + d|^2}, \quad i\in\{1,2\}.
$$

If $x = A.y$ then 
$\dx = \frac{\dy}{|cy+d|^2}$,
and 
$$
\left|A.y- A.\theta_i \right|^2 = \frac{|y - \theta_i|^2}{|cy + d|^2\ |c\theta_i + d|^2}, \quad i\in\{1,2\}
.$$
Hence we get:
\begin{eqnarray*}
\int_{\bbR} f\left( \frac{\Im(A.\theta_2)  |x- A.\theta_1|^2}{\Im(A.\theta_1)  
|x- A.\theta_2|^2}\right) \frac{\Im(A.\theta_2)}{\pi |x- A.\theta_1|^2} \dx &=&
 \int_{\bbR} f\left( \frac{\Im(\theta_2)  |y- \theta_1|^2}{\Im(\theta_1)  |y- \theta_2|^2}\right) \frac{\Im(\theta_2)}{\pi |y- \theta_2|^2} \dy,\\
&=& I_f \left(p_{\theta_1} : p_{\theta_2} \right).
\end{eqnarray*}
\end{proof}

Let us notice that the Cauchy family is the only univariate location-scale family that is also closed by inversion~\cite{KnightCauchy-1976}:
That is, if $X\sim\Cauchy(l,s)$ then $\frac{1}{X}\sim\Cauchy(l',s')$.
 Therefore our results are specific to the Cauchy family and not to any other location-scale family.
However the characterization by \cite{KnightCauchy-1976} yields some applications. 
See Appendix \ref{sec:Knight} for details.

We now prove Theorem~\ref{thm:fdivsymmetric} using the notion of maximal invariants of Eaton~\cite{Eaton-1989} (Chapter 2) that will be discussed in \S\ref{sec:maxinvariant}.

Let us rewrite the function $\chi$ with complex arguments as:
\begin{equation}\label{eq:chidef} 
\chi(z,w) := \frac{|z-w|^2}{2\,\Im(z) \Im(w)},\quad z,w\in\bbC. 
\end{equation}

\begin{proposition}[McCullagh~\cite{McCullagh1993}] \label{prop:McCullagh-mi}
The function $\chi$ defined in Eq. \ref{eq:chidef} is a {\em maximal invariant} for the action of the special linear group $\SL(2, \bbR)$ 
to $\bbH \times \bbH$ defined by 
$$
A.(z,w) := \left(\frac{az+b}{cz+d},  \frac{aw+b}{cw+d}\right), \quad
 A = \mattwotwo{a}{b}{c}{d} \in \SL(2, \bbR), \ z, w \in \bbH.
$$ 
That is, we have
$$
\chi(A.z, A.w) = \chi(z,w), \quad A \in \SL(2, \bbR), \ z, w  \in \bbH,
$$
and it holds that for every $z, w, z^{\prime}, w^{\prime} \in \mathbb{H}$ satisfying that $\chi(z^{\prime}, w^{\prime}) = \chi(z,w)$, there exists $A \in \SL(2, \bbR)$ such that $(A.z, A.w) = (z^{\prime}, w^{\prime})$. 
\end{proposition}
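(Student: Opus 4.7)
The plan is to prove two claims: (i) invariance, $\chi(A.z,A.w)=\chi(z,w)$ for all $A\in\SL(2,\bbR)$, and (ii) maximality, that any $(z',w')\in\bbH\times\bbH$ with $\chi(z',w')=\chi(z,w)$ lies in the $\SL(2,\bbR)$-orbit of $(z,w)$ under the diagonal action.

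For (i), the strategy is to reuse the two identities already derived inside the proof of Lemma~\ref{lemma:finv}: for $A=\mattwotwo{a}{b}{c}{d}\in\SL(2,\bbR)$,
$$\Im(A.z)=\frac{\Im(z)}{|cz+d|^2},\qquad |A.z-A.w|^2=\frac{|z-w|^2}{|cz+d|^2\,|cw+d|^2}.$$
Substituting these into the defining formula for $\chi(A.z,A.w)$ makes the factors $|cz+d|^2$ and $|cw+d|^2$ cancel, leaving $\chi(z,w)$.

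For (ii), the plan is a standard two-step transitivity argument. First, $\SL(2,\bbR)$ acts transitively on $\bbH$: for any $u=x+iy\in\bbH$, the matrix $T_u:=\mattwotwo{\sqrt{y}}{x/\sqrt{y}}{0}{1/\sqrt{y}}$ sends $i$ to $u$. Applying $T_z^{-1}$ to $(z,w)$ and $T_{z'}^{-1}$ to $(z',w')$, invariance (i) reduces the problem to the case $z=z'=i$: I must show that $\chi(i,w)=\chi(i,w')$ forces $w'=A.w$ for some $A$ in the stabilizer of $i$. Solving $A.i=i$ shows this stabilizer equals $\SO(2)=\left\{\mattwotwo{\cos\theta}{-\sin\theta}{\sin\theta}{\cos\theta}:\theta\in\bbR\right\}$. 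A direct computation gives $\chi(i,u)=|u-i|^2/(2\,\Im u)$, whose level set at value $c\ge 0$ in $\bbH$ is the Euclidean circle $x^2+(y-(c+1))^2=(c+1)^2-1$.

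It remains to verify that $\SO(2)$ acts transitively on each such level set, which I would carry out via the Cayley transform $\Phi:\bbH\to\bbD$, $u\mapsto(u-i)/(u+i)$. This sends $i\mapsto 0$ and conjugates the $\SO(2)$-action on $\bbH$ into Euclidean rotations about $0$ on $\bbD$; the level sets of $\chi(i,\cdot)$ transport to concentric Euclidean circles about $0$, on which these rotations evidently act transitively, and pulling back yields the required $A\in\SO(2)$. The main obstacle is precisely this last step: invariance (i) is a one-line substitution, but maximality requires chaining the two transitivity facts --- of $\SL(2,\bbR)$ on $\bbH$, and of $\SO(2)$ on each $\chi$-level set around $i$ --- whose cleanest exposition uses the M\"obius geometry of the upper half-plane together with the disk model.
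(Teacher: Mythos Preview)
Your proposal is correct and follows essentially the same approach as the paper: both verify invariance directly from the identities $\Im(A.z)=\Im(z)/|cz+d|^2$ and $|A.z-A.w|^2=|z-w|^2/(|cz+d|^2|cw+d|^2)$, and for maximality both reduce via transitivity of $\SL(2,\bbR)$ on $\bbH$ to the action of the stabilizer $\SO(2)$ of $i$ on level sets of $\chi(i,\cdot)$. The only cosmetic difference is that the paper shows $\SO(2)$ moves any point to the form $\lambda i$ (with $\lambda\ge 1$) by an explicit trigonometric computation, whereas you argue via the Cayley transform to the disk model where this transitivity is visually immediate.
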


By Lemma \ref{lemma:finv} and Theorem 2.3 of \cite{Eaton-1989}, 
there exists a unique function $h_f: [0, \infty)  \to [0, \infty)$ such that  $h_f(\chi(z,w))=D(z,w)$ for all $z,w\in\bbH$.

\begin{theorem}\label{thm:fdivchisquared}
The $f$-divergence between two univariate Cauchy densities is symmetric and expressed as a function of the chi-squared divergence:
\begin{equation}\label{rep}
I_f(p_{\theta_1}:p_{\theta_2}) = I_f(p_{\theta_2}:p_{\theta_1}) = h_f(\chi(\theta_1,\theta_2)), \quad \theta_1,\theta_2 \in \bbH.
\end{equation} 
\end{theorem}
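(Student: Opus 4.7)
The plan is to package the two ingredients already assembled immediately before the statement, namely the $\SL(2,\bbR)$-invariance established in Lemma~\ref{lemma:finv} and the maximal-invariant property of $\chi$ recorded in Proposition~\ref{prop:McCullagh-mi}, into the factorization and the symmetry claim. Define $D(z,w) := I_f(p_z:p_w)$ on $\bbH\times\bbH$; Lemma~\ref{lemma:finv} says $D(A.z,A.w)=D(z,w)$ for every $A\in\SL(2,\bbR)$, so $D$ is a $\SL(2,\bbR)$-invariant function on $\bbH\times\bbH$ under the diagonal action.

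The factorization step is the heart of the argument and is essentially a formal consequence of maximal invariance. Concretely, suppose $(z_1,w_1),(z_2,w_2)\in\bbH\times\bbH$ satisfy $\chi(z_1,w_1)=\chi(z_2,w_2)$. By the ``maximal'' half of Proposition~\ref{prop:McCullagh-mi} there exists $A\in\SL(2,\bbR)$ with $(A.z_1,A.w_1)=(z_2,w_2)$, and then the invariance of $D$ yields $D(z_1,w_1)=D(A.z_1,A.w_1)=D(z_2,w_2)$. Thus $D$ is constant on the level sets of $\chi$, so there is a well-defined function $h_f\colon \chi(\bbH\times\bbH)\to\bbR_+$ with $D(z,w)=h_f(\chi(z,w))$. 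Since $\chi(\bbH\times\bbH)=[0,\infty)$ (easy: fix $w=i$ and let $z$ range over $\bbH$), the domain of $h_f$ is all of $[0,\infty)$. Setting $h_f(0)=0$ is consistent: $\chi(z,w)=0$ forces $z=w$, hence $D(z,w)=I_f(p_z:p_z)=0$. This is precisely the content of the ``unique $h_f$'' quotation from Theorem 2.3 of~\cite{Eaton-1989}, and I would simply reprove it inline as above to keep the argument self-contained.

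Symmetry is then immediate: the defining formula \eqref{eq:chidef} for $\chi$ is manifestly symmetric in its two arguments, so
\[
I_f(p_{\theta_1}:p_{\theta_2})=h_f(\chi(\theta_1,\theta_2))=h_f(\chi(\theta_2,\theta_1))=I_f(p_{\theta_2}:p_{\theta_1}).
\]
(Equivalently, one can observe that for any $z,w\in\bbH$ there exists $A\in\SL(2,\bbR)$ swapping $(z,w)$ and $(w,z)$, again by Proposition~\ref{prop:McCullagh-mi} applied to the equal values $\chi(z,w)=\chi(w,z)$, and then invoke Lemma~\ref{lemma:finv}.)

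I do not anticipate a real obstacle: all the substantive content sits in Lemma~\ref{lemma:finv} and Proposition~\ref{prop:McCullagh-mi}. The only point requiring a little care is making sure the surjectivity of $\chi$ onto $[0,\infty)$ is verified so that $h_f$ is defined on the stated domain; this is a one-line computation with $w=i$, $z=it$ for $t>0$ giving $\chi(it,i)=(t-1)^2/(2t)$, which already sweeps out all of $[0,\infty)$.
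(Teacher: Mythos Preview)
Your proposal is correct and follows essentially the same approach as the paper: combine the $\SL(2,\bbR)$-invariance of $I_f$ (Lemma~\ref{lemma:finv}) with the maximal-invariant property of $\chi$ (Proposition~\ref{prop:McCullagh-mi}) to factor $D$ through $\chi$, and then read off symmetry from the symmetry of $\chi$. The only cosmetic difference is that the paper cites Theorem~2.3 of \cite{Eaton-1989} for the factorization step, whereas you unpack that short argument inline and additionally verify the surjectivity of $\chi$ onto $[0,\infty)$; this makes your version self-contained but does not alter the mathematical content.
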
 
Therefore we have proven that  the $f$-divergences between univariate Cauchy densities are all symmetric.
Note that we have $h_f=h_{f^*}$.
In general, the $f$-divergences between two Cauchy mixtures $m(x)=\sum_{i=1}^{k} w_i p_{l_i,s_i}(x)$ and 
$m'(x)=\sum_{i=1}^{k'} w_i' p_{l_i',s_i'}(x)$ are asymmetric (i.e., $I_f(m:m')\not=I_f(m':m)$) except when $k=k'=1$.

Similarly, 
we proved in Proposition~\ref{prop:fdivlocation} that all $f$-divergences between two densities of a location family with even standard density are symmetric. 
These $f$-divergences $I_f[p_{l_1}:p_{l_2}]$ can be expressed as a function $k_f$ of the the absolute value $|l_1-l_2|$:
$$
I_f(p_{l_1}:p_{l_2})=I_f(p_{l_1-l_2}:p)=I_f(p:p_{l_2-l_1})=k_f(|l_1-l_2|).
$$
Since the Cauchy standard density is even, we have for a prescribed scale subfamily
 $I_f(p_{l_1,s}:p_{l_2,s})=h_f(\chi((l_1,s),(l_2,s)))=k_{f,s}(|l_1-l_2|)$.
By the definition of $\chi$, 
it follows that we have $k_{f,s}(u)=h_f\left(\frac{u}{2s^2}\right)$.

\begin{remark}\label{rmk:connection}
It has been shown that Amari's dual $\pm\alpha$-connections~\cite{IG-2016} $\leftsup{\alpha}\Gamma$ all coincide with the Levi-Civita metric connection~\cite{locationscaleMfdMitchell-1988}. That is, the $\alpha$-geometry coincides with the Fisher-Rao geometry for the Cauchy family~\cite{CauchyVoronoi-2020}, for all $\alpha\in\bbR$ (see Appendix~\ref{sec:igls}).
Moreover, Eguchi~\cite{Eguchi1983, Eguchi1992} showed how to build an information-geometric dualistic structure
 $(M,\leftsup{D}g,\leftsup{D}\nabla,\leftsup{D}\nabla^*)$  from any arbitrary smooth divergence $D$, consisting of a pair of torsion-free affine connections $(\leftsup{D}\nabla,\leftsup{D}\nabla^*)$ coupled to the metric tensor $\leftsup{D}g$ so that we have 
$\leftsup{D}\nabla^*=\leftsup{D^*}\nabla$ where $D^*(p:q):=D(q,p)$ denotes the reverse divergence.
When the divergence $D$ is a $f$-divergence, it can be shown that the induced connections are $\alpha$-connections,
$\leftsup{D}\nabla=\leftsup{\alpha}\nabla$ and $\leftsup{D}\nabla^*=\leftsup{-\alpha}\nabla$ with $\alpha=3+2\frac{f'''(1)}{f''(1)}$, and the metric tensor $\leftsup{D}g=\frac{1}{f''(1)}\leftsup{F}g$ is proportional to the Fisher information metric tensor $\leftsup{F}g$~\cite{EIG-2020}.
(Notice that Amari defined {\em standard $f$-divergences} in~\cite{IG-2016} by fixing their their scalings so that $f''(1)=1$.)
Since $f$-divergences are symmetric for Cauchy distributions, we have $\leftsup{I_f}\nabla=\leftsup{I_{f^*}}\nabla=\leftsup{I_f}\nabla^*$.
\end{remark}


\begin{remark}
Of course, not all statistical divergences between Cauchy densities are symmetric.
For example, consider the statistical $q$-divergence~\cite{IG-2016} for a scalar $q\in [1,3)$:
$$
D_q (p:r):=\frac{1}{(1-q)Z_q(p)}\left(1-\int p^q(x) r^{1-q}(x) \dmu(x)\right),
$$
where $Z_q(p):= \int p^q(x) \dmu(x)$.
Then the $2$-divergence between two Cauchy densities $p_{\lambda_1}$ and $p_{\lambda_2}$ (with $\lambda_i=(l_i,s_i)$) 
is available in closed-form (as a corresponding Bregman divergence~\cite{IG-2016}):
$$
D_2(p_{\lambda_1}:p_{\lambda_2})=\frac{\pi}{s_2} \|\lambda_1-\lambda_2\|^2.
$$
Thus $D_2(p_{\lambda_1}:p_{\lambda_2})\not=D_2(p_{\lambda_2}:p_{\lambda_1})$ when $s_1\not=s_2$.
\end{remark}

Note that since $I_f(p_{\theta_2}:p_{\theta_1}) = h_f(\chi(\theta_1,\theta_1))$, Lemma~\ref{lemma:finv} can {\it a posteriori} be checked for 
the chi-squared divergence: For any $A\in\SL(2, \bbR)$ and $\theta\in\bbH$, we have
$$
\chi(p_{A.\theta_1}:p_{A.\theta_2})=\chi(p_{\theta_1}:p_{\theta_2}),
$$
and therefore for any $f$-divergence, since we have 
$I_f(p_{A.\theta_1}:p_{A.\theta_2})=I_f(p_{\theta_1}:p_{\theta_2})$ since 
$$
I_f(p_{A.\theta_2}:p_{A.\theta_1}) = h_f(\chi(A.\theta_1,A.\theta_1))= h_f(\chi(\theta_1,\theta_1)) 
=I_f(p_{\theta_2}:p_{\theta_1}).
$$
To prove that $\chi(p_{A.\theta_1}:p_{A.\theta_2})=\chi(p_{\theta_1}:p_{\theta_2})$, let us first recall that
$\Im(A.\theta)=\frac{\Im(\theta)}{|c\theta+d|^2}$
and
$|A.\theta_1-A.\theta_2|^2=\frac{|\theta_1-\theta_2|^2}{|c\theta_1+d|^2 \ |c\theta_2+d|^2}$.
Thus we have
\begin{eqnarray*}
\chi(A.\theta_1,A.\theta_2) &=& \frac{ |A.\theta_1-A.\theta_2|^2 }{ 2\,\Im(A.\theta_1)\Im(A.\theta_2) },\\
&=&
\frac{ |\theta_1-\theta_2|^2 |c\theta_1+d|^2\ |c\theta_2+d|^2}{ |c\theta_1+d|^2 \ |c\theta_2+d|^2\ 2\,\Im(\theta_1)\Im(\theta_2)},\\
&=& \frac{|\theta_1-\theta_2|^2}{ 2\, \Im(\theta_1)\Im(\theta_2)}=\chi(\theta_1,\theta_2).
\end{eqnarray*}

Alternatively, we may also define a bivariate function $g_f(l,s)$ so that using the action of the location-scale group, we have:
$$
h_f(\chi(\theta_1,\theta_2)) = g_f\left(\frac{l_1-l_2}{s_2},\frac{s_1}{s_2}\right),
$$
where $\theta_1=l_1+is_1$ and $\theta_2=l_2+is_2$.
When the function $h_f$ is not explicitly known, we may estimate the $f$-divergences using Monte Carlo importance samplings~\cite{infproj-2021}.

\subsection{Strictly increasing function $h_f$}

We have proven that  
$I_f(p_{\theta_1}:p_{\theta_2}) = I_f(p_{\theta_2}:p_{\theta_1}) = h_f(\chi(\theta_1,\theta_2))$.
Let us prove now that the function $h_f$ is is a strictly increasing function.

\begin{theorem}
Let $f : (0, \infty) \to \mathbb R$ be a convex function such that $f(1) = 0$ and $f \in C^1 ((0,1)) \cap C^1((1, \infty))$ and $f^{\prime}(x) < f^{\prime}(y)$ for every $x < 1 < y$. 
Let $D_f (\lambda : \lambda^{\prime})$ be the $f$-divergence between $p_{\lambda}$ and $p_{\lambda^{\prime}}$, specifically, 
\[ D_f (\lambda : \lambda^{\prime}) = \int_{\mathbb R} p_{\lambda}(x) f\left( \frac{p_{\lambda^{\prime}}(x)}{p_{\lambda}(x)} \right)dx. \] 
Let $\chi$ be McCullagh's maximal invariant. 
Let $h_f  : (0, \infty) \to [0, \infty)$ be the function such that 
\[ h_f (\chi(\lambda, \lambda^{\prime})) = D_f (\lambda : \lambda^{\prime}), \ \lambda, \lambda^{\prime} \in \mathbb{H}. \]
Then, $h_f$ is a strictly increasing function. 
\end{theorem}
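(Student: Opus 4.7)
The plan is to reduce the claim to strict monotonicity along a canonical one-parameter family, obtain non-strict monotonicity via the data-processing inequality (DPI) applied to a Cauchy-convolution kernel, and upgrade to strict monotonicity via strictness in the Jensen step inside the DPI. By Lemma~\ref{lemma:finv} and the maximal-invariant property of $\chi$ (Proposition~\ref{prop:McCullagh-mi}), $D_f(\lambda:\lambda')$ depends on $(\lambda,\lambda')$ only through $\chi(\lambda,\lambda')$. Using the scaling element of $\SL(2,\bbR)$, I would take the canonical representatives $\lambda=(0,1)$ and $\lambda'=(0,t)$ with $t>1$, for which $\chi((0,1),(0,t))=(t-1)^2/(2t)$ is a strictly increasing bijection of $(1,\infty)$ onto $(0,\infty)$. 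It thus suffices to show that $g(t):=D_f(p_i:p_{ti})$ is strictly increasing on $(1,\infty)$.

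Given $1<t_1<t_2$, I would set $\epsilon:=(t_2-t_1)/(t_1-1)>0$ and use that independent Cauchy variables have sums whose scales add. The Markov kernel $K_\epsilon$ of convolution with a centered Cauchy density of scale $\epsilon$ then satisfies $K_\epsilon p_{si}=p_{(s+\epsilon)i}$ for every $s>0$. The DPI yields
\[
D_f(p_{(1+\epsilon)i}:p_{(t_2+\epsilon)i})\le D_f(p_i:p_{t_2 i})=g(t_2),
\]
and the $\SL(2,\bbR)$-invariance (Lemma~\ref{lemma:finv}) applied with the scaling $\diag(1/\sqrt{1+\epsilon},\sqrt{1+\epsilon})$ rewrites the left-hand side as $g((t_2+\epsilon)/(1+\epsilon))=g(t_1)$ by the choice of $\epsilon$. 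This yields the non-strict monotonicity $g(t_1)\le g(t_2)$.

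The main obstacle will be upgrading this to a strict inequality, which requires strictness in the Jensen step of the DPI. I would trace the standard DPI proof: at each fixed $y$, Jensen's inequality is applied to the conditional distribution of the likelihood ratio $L(X):=p_{t_2 i}(X)/p_i(X)=t_2(1+X^2)/(t_2^2+X^2)$ given $Y:=X+W=y$, where $X\sim p_i$ and $W\sim\Cauchy(0,\epsilon)$ are independent. The conditional density of $X$ given $Y=y$ is proportional to $p_i(x)\cdot p_{0,\epsilon}(y-x)$, hence strictly positive on all of $\bbR$, while $L$ is continuous and non-constant with image $(1/t_2,t_2]$. Consequently, the support of $L(X)$ given $Y=y$ equals $[1/t_2,t_2]$, an interval containing $1$ in its interior.

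The final step is to argue that $f$ is not affine on $[1/t_2,t_2]$. The hypothesis $f'(x)<f'(y)$ for every $x<1<y$, combined with convexity (monotonicity of $f'$), rules out affineness on any interval containing $1$: were $f$ affine on such an interval, $f'$ would be constant there, contradicting the strict inequality at any pair $x<1<y$ inside the interval. With $f$ not affine on the convex hull of the support, Jensen's inequality is strict at every $y$; integrating against the strictly positive density $(K_\epsilon p_i)(y)$ gives strict DPI and hence $g(t_1)<g(t_2)$. The delicate part is this non-affineness observation together with the identification of the conditional support of $L(X)$; the rest is a routine application of the invariance lemma and the standard DPI.
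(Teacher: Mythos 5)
Your proof is correct, but it takes a genuinely different route from the paper's. The paper reduces to the location family $F_1(u)=D_f(p_i:p_{u+i})$, justifies differentiation under the integral sign with a dominated-convergence lemma (built on a local Lipschitz bound for the convex $f$), and then proves $F_1'(u)>0$ by folding the integral over $x>0$ and $x<0$ and comparing $f'\bigl(\tfrac{(x+u)^2+1}{x^2+1}\bigr)$ with $f'\bigl(\tfrac{(x-u)^2+1}{x^2+1}\bigr)$ via monotonicity of $f'$, with strictness coming from the hypothesis $f'(x)<f'(y)$ for $x<1<y$ near $x=u$. You instead reduce to the scale family $g(t)=D_f(p_i:p_{ti})$ and get non-strict monotonicity for free from the data-processing inequality applied to convolution with $\Cauchy(0,\epsilon)$, using the stability $p_{si}*p_{0,\epsilon}=p_{(s+\epsilon)i}$ and the $\SL(2,\bbR)$ scale invariance to identify $D_f(p_{(1+\epsilon)i}:p_{(t_2+\epsilon)i})=g(t_1)$ with $\epsilon=(t_2-t_1)/(t_1-1)$; strictness then follows from strictness of the conditional Jensen step, since the posterior of $X$ given $Y=y$ has full support on $\bbR$, the support of the conditional law of $L(X)$ is the interval $[1/t_2,t_2]$ containing $1$ in its interior, and the hypothesis on $f'$ forbids $f$ from being affine there. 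Your approach entirely avoids the differentiation-under-the-integral machinery and exploits the probabilistic semigroup structure of the Cauchy family, while the paper's calculation has the side benefit of an explicit integral formula for the derivative; both use the hypothesis on $f'$ at exactly one place, to rule out local affineness across $u=1$. One cosmetic slip: the image of $L(x)=t_2(1+x^2)/(t_2^2+x^2)$ is $[1/t_2,t_2)$ (minimum $1/t_2$ attained at $x=0$, supremum $t_2$ not attained), not $(1/t_2,t_2]$; this is harmless since only the closure $[1/t_2,t_2]$ enters the argument. It would also be worth recording explicitly that both divergences in the DPI are finite here (the ratio $L$ is valued in a compact subset of $(0,\infty)$ on which $f$ is continuous and bounded), so the strict pointwise Jensen gap integrates to a strict inequality.
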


The assumption of $f$ is complicated as we would like to cover the important case of the TV distance.

\begin{proof}
Let $u \ge 0$. 
Let $\lambda = i$ and  $\lambda^{\prime} = u+i$. 
Then, $\chi(i, u+i) = \frac{u^2}{2}$ and hence, 
\[ h_f \left(  \frac{u^2}{2} \right) = D_f (p_{i}:p_{u+i}). \]
Hence it suffices to show that $F_1 (u) := D_f (p_{i}:p_{u+i})$ is a strictly increasing function. 
We see that 
\[ F_1(u) = \int_{\mathbb R} \frac{1}{\pi (x^2 + 1)} f \left(  \frac{x^2+ 1}{(x-u)^2 + 1}\right) dx. \]

Then, 
\begin{lemma}\label{lem:exchange}
\[ F_1^{\prime}(u) = \int_{\mathbb R} \frac{2(x-u)}{\pi ((x-u)^2 + 1)} f^{\prime} \left(  \frac{x^2+ 1}{(x-u)^2 + 1}\right) dx, \ u > 0, \]
where we let $ f^{\prime} (  1 ) =0$. 
\end{lemma}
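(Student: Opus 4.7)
\medskip
\noindent\textbf{Proof plan.} The plan is to establish the identity as a standard Leibniz-type differentiation under the integral sign, justified via the dominated convergence theorem (DCT) applied to the difference quotient $h^{-1}(F_1(u+h)-F_1(u))$ as $h\to 0$. Fix $u>0$ and a compact neighborhood $[u-\delta, u+\delta]\subset (0,\infty)$ with $\delta<u/2$. For each $x\in\bbR$ set
\[
G(x,v) := \frac{1}{\pi(x^2+1)}\, f\!\left(r(x,v)\right),\qquad r(x,v):=\frac{x^2+1}{(x-v)^2+1}.
\]
A direct chain-rule calculation using $\partial_v r(x,v) = 2(x-v)(x^2+1)/((x-v)^2+1)^2$ yields the pointwise identity
\[
\partial_v G(x,v) \;=\; \frac{2(x-v)}{\pi\,((x-v)^2+1)^2}\, f'\!\left(r(x,v)\right)
\]
wherever $f'$ is defined. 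The exceptional locus $\{r(x,v)=1\}$ reduces to the single point $x=v/2$ for each $v$, so it has zero Lebesgue measure in $x$; the convention $f'(1):=0$ merely makes the integrand well-defined there and does not affect the integral.

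The main work is producing an $x$-integrable majorant of $|\partial_v G(x,v)|$ that is uniform in $v\in[u-\delta,u+\delta]$. The crucial geometric observation is that, uniformly for such $v$, the range $r(\bbR,v)$ is contained in a fixed compact subset $[m_-,M_+]\subset(0,\infty)$. Indeed $r(\pm\infty,v)=1$, and the critical points of $x\mapsto r(x,v)$ are the two real roots of $x^2-vx-1=0$, which depend continuously on $v$ and yield explicit extremal values $0<m_-<1<M_+<\infty$, uniformly over the chosen compact neighborhood. Since $f$ is convex on $(0,\infty)$, its derivative $f'$ is non-decreasing where defined and the one-sided limits $f'_-(1),f'_+(1)$ are finite; hence $|f'|$ is bounded on $[m_-,M_+]\setminus\{1\}$ by some constant $C=C(u,\delta)$. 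This gives
\[
|\partial_v G(x,v)| \;\le\; \frac{2C\,|x-v|}{\pi\,((x-v)^2+1)^2},\qquad v\in[u-\delta,u+\delta],
\]
whose right-hand side is dominated by an $L^1(\bbR)$ function independent of $v$: for $|x|\ge 2(u+\delta)$ one has $|x-v|\ge|x|/2$ so the bound decays like $|x|^{-3}$, and for $|x|\le 2(u+\delta)$ the bound is uniformly bounded. Applying the mean value theorem to $h^{-1}(G(x,v+h)-G(x,v))$ and invoking DCT interchanges the limit with the integral, which delivers the stated formula for $F_1'(u)$.

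The main obstacle is the possible jump discontinuity of $f'$ at $1$, which genuinely occurs for important generators such as the TV distance $f(t)=|t-1|$. Since $r(\cdot,v)$ always crosses the value $1$, we cannot avoid this point by restricting the range of $r$; instead the resolution relies on two ingredients that are already baked into the hypotheses. Convexity of $f$ supplies the uniform sup bound on $|f'|$ across the discontinuity, and the zero-measure nature of the crossing locus makes the precise value assigned to $f'(1)$ irrelevant for the integral. No other step requires more than a careful book-keeping of constants.
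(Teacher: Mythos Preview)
Your approach mirrors the paper's: both establish the formula by dominated convergence on the difference quotient, with the majorant coming from the fact that the ratio $r(x,v)=(x^2+1)/((x-v)^2+1)$ stays in a fixed compact subinterval of $(0,\infty)$ uniformly for $v$ near $u$, so that convexity of $f$ bounds $|f'|$ (equivalently, the Lipschitz constant of $f$) on that range. The paper's only additional ingredient is its difference-quotient lemma $R_{f,c}:=\sup_{1/c<a<b<c}|f(b)-f(a)|/|b-a|<\infty$, which plays exactly the role of your bound on $|f'|$.

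One point to tighten: you invoke the mean value theorem on $h\mapsto G(x,u+h)$, but when $f$ is not differentiable at $1$ (the TV case, which the hypotheses are designed to cover), this map has a genuine corner at $h=2x-u$ since $\partial_v r(x,v)\big|_{v=2x}=-2x/(x^2+1)\ne0$; hence MVT fails on the interval between $u$ and $u+h$ whenever $x$ lies between $u/2$ and $(u+h)/2$. The paper sidesteps this by first applying the Lipschitz bound $|f(b)-f(a)|\le R_{f,c_0}|b-a|$ to $a=r(x,u)$, $b=r(x,u+h)$, and only afterwards estimating the (smooth) difference quotient of $r$ directly. Your argument is correct once patched in this order; the issue is only with the particular justification of the dominating bound, not with the overall strategy.

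A final observation: your chain-rule computation correctly yields the denominator $((x-v)^2+1)^2$ in $\partial_v G$, whereas the lemma as stated (and the paper's own displayed pointwise limit) has only the first power. Your formula is the right one; the discrepancy is a typo in the paper that is harmless for the downstream monotonicity argument after the change of variable $y=x-u$.
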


By the change-of-variable formula, 
\[ F_1^{\prime}(u) = \frac{2}{\pi} \int_{\mathbb R} \frac{x}{x^2 + 1} f^{\prime} \left(  \frac{(x+u)^2+ 1}{x^2 + 1}\right) dx, \ u > 0,  \]

We also see that 
\[  \int_{\mathbb R} \frac{x}{x^2 + 1} f^{\prime} \left(  \frac{(x+u)^2+ 1}{x^2 + 1}\right) dx\]
\[ =  \int_{0}^{\infty} \frac{x}{x^2 + 1} f^{\prime} \left(  \frac{(x+u)^2+ 1}{x^2 + 1}\right) dx +  \int_{-\infty}^{0} \frac{x}{x^2 + 1} f^{\prime} \left(  \frac{(x+u)^2+ 1}{x^2 + 1}\right) dx\]
\[ =  \int_{0}^{\infty} \frac{x}{x^2 + 1} \left(f^{\prime} \left(  \frac{(x+u)^2+ 1}{x^2 + 1}\right) - f^{\prime} \left(  \frac{(x-u)^2+ 1}{x^2 + 1}\right) \right)dx. \]

Since $f$ is convex and $x, u > 0$, 
it holds that 
\[ f^{\prime} \left(  \frac{(x+u)^2+ 1}{x^2 + 1}\right) \ge f^{\prime} \left(  \frac{(x-u)^2+ 1}{x^2 + 1}\right), \]
for every $x > 0$ except $x=u/2$. 
By the assumption,  
\[ f^{\prime} \left(  \frac{(x+u)^2+ 1}{x^2 + 1}\right) > f^{\prime} \left(  \frac{(x-u)^2+ 1}{x^2 + 1}\right), \ \ x \in \left(\frac{99}{100}u, \frac{101}{100}u\right). \]
Hence, 
\[ \int_{0}^{\infty} \frac{x}{x^2 + 1} \left(f^{\prime} \left(  \frac{(x+u)^2+ 1}{x^2 + 1}\right) - f^{\prime} \left(  \frac{(x-u)^2+ 1}{x^2 + 1}\right) \right)dx > 0.  \]
\end{proof}

\begin{proof}[Proof of Lemma \ref{lem:exchange}]
We show this assertion for $u = u_0 > 0$. 

\begin{lemma}\label{lem:meanvalue}
For every $c > 1$, 
\[ R_{f,c} := \sup_{1/c < a < b < c} \left|\frac{f(b) - f(a)}{b-a}\right| < +\infty. \]
\end{lemma}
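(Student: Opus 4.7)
The plan is to leverage the classical fact that a convex function on an open real interval is locally Lipschitz on every compact subinterval, and to make the Lipschitz constant explicit via the three-slopes inequality. Concretely, convexity of $f$ on $(0,\infty)$ implies that for any three points $u < v < w$ in the domain,
\[ \frac{f(v)-f(u)}{v-u} \;\le\; \frac{f(w)-f(u)}{w-u} \;\le\; \frac{f(w)-f(v)}{w-v}, \]
and this slope-monotonicity will give two-sided bounds on the difference quotient $(f(b)-f(a))/(b-a)$ that are uniform over $1/c < a < b < c$.

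First I would fix a slightly larger enclosing interval on which $f$ is bounded. Pick $\delta \in (0, 1/c)$, set $\alpha := 1/c - \delta$ and $\beta := c + \delta$ so that $0 < \alpha < 1/c < c < \beta$, and let $M := \sup_{x \in [\alpha,\beta]} |f(x)|$. This supremum is finite because any convex function on an open interval is continuous there, and hence bounded on the compact interval $[\alpha,\beta] \subset (0,\infty)$.

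Next I would apply the three-slopes inequality twice. For any $1/c < a < b < c$, the triple $\alpha < a < b$ gives the lower bound
\[ \frac{f(b)-f(a)}{b-a} \;\ge\; \frac{f(a)-f(\alpha)}{a-\alpha}, \]
while the triple $a < b < \beta$ gives the upper bound
\[ \frac{f(b)-f(a)}{b-a} \;\le\; \frac{f(\beta)-f(b)}{\beta-b}. \]
By construction $a-\alpha \ge \delta$ and $\beta-b \ge \delta$, while the numerators are at most $2M$ in absolute value. Hence both bounds are dominated by $2M/\delta$, which yields $R_{f,c} \le 2M/\delta < +\infty$, independently of $a$ and $b$.

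I do not expect any serious obstacle here: the argument is textbook once the correct enlarged interval is chosen. The only point requiring mild care is ensuring that $\alpha > 0$ so that $[\alpha,\beta]$ stays inside the domain $(0,\infty)$ of $f$, which is precisely what the choice $\delta < 1/c$ guarantees; the convexity hypothesis on $f$ is used only to invoke continuity on $[\alpha,\beta]$ and the three-slopes inequality, both of which are available under the assumptions already stated.
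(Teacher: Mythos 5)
Your proof is correct, but it follows a genuinely different route from the paper. You work on a slightly enlarged compact interval $[\alpha,\beta]=[1/c-\delta,\,c+\delta]$ and combine the three-slopes (chord monotonicity) inequality with the fact that a convex function is continuous, hence bounded by some $M$ on $[\alpha,\beta]$; this sandwiches every difference quotient between $\frac{f(a)-f(\alpha)}{a-\alpha}$ and $\frac{f(\beta)-f(b)}{\beta-b}$, each of absolute value at most $2M/\delta$. Notably, you never use the $C^1$ hypothesis, so your argument proves the lemma for an arbitrary convex $f$ on $(0,\infty)$. The paper instead stays inside $[1/c,c]$, splits at the possible kink $u=1$ (where $f$ need not be differentiable under the theorem's hypotheses), applies the mean value theorem on each $C^1$ piece, and uses the monotonicity of $f'$ to bound all derivatives by $\max\{|f'(1/c)|,|f'(c)|\}$; in the straddling case $a<1<b$ it splits the quotient at $1$, arriving at $R_{f,c}\le 2\max\{|f'(1/c)|,|f'(c)|\}$. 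So the paper's proof buys an explicit constant in terms of the boundary derivatives and needs no enlargement of the interval, while yours buys greater generality and avoids any appeal to differentiability, at the cost of a constant expressed through $\sup_{[\alpha,\beta]}|f|$ on a larger interval. Both are complete and correct.
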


\begin{proof}
We first remark that 
\[ \max_{x \in [1/c, c]} |f^{\prime}(x)| \le \max\left\{|f^{\prime}(1/c)|, |f^{\prime}(c)| \right\}, \]
since $f$ is convex. 

Assume that $a < b \le 1$ or $1 \le a < b$. 
Then, by the mean-value theorem, 
\[  \left|\frac{f(b) - f(a)}{b-a}\right| = |f^{\prime}(\xi)| \le \max\left\{|f^{\prime}(1/c)|, |f^{\prime}(c)| \right\}. \]

Finally we assume that $a < 1 < b$. 
Then, 
\[ \left|\frac{f(b) - f(a)}{b-a}\right| \le \left|\frac{f(1) - f(a)}{1-a}\right| + \left|\frac{f(b) - f(1)}{b-1}\right| \le 2 \max\left\{|f^{\prime}(1/c)|, |f^{\prime}(c)| \right\}.  \]
\end{proof}

For each fixed $u > 0$, 
\[ \frac{1}{1 + u + u^2} \le \frac{x^2+ 1}{(x-u)^2 + 1} \le 1 + u + u^2. \]
Hence, for some $c_0 > 1$, 
\[ \frac{1}{c_0} < \inf_{x \in \mathbb R, u \in  \left(\frac{99}{100}u_0, \frac{101}{100}u_0\right)} \frac{x^2+ 1}{(x-u)^2 + 1} \le \sup_{x \in \mathbb R, u \in  \left(\frac{99}{100}u_0, \frac{101}{100}u_0\right)} \frac{x^2+ 1}{(x-u)^2 + 1} < c_0. \]

Assume that $0 < |h| < u_0/100$. 
Then, by Lemma \ref{lem:meanvalue}, 
\[ \frac{1}{x^2 + 1} \left| \frac{1}{h} \left( f \left(  \frac{x^2+ 1}{(x-u_0 -h)^2 + 1}\right) - f \left(  \frac{x^2+ 1}{(x-u_0)^2 + 1}\right)\right) \right| \]
\[ \le \frac{R_{f, c_0}}{x^2 + 1} \left| \frac{1}{h} \left(\frac{x^2+ 1}{(x-u_0-h)^2 + 1} - \frac{x^2+ 1}{(x-u_0)^2 + 1}\right) \right| \]
\[= R_{f, c_0} \frac{2|x-u_0 -h| + u_0/100}{((x-u_0-h)^2 + 1)((x-u_0)^2 + 1)} \le R_{f, c_0} \frac{1+  u_0/100}{(x-u_0)^2 + 1}. \] 

We see that for $x \ne u_0/2$, 
\[  \frac{1}{x^2 + 1} \lim_{h \to 0} \frac{1}{h} \left( f \left(  \frac{x^2+ 1}{(x-u_0 -h)^2 + 1}\right) - f \left(  \frac{x^2+ 1}{(x-u_0)^2 + 1}\right)\right) \]
\[ = \frac{2(x-u_0)}{(x-u_0)^2 + 1} f^{\prime} \left(  \frac{x^2+ 1}{(x-u_0)^2 + 1}\right).  \]

Now the lemma follows from the dominated convergence theorem.
\end{proof}

\begin{remark}
It follows that the Chebyshev center~\cite{ChebyshevAlphaDiv-2020} $p_{\lambda^*}$ of a 
set of $n$ Cauchy distributions $p_{\lambda_1},\ldots, p_{\lambda_n}$ with respect to any $f$-divergence does not depend on the generator $f$ with 
$\lambda^*=\arg\min_\lambda \max_i I_f(p_{\lambda_i}:p_\lambda)$ since
\begin{eqnarray*}
\arg\min_\lambda \max_i I_f(p_{\lambda_i}:p_\lambda) &=&\arg\min_\lambda \max_i h_f(\chi(\lambda_i,\lambda)),\\
&=&  \arg\min_\lambda \max_i \chi(\lambda_i,\lambda),\\
&=& \arg\min_\lambda \max_i \frac{\|\lambda_i-\lambda\|^2}{\lambda_i}.
\end{eqnarray*}
Similarly, the Cauchy Voronoi diagrams with respect to $f$-divergences all coincide~\cite{CauchyVoronoi-2020}.
\end{remark}

\begin{remark}\label{rmk:conjecture}
It is interesting to consider whether if the symmetry of $f$-divergence between a location-scale family on $\mathbb{R}$ holds for every $f$, then, the family is limited to Cauchy or not. 
If this is true, then, it implies the characterization of the Cauchy distribution by \cite{KnightCauchy-1976} and  \cite{dLocationCauchyGroup-1987}. 
See Proposition \ref{prop:div-inv} in Appendix. 
\end{remark}

\subsection{Some illustrating examples}

\subsubsection{The Kullback-Leibler divergence}

It was proven in~\cite{KLCauchy-2019} that
\begin{eqnarray*}
D_\KL(p_{l_1,s_1}:p_{l_2,s_2}) 
&=& \log\left( \frac{\left(s_{1}+s_{2}\right)^{2}+\left(l_{1}-l_{2}\right)^{2}}{4 s_{1} s_{2}}\right).
\end{eqnarray*}

Thus we have
$$
h_\KL(u)=\log\left(1+\frac{1}{2}u\right).
$$

This plays an important role in establishing an equivalence criterion for two infinite products of Cauchy measures. 
See \cite{okamura2020equivalence}.

\subsubsection{LeCam-Vincze triangular divergence}\label{sec:LeCam}

Let us consider another illustrating example: The LeCam-Vincze triangular divergence~\cite{le2012asymptotic,Vincze-1981} 
defined by
$$
D_{\mathrm{LCV}}(p:q) := \int \frac{(p(x)-q(x))^2}{p(x)+q(x)} \dx.
$$
This divergence is a symmetric $f$-divergence obtained for the generator $f_{\mathrm{LCV}}(u)=\frac{(u-1)^2}{1+u}$.
The triangular divergence is a bounded divergence since $f(0)=f^*(0)=1<\infty$,
 and its square root $\sqrt{D_{\mathrm{LCV}}(p:q)}$ yields a metric distance.
The LeCam triangular divergence between a Cauchy standard density $p_{0,1}$ and a Cauchy density $p_{l,s}$ is
$$
D_{\mathrm{LCV}}(p_{0,1}:p_{l,s}) = 2-4\sqrt{\frac{s}{l^2+s^2+2s+1}} \leq 2.
$$

Since $\chi(p_{0,1}:p_{l,s})=\frac{l^2+(s-1)^2}{2s}$, we can express the triangular divergence using the $\chi$-squared divergence as
$$
D_{\mathrm{LCV}}(p_{l_1,s_1}:p_{l_2,s_2}) = 2-4\sqrt{\frac{1}{2(\chi(p_{l_1,s_1},p_{l_2,s_2})+2)}}.
$$
Thus we have the function:
$$
h_{f_{\mathrm{LCV}}}(u)=2-4\sqrt{\frac{1}{2(u+2)}}.
$$

\subsubsection{Total variation distance}\label{sec:tv}

The total variation  distance (TVD) is a metric $f$-divergence obtained for the generator $f_\TV(u)=\frac{1}{2}|u-1|$:
$$
D_\TV(p:q)=I_{f_\TV}(p:q)=\frac{1}{2}\int_\bbR |p(x)-q(x)|\dx.
$$

Consider the TVD between two Cauchy densities $p_{l_1,s_1}$ and $p_{l_2,s_2}$: $D_\TV(p_{l_1,s_1},p_{l_2,s_2})$.

\begin{itemize}

\item When $s_2=s_1=s$, we have one root $r$ for $p_{l_1,s}(x)=p_{l_2,s}(x)$ since the Cauchy standard density $p(x)$ is even: $r=\frac{l_1+l_2}{2}$.
Assume without loss of generality that $l_1<l_2$. Then we have
\begin{eqnarray*}
D_\TV(p_{l_1,s}:p_{l_2,s}) &=&\frac{1}{2}  \left(\int_{-\infty}^{\frac{l_1+l_2}{2}} (p_{l_1,s}(x)-p_{l_2,s}(x))\dx +
\int_{\frac{l_1+l_2}{2}}^\infty (p_{l_2,s}(x)-p_{l_1,s}(x))\dx
  \right),\\
	&=& \frac{2}{\pi} \arctan\left(\frac{|l_2-l_1|}{2s}\right)\leq 1.
\end{eqnarray*}
Notice that we have $\lim_{x\rightarrow\infty} \arctan(x)=\frac{\pi}{2}$.
We can express $D_\TV(p_{l_1,s}:p_{l_2,s})$ using $\chi(p_{l_1,s},p_{l_2,s})=\frac{(l_2-l_1)^2}{2s^2}$:
$$
D_\TV(p_{l_1,s}:p_{l_2,s}) =  \frac{2}{\pi} \arctan\left(\sqrt{\frac{\chi(p_{l_1,s},p_{l_2,s}) }{2}}\right).
$$
 
See also~Appendix~\ref{sec:tvlf} for the total variation between two densities of a location family.

\item We calculate the two roots $r_1$ and $r_2$ of $p_{l_1,s_1}(x)=p_{l_2,s_2}(x)$ when $s_2\not=s_1$:

\begin{eqnarray*}
r_1&=&\frac{\sqrt{s_1\, {{s_2}^{3}}-2 {{s_1}^{2}}\, {{s_2}^{2}}+\left( {{s_1}^{3}}+\left( {{l_2}^{2}}-2 l_1\, l_2+{{l_1}^{2}}\right) \, s_1\right) \, s_2}+l_1s_2-l_2s_1}{s_2-s_1} ,\\
r_2&=&\frac{\sqrt{s_1\, {{s_2}^{3}}-2 {{s_1}^{2}}\, {{s_2}^{2}}+\left( {{s_1}^{3}}+\left( {{l_2}^{2}}-2 l_1\, l_2+{{l_1}^{2}}\right) \, s_1\right) \, s_2}-l_1s_2+l_2s_1}{s_2-s_1}.
\end{eqnarray*}

Then we use the formula for the definite integral:

$$
I(l,s,a,b):=\int_a^b p_{l,s}(x)\dx=\frac{1}{\pi}\left(\arctan\left(\frac{l-a}{s}\right)-\arctan\left(\frac{l-b}{s}\right)
\right),
$$
where $\arctan(-x)=-\arctan(x)$.

It follows that we have
\begin{eqnarray*}
\lefteqn{D_\TV(p_{l_1,s_1}:p_{l_2,s_2})=}\nonumber\\
&&\frac{1}{\pi}\left(
\arctan\left(\frac{l_2-r_1}{s_2}\right)-\arctan\left(\frac{l_2-r_2}{s_2}\right)
+\arctan\left(\frac{l_1-r_1}{s_1}\right)-\arctan\left(\frac{l_1-r_2}{s_2}\right)
\right).
\end{eqnarray*}

Rearranging and simplifying the terms, we get:
\begin{eqnarray*}
D_\TV(p_{l_1,s_1}:p_{l_2,s_1}) &=&  \frac{2}{\pi} \arctan\left(\sqrt{\frac{\chi(p_{l_1,s_1}:p_{l_2,s_1}) }{2}}\right),\\
&=& h_{f_\TV}\left(\chi[p_{l_1,s_1},p_{l_2,s_1}]\right),
\end{eqnarray*}
with
$$
h_{f_\TV}(u)=\frac{2}{\pi}\arctan\left(\sqrt{\frac{u}{2}}\right).
$$

\end{itemize}

\subsubsection{$f$-divergences for polynomial generators}\label{sec:fdivpoly}

First, let us consider the $f$-divergence between two Cauchy densities for $f$ a (convex) monomial.

\begin{proposition}\label{prop:poly}
Let $a \ge 2$ be an integer. 
Let $J_a$ be a function such that 
\[ J_a (\chi(z,w)) = \int_{\mathbb R} p_z(x)^a p_w(x)^{1-a} \dx, \ z, w \in \mathbb H. \]
Then, 
$J_a$ is a polynomial with degree $a-1$. 
\end{proposition}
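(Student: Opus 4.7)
The plan is to exploit the $\SL(2,\bbR)$-invariance of the integral $\int_{\bbR} p_z^a p_w^{1-a}\,\dx$ to reduce to a canonical one-parameter family, and then expand the numerator and integrate term by term.

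First, I would verify that this integral is invariant under the diagonal action $A.(z,w) := (A.z,A.w)$ of $\SL(2,\bbR)$, by repeating the change of variables $x = A.y$ used in Lemma~\ref{lemma:finv}. The identity $p_{A.z}(A.y) = |cy+d|^2\, p_z(y)$ makes the integrand pick up a factor $|cy+d|^{2a+2(1-a)} = |cy+d|^2$, which cancels the Jacobian $\dx = \dy/|cy+d|^2$. Combined with Proposition~\ref{prop:McCullagh-mi}, this shows the integral depends on $(z,w)$ only through $\chi(z,w)$, so $J_a$ is well-defined. I would then set $z = i$, $w = u+i$ with $u \ge 0$, for which $\chi(z,w) = u^2/2$, and study
\[
I_a(u) \;:=\; J_a\!\left(\tfrac{u^2}{2}\right) \;=\; \frac{1}{\pi}\int_{\bbR} \frac{(1+(x-u)^2)^{a-1}}{(1+x^2)^a}\,\dx.
\]

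Second, I would expand the numerator via $1+(x-u)^2 = (1+x^2)+(u^2-2xu)$ and the binomial theorem, then expand $(u^2-2xu)^k$ by a second binomial, to obtain
\[
I_a(u) \;=\; \frac{1}{\pi}\sum_{k=0}^{a-1}\binom{a-1}{k}\sum_{j=0}^{k}\binom{k}{j}(-2)^j\, u^{2k-j}\int_{\bbR} \frac{x^j\,\dx}{(1+x^2)^{k+1}}.
\]
Each inner integral converges (since $j\le k$ gives $j-2(k+1)\le -2$) and vanishes when $j$ is odd by parity. Hence only even $j$ contribute, and then the remaining power $u^{2k-j}$ is itself even. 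So $I_a(u)$ is a polynomial in $u^2$, equivalently $J_a$ is a polynomial in $\chi$, of degree at most $a-1$.

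Finally, to see that the degree is exactly $a-1$, I would isolate the coefficient of $u^{2(a-1)}$. The condition $2k-j = 2(a-1)$ with $k\le a-1$ and $j$ even forces $(k,j)=(a-1,0)$ uniquely, and the corresponding coefficient is
\[
\frac{1}{\pi}\int_{\bbR}\frac{\dx}{(1+x^2)^a} \;=\; \frac{1}{\sqrt{\pi}}\,\frac{\Gamma(a-\tfrac{1}{2})}{\Gamma(a)} \;>\; 0,
\]
so the degree is precisely $a-1$. The only mild obstacle is the bookkeeping needed to confirm that no other $(k,j)$ produces $u^{2(a-1)}$, but this is immediate from the strict inequalities $2k-j<2(a-1)$ in every other admissible case.
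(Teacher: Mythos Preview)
Your proof is correct and complete. It differs from the paper's argument in the choice of canonical one-parameter family: you specialize to the \emph{location} slice $z=i$, $w=u+i$, for which $\chi=u^2/2$, expand $(1+(x-u)^2)^{a-1}$ binomially, and use parity of $\int x^j(1+x^2)^{-k-1}\dx$ to see directly that $I_a(u)$ is a polynomial in $u^2$ of exact degree $a-1$. The paper instead takes the \emph{scale} slice $z=i$, $w=\lambda i$, obtaining an expression in $\lambda^{a-1-2i}$; it then uses the symmetry $R(a,i)=R(a,a-1-i)$ (from the substitution $x\mapsto 1/x$) to rewrite the sum in terms of $\lambda^n+\lambda^{-n}$, and invokes the classical fact that $\lambda^n+\lambda^{-n}$ is a degree-$n$ polynomial in $\lambda+\lambda^{-1}=2\chi+2$. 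Your route is arguably more direct, since $\chi=u^2/2$ is already polynomial in the parameter, whereas the paper must pass through the Chebyshev-style identity for $\lambda^n+\lambda^{-n}$; on the other hand, the paper's approach makes the connection to Legendre-type recurrences (visible in the displayed examples $J_2,\dots,J_5$) more transparent.
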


\begin{proof}
Let $\lambda \in (0,1)$. 
Then, 
$$ 
J_a \left(\frac{(1-\lambda)^2}{2\lambda}\right) = \frac{1}{\pi} \frac{1}{\lambda^{a-1}} \int_{\mathbb R} \frac{(x^2 + \lambda^2)^{a-1}}{(x^2 + 1)^a} \dx. 
$$
Hence it suffices to show that the right hand side is a polynomial of $\lambda + \lambda^{-1}$. 

Let 
\[ R(a,i) := \int_{\mathbb R} \frac{x^{2i}}{(x^2 + 1)^a} dx, \ 0 \le i \le a-1.   \]
Then, by the change-of-variable that $x = 1/y$, 
\[ R(a,i) = R(a,a-1-i), \ 0 \le i \le a-1. \]

By this and the binomial expansion, 
\[ \frac{1}{\lambda^{a-1}} \int_{\mathbb R} \frac{(x^2 + \lambda^2)^{a-1}}{(x^2 + 1)^a} dx = \sum_{i=0}^{a-1} \binom{a-1}{i} R(a,i) \lambda^{a-1-2i} \]
\[= \sum_{i=0}^{a-1} \binom{a-1}{i} R(a,i) \frac{\lambda^{a-1-2i} + \lambda^{2i-a+1}}{2}.\] 

By induction in $n$, it is easy to see that $\lambda^n + \lambda^{-n}$ is a polynomial of $\lambda + \lambda^{-1}$ with degree $n$. 
\end{proof}

It holds that
\begin{eqnarray*}
J_2 (t) &=& t+1,\\ 
J_3 (t) &=&   (3(t+1)^2 -1)/2 = \frac{3}{2}t^2+3t+1 ,\\
J_4 (t) &=& (5(t+1)^3 - 3(t+1))/2 = \frac{5}{2}t^3+\frac{15}{2}t^2+6t+1,\\
J_5 (t) &=& (35(t+1)^4 - 30(t+1)^2 + 3)/8 = \frac{35}{8}t^4+\frac{35}{2}t^3+\frac{45}{2}t^2+10t+1.
\end{eqnarray*}

Notice that the smallest degree coefficient $a_0$ of polynomial $J_d(t)=\sum_{i=0}^{d-1} a_it^i$ is always one  since 
when $\chi(z,w)=0$, we have $z=w$ and therefore
$J_d(0)=\int_{\mathbb R} p_z(x)^a p_w(x)^{1-a}  \dx=\int_{\mathbb R} p_z(x)^a p_z(x)^{1-a}  \dx=\int_{\mathbb R} p_z(x)\dx=1=a_0$.

The result extends for  $f$-divergences between two Cauchy densities for $f(u)=P_d(u)=\sum_{i=1}^d a_iu^i-\sum_{i=1}^d a_i$ a convex polynomial in degree $d$ with $P_d(1)=0$.
Notice that the set of convex polynomials of degree $d$ can be characterized by the set of positive polynomials~\cite{PositivePolynomial-2008} of degree $d-2$ since $P_d(u)$ is convex iff $P''_d(u)\geq 0$.
A positive polynomial can always be decomposed as a sum of two squared polynomials~\cite{PosPolynomial-2000,PosPolynomial-2019}.

\begin{proposition}
The $f$-divergence between two Cauchy densities for a convex polynomial generator $P_d(u)$ of degree $d$ can be expressed as a $d-1$ dimensional polynomial $Q_{d-1}$ of the chi-squared divergence: $I_{P_d}(p_{\lambda_1},p_{\lambda_2})=Q_{d-1}(\chi(p_{\lambda_1},p_{\lambda_2}))$.
\end{proposition}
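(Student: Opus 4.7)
The plan is to proceed by linearity of the integral, reducing to Proposition~\ref{prop:poly}. Write $P_d(u) = \sum_{i=1}^{d} a_i u^i - c$ with $c = \sum_{i=1}^{d} a_i$ so that $P_d(1) = 0$. Expanding the $f$-divergence gives
\[
I_{P_d}(p_{\lambda_1} : p_{\lambda_2}) = \sum_{i=1}^d a_i \int_{\bbR} p_{\lambda_2}(x)^i\, p_{\lambda_1}(x)^{1-i}\, \dx \; -\; c.
\]
Each integral on the right-hand side is $J_i(\chi(\lambda_1,\lambda_2))$ by the definition of $J_i$ given in Proposition~\ref{prop:poly}, using the symmetry of $\chi$ in its two arguments.

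Next I would invoke Proposition~\ref{prop:poly} to conclude that, for $i \ge 2$, $J_i$ is a polynomial in its argument of degree $i-1$. For $i = 1$ the integrand collapses to $p_{\lambda_2}(x)$, hence $J_1 \equiv 1$. Therefore the linear combination $\sum_{i=1}^d a_i J_i(\chi)$ is a polynomial in $\chi$ of degree at most $d-1$, and of degree exactly $d-1$ whenever $a_d \neq 0$, since the leading coefficient of $J_d$ is strictly positive (in the proof of Proposition~\ref{prop:poly} it is proportional to $R(d,0) = \int_{\bbR} (x^2+1)^{-d}\, \dx > 0$, which is then multiplied by a positive power of two coming from expressing $\lambda^{d-1}+\lambda^{-(d-1)}$ in terms of $\chi = (\lambda + \lambda^{-1} - 2)/2$).

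Finally I would verify the normalization. Since $\chi(z,w) = |z-w|^2/(2\,\Im(z)\Im(w))$, the condition $\chi(\lambda_1,\lambda_2) = 0$ forces $\lambda_1 = \lambda_2$ (under McCullagh's identification of $\theta$ with $\bar\theta$), and then $J_i(0) = \int_{\bbR} p_{\lambda_1}(x)\,\dx = 1$ for every $i \geq 1$. Consequently the constant term of $\sum_{i=1}^d a_i J_i(\chi)$ equals $\sum_i a_i = c$, which is exactly canceled by the subtraction. Defining
\[
Q_{d-1}(t) \;:=\; \sum_{i=1}^d a_i J_i(t) \;-\; c
\]
yields a polynomial of degree $d-1$ in $t$ with $Q_{d-1}(0) = 0$, matching the required reflexivity $I_{P_d}(p_\lambda:p_\lambda) = 0$, and satisfying $I_{P_d}(p_{\lambda_1}:p_{\lambda_2}) = Q_{d-1}(\chi(\lambda_1,\lambda_2))$.

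No substantial obstacle is anticipated: the statement is essentially a corollary of Proposition~\ref{prop:poly} combined with linearity of $f\mapsto I_f$. The only minor care needed is to handle the $i=1$ term separately (Proposition~\ref{prop:poly} is stated for $a \ge 2$, but its $a=1$ counterpart is trivial) and to check the cancellation of constant terms so that $Q_{d-1}(0) = 0$; convexity of $P_d$ plays no role in the derivation itself but is only used to ensure that $I_{P_d}$ is a bona fide $f$-divergence.
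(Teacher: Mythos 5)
Your proof is correct and follows essentially the same route as the paper: decompose $P_d$ by linearity into monomial terms, apply Proposition~\ref{prop:poly} to each term $J_i$ (using the symmetry of $\chi$ to orient the exponents), and observe that $J_i(0)=1$ so the constant $\sum_i a_i$ cancels, giving $Q_{d-1}(0)=0$. The paper only sketches this argument, and your handling of the trivial $i=1$ case and of the positive leading coefficient of $J_d$ are harmless elaborations of the same idea.
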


The proof follows from the fact that $I_{P_d}(p_{\lambda_1},p_{\lambda_2})=\sum_{i=0}^d I_{f_{a_i}}(p_{\lambda_1},p_{\lambda_2})$ where 
$f_{a_i}(u)=a_i^u-a_i$ and Proposition~\ref{prop:poly}.
Notice that $\int_{\mathbb R} p_z(x)^a p_w(x)^{1-a} \dx = \int_{\mathbb R} p_z(x)^{1-a} p_w(x)^{a} \dx$ since 
$J_a (\chi(z,w))=J_a (\chi(w,z))$.

\begin{remark}
In practice, we can estimate the coefficients of  $J_d(t)=\sum_{i=0}^{d-1} a_it^i$ using polynomial regression~\cite{PolynomialRegression-1996} as follows:
Let $a=[a_0,\ldots,a_{d-1}]^\top$ denote the vector of polynomial coefficients of $J_d$.
Let us draw $n\geq d$ random variates   $\lambda_1,\ldots,\lambda_n$ and $\lambda'_1,\ldots,\lambda_n'$.
Define the $n\times d$ matrix $M=[m_{ij}]$ with $m_{ij}=\chi(\lambda_i,\lambda_i')^{j-1}$.
Let $b=[b_1,\ldots,b_n]$ denote the vector with $b_i\simeq\int_{\mathbb R} p_{\lambda_i}(x)^d p_{\lambda_i'}(x)^{1-d}  \dx$ is numerically  approximated (e.g., using a quadrature integration rule or by stochastic Monte Carlo integration).
Then we estimate $a$ by $\hat{a}=M^+b$ where $M^+:=(M^\top M)^{-1}M^\top$ is the pseudo-inverse matrix (with $M^+=M^{-1}$ when $n=d$).
Notice that knowing that $\hat{a}_0$ should be close to one, allows to check the quality of the polynomial regression.
In fact, we know that all coefficients $a_i$'s should be rational.

For example, we find that
\begin{eqnarray*}
\hat{J}_6(t)&=&7.958522957345747 t^5 + 39.020985312326296 t^4 + 70.4495468953682 t^3 + 52.37619399770375 t^2 \\
&&+ 14.951303338589055 t + 1.002873073997123  
\end{eqnarray*}
Running a second time, we find another estimate
\begin{eqnarray*}
\hat{J}_6(t)&=&7.8720651949082665 t^5 + 39.38158109425294 t^4 + 70.00024065301261 t^3 + 52.49185790967846 t^2 \\
&&+ 15.004692984586242 t + 0.9992248562053161  
\end{eqnarray*}

We can also estimate similarly the order-$k$ chi divergence~\cite{fdivchi-2013} 
$$
D_{\chi,k}(p:q)=\int \frac{(p(x)-q(x))^k}{q(x)^{k-1}}\dx,
$$ 
for even integers $k\geq 2$.
The order-$k$ chi divergence $D_{\chi,k}(p:q)$ is an $f$-divergence obtained for the convex generator $f_{\chi,k}(u)=(u-1)^k$.
Using the binomial expansion, we have~\cite{fdivchi-2013}:
$$
D_{\chi,k}(p:q) = \sum_{i=0}^k \binom{k}{i} (-1)^i \int q(x)^{i-k+1} p(x)^{k-i} \dx.
$$

For example,  we find using the polynomial regression for $k=6$:
\begin{eqnarray*}
\hat{h}_{f_{\chi,6}}(u)&=& 7.875095431165917 u^5 + 13.124758716080692 u^4 + 2.4996228229861686 u^3 \\
&& +0.0013068731474561446 u^2 -7.942140169951983 10^{-4} u + 4.2270711867131716 10^{-5}. 
\end{eqnarray*}
Another run yields a close estimate:
\begin{eqnarray*}
\hat{h}_{f_{\chi,6}}(u)&=& 7.884522702454348 u^5 + 13.081028848015308 u^4 + 2.5649432632612843 u^3\\
&& + -0.03537083264961893 u^2 + 0.005359945026839341 u  -1.901249938160987 10^{-4}. 
\end{eqnarray*}

Since the first polynomial coefficient $a_0$ of $h_{f_{\chi,k}}(u)$ should be zero, we can assess the quality of the polynomial regression.

A different set of techniques consist in   estimating symbolically the univariate functions $h_f$ and $k_f$ using {\em symbolic regression}~\cite{SymbolicRegression-2002,SymbolicRegression-2021}.
\end{remark}

\subsubsection{The Jensen-Shannon divergence}\label{sec:JSD}

Consider the Jensen-Shannon divergence~\cite{JS-1991,fuglede2004jensen} (JSD) (a special case of Sibson's information radius~\cite{Sibson-1969} of order $1$ between a $2$-point set):

\begin{eqnarray*}
D_\JS(p:q) &=&  \frac{1}{2}\left(D_\KL\left(p:\frac{p+q}{2}\right) + D_\KL\left(q:\frac{p+q}{2}\right)\right),\\
&=&  h\left(\frac{p+q}{2}\right) - \frac{h(p)+h(q)}{2},
\end{eqnarray*} 
where $h(p)=-\int p(x)\log p(x)\dx$ denotes Shannon entropy.
The JSD can be rewritten as
$$
D_\JS(p:q)=  \frac{1}{2}\left(D_K(p:q)+D_K(q:p)\right),
$$
where the divergence $D_K$~\cite{JS-1991} is defined by
$$
D_K(p:q):=\int p(x)\log\frac{2p(x)}{p(x)+q(x)}\dx.
$$
The divergence $D_K$ is an $f$-divergence for the generator $f_K(u)=u\log\frac{2u}{1+u}$ such that the reverse $K$-divergence is
 ${D_K}^*(p:q):=D_K(q:p) =I_{f^*_K}(p:q)$ with conjugate generator $f_K^*(u)=-\log\frac{1+u}{2}$.
Thus the JSD is an $f$-divergence for $f_\JS(u)=\frac{u}{2}u\log\frac{2u}{1+u}-\frac{1}{2}\log\frac{1+u}{2}$.
Since $f_\JS(0)<\infty$, the JSD is upper bounded. It is bounded by $\log 2$ since $D_K(p:q)\leq \log 2$.

Using the fact that $f$-divergences between Cauchy distributions are symmetric, we have
$$
D_\JS(p_{l_1,s_1}:p_{l_2,s_2})=D_K(p_{l_1,s_1}:p_{l_2,s_2})=D_K\left(p:p_{\frac{l_2-l_1}{s_1},\frac{s_2}{s_1}}\right).
$$

To get the  JSD between two Cauchy distributions, we need to find a closed-form formula for $D_\JS(p:p_{l,s}) = D_K(p:p_{l,s})$.
Let us skew the divergence $D_K$~\cite{symJS-2010} with a parameter $\alpha\in(0,1)$:
\begin{equation}
D_{K_\alpha}(p:q):=D_\KL(p:(1-\alpha)p+\alpha q)=\int p(x)\log \frac{p(x)}{(1-\alpha)p(x)+\alpha q(x)} \dx.
\end{equation}
The divergence $D_{K_\alpha}$ is an $f$-divergence for the generator $f_{K_\alpha}(u):=-u\log\left((1-\alpha)+\frac{\alpha}{u}\right)$~\cite{symJS-2010}.

Let $p_1(x):=p_{0,1}(x)= \frac{1}{\pi (x^2 + 1)}$, 
$p_2(x):=p_{l,s}(x)$ and $m_w(x)=(1-w)p_1(x)+w p_2(x) :=  \left(\frac{1-w}{\pi (x^2 + 1)} + \frac{ws}{\pi ((x-l)^2 + s^2)}\right)$. 

In Proposition~1 of~\cite{KLCauchy-2019} (proven in Appendix A), a closed-form is reported for the following definite integral:
\begin{eqnarray*}
A(a, b, c ; d, e, f) &=&\int_{-\infty}^{\infty} \frac{\log \left(d x^{2}+e x+f\right)}{a x^{2}+b x+c} \dx,\\
&=& \frac{2 \pi\left(\log \left(2 a f-b e+2 c d+\sqrt{4 a c-b^{2}} \sqrt{4 d f-e^{2}}\right)-\log (2 a)\right)}{\sqrt{4 a c-b^{2}}}.
\end{eqnarray*}

Relying on this closed-form formula, we find after calculations that we have:
$$
D_\KL(p_1 : m_w) =\log\left( \frac{l^2 + (s+1)^2}{(1-w)(l^2 + s^2 +1)+2ws+2\sqrt{s^2 + s((1-s)^2+l^2)w(1-w)}}\right).
$$

We remark that $(1-w)(l^2 + s^2 +1)+2ws \ge 2s > 0$ and $s^2 + s((1-s)^2+l^2)w(1-w) \ge s^2 > 0$.  
This is analytic with respect to $w$ on $(0,1)$, because there exists a holomorphic extension of this to an open neighborhood of the closed interval $[0,1]$ in $\mathbb C$. 

We consider now the general case:
Let $p_{l_1,s_1}(x):= \frac{s_1}{\pi ((x-l_1)^2 + s_1^2)}$, 
$p_{l_2,s_2}(x):= \frac{s_2}{\pi ((x-l_2)^2 + s_2^2)}$ and consider the mixture: 
\begin{eqnarray*}
m(x) &:=& (1-w)p_{l_1,s_1}(x)+w p_{l_2,s_2}(x),\\
&=& \left(\frac{(1-w) s_1}{\pi ((x-l_1)^2 + s_1^2)} + \frac{w s_2}{\pi ((x-l_2)^2 + s_2^2)}\right).
\end{eqnarray*}

Then we have:
\begin{eqnarray}
\lefteqn{D_\KL(p_{l_1,s_1}:m) =}&&\nonumber\\ 
&&\log\left( \frac{(l_1 - l_2)^2 + (s_1+s_2)^2}{(1-w)(s_1^2 + s_2^2 +(l_1 - l_2)^2)+ 2ws_1 s_2 +2\sqrt{s_1^2 s_2^2 + s_1 s_2 ((s_1-s_2)^2+(l_1 - l_2)^2)w(1-w)}}\right). \label{eq:klpmix}
\end{eqnarray}

Let us report one example:
$$
D_{K_w}(p:p_{1,1}) = D_\KL(p_1 : (1-w)p_1(x)+wp_2(x)) = \log 5 - \log\left(3-w+2\sqrt{1+w-w^2}\right).
$$

When $w=\frac{1}{2}$, we get $D_K(p_{l_1,s_1}:p_{l_2,s_2})=D_\KL(p_{l_1,s_1}:m)$, and we get the JSD between Cauchy densities 
$p_{l_1,s_1}$ and $p_{l_2,s_2}$: 
\begin{eqnarray}
D_\JS(p_{l_1,s_1}:p_{l_2,s_2}) &=& \log\left( \frac{2\sqrt{(l_1 - l_2)^2 + (s_1+s_2)^2}}{\sqrt{(l_1 - l_2)^2 + (s_1+s_2)^2} + 2\sqrt{s_1s_2}}\right),\\
&=:& h_\JS(\chi(p_{l_1,s_1},p_{l_2,s_2})), \nonumber
\end{eqnarray}
with
$$
h_\JS(u)= \log\left( \frac{2\sqrt{2+u}}{\sqrt{2+u}+\sqrt{2}} \right),
$$
since $\frac{(l_1 - l_2)^2 + (s_1+s_2)^2}{2s_1s_2}-2=\frac{(l_1 - l_2)^2 + (s_1-s_2)^2}{2s_1s_2}$.

Since $D_\JS(p_{l_1,s_1}:p_{l_2,s_2})=h\left(\frac{p_{l_1,s_1}+p_{l_2,s_2}}{2}\right) - \frac{h(p_{l_1,s_1})+h(p_{l_2,s_2})}{2}$ 
and $h(p_{l_s})=\log(4\pi s)$~\cite{KLCauchy-2019}, we get a formula
for the Shannon entropy of the mixture of two Cauchy densities:
\begin{equation}\label{eq:ShannonCauchyMix}
h\left(\frac{p_{l_1,s_1}+p_{l_2,s_2}}{2}\right)=D_\JS(p_{l_1,s_1}:p_{l_2,s_2})+\log (4\pi\sqrt{s_1s_2}).
\end{equation}

Notice that the JSD between two Gaussian distributions is not analytic~\cite{mixtures-2016}.

\begin{remark}
Consider a mixture family~\cite{IG-2016,wmixture-2018}
$$
\mathcal{M}:=\left\{m_\theta(x)=\sum_{i=1}^D \theta_i p_i(x) + \left(1-\sum_{i=1}^D \theta_i \right)p_0(x)\ : \theta_i>0, \sum_{i=1}^D \theta_i<1 \right\}
$$ 
where the $p_i(x)$'s are linearly independent component distributions.
The KLD between  two densities $m_{\theta_1}$ and $m_{\theta_2}$ of $\mathcal{M}$ amount to a Bregman divergence~\cite{IG-2016,wmixture-2018} for the Shannon negentropy $F(\theta):=-h(m_\theta)$:
\begin{eqnarray*}
D_\KL(m_{\theta_1}:m_{\theta_2})=B_F(\theta_1:\theta_2),
\end{eqnarray*}
where 
\begin{eqnarray*}
B_F(\theta_1:\theta_2):=F(\theta_1)-F(\theta_2)-(\theta_1-\theta_2)^\top \nabla F(\theta_2).
\end{eqnarray*}

Since $\frac{1}{2}(m_{\theta_1}+m_{\theta_2})=m_{\frac{\theta_1+\theta_2}{2}}$, we have
\begin{eqnarray*}
D_\JS(m_{\theta_1}:m_{\theta_2})&=& \frac{1}{2}\left( D_\KL\left(m_{\theta_1}:\frac{1}{2}(m_{\theta_1}+m_{\theta_2})\right) + 
D_\KL\left(m_{\theta_2}:\frac{1}{2}(m_{\theta_1}+m_{\theta_2})\right)\right),\\ 
&=& \frac{1}{2}\left(B_F\left(\theta_1:\frac{\theta_1+\theta_2}{2}\right) + B_F\left(\theta_2:\frac{\theta_1+\theta_2}{2}\right)\right),\\
&=& \frac{F(\theta_1)+F(\theta_2)}{2}-F\left(\frac{\theta_1+\theta_2}{2}\right) :=J_F(\theta_1:\theta_2).
\end{eqnarray*}
This last expression is called a Jensen divergence~\cite{nielsen2011burbea} $J_F(\theta_1:\theta_2)$. 
In general, the Shannon entropy of a mixture is not available in closed-form.
However, we have shown that the Shannon entropy of a mixture of two Cauchy distributions is available in closed form in Eq.~\ref{eq:ShannonCauchyMix}.

For example, consider the family of mixtures of two Cauchy distributions with prescribed parameters
  $(l_0,s_0)=(0,1)$ and $(l_1,s_1)=(1,1)$.
Then we have the following generator:
$$
F_{0,1,1,1}(\theta)=-h[(1-\theta)p_{0,1}+\theta p_{1,1}]=\theta\log\frac{2\sqrt{1+\theta-\theta^2}+\theta+2}{2\sqrt{1+\theta-\theta^2}-\theta+3}+\log\frac{2\sqrt{1+\theta-\theta^2}-\theta+3}{20\pi},
$$
and the derivative of $F_{0,1,1,1}(\theta$) is
$$
\eta(\theta)=F_{0,1,1,1}'(\theta)=\log \frac{2\sqrt{1+\theta-\theta^2}+\theta+2}{2\sqrt{1+\theta-\theta^2}-\theta+3}.
$$
It follows that the Bregman divergence $B_{F_{0,1,1,1}}(\theta_1:\theta_2)$ is
\begin{center}
\noindent\scalebox{0.95}{$
B_{F_{0,1,1,1}}(\theta_1:\theta_2)=D_\KL[m_{\theta_1}:m_{\theta_2}]=\theta_1\log\frac{(2\sqrt{1+\theta_1-\theta_1^2}+\theta_1+2)(2\sqrt{1+\theta_2-\theta_2^2}-\theta_2+3)}{
(2\sqrt{1+\theta_1-\theta_1^2}-\theta_1+3)(2\sqrt{1+\theta_2-\theta_2^2}+\theta_2+2)}
+\log\frac{2\sqrt{1+\theta_1-\theta_1^2}-\theta_1+3}{2\sqrt{1+\theta_2-\theta_2^2}-\theta_2+2}.
$}
\end{center}

Let us define the skewed $\alpha$-Jensen-Shannon divergence:
\begin{equation}
D_{\JS,\alpha}(p:q)=(1-\alpha)D_\KL(p:(1-\alpha)p+\alpha q)+\alpha D_\KL(q:(1-\alpha)p+\alpha q).
\end{equation}
It is an $f$-divergence (i.e., $D_{\JS,\alpha}(p:q)=I_{f_{\JS,\alpha}}(p:q)$) for the convex generator:
\begin{equation}
f_{\JS,\alpha}=-(1-\alpha)\log(\alpha u+(1-\alpha)) -\alpha u\log\left(\frac{1-\alpha}{u}+\alpha\right).  
\end{equation}
When $\alpha=\frac{1}{2}$, we have $f_\JS(u)=f_{\JS,\frac{1}{2}}(u)=-\frac{1}{2}\log\frac{1+u}{2}-\frac{1}{2}u\log\left(\frac{1}{2u}+\frac{1}{2}\right)=\frac{1}{2}u\log\frac{2u}{1+u}-\frac{1}{2}\log\frac{1+u}{2}$.
The skewed $\alpha$-Jensen-Shannon divergence can be rewritten as
\begin{equation}
D_{\JS,\alpha}(p:q)=h((1-\alpha)p+\alpha q)-((1-\alpha)h(p)+\alpha h(q)).
\end{equation}
Thus we have
\begin{equation}\label{eq:hmixc}
h((1-\alpha)p+\alpha q)=D_{\JS,\alpha}(p:q)+((1-\alpha)h(p)+\alpha h(q)).
\end{equation}
When $p=p_{l_1,s_1}$ and $q=p_{l_2,s_2}$, using Eq.~\ref{eq:klpmix}, we get a closed-form for $D_{\JS,\alpha}(p_{l_1,s_1}:p_{l_2,s_2})$, 
and hence we have a closed-form for the differential entropy of a mixture of two components $h((1-\alpha)p_{l_1,s_1}+\alpha p_{l_2,s_2})$.
Let $m_\theta:=(1-\theta)p_{l_1,s_1}+\theta p_{l_2,s_2})$.

The skewed $\alpha$-Jensen-Shannon divergence between two mixtures $m_{\theta_1}$ and  $m_{\theta_2}$ amounts to
\begin{equation}
D_{\JS,\alpha}(m_{\theta_1}:m_{\theta_2})=h((1-\alpha)m_{\theta_1}+\alpha m_{\theta_2})-((1-\alpha)h(m_{\theta_1})+\alpha h(m_{\theta_2})).
\end{equation}
Since $(1-\alpha)m_{\theta_1}+\alpha m_{\theta_2}=m_{(1-\alpha)\theta_1+\alpha\theta_2}$, we get a closed-form formula for the 
skewed $\alpha$-Jensen-Shannon divergence between two Cauchy mixtures with two prescribed component distributions.

Similarly, the KLD between two Cauchy mixtures $m_{\theta_1}$ and  $m_{\theta_2}$ is available in closed-form using Eq.~\ref{eq:klpmix}.

\end{remark}

\subsubsection{The Taneja divergence}\label{sec:Taneja}

The Taneja $T$-divergence~\cite{Taneja-1995} (Eq.~14) is a symmetric divergence defined by:
$$
D_T(p,q) := \int \frac{p(x)+q(x)}{2}\log  \frac{p(x)+q(x)}{2 \sqrt{p(x) q(x)}} \dx.
$$

The $T$-divergence can be rewritten as $D_T(p:q)=\int A(p(x),q(x))\log \frac{A(p(x),q(x))}{G(p(x),q(x))} \dx$ where 
$A(a,b):=\frac{a+b}{2}$ and $G(a,b):=\sqrt{ab}$ are the arithmetic mean and the geometric mean of $a>0$ and $b>0$, respectively.
(Thus the  $T$-divergence is also called the arithmetic-geometric mean divergence in~\cite{Taneja-1995,Pinsker-2009}.)
In~\cite{acharyya2013bregman}, Banerjee et al. proved that $\sqrt{\Delta(a,b)}$ with $\Delta(a,b)=\log \frac{A(a,b)}{G(a,b)}$ is a metric distance.

The $T$-divergence is an $f$-divergence for the generator:
$$
f_T(u)=\frac{u+1}{2}\log\frac{u+1}{2\sqrt{u}}.
$$
We have  $D_T(p:q)=I_{f_T}(p:q)$ since $f_T(u)$ is convex ($f_T''(u)=\frac{u^2+1}{4u^2(u+1)}$).

The $T$-divergence satisfies $D_\JS(p:q)+D_T(p:q)=\frac{1}{4}D_J(p:q)$, where $D_J(p:q)$ is the Jeffreys divergence:
$$
D_J(p:q) =D_\KL(p:q)+D_\KL(q:p).
$$

Thus we have
$$
D_T(p:q)=\frac{1}{4}D_J(p:q)-D_\JS(p:q).
$$

Since the Jeffreys divergence is an $f$-divergence for the generator $f_J(u)=(u-1)\log u$,
 we get $f_T(u)=\frac{1}{4}f_J(u)-f_\JS(u)$  since $I_{f_T}(p,q)=I_{\frac{1}{4}f_J}(p,q)-I_{f_\JS}(p,q)=I_{\frac{1}{4}f_J-f_\JS}(p,q)$.
(More generally, $I_{f_1-f_2}=I_{f_1}(p:q)-I_{f_2}(p:q)$ is an $f$-divergence when $f_1-f_2$ is convex and strictly convex at $1$.) 

It follows the following closed-form formula for the Taneja divergence between Cauchy densities:
$$
D_T[p_{l_1,s_1},p_{l_2,s_2}]= \log\left(\frac{1}{2}\left( 1+\sqrt{\frac{(s_1+s_2)^2+(l_1-l_2)^2}{4s_1s_2}}  \right)\right).
$$

We can express the $T$-divergence  between Cauchy densities as a function of the chi-squared divergence as follows:

$$
h_T(u) = \frac{1}{2}h_\KL(u)-h_\JS(u)  = \log \left( \frac{1+\sqrt{1+\frac{u}{2}}}{2} \right).
$$

A related divergence to the $T$-divergence is the Kumar-Chhina divergence~\cite{KumarChhina-2005}:
$$
D_{\KC}(p, q)=\int \frac{(p(x)+q(x))(p(x)-q(x))^{2}}{p(x) q(x)} \log \frac{p(x)+q(x)}{2 \sqrt{p(x) q(x)}} \dx.
$$

It is an $f$-divergence for the generator:
$$
f_\KC(u)=\frac{(u+1)(u-1)^{2}}{u} \log \frac{u+1}{2 \sqrt{u}},
$$
since we $D_{\KC}(p, q)=I_{f_\KC}(p,q)$ for the convex generator $f_\KC$.


\subsection{Maximal invariants (proof of Proposition \ref{prop:McCullagh-mi})}\label{sec:maxinvariant}

This subsection gives details of arguments in the final part of \cite[Section 1]{McCullagh1993}. 
 
\begin{proof}
First, let us show that 

\begin{lemma}
For every $(z,w) \in \mathbb{H}^2$, there exist $\lambda \geq 1$ and $A \in \SL(2, \bbR)$ such that 
$ (A.z, A.w) = (\lambda i, i)$.
\end{lemma}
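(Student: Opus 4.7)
The plan is to use the transitivity of the $\SL(2,\bbR)$-action on the upper half-plane $\bbH$ together with the structure of the stabilizer of $i$, which equals $\SO(2)$.

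First I would send $w$ to $i$. Writing $w = u + iv$ with $v > 0$, the matrix
\[
A_1 := \mattwotwo{v^{-1/2}}{-u\, v^{-1/2}}{0}{v^{1/2}} \in \SL(2,\bbR)
\]
satisfies $A_1.w = i$ by direct calculation. Let $z' := A_1.z \in \bbH$; the problem is then reduced to finding a $B \in \SL(2,\bbR)$ with $B.i = i$ and $B.z' = \lambda i$ for some $\lambda \geq 1$.

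Next I would use that the stabilizer of $i$ in $\SL(2,\bbR)$ is exactly
\[
\SO(2) = \left\{R_\theta = \mattwotwo{\cos\theta}{-\sin\theta}{\sin\theta}{\cos\theta} \st \theta \in [0,2\pi)\right\},
\]
as follows easily from solving $\frac{ai+b}{ci+d} = i$ under $ad-bc=1$. Then $\Im(R_\theta.z') = \frac{\Im(z')}{|z'\sin\theta + \cos\theta|^2}$ and $\Re(R_\theta.z')$ is a continuous function of $\theta$ that changes sign, so for suitable $\theta$ we get $R_\theta.z' = \mu i$ for some $\mu > 0$. (Equivalently, the $\SO(2)$-orbit of $z'$ is a hyperbolic circle around $i$, which meets the positive imaginary axis.)

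To conclude with $\lambda \geq 1$, I would observe that the matrix $J := \mattwotwo{0}{-1}{1}{0} \in \SO(2)$ sends $\mu i$ to $i/\mu$; so whichever of $\mu$ and $1/\mu$ is $\geq 1$, we select the corresponding rotation and define $A := B A_1$ with $B$ equal to either $R_\theta$ or $J R_\theta$. Setting $\lambda := \max(\mu, 1/\mu)$ gives $(A.z, A.w) = (\lambda i, i)$ with $\lambda \geq 1$, as required. The only non-routine step is producing the rotation sending $z'$ to the imaginary axis, but this is elementary once one has the explicit formula for $R_\theta.z'$.
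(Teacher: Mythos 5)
Your proof is correct and follows essentially the same route as the paper: reduce to the stabilizer $\SO(2)$ of $i$ (the paper invokes transitivity abstractly where you write the explicit matrix $A_1$), use the rotation formula to land $z'$ on the imaginary axis, and apply $\mattwotwo{0}{-1}{1}{0}$ to arrange $\lambda\geq 1$. The only cosmetic point is the degenerate case where $\Re(R_\theta.z')$ vanishes identically (i.e.\ $z'=i$), which is trivially covered by $\theta=0$.
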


\begin{proof}
Since the special orthogonal group $\SO(2, \bbR)$ is the isotropy subgroup of $\SL(2, \bbR)$ 
for $i$ and the action is transitive, it suffices to show that for every $z \in \bbH$ there 
exist $\lambda \geq 1$ and $A \in \SO(2, \bbR)$ such that $\lambda i = A.z$.  

Since  we have that for every $\lambda > 0$, 
$$ 
\mattwotwo{0}{-1}{1}{0}.\lambda i = \frac{i}{\lambda},$$
it suffices to show that for every $z \in \bbH$ there exist $\lambda > 0$ and $A \in \SO(2, \bbR)$ 
such that $\lambda i = A.z$.  

We have that 
$$ 
\mattwotwo{\cos \theta}{-\sin \theta}{\sin \theta}{\cos \theta}.z = 
\frac{\frac{|z|^2 -1}{2} \sin 2\theta + \Re(z) \cos 2\theta + i \Im(z)}{\left|z \sin \theta + \cos \theta \right|^2},
$$

Therefore for some $\theta$, we have
$$
\frac{|z|^2 -1}{2} \sin 2\theta + \Re(z) \cos 2\theta  = 0.
$$ 
\end{proof}

By this lemma, we have that for some $\lambda, \lambda^{\prime}  \geq 1$ and $A, A^{\prime} \in \SL(2, \bbR)$, 
$$
(\lambda i, i) = (A.z, A.w), \  (\lambda^{\prime} i, i) = (A^{\prime}.z^{\prime}, A^{\prime}.w^{\prime}),
$$

We see that 
$$
\chi(z, w) = \chi(\lambda i, i)  = \frac{(\lambda - 1)^2}{4\lambda} = \frac{1}{4} \left(\lambda + \frac{1}{\lambda} - 2\right), 
$$
and
$$
\chi(z^{\prime}, w^{\prime}) = \chi(\lambda^{\prime} i, i)  = \frac{(\lambda^{\prime} - 1)^2}{4\lambda^{\prime}} = \frac{1}{4} (\lambda^{\prime} + \frac{1}{\lambda^{\prime}} - 2).$$

If $\chi(z^{\prime}, w^{\prime}) =  \chi(z, w)$, 
then, $\lambda = \lambda^{\prime}$ and hence $(A.z, A.w) = (A^{\prime}.z^{\prime}, A^{\prime}.w^{\prime})$.
\end{proof}

\section{Invariance of $f$-divergences and $f$-divergences between distributions related to the Cauchy distributions}\label{sec:logcauchy}

There are several distributions which are strongly related with the Cauchy distributions. 
In this section, we shall make use of the invariance properties of $f$-divergences to derive results for the circular Cauchy~\cite{kato2013extended,pewsey2013circular}, wrapped Cauchy~\cite{MLE-WrappedCauchy-1988} and log-Cauchy~\cite{olkin2007life} families which are all related to the Cauchy distributions via various transformations either on the parameter space or on the observation space.

First, consider the family of circular Cauchy distributions parameterized by complex parameters $w$ belonging to the unit disk $\bbD=\{w\in\bbC\ :\ |w|<1\}$. A Circular Cauchy distribution (CC) is an angular distribution~\cite{pewsey2013circular} playing an important role in circular and directional statistics~\cite{mardia2009directional} with the following probability density function:
$$
p_w^\cc(\phi):=\frac{1}{2\pi} \frac{1-|w|^2}{|e^{i\phi}-w|^2}\,\mathrm{d}z,\quad \phi\in [-\pi,\pi),
$$
where $z:=e^{i\phi}\in\bbC$.
Let $w=\rho e^{i\phi_0}$ be the polar form of $w$. 
The circular Cauchy density can be rewritten~\cite{kato2013extended} as:
$$
p_{\rho,\phi_0}^\cc(\phi)=\frac{1}{2\pi} \frac{1-\rho^2}{1+\rho^2-2\rho\cos(\phi-\phi_0)}\,\mathrm{d}\phi,\quad \phi\in [-\pi,\pi).
$$

Consider the subgroup of M\"obius transformations $\mathrm{SL}_2(\bbC)$ that maps $\bbD$ onto itself via transformations 
of the holomorphic automorphism group of the complex unit disk~\cite{ungar1994holomorphic,needham1998visual} (informally speaking, hyperbolic motions):
$$
w\mapsto t_{\phi,a}(w):=e^{i\phi} \frac{w+a}{\bar{a}w+1},\quad \phi\in [-\pi,\pi), a\in\bbC.
$$

The following invariance of $f$-divergences with respect to non-degenerate holomorphic mappings $t_{\phi,a}$ of parameters holds:

\begin{proposition}\label{thm:ccfdiv}
We have $I_f(p^\cc_{w_1}:p^\cc_{w_2})=I_f(p^\cc_{t_{\phi,a}(w_1)}:p^\cc_{t_{\phi,a}(w_2)})$ for all $\phi\in [-\pi,\pi)$ and $a\in\bbC$.
\end{proposition}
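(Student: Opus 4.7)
The plan is to realize $t_{\phi,a}$ as a bijective change of variable on the sample space (the unit circle $\bbU$), and then to appeal to the standard invariance of $f$-divergences under such bijections. Specifically, for any measurable bijection $T$ with measurable inverse, a direct change of variable in the defining integral gives $I_f(T_* p : T_* q) = I_f(p:q)$, where $T_* p$ denotes the pushforward density. This is the same mechanism used in Lemma~\ref{lemma:finv} for the $\SL(2,\bbR)$ action on the line; the twist here is that the parameter transformation $t_{\phi,a}$ and the sample-space transformation are both realizations of the same holomorphic automorphism of $\bbD$, only applied to $w\in\bbD$ in the first case and to $z\in\bbU$ in the second.

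To set this up, I would first note that every M\"obius automorphism of $\bbD$ extends continuously to a diffeomorphism of $\bbU = \partial \bbD$, and so $t_{\phi,a}$ induces a smooth bijection $T : [-\pi, \pi) \to [-\pi, \pi)$ of angular coordinates defined implicitly by $e^{i T(\psi)} = t_{\phi,a}(e^{i\psi})$. Differentiating this relation and using $|e^{i\psi}|=|e^{iT(\psi)}|=1$ yields $|T'(\psi)| = |t_{\phi,a}'(e^{i\psi})|$, so the change of variable $\psi \mapsto T(\psi)$ contributes a Jacobian factor $|\alpha'(z)|$ with $\alpha := t_{\phi,a}$ and $z = e^{i\psi}$.

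The crux is the pushforward identity $T_* p^\cc_w = p^\cc_{t_{\phi,a}(w)}$, which after the preceding change-of-variables reduces to the classical conformal invariance of the Poisson kernel:
$$
\frac{1-|\alpha(w)|^2}{|\alpha(z)-\alpha(w)|^2}\,|\alpha'(z)| \;=\; \frac{1-|w|^2}{|z-w|^2}, \qquad z\in\bbU,\ w\in\bbD.
$$
This identity can be verified by a direct algebraic computation starting from the explicit form $t_{\phi,a}(w) = e^{i\phi}(w+a)/(\bar a w + 1)$ and $z\bar z = 1$, or extracted from the Schwarz--Pick lemma. Once it is in hand, applying the generic invariance from the first paragraph with $p = p^\cc_{w_1}$, $q = p^\cc_{w_2}$ and the bijection $T$ finishes the proof.

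The only non-mechanical ingredient is the Poisson-kernel identity above; everything else is bookkeeping with the change-of-variables formula. An alternative route, closer in spirit to the $\SL(2,\bbR)$ argument of Lemma~\ref{lemma:finv}, is to carry out the change of variable $z = \alpha(y)$ directly inside the integral defining $I_f(p^\cc_{\alpha(w_1)}:p^\cc_{\alpha(w_2)})$ and let the $|\alpha'|$ factors cancel between the Jacobian, the Poisson kernel in the numerator of the argument of $f$, and the outer density; this is essentially the same computation repackaged, but bypasses the need to name the induced boundary map $T$ explicitly.
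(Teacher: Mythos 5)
Your proof is correct, and it is more self-contained than the paper's justification, which disposes of this proposition in a single sentence by appealing to the invariance of $f$-divergences under smooth invertible transformations (with the intended mechanism left implicit: the disk automorphisms correspond, via the Cayley correspondence between circular Cauchy and Cauchy distributions used in Theorem~\ref{thm:fdivsymCC}, to the $\SL(2,\bbR)$ action of Lemma~\ref{lemma:finv}). You instead work directly on the circle: you realize $t_{\phi,a}$ as a boundary diffeomorphism of $\bbU$, verify $|T'(\psi)|=|t_{\phi,a}'(e^{i\psi})|$, and supply the key equivariance $T_* p^\cc_w = p^\cc_{t_{\phi,a}(w)}$ via the conformal invariance of the Poisson kernel, after which the generic change-of-variable invariance of $I_f$ (the same mechanism as Lemma~\ref{lemma:finv}) finishes the argument. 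This fills in precisely the step the paper leaves unstated — a bare parameter relabeling would not prove the claim; one needs the family to be equivariant under a sample-space bijection inducing $w\mapsto t_{\phi,a}(w)$ — and your Poisson-kernel identity checks out by direct computation. Two harmless caveats you inherit from the paper's statement: one needs $|a|<1$ (so that $t_{\phi,a}$ is indeed an automorphism of $\bbD$, as the paper intends by ``holomorphic automorphism group''), and the induced angular map is a circle diffeomorphism rather than literally an increasing bijection of the interval $[-\pi,\pi)$, which affects only bookkeeping in the change of variables.
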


This proposition relies on the fact that $I_f(p_{\theta_1}:p_{\theta_2})=I_f(p_{\eta_1}:p_{\eta_2})$ for any smooth invertible transformations $\eta(\theta)$ (with smooth inverse $\theta(\eta)$). Here, however the distribution parameters are complex numbers.


Next, McCullagh~\cite{mccullagh1992conditional} noticed that if $X\sim\mathrm{Cauchy}(\theta)$ 
then $Y=\frac{1+iX}{1-iX}$ follows $\mathrm{CCauchy}\left(\frac{1+i\theta}{1-i\theta}\right)$ with parameter complex $w=\frac{1+i\theta}{1-i\theta}$. 
Denote the complex parameter reciprocal conversion functions $\theta\leftrightarrow w$ by $w(\theta)=\frac{1+i\theta}{1-i\theta}$ 
and $\theta(w)=i\frac{1-w}{(1+w)}$.   
Let us write $w=a+ib$ for $a,b\in\bbR$.

\begin{theorem}[$f$-divergences between circular Cauchy distributions]\label{thm:fdivsymCC}
The $f$-divergence between two circular Cauchy distributions amounts to the $f$-divergence between two corresponding Cauchy distributions:
$I_f(p^\cc_{w_1}:p^\cc_{w_2})=I_f(p_{\theta(w_1)}:p_{\theta(w_2)})$.
It follows that all $f$-divergences between circular Cauchy distributions are symmetric and can be expressed as  scalar functions of the chi square divergence.
\end{theorem}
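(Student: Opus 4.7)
My plan is to combine two ingredients: the reparameterization invariance of $f$-divergences under measurable bijections of the sample space, and McCullagh's observation (recalled immediately above the theorem) that the Cayley transform $T(x):=\frac{1+ix}{1-ix}$ maps a Cauchy variable $X\sim\Cauchy(\theta)$ to a circular Cauchy variable $T(X)\sim\CCauchy(w(\theta))$ with $w(\theta)=\frac{1+i\theta}{1-i\theta}$. This $T$ is a bijection from $\bbR$ onto the unit circle minus the point $-1$, a set of measure zero for every distribution under consideration.

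First I would record the general fact: if $T:\calX\to\calY$ is a bi-measurable bijection and $p,q$ are densities on $\calX$ with pushforward densities $T_*p,T_*q$, then $I_f(T_*p:T_*q)=I_f(p:q)$. This is immediate from a change of variables in the integral definition of $I_f$: the Jacobian of $T^{-1}$ appearing in each pushforward density cancels cleanly inside the ratio $T_*q/T_*p$, while the outer factor $T_*p\,\mathrm{d}y$ reverts to $p\,\mathrm{d}x$. This sample-space invariance is the direct counterpart of the parameter-space invariance used in Lemma~\ref{lemma:finv}.

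Applying this invariance to $T$ together with the pushforward identities from McCullagh yields
\[
I_f(p^\cc_{w_1}:p^\cc_{w_2})=I_f(T_*p_{\theta(w_1)}:T_*p_{\theta(w_2)})=I_f(p_{\theta(w_1)}:p_{\theta(w_2)}),
\]
which is the first assertion. Theorem~\ref{thm:fdivchisquared} then supplies both the symmetry of the right-hand side in $w_1,w_2$ and its expression as $h_f(\chi(\theta(w_1),\theta(w_2)))$. Since the chi-squared divergence is itself an $f$-divergence, the first assertion also gives $D_\chi(p^\cc_{w_1}:p^\cc_{w_2})=\chi(\theta(w_1),\theta(w_2))$, so we conclude $I_f(p^\cc_{w_1}:p^\cc_{w_2})=h_f(D_\chi(p^\cc_{w_1}:p^\cc_{w_2}))$, with exactly the same scalar function $h_f$ as in the Cauchy case.

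The main obstacle, such as it is, is purely bookkeeping: one must verify that the pushforward of the Cauchy density $p_\theta$ under $T$ really is the density $p^\cc_{w(\theta)}$ as written in the excerpt (the Jacobian $|T'(x)|=\frac{2}{1+x^2}$ combines with the $\frac{1}{\pi}$ in the Cauchy normalization to produce the $\frac{1}{2\pi}$ in the circular Cauchy density), and one must handle the removable singularity of $T^{-1}$ at the point $-1$ on the circle. Neither requires any analytic input beyond a change of variables and Theorem~\ref{thm:fdivchisquared}.
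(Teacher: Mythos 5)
Your proof is correct and follows essentially the same route as the paper: the paper's argument is precisely the invariance of $f$-divergences under the diffeomorphic change of variables $Y=\frac{1+iX}{1-iX}$ (McCullagh's pushforward to the circular Cauchy family) combined with Theorem~\ref{thm:fdivchisquared}. Your version merely spells out the Jacobian cancellation and the measure-zero point $-1$ on the circle, which the paper leaves implicit.
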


This theorem follows from the invariance of $f$-divergences~\cite{IG-2016,EIG-2020} and Theorem~\ref{thm:fdivsymmetric}.
That is, let $Y=m(X)$ for $m$ a diffeomorphism between continuous random variables $X$ and $Y$.
Denote by $p_X$ and $q_Y$ the probability densities functions with support $\mathcal{X}$.
It is a key property of $f$-divergences that $f$-divergences are invariant under diffeomorphic transformations~\cite{fdivdiffeo-2010,infproj-2021}:
$$
I_f(p_{X_1}:p_{X_2})=I_f(q_{Y_1}:q_{Y_2}).
$$ 
This invariance of $f$-divergences further holds for non-deterministic mappings called sufficiency of stochastic kernels~\cite{liese2006divergences}.
This result is related to the the result obtained for the Kullback-Leibler divergence in~\cite{akaoka2021bahadur} (Lemma 5.1).
It is worth noting that the circular Cauchy distribution can be interpreted as the exit distribution of a Brownian motion starting at $w\in\bbD$ when reaching the unit boundary circle, see~\cite{mccullagh1992conditional}.

Next, consider the wrapped Cauchy distributions (WC)~\cite{MLE-WrappedCauchy-1988} with probability density functions:
$$
p^\wc_{\mu, \gamma}(\phi) = \sum_{n=-\infty}^{\infty} \frac{\gamma}{\pi\left(\gamma^{2}+(\phi-\mu+2 \pi n)^{2}\right)}, \quad -\pi\leq \phi<\pi,
$$
where $\mu\in\bbR$ denotes the peak position of the unwrapped distribution and $\gamma>0$ the scale parameter.
Let $\eta=\mu+i\gamma$.

The density can be rewritten equivalently as
$$
p^\wc_{\mu, \gamma}(\phi) = \frac{1}{2 \pi} \frac{\sinh (\gamma)}{\cosh (\gamma)-\cos (\phi-\mu)}.
$$

Since we have the following identity:
$$
p^\cc_{w}(\phi)=p^\wc\left(\phi, \eta(w)\right),\quad  \eta(w)=\frac{w-i}{w+i}
$$
it follows the following theorem:

\begin{theorem}[$f$-divergences between wrapped Cauchy distributions]\label{thm:fdivsymWC}
The $f$-divergence between two wrapped Cauchy distributions amounts to the $f$-divergence between two corresponding Cauchy distributions:
$I_f(p^\wc_{\eta_1}:p^\wc_{\eta_2})=I_f(p_{\theta(\eta_1)}:p_{\theta(\eta_2)})$.
It follows that the $f$-divergence between wrapped Cauchy distributions is symmetric and can be expressed as a scalar function of the chi square divergence.
\end{theorem}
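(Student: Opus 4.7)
The plan is to exploit two successive reparametrizations: the wrapped Cauchy family is obtained from the circular Cauchy family via the bijection $w \leftrightarrow \eta$ on parameter space, and the circular Cauchy family is itself identified with the Cauchy family via the bijection $\theta \leftrightarrow w$. Since $f$-divergences only depend on the underlying densities (not on the labels of parameters), invariance under any such reparametrization is automatic, so the theorem will reduce to the previously established Theorem~\ref{thm:fdivsymCC}.

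Concretely, I would first use the density identity $p^{\cc}_{w}(\phi) = p^{\wc}_{\eta(w)}(\phi)$ stated just before the theorem. Given two wrapped Cauchy parameters $\eta_1, \eta_2$, let $w_1, w_2$ be the preimages under the Möbius map $w \mapsto \eta(w) = (w-i)/(w+i)$ (which is a bijection between $\bbD$ and the upper half-plane, so this is well-defined). Then the integrands defining $I_f(p^{\wc}_{\eta_1}: p^{\wc}_{\eta_2})$ and $I_f(p^{\cc}_{w_1}: p^{\cc}_{w_2})$ agree pointwise in $\phi$, so
\[
I_f(p^{\wc}_{\eta_1}: p^{\wc}_{\eta_2}) = I_f(p^{\cc}_{w_1}: p^{\cc}_{w_2}).
\]
Next I would apply Theorem~\ref{thm:fdivsymCC}, which gives $I_f(p^{\cc}_{w_1}: p^{\cc}_{w_2}) = I_f(p_{\theta(w_1)}: p_{\theta(w_2)})$ using the inverse McCullagh map $\theta(w) = i(1-w)/(1+w)$. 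Composing the two bijections yields the claimed equality $I_f(p^{\wc}_{\eta_1}: p^{\wc}_{\eta_2}) = I_f(p_{\theta(\eta_1)}: p_{\theta(\eta_2)})$, where $\theta(\eta) := \theta(w(\eta))$ is the total Cauchy-to-wrapped-Cauchy parameter conversion.

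Finally, symmetry and expressibility as a scalar function of the chi-squared divergence follow immediately: by Theorem~\ref{thm:fdivsymmetric} the right-hand side is symmetric in $\theta(\eta_1), \theta(\eta_2)$ and equals $h_f(\chi(\theta(\eta_1),\theta(\eta_2)))$; so defining $\tilde h_f^{\wc}(\eta_1,\eta_2) := h_f(\chi(\theta(\eta_1),\theta(\eta_2)))$ gives the desired representation, and symmetry is inherited.

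The only real point requiring care is the bijectivity of the parameter maps on the appropriate domains: $w \mapsto \eta(w)$ is a Cayley-type Möbius transformation, and one must check it maps $\bbD$ onto the parameter domain of $\eta$ (equivalently the upper half-plane, since $\eta = \mu + i\gamma$ with $\gamma > 0$). This is a routine verification, and once it is in place the proof is essentially transitivity of the identity-of-densities-via-reparametrization principle along the chain $\wc \leftrightarrow \cc \leftrightarrow \Cauchy$. The main conceptual obstacle — establishing invariance of $f$-divergences under a nontrivial transformation on parameter space — has already been handled in Theorem~\ref{thm:fdivsymCC}, so no new analytic content is needed here.
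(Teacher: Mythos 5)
Your proposal is correct and follows essentially the same route as the paper: the paper's (very brief) justification is precisely the density identity $p^\cc_{w}(\phi)=p^\wc_{\eta(w)}(\phi)$ identifying the wrapped and circular Cauchy families under a bijective reparametrization, after which Theorem~\ref{thm:fdivsymCC} and Theorem~\ref{thm:fdivsymmetric} give the stated equality, symmetry, and the representation as $h_f$ of the chi-squared divergence. Your added remark about verifying that the Cayley-type map is a bijection between the relevant parameter domains is the only point the paper leaves implicit, and it does not change the argument.
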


Finally, consider the family $\calLC$ of Log-Cauchy (LC) distributions (see~\cite{olkin2007life}, p. 443) and~\cite{olive2014statistical}, p. 329):
$$
\calLC:=\left\{ p^\lc_{\mu,\sigma}(y) =\frac{1}{y \pi}\left[\frac{\sigma}{(\log y-\mu)^{2}+\sigma^{2}}\right],\quad \mu>0,\sigma>0 \right\},
$$
defined on the positive real support $\calY=\bbR_{++}$.

If $X\sim\mathrm{Cauchy}(l,s)$ is a random variable following a Cauchy distribution then $Y=\exp(X)$ is a random variable  following a log-Cauchy distribution with $\mu=l$ and $\sigma=s$.
Reciprocally,  if $Y$ follows a log-Cauchy distribution $\mathrm{LogCauchy}(\mu,\sigma)$, then $X=\log(Y)$ follows a Cauchy distribution with $l=\mu$ and $s=\sigma$. 
In particular, if $Y\sim\mathrm{LogCauchy}(0,1)$ then $X=\log(Y)\sim\mathrm{Cauchy}(0,1)$.

We state the symmetric property of $f$-divergences between log-Cauchy distributions:

\begin{theorem}\label{thm:fdivlc}
The $f$-divergences between two Log-Cauchy distributions $\mathrm{LogCauchy}(\mu_1,\sigma_1)$ and $\mathrm{LogCauchy}(\mu_2,\sigma_2)$ amount to the $f$-divergences between the two corresponding Cauchy distributions: 
$I_f(p^\lc_{\mu_1,\sigma_1}:p^\lc_{\mu_2,\sigma_2})=I_f(p_{\mu_1,\sigma_1}:p_{\mu_2,\sigma_2})$.
It follows that the $f$-divergences between two Log-Cauchy distributions are symmetric and can be expressed as a scalar function of the chi square divergence.
\end{theorem}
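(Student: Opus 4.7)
The plan is to reduce the theorem to Theorem~\ref{thm:fdivsymmetric} by using the invariance of $f$-divergences under diffeomorphisms, exactly in the same spirit as Theorems~\ref{thm:fdivsymCC} and~\ref{thm:fdivsymWC}. The key observation is that the exponential map $m : \bbR \to \bbR_{++}$, $m(x) = e^x$, is a smooth bijection with smooth inverse $m^{-1}(y) = \log y$, and that if $X \sim \Cauchy(\mu, \sigma)$ then $Y = m(X) \sim \LogCauchy(\mu, \sigma)$. Therefore the pushforward of $p_{\mu,\sigma}$ under $m$ is exactly $p^\lc_{\mu,\sigma}$, and the invariance of $f$-divergences under diffeomorphic reparametrizations of the sample space (cited in the paper as~\cite{fdivdiffeo-2010,infproj-2021}) immediately yields
\[
I_f(p^\lc_{\mu_1,\sigma_1} : p^\lc_{\mu_2,\sigma_2}) = I_f(p_{\mu_1,\sigma_1} : p_{\mu_2,\sigma_2}).
\]

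First I would record the change-of-variable identity $p^\lc_{\mu,\sigma}(y)\,\dy = p_{\mu,\sigma}(\log y)\,\mathrm{d}(\log y)$, which is easily checked from the stated density, and observe that the likelihood ratio satisfies $p^\lc_{\mu_2,\sigma_2}(y)/p^\lc_{\mu_1,\sigma_1}(y) = p_{\mu_2,\sigma_2}(\log y)/p_{\mu_1,\sigma_1}(\log y)$ because the factor $1/(\pi y)$ cancels. Substituting $x = \log y$ in the defining integral of $I_f(p^\lc_{\mu_1,\sigma_1} : p^\lc_{\mu_2,\sigma_2})$ then transforms it exactly into the defining integral of $I_f(p_{\mu_1,\sigma_1} : p_{\mu_2,\sigma_2})$ on $\bbR$; this gives a direct one-line proof without invoking the abstract invariance result, and in fact is what that invariance reduces to in this case.

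Having established the identity, the remaining conclusions are immediate consequences of Theorem~\ref{thm:fdivsymmetric}: symmetry of $I_f$ between Cauchy densities transfers to symmetry between Log-Cauchy densities, and the representation $I_f(p_{\mu_1,\sigma_1} : p_{\mu_2,\sigma_2}) = h_f(\chi((\mu_1,\sigma_1),(\mu_2,\sigma_2)))$ carries over verbatim. I would state the final formula with the same function $h_f$ and the same $\chi$, so that the Log-Cauchy case inherits every scalar-function instance (KLD, TV, LeCam-Vincze, JSD, Taneja, etc.) computed earlier for Cauchy distributions.

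There is essentially no obstacle in this argument; the only point that deserves a remark is the measurability/Jacobian bookkeeping in the change of variables, but since $m$ is a global diffeomorphism of $\bbR$ onto $\bbR_{++}$ with positive Jacobian $e^x$, this is routine. The value of the theorem is thus not the proof but the observation that the symmetric and chi-square-parametrized structure propagates along diffeomorphisms, paralleling the treatment of the circular and wrapped Cauchy families in Section~\ref{sec:logcauchy}.
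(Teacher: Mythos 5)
Your argument is correct and coincides with the paper's own proof: the paper likewise performs the change of variable $y=e^{x}$ (equivalently, invokes the diffeomorphism invariance of $f$-divergences), observes that the Jacobian factors cancel in the likelihood ratio, and then transfers symmetry and the representation $h_f(\chi(\cdot,\cdot))$ from Theorem~\ref{thm:fdivsymmetric}. No gaps to report.
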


\begin{proof}
First, let us recall that the generic relationships between the probability density functions $p_X$ and $q_Y$
with corresponding real-valued random variables satisfying $Y=m(X)$ for a differentiable and invertible function $m$ with $m'(x)\not=0$ is
\begin{eqnarray*}
p_X(x) &=& m'(x) \times q_Y(m(x)) = m'(x) \times q_Y(y),\\
q_Y(y) &=&  (m^{-1})'(y) \times p_X(m^{-1}(y))= (m^{-1})'(y) \times p_X(x). \label{eq:densitytransform}
\end{eqnarray*}

Now consider the case  $y=m(x)=\exp(x)$ with $m^{-1}(y)=\log(y)$, and $m'(x)=\exp(x)$ and $({m^{-1}})'(y)= {1/y}$.
Let us make a change of variable in the $f$-divergence integral with $y=\exp(x)$ and $\dy=\exp(x)\dx$.
We have $
p_{l,s}(x)\dx=p^\lc_{\mu,\sigma}(y)\dy$,
with $\frac{\dx}{\dy}=\frac{1}{y}$ and $\frac{\dy}{\dx}=e^y$.
Let $q_{Y_i}\sim \mathrm{LogCauchy}(\mu_i,\sigma_i)$ and $p_{X_i}\sim \mathrm{Cauchy}(\mu_i,\sigma_i)$ for $i\in\{1,2\}$.
By a change of variable, we have:

\begin{eqnarray*}
I_f(q_{Y_1}:q_{Y_2}) &:=&  \int_{\bbR_{++}} q_{Y_1}(y) f\left( \frac{q_{Y_2}(y)}{q_{Y_1}(y)}\right) \dy \\
&=& \int_{\bbR_{++}} ({m^{-1}})'(y) \times p_{X_1}(m^{-1}(y)) f\left( \frac{({m^{-1}})'(y)\times p_{X_2}(m^{-1}(y))}{({m^{-1}})'(y) \times p_{X_1}({m}^{-1}(y))} \right) \dy,\\
&=& \int_{\bbR} p_{X_1}(x)f\left( \frac{p_{X_2}(x)}{p_{X_1}(x)}\right) \dx,\\
&=:& I_f(p_{X_1}:p_{X_2}).
\end{eqnarray*}

Then we use the symmetric property of the $f$-divergences of the Cauchy distributions to deduce the symmetry of the $f$-divergences between
 log-Cauchy distributions: $I_f(p^\lc_{\mu_1,\sigma_1}:p^\lc_{\mu_2,\sigma_2})=I_f(p^\lc_{\mu_2,\sigma_2}:p^\lc_{\mu_1,\sigma_1})$. 
It follows that we have $I_f(p^\lc_{\mu_1,\sigma_1}:p^\lc_{\mu_2,\sigma_2})=h_f(\chi((\mu_1,\sigma_1),(\mu_2,\sigma_2)))$.
\end{proof}

\section{Asymmetric Kullback-Leibler divergence between multivariate Cauchy distributions}\label{sec:fdivasymmetric}
For a symmetric positive-definite $d\times d$ matrix $P\succ 0$ and a $d$-dimensional location vector $\mu$, 
the density of a random variable~\cite{infproj-2021} $X_{\mu,P}:=P X+\mu$ with $X\sim p(x)$ (standard density) is 
\begin{equation}
p_{\mu,P}(x):=|P|^{-1}\, p(P^{-1}(x-\mu)).
\end{equation}

A $d$-dimensional location scale family is formed by the set of densities $\{p_{\mu,P}(x)\ :\ P\succ 0, \mu\in\bbR^d\}$.
For example, the set of multivariate normal distributions (MVNs) form a multidimensional location-scale family~\cite{infproj-2021}.

The probability density function of a $d$-dimensional Cauchy distribution~\cite{press1972multivariate} (MVCs) with parameters $\mu \in \bbR^d$ and $\Sigma\succ 0$ be a $d\times d$ positive-definite symmetric matrix is defined by:
$$
p_{\mu,\Sigma}(x) := \frac{C_d}{(\det \Sigma)^{1/2}} \left( 1 + \left(x - \mu\right)^{\top} \Sigma^{-1} \left(x-\mu\right) \right)^{-(d+1)/2}, \ x \in
 \bbR^d,
$$
where $C_d=\frac{\Gamma\left(\frac{d+1}{2}\right)}{\pi^{\frac{d+1}{2}}}$ is a normalizing constant, and $\Gamma(\cdot)$ denotes the gamma function.
The MVCs form a multivariate location-scale family with standard density: 
$$
p(x):=\frac{\Gamma\left(\frac{d+1}{2}\right)}{\pi^{\frac{d+1}{2}}}\left( 1 + x^\top x \right)^{-(d+1)/2},
$$
where matrix parameter $P=\Sigma^{\frac{1}{2}}$ denotes the symmetric positive-definite square root matrix of $\Sigma\succ 0$.
	
In this section, we shall prove that the $f$-divergences between any two densities of a multidimensional location-scale family with prescribed scale root matrix $P$
and even standard density (i.e., $p(x)=p(-x)$) is symmetric, and then show that the KLD between bivariate Cauchy distributions is asymmetric in general.

First, let us consider the case $\Sigma=I$: The corresponding set of multivariate Cauchy distributions yields a multivariate location subfamily $\{p_{\mu}(x)=p_{\mu,I}(x)\ :\ \mu\in\bbR^d\}$ with standard distribution
$p(x)=p_{0,I}(x)= \frac{C_d}{(\det \Sigma)^{1/2}} \left( 1 + x^\top x \right)^{-(d+1)/2}$.
Since the standard density is even (i.e., $p(x)=p(-x)$), we can extend straightforwardly the result of Proposition~\ref{prop:fdivlocation} using a multidimensional change of variable in the integrals of $f$-divergences:

\begin{proposition}\label{prop:fdivloceven}
The $f$-divergences between any two densities of the multivariate location Cauchy family is symmetric: 
$I_f(p_{\mu_1},p_{\mu_2})=I_f(p_{\mu_2},p_{\mu_1})$.
\end{proposition}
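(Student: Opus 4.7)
The plan is to adapt the argument of Proposition~\ref{prop:fdivlocation} to the multivariate setting via a point-reflection change of variables. The key facts are: (i) the standard multivariate Cauchy density $p(x) = C_d (1 + x^\top x)^{-(d+1)/2}$ is even, i.e., $p(-x) = p(x)$, since it depends on $x$ only through the quadratic form $x^\top x$; and (ii) the affine map $y = \mu_1 + \mu_2 - x$ has Jacobian $-I$, so $|\det| = 1$ and $\mathrm{d}y = \mathrm{d}x$ as a Lebesgue measure. Under this substitution one has $x - \mu_1 = \mu_2 - y = -(y - \mu_2)$ and $x - \mu_2 = \mu_1 - y = -(y - \mu_1)$.

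I would then carry out the following chain of equalities, exactly mirroring the univariate proof but in $\bbR^d$:
\begin{eqnarray*}
I_f(p_{\mu_1} : p_{\mu_2})
&=& \int_{\bbR^d} p(x-\mu_1)\, f\!\left(\frac{p(x-\mu_2)}{p(x-\mu_1)}\right) \mathrm{d}x \\
&=& \int_{\bbR^d} p(-(y-\mu_2))\, f\!\left(\frac{p(-(y-\mu_1))}{p(-(y-\mu_2))}\right) \mathrm{d}y \\
&=& \int_{\bbR^d} p(y-\mu_2)\, f\!\left(\frac{p(y-\mu_1)}{p(y-\mu_2)}\right) \mathrm{d}y \\
&=& I_f(p_{\mu_2} : p_{\mu_1}),
\end{eqnarray*}
where the second line uses the substitution $y = \mu_1+\mu_2-x$ together with orientation-absorbed $\mathrm{d}y$, and the third line uses evenness of $p$.

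There is essentially no main obstacle, since the multivariate Cauchy standard density is manifestly even and the Jacobian of a translation-reflection has unit absolute value. The only point worth stating explicitly is that the argument does not use any property specific to the Cauchy family beyond evenness of the standard density, so it extends verbatim to any multivariate location family whose standard density satisfies $p(-x) = p(x)$ (in particular to the isotropic multivariate normal family, and more generally to the $P = I$ slice of any location-scale family with even standard density). This observation is exactly the bridge needed before addressing in the next subsection the general scale-matrix case, where symmetry is shown to fail.
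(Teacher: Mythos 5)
Your proof is correct and is exactly what the paper intends: the paper proves Proposition~\ref{prop:fdivloceven} by noting that the standard multivariate Cauchy density is even and extending Proposition~\ref{prop:fdivlocation} straightforwardly via a multidimensional change of variable, which is precisely your point-reflection substitution $y=\mu_1+\mu_2-x$ with unit Jacobian. No gaps; your closing remark that only evenness of the standard density is used also matches the paper's framing.
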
 

Next, we consider the case of MVC location subfamilies with prescribed matrix $\Sigma$ (or equivalently $P$).
\begin{proposition}\label{prop:fdivlocSigmaeven}
The $f$-divergences between any two densities of the multivariate location Cauchy family $\{p_{\mu,\Sigma}\ :\ \mu\in\bbR^d\}$ with prescribed matrix $\Sigma$ is symmetric: 
$I_f(p_{\mu_1,\Sigma},p_{\mu_2,\Sigma})=I_f(p_{\mu_2,\Sigma},p_{\mu_1,\Sigma})$.
\end{proposition}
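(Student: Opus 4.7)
The plan is to reduce to the already-established case $\Sigma = I$ of Proposition~\ref{prop:fdivloceven} by pushing forward through a linear diffeomorphism that trivializes the scale matrix. Let $P = \Sigma^{1/2}$ denote the symmetric positive-definite square root of $\Sigma$, and consider the invertible linear change of variable $y = P^{-1} x$ with Jacobian $|P|$. A direct substitution in the definition of $p_{\mu_i, \Sigma}$ shows that if $X \sim p_{\mu_i, \Sigma}$, then $Y = P^{-1} X$ has density $p_{P^{-1}\mu_i, I}$: the Jacobian factor $|P|$ cancels the factor $|P|^{-1}$ in the definition of $p_{\mu_i, \Sigma}$, and what remains is the unit-scale standard Cauchy density centered at $P^{-1}\mu_i$.

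Since $f$-divergences are invariant under smooth invertible transformations of the sample space (as recalled and used in \S\ref{sec:logcauchy}), applying this change of variable in the defining integral yields
\[ I_f(p_{\mu_1,\Sigma} : p_{\mu_2,\Sigma}) = I_f(p_{P^{-1}\mu_1, I} : p_{P^{-1}\mu_2, I}). \]
The right-hand side is an $f$-divergence between two members of the unit-scale multivariate Cauchy location family, so Proposition~\ref{prop:fdivloceven} applies and the divergence is symmetric in its two arguments. Reversing the two sides of the identity above gives $I_f(p_{\mu_1,\Sigma} : p_{\mu_2,\Sigma}) = I_f(p_{\mu_2,\Sigma} : p_{\mu_1,\Sigma})$, which is the claim.

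Alternatively, one can mimic the one-dimensional argument of Proposition~\ref{prop:fdivlocation} directly in $d$ dimensions: the standard multivariate Cauchy density is even in $x$ since it depends on $x$ only through the quadratic form $x^\top x$, and hence $p_{\mu,\Sigma}(2\mu - y) = p_{\mu,\Sigma}(y)$ for all $\mu$. A short computation then yields $p_{\mu_i,\Sigma}(\mu_1 + \mu_2 - y) = p_{\mu_{3-i},\Sigma}(y)$ for $i=1,2$, and the affine change of variable $y = \mu_1 + \mu_2 - x$ (whose Jacobian has modulus one) swaps the roles of $p_{\mu_1,\Sigma}$ and $p_{\mu_2,\Sigma}$ in the integrand of $I_f$, proving symmetry.

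There is no real obstacle: both proofs are purely routine changes of variable exploiting the evenness of the standard density. The only point demanding care is the bookkeeping of the Jacobian $|P|$, and the observation that the proof in fact works verbatim for any multivariate location-scale family (not just Cauchy) whose standard density is even, foreshadowing the more general statement pursued in the remainder of \S\ref{sec:fdivasymmetric}.
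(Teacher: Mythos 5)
Your main argument is essentially the paper's proof: the paper likewise reduces to the identity-scale case via the location-scale group invariance, writing $I_f\left(p_{\mu_1,\Sigma}:p_{\mu_2,\Sigma}\right)=I_f\left(p:p_{\Sigma^{-1/2}(\mu_2-\mu_1),I}\right)$, and then invokes Proposition~\ref{prop:fdivloceven}, which is exactly your change of variable $y=\Sigma^{-1/2}x$ up to an additional translation. Your alternative reflection argument (the substitution $y=\mu_1+\mu_2-x$, using that $p_{\mu,\Sigma}$ is even about $\mu$) is also correct and merely folds the paper's two-step reduction into a single change of variable, confirming in passing that the statement holds for any multivariate location-scale family with even standard density.
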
 

\begin{proof}
We shall use the following identities of $f$-divergences arising from the location-scale family group structure~\cite{infproj-2021}:
$$
I_{f}\left(p_{l_{1}, P_{1}}: p_{l_{2}, P_{2}}\right)=
I_{f}\left(p: p_{P_{1}^{-1}\left(l_{2}-l_{1}\right), P_{1}^{-1} P_{2}}\right)
=I_{f}\left(p_{P_{2}^{-1}\left(l_{1}-l_{2}\right), P_{2}^{-1} P_{1}}: p\right).
$$

Thus for the MVCs, we have:
$$
I_{f}\left(p_{\mu_{1}, \Sigma_{1}}: p_{\mu_{2}, \Sigma_{2}}\right)=
I_{f}\left(p : p_{\Sigma_1^{-\frac{1}{2}}\left(\mu_{2}-\mu_{1}\right), \Sigma_1^{-\frac{1}{2}} \Sigma_2^{\frac{1}{2}}}\right)
=I_{f}\left(p_{\Sigma_2^{-\frac{1}{2}}\left(\mu_{1}-\mu_{2}\right), \Sigma_2^{-\frac{1}{2}} \Sigma_1^{\frac{1}{2}}}: p\right).
$$

It follows that when $\Sigma_1=\Sigma_2=\Sigma$, we get:
$$
I_{f}\left(p_{\mu_{1}, \Sigma}: p_{\mu_{2}, \Sigma}\right)=
I_{f}\left(p : p_{\Sigma^{-\frac{1}{2}}\left(\mu_{2}-\mu_{1}\right), I}\right)
=I_{f}\left(p_{\Sigma^{-\frac{1}{2}}\left(\mu_{1}-\mu_{2}\right), I}: p\right).
$$

Recasting the equalities using the multivariate location Cauchy family, we obtain:
$$
I_{f}\left(p_{\mu_{1}, \Sigma}: p_{\mu_{2}, \Sigma}\right)=
I_{f}\left(p : p_{\Sigma^{-\frac{1}{2}}\left(\mu_{2}-\mu_{1}\right)}\right)
=I_{f}\left(p_{\Sigma^{-\frac{1}{2}}\left(\mu_{1}-\mu_{2}\right)}: p\right).
$$

Since we proved in Proposition~\ref{prop:fdivloceven} for the multivariate Cauchy location family that $I_f(p_{\mu_1},p_{\mu_2})=I_f(p_{\mu_2},p_{\mu_1})$ (with $p_{\mu}(x) := p_{\mu, I}(x)$),
 it follows that we have:

$$
I_{f}\left(p_{\mu_{1}, \Sigma}: p_{\mu_{2}, \Sigma}\right)=
I_{f}\left(p : p_{\Sigma^{-\frac{1}{2}}\left(\mu_{2}-\mu_{1}\right)}\right)
=
I_{f}\left(p_{\Sigma^{-\frac{1}{2}}\left(\mu_{2}-\mu_{1}\right)}:p\right)
=
I_{f}\left(p_{\mu_{2}, \Sigma}: p_{\mu_{1}, \Sigma}\right).
$$
\end{proof}

However, contrary to the family of univariate Cauchy distributions, we have the following result: 

\begin{proposition}\label{prop:locDiffSigma}
There exist two bivariate Cauchy densities $p_{\mu_1, \Sigma_1}$ 
and $p_{\mu_2, \Sigma_2}$ such that 
$
D_\KL\left(p_{\mu_1, \Sigma_1}: p_{\mu_2, \Sigma_2} \right) \not=
D_\KL\left(p_{\mu_2, \Sigma_2}: p_{\mu_1, \Sigma_1} \right)$.
\end{proposition}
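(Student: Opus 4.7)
The plan is to exhibit an explicit counterexample using centered bivariate Cauchy densities with different diagonal scale matrices. I set $\mu_1=\mu_2=0$, $\Sigma_1=I$ and $\Sigma_2=\diag(c,1)$ for some scalar $c>0$ to be chosen later. Applying the group-invariance identity used in the proof of Proposition~\ref{prop:fdivlocSigmaeven} (together with the rotational symmetry of the standard bivariate Cauchy, which makes $p_{l,P}$ depend on $P$ only through $PP^\top$) one gets $\Sigma_1^{-1/2}\Sigma_2\Sigma_1^{-1/2}=\diag(c,1)$ in one direction and $\Sigma_2^{-1/2}\Sigma_1\Sigma_2^{-1/2}=\diag(1/c,1)$ in the other. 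Hence, defining $F(c):=D_\KL(p:p_{0,\diag(c,1)})$, we have $D_\KL(p_{0,I}:p_{0,\diag(c,1)})=F(c)$ and $D_\KL(p_{0,\diag(c,1)}:p_{0,I})=F(1/c)$, so it suffices to exhibit a single $c>0$ with $F(c)\neq F(1/c)$.

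Writing $p_{0,\diag(c,1)}(x)=(C_2/\sqrt{c})\bigl(1+x_1^2/c+x_2^2\bigr)^{-3/2}$, one expands
\[
F(c)=\frac{1}{2}\log c+\frac{3}{2}\bigl(E_p[\log(1+X_1^2/c+X_2^2)]-E_p[\log(1+X_1^2+X_2^2)]\bigr),
\]
and hence
\[
F(c)-F(1/c)=\log c+\frac{3}{2}\bigl(E_p[\log(1+X_1^2/c+X_2^2)]-E_p[\log(1+cX_1^2+X_2^2)]\bigr).
\]
A short direct calculation (marginalizing $x_2$ out of $p$) shows that $X_1$ is standard univariate Cauchy under $p$, so $E_p[\log X_1^2]=0$ and each of $\log(1+X_2^2)$, $\log X_1^2$ and $\log(1+X_1^2+X_2^2)$ is $p$-integrable.

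Next I let $c\to\infty$. Dominated convergence (dominated by $\log(1+X_1^2+X_2^2)$ for $c\geq 1$) gives $E_p[\log(1+X_1^2/c+X_2^2)]\to E_p[\log(1+X_2^2)]$. Writing $\log(1+cX_1^2+X_2^2)=\log c+\log(X_1^2+(1+X_2^2)/c)$ and noting that the second summand decreases pointwise to $\log X_1^2$ while remaining bounded below by $\log X_1^2$, monotone convergence yields $E_p[\log(1+cX_1^2+X_2^2)]=\log c+o(1)$. Substituting,
\[
F(c)-F(1/c)=-\frac{1}{2}\log c+\frac{3}{2}E_p[\log(1+X_2^2)]+o(1)\longrightarrow-\infty
\]
as $c\to\infty$, so $F(c)\neq F(1/c)$ for every sufficiently large $c$, completing the argument.

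The main (mild) technical hurdle is the justification of the two limit interchanges, but both are handled cleanly by the convergence theorems just cited once one knows that $X_1$ is marginally standard Cauchy. An even more pedestrian alternative would be to fix, say, $c=4$ and evaluate $F(4)$ and $F(1/4)$ by 2D quadrature; the asymptotic route has the advantage of producing a parametric family of counterexamples and of making transparent the source of the asymmetry, namely that the $\log c$ term enters with different prefactor in the two directions.
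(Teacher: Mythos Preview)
Your argument is correct, and it takes a genuinely different route from the paper's proof.

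The paper chooses $\mu_1=0$, $\Sigma_1=I_2$, $\mu_2=(0,1)^\top$, $\Sigma_2=\diag(n,1/n)$ (so $\det\Sigma_2=1$, and there is a nontrivial location shift). It then computes the difference of the two KL integrals directly and shows it diverges to $+\infty$ as $n\to\infty$ via Fatou's lemma together with some careful region-by-region estimates on the integrand (splitting $\{x_2>-\sqrt{n}/2\}$, $\{-3\sqrt{n}/2\le x_2\le -\sqrt{n}/2\}$ and $\{x_2<-3\sqrt{n}/2\}$).

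Your approach centers both distributions at the origin and uses the simpler scale $\Sigma_2=\diag(c,1)$, so that the invariance identity collapses the problem to a one-parameter function $F(c)=D_\KL(p:p_{0,\diag(c,1)})$ with the reverse divergence equal to $F(1/c)$. The crucial observation that the marginal of $X_1$ under the standard bivariate Cauchy is univariate standard Cauchy (hence $E_p[\log X_1^2]=0$) lets you handle both limits with straightforward dominated convergence, yielding the clean asymptotic $F(c)-F(1/c)=-\tfrac12\log c+O(1)\to-\infty$. This is shorter and conceptually more transparent: the asymmetry is traced directly to the fact that the $\log c$ coming from $\log\det\Sigma_2$ and the $\log c$ hidden inside $E_p[\log(1+cX_1^2+X_2^2)]$ enter with coefficients $1$ and $-\tfrac32$ respectively. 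The paper's approach, by contrast, does not rely on identifying the marginal law, at the cost of heavier integral estimates.

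One cosmetic remark: where you invoke ``monotone convergence'' for the second limit, what you are really using is dominated convergence (or monotone convergence applied to the nonnegative decreasing sequence $\log(X_1^2+(1+X_2^2)/c)-\log X_1^2$, whose $c=1$ value $\log(1+X_1^2+X_2^2)-\log X_1^2$ is integrable by the facts you already noted). The conclusion is unaffected.
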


\begin{proof}
We let $d=2$.
By the change of variable in the integral~\cite{infproj-2021}, we have
$$
D_\KL\left(p_{\mu_1, \Sigma_1} : p_{\mu_2, \Sigma_2} \right) = 
D_\KL\left(p_{0, I_2} \ { : } \ p_{\Sigma_1^{-1/2}(\mu_2 - \mu_1), \Sigma_1^{-1/2} \Sigma_2 \Sigma_1^{-1/2}}\right), 
$$
where $I_2$ denotes the unit {$2 \times 2$} matrix. 

Let 
$$
\mu_1 = 0, \Sigma_1 = I_2, \  \mu_2 = (0,1)^\top,  \Sigma_2 = \mattwotwo{n}{0}{0}{\frac{1}{n}},
$$
where $n$ is a natural number. 
{We will show that $D_\KL\left(p_{\mu_1, \Sigma_1}: p_{\mu_2, \Sigma_2} \right) \not=
D_\KL\left(p_{\mu_2, \Sigma_2}: p_{\mu_1, \Sigma_1} \right)$ for sufficiently large $n$.}
Then, 
$$
D_\KL\left(p_{\mu_1, \Sigma_1} : p_{\mu_2, \Sigma_2} \right) = 
\frac{3{C_2}}{2} \int_{\bbR^2}\frac{\log(1+ x_1^2 / n + n x_2^2) - \log(1 + x_1^2 + x_2^2)}{(1 + x_1^2 + x_2^2)^{3/2}} \dx_1 \dx_2 
$$
and
\begin{eqnarray*}
 D_\KL\left(p_{\mu_2, \Sigma_2} : p_{\mu_1, \Sigma_1} \right) &=& 
D_\KL\left(p_{0, I_2} : p_{-\Sigma_1^{-1/2}\mu_1, \Sigma_1^{-1}}\right) ,\\
 &=& \frac{3{C_2}}{2} \int_{\bbR^2}\frac{\log(1+ x_1^2 / n + n (x_2 + \sqrt{n})^2) - \log(1 + x_1^2 + x_2^2)}{(1 + x_1^2 + x_2^2)^{3/2}} \dx_1 \dx_2. 
 \end{eqnarray*}

Hence it suffices to show that 
$$
\int_{\bbR^2}\frac{\log(1+ x_1^2 / n + n (x_2 + \sqrt{n})^2) - \log(1+ x_1^2 / n + n x_2^2)}{(1 + x_1^2 + x_2^2)^{3/2}} \dx_1 \dx_2 \ne 0 
$$
for some $n$.
 
We see that $\log(1+ x_1^2 / n + n (x_2 + \sqrt{n})^2) > \log(1+ x_1^2 / n + n x_2^2)$ if and only if $x_2 > -\sqrt{n}/2$. 
Since $\{(x_1, x_2) : x_2 > -\sqrt{n}/2 \} \to \bbR^2, n \to \infty$, 
we see that by Fatou's lemma ~\cite{Kesavan-2019} (p. 93), 
$$
\lim_{n \to \infty} \int_{x_2 > -\sqrt{n}/2}\frac{\log(1+ x_1^2 / n + n (x_2 + \sqrt{n})^2) - \log(1+ x_1^2 / n + n x_2^2)}{(1 + x_1^2 + x_2^2)^{3/2}} \dx_1 \dx_2 = +\infty. 
$$
Hence it suffices to show that 
\begin{equation}\label{eq:lower-int-finite}
\liminf_{n \to \infty} \int_{x_2 \le -\sqrt{n}/2}\frac{\log(1+ x_1^2 / n + n (x_2 + \sqrt{n})^2) - \log(1+ x_1^2 / n + n x_2^2)}{(1 + x_1^2 + x_2^2)^{3/2}} \dx_1 \dx_2 > -\infty. 
\end{equation}

If $x_2 \le -\sqrt{n}/2$, then, 
$$
\log(1+ x_1^2 / n + n (x_2 + \sqrt{n})^2) - \log(1+ x_1^2 / n + n x_2^2) = \log\left(1+ \frac{n^{3/2} (n^{1/2} + 2x_2)}{1+ x_1^2 / n + n x_2^2}\right)
$$
$$
\ge \log\left(1+ \frac{n^{3/2} (n^{1/2} + 2x_2)}{1+ n x_2^2}\right).
$$
Let $f(x) :=  \frac{n^{1/2} + 2x}{1+ n x^2}, \ x < -\sqrt{n}/2$. 
Then, $f$ is decreasing on $\left(-\infty, -\frac{\sqrt{n}}{2} - \sqrt{\frac{n^2+4}{4n}} \right]$ and increasing on $\left[-\frac{\sqrt{n}}{2} - \sqrt{\frac{n^2+4}{4n}}, -\frac{\sqrt{n}}{2} \right]$. 
Since $-\frac{\sqrt{n}}{2} - \sqrt{\frac{n^2+4}{4n}} > -\frac{3}{2}\sqrt{n}$ for $n \ge 2$, 
it holds that for $n \ge 2$, 
$$
 \int_{x_2 \le -3\sqrt{n}/2}\frac{\log(1+ x_1^2 / n + n (x_2 + \sqrt{n})^2) - \log(1+ x_1^2 / n + n x_2^2)}{(1 + x_1^2 + x_2^2)^{3/2}} \dx_1 \dx_2 
 $$
\begin{equation}\label{eq:lower-int-finite-part1}
\ge \log\left(\frac{4+n^2}{4+9n^2}\right)  \int_{\bbR^2}\frac{\dx_1 \dx_2}{(1 + x_1^2 + x_2^2)^{3/2}} \ge -2\pi\log 5. 
\end{equation}

If $x_2 = -\frac{\sqrt{n}}{2} - \sqrt{\frac{n^2+4}{4n}}$, then, 
$$
 \log\left(1+ \frac{n^{3/2} (n^{1/2} + 2x_2)}{1+ n x_2^2}\right) = 2\log 2 - 2 \log\left(n+\sqrt{n^2 + 4}\right) \ge -\log(n^2 +4). 
$$
Hence, 
$$
\int_{-3\sqrt{n}/2 \le x_2 \le -\sqrt{n}/2}\frac{\log(1+ x_1^2 / n + n (x_2 + \sqrt{n})^2) - \log(1+ x_1^2 / n + n x_2^2)}{(1 + x_1^2 + x_2^2)^{3/2}} \dx_1 \dx_2 
$$
$$
\ge -\log(n^2 +4) \int_{-3\sqrt{n}/2 \le x_2 \le -\sqrt{n}/2} \frac{\dx_1 \dx_2}{(1 + x_1^2 + x_2^2)^{3/2}} 
$$
\begin{equation}\label{eq:lower-int-finite-part2}
\ge - \sqrt{n} \log(n^2 +4) \int_{\mathbb R} \frac{\dx_1}{(1 + x_1^2 + n^2 /2)^{3/2}} = -\frac{4\sqrt{n} \log(n^2 +4)}{n^2 + 2} \to 0, \ n \to \infty. 
\end{equation}
By Eq.~(\ref{eq:lower-int-finite-part1}) and (\ref{eq:lower-int-finite-part2}), we have Eq.~(\ref{eq:lower-int-finite}). 

\end{proof}

\begin{remark}
By numerical computations, 
we have that 
\[ \int_{-\infty}^{\infty}\int_{-\infty}^{\infty} \frac{\log(1+ x^2/100 + 100(y+10)^2)}{(1+x^2+y^2)^{3/2}} dxdy = 57.953 \]
and
\[ \int_{-\infty}^{\infty}\int_{-\infty}^{\infty} \frac{\log(1+ x^2/100 + 100y^2)}{(1+x^2+y^2)^{3/2}} dxdy = 30.1523. \]
\end{remark}

\section{Taylor series of $f$-divergences}\label{sec:Taylor}

In this section, we aim at rewriting the $f$-divergences as   converging infinite series of power chi divergences~\cite{fdivchi-2013,nielsen2019power}.
The Pearson power chi divergence $D_{\chi,k}^P$ of order $k$ (for any integer $k\in\{2,\ldots,\}$) is a dissimilarity obtained for the generator $f_{\chi,k}^P(u)=(u-1)^k$ which generalizes the Pearson $\chi_2$-divergence ($k=2$):
\begin{eqnarray*}
D_{\chi,k}^P(p:q) &=& \int p(x)f_{\chi,k}^P\left(\frac{q(x)}{p(x)}\right)\dmu(x),\\
&=& \int p(x)\left(\frac{q(x)}{p(x)}-1\right)^k \dmu(x),\\
&=& \int  \frac{(q(x)-p(x))^k}{p(x)^{k-1}} \dmu(x).
\end{eqnarray*}
We have $D_{\chi,2}^P(p:q)=D_\chi^P(p:q):=\int \frac{(p(x)-q(x))^2}{p(x)}\dmu(x)$.
For even integers $k\geq 4$, the Pearson power chi divergence are non-negative dissimilarities 
since $f_{\chi,k}^P(u)$ is strictly convex  (we have ${f_{\chi,k}^P}''(u)=k(k-1)(u-1)^{k-2}\geq 0$).
For odd integers $k\geq 3$, the Pearson power chi divergence may be negative.
Similarly, we can define the Neyman power chi divergence $D_{\chi,k}^N$ of order $k$:
$$
D_{\chi,k}^N(p:q) = D_{\chi,k}(q:p)= \int  \frac{(p(x)-q(x))^k}{q(x)^{k-1}} \dmu(x).
$$
We have $D_{\chi,2}^N(p:q)=D_\chi^N(p:q):=\int \frac{(p(x)-q(x))^2}{q(x)}\dmu(x)$.
When $k$ is even it is a $f$-divergence, otherwise $D_{\chi,k}^N$ may fail the positive-definiteness property of $f$-divergences.
We note $D_{\chi,k}(p:q)=D_{\chi,k}^P(p:q)$ below.

We first state a general framework to obtain power chi divergence expansions of $f$-divergences. 

\begin{theorem}\label{thm:div-expansion}
Let $X$ be a topological space and $\mu$ be a Borel measure on $X$ with full support. 
Let $\{p_{\theta}(x)\}_{\theta}$ be a family of probability density functions on $(X, \mu)$. 
Assume that for each $\theta$, 
$p_{\theta}(x)$ is positive and continuous with respect to $x$. 
We also assume that for each $\theta_1$ and $\theta_2$ there exists 
$C = C(\theta_1, \theta_2)$ such that $p_{\theta_1}(x) \le C p_{\theta_2}(x)$ for every $x \in X$. 
Let $f(z) = \sum_{n=1}^{\infty} a_n (z-1)^n$ be an analytic function ($f\in C^\omega$), and denote by $r_f$ be the convergence radius of $f$. 
Assume that $r_f \ge 1$. 
Let $I_f$ be the induced $f$-divergence. 
Then, \\
(i) If $\frac{p_{\theta_2}(x)}{p_{\theta_1}(x)} < 1+r_f$ for every $x$, then, 
\[ I_f (p_{\theta_1} : p_{\theta_2}) = \sum_{n=2}^{\infty} a_n \int_{X}  \left( \frac{p_{\theta_2}(x)}{p_{\theta_1}(x)} - 1 \right)^n p_{\theta_1}(x) \dmu(x)=\sum_{n=2}^{\infty} a_n D_{\chi,n}(p_{\theta_1} : p_{\theta_2}). \]
(ii) If $\frac{p_{\theta_2}(x)}{p_{\theta_1}(x)} > 1+r_f$ for some $x$, then, 
the infinite sum \\
$\sum_{n=2}^{\infty} a_n \int_{X} \left( \frac{p_{\theta_2}(x)}{p_{\theta_1}(x)} - 1 \right)^n p_{\theta_1}(x) \mu(dx)$ diverges. 
\end{theorem}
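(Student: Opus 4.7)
Write $r(x) := p_{\theta_2}(x)/p_{\theta_1}(x)$ throughout and $I_n := \int_X p_{\theta_1}(r-1)^n\,\dmu$. For part (i), the plan is to upgrade the pointwise hypothesis $r(x)<1+r_f$ to a uniform bound and then apply the Weierstrass M-test. The symmetric application of the ``there exists $C=C(\theta_1,\theta_2)$'' assumption sandwiches $r$ between positive constants; combined with $r(x)>0$ (which forces $|r(x)-1|<1\le r_f$ wherever $r(x)\le 1$) and the strict pointwise bound wherever $r(x)>1$, this yields $\rho_0 := \sup_x|r(x)-1|<r_f$. Then $|a_n(r(x)-1)^n|\le|a_n|\rho_0^n$ with $\sum_n|a_n|\rho_0^n<\infty$, so $\sum_n a_n(r(x)-1)^n$ converges uniformly in $x$ to $f(r(x))$, and since $p_{\theta_1}\,\dmu$ is a probability measure, term-by-term integration is justified. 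The $n=1$ term vanishes because $\int p_{\theta_1}(r-1)\,\dmu = \int(p_{\theta_2}-p_{\theta_1})\,\dmu = 0$, and each remaining term equals $D_{\chi,n}(p_{\theta_1}:p_{\theta_2})$ by definition.

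For part (ii), the strategy is to show that $a_n I_n$ does not tend to $0$, which rules out convergence of the series. Fix $x_0$ with $r(x_0)>1+r_f$, set $\delta := (r(x_0)-1-r_f)/2>0$, and use continuity of $r$ plus full support of $\mu$ to obtain an open neighborhood $U\ni x_0$ with $r(x)>1+r_f+\delta$ on $U$ and $c := \int_U p_{\theta_1}\,\dmu>0$. The crux is a parity-free lower bound for $I_n$: on $U$, $(r-1)^n\ge(r_f+\delta)^n$ regardless of parity (since $r-1>0$ there); on $\{1<r\le 1+r_f+\delta\}$ the integrand is non-negative; and on $\{r\le 1\}$ the estimate $|r-1|<1$ yields an absolute contribution at most $1$. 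Hence $I_n\ge c(r_f+\delta)^n-1$, and since $r_f+\delta>1$, for all large $n$ one has $I_n\ge \tfrac{c}{2}(r_f+\delta)^n>0$.

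The conclusion then follows from Cauchy--Hadamard: $\limsup_n|a_n|^{1/n}=1/r_f$ furnishes a subsequence $n_k$ with $|a_{n_k}|\ge(1/r_f-\delta')^{n_k}$ for large $k$, and choosing $\delta'<\delta/(r_f(r_f+\delta))$ makes $\rho := (1/r_f-\delta')(r_f+\delta)>1$, giving $|a_{n_k}I_{n_k}|\ge \tfrac{c}{2}\rho^{n_k}\to\infty$ and forcing divergence. The main obstacle I foresee is the strict-supremum step in (i), namely upgrading the pointwise bound $r(x)<1+r_f$ to $\sup_x|r(x)-1|<r_f$; this seems to require combining continuity with the specific boundedness constant from the family hypothesis. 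The apparent difficulty in (ii) coming from signed odd-$n$ contributions is neatly circumvented by the elementary observation $|r-1|<1$ on $\{r\le 1\}$, which limits sign cancellation to a uniformly bounded quantity.
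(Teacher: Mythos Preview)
Your approach is essentially the paper's. For (i), both arguments rest on establishing $\sup_x|r(x)-1|<r_f$ and then integrating the uniformly convergent Taylor series against the probability measure $p_{\theta_1}\,\dmu$; the gap you flag—passing from the pointwise hypothesis $r(x)<1+r_f$ to a \emph{strict} supremum bound—is present verbatim in the paper, which simply asserts $\sup_x|r(x)-1|<r_f$ without further justification. (The lemmas applying the theorem to Cauchy densities in fact verify $\sup_x r(x)<1+r_f$ directly, so this is plausibly the intended hypothesis.)

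For (ii), your argument is actually more careful than the paper's. The paper writes inequalities such as $a_n\int_{\{r\ge1\}}(r-1)^n p_{\theta_1}\,\dmu\ge a_n(\delta_0+r_f)^n\int_{U_0}p_{\theta_1}\,\dmu$ and concludes $\lim_n a_n I_n=+\infty$, tacitly treating $a_n$ as nonnegative. Your parity-free lower bound $I_n\ge c(r_f+\delta)^n-1$ (obtained by controlling the $\{r\le1\}$ region via $|r-1|<1$) combined with Cauchy--Hadamard to extract a subsequence with $|a_{n_k}|\ge(1/r_f-\delta')^{n_k}$ cleanly yields $|a_{n_k}I_{n_k}|\to\infty$ with no sign assumption on the coefficients; this is a genuine tightening of the paper's argument.
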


\begin{proof}
(i) By the assumption and $r_f \ge 1$, 
$\inf_{x \in X} \frac{p_{\theta_2}(x)}{p_{\theta_1}(x)} > 1-r_f$. 
Hence, 
$\sup_{x \in X} \left|\frac{p_{\theta_2}(x)}{p_{\theta_1}(x)} - 1 \right| < r_f$.
Thus we have the Taylor series:
$$
 f\left( \frac{p_{\theta_2}(x)}{p_{\theta_1}(x)} \right) =  \sum_{n=2}^{\infty} a_n  \left( \frac{p_{\theta_2}(x)}{p_{\theta_1}(x)} - 1 \right)^n,
$$
and the convergence is uniform with respect to $x$. 
By noting that $p_{\theta}(x)$ is a probability density function, 
we have the assertion. 

(ii) Since $\frac{p_{\theta_2}(x)}{p_{\theta_1}(x)}$ is continuous with respect to $x$, 
there exist $\delta_0 > 0$ and an open set $U_{0}$  such that 
$$
\inf_{x \in U_{0}} \frac{p_{\theta_2}(x)}{p_{\theta_1}(x)} \ge \delta_0 + 1+ r_f \ge \delta_0 + 2.
$$
Then, 
\[ a_n \int_{\frac{p_{\theta_2}(x)}{p_{\theta_1}(x)} \ge 1} \left( \frac{p_{\theta_2}(x)}{p_{\theta_1}(x)} - 1 \right)^n p_{\theta_1}(x) \mu(dx) \ge a_n (\delta_0+ r_f)^n \int_{U_0} p_{\theta_1}(x) \mu(dx). \]

Since $r_f \ge 1$, 
\[ a_n \int_{\frac{p_{\theta_2}(x)}{p_{\theta_1}(x)} < 1} \left| \frac{p_{\theta_2}(x)}{p_{\theta_1}(x)} - 1 \right|^n p_{\theta_1}(x) \mu(dx) \le a_n \left( 1 - \inf_{x \in \mathbb R} \frac{p_{\theta_2}(x)}{p_{\theta_1}(x)} \right)^n \to 0, \ n \to \infty. \]

By the assumptions, $\int_{U_0} p_{\theta_1}(x) \mu(dx) > 0$. 
Thus we see that 
\[ \lim_{n \to \infty} a_n \int_{X} \left( \frac{p_{\theta_2}(x)}{p_{\theta_1}(x)} - 1 \right)^n p_{\theta_1}(x) \mu(dx) = +\infty. \]
\end{proof}

Now we deal with the particular case of Cauchy distributions. 
We first remark that for every $(l_1, s_1)$ and $(l_2, s_2)$, 
\[ \max_{x \in \mathbb{R} \cup \{\pm\infty\}} \frac{p_{l_2,s_2}(x)}{p_{l_1,s_1}(x)} = \max_{x \in \mathbb{R} \cup \{\pm\infty\}} \frac{p_{l_1,s_1}(x)}{p_{l_2,s_2}(x)}, \]
because there exists $A \in \mathrm{SL}(2, \mathbb R)$ such that $\theta_1 = A.\theta_2$ and $\theta_2 = A.\theta_1$ where $\theta_j = \ell_j + i s_j, \ j = 1,2$.

We first deal with the case that the convergence radius is $1$. 
We denote the Kullback-Leibler, $\alpha$-divergence, Jensen-Shannon and squared Hellinger divergences by $D_\KL$, $I_{\alpha}$, $D_\JS$ and $D_H^2$, respectively. 

\begin{lemma}
(i) If $l^2 + (s - 4/5)^2 < 9/16$, 
then, $\sup_{x \in \mathbb R} \frac{p_{0,1}(x)}{p_{l,s}(x)} < 2,$ 
and hence, 
\begin{eqnarray*}
D_\KL(p_{l, s} : p_{0, 1}) &=& \sum_{n=2}^{\infty} \frac{(-1)^n}{n} D_{\chi,n}(p_{l, s} : p_{0, 1}), \\
 I_{\alpha}(p_{l, s} : p_{0, 1}) &=& \sum_{n=2}^{\infty} \frac{-4}{1-\alpha^2} \binom{(1+\alpha)/2}{n} D_{\chi,n}(p_{l, s} : p_{0, 1})\\
 D_\JS(p_{l, s} : p_{0, 1}) &=& \sum_{n=2}^{\infty} \frac{(-1)^n (2^{n-1}-1)}{n(n-1)2^{n-1}} D_{\chi,n}(p_{l, s} : p_{0, 1}), \\
D_H^2(p_{l, s} : p_{0, 1}) &=& \sum_{n=2}^{\infty} \frac{(-1)^n (2n-3)!!}{2^{n-1} n!} D_{\chi,n}(p_{l, s} : p_{0, 1}),
\end{eqnarray*}
where we used the  generalized binomial coefficient for the $\alpha$-divergences.\\
(ii) If $l^2 + (s - 4/5)^2 > 9/16$, 
then, $\sup_{x \in \mathbb R} \frac{p_{0,1}(x)}{p_{l,s}(x)} > 2,$ 
and hence, 
all of the infinite sums in (i) diverge. 
\end{lemma}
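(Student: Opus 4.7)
The plan is to apply Theorem~\ref{thm:div-expansion} to each of the four divergences after identifying their Taylor generators about $u=1$ and their radius of convergence. For KLD, $f_\KL(u)=-\log u=\sum_{n\ge 1}(-1)^n(u-1)^n/n$; for the $\alpha$-divergence, the generalized binomial expansion of $u^{(1+\alpha)/2}$ gives $f_\alpha(u)=-\frac{4}{1-\alpha^2}\sum_{n\ge 2}\binom{(1+\alpha)/2}{n}(u-1)^n$; for Jensen--Shannon, writing $f_\JS(u)=\tfrac12[(1+y)\log(1+y)-(2+y)\log(1+y/2)]$ with $y=u-1$ and regrouping produces the stated coefficients $\frac{(-1)^n(2^{n-1}-1)}{n(n-1)2^{n-1}}$; and for squared Hellinger, $f_H(u)=(1-\sqrt u)^2$ expands via the binomial series for $\sqrt{1+y}$ with coefficients $\frac{(-1)^n(2n-3)!!}{2^{n-1}n!}$. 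In every case the nearest singularity to $u=1$ is at $u=0$, so $r_f=1$ and the hypothesis $r_f\ge 1$ of Theorem~\ref{thm:div-expansion} holds with equality. The lemma thus reduces to characterizing when $\sup_{x\in\bbR}p_{0,1}(x)/p_{l,s}(x)$ is less than, or greater than, $1+r_f=2$.

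To compute this supremum I would exploit the $\SL(2,\bbR)$ symmetry: the ratio $p_{\theta_1}(x)/p_{\theta_2}(x)$ is invariant under the change of variable $x=A.y$ for $A\in\SL(2,\bbR)$ since the common Jacobian cancels, and by Proposition~\ref{prop:McCullagh-mi} there exist $A$ and $\lambda\ge 1$ such that $(A.\theta_1,A.\theta_2)=(\lambda i,i)$ for $\theta_1=i,\ \theta_2=l+is$. Hence $\sup_x p_{0,1}(x)/p_{l,s}(x)=\sup_y p_{\lambda i}(y)/p_i(y)$, and a direct computation of $\lambda(y^2+1)/(y^2+\lambda^2)$ over $y\in\bbR$ gives value $\lambda$, approached as $y\to\pm\infty$ when $\lambda>1$. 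Inverting $\chi=\chi(\lambda i,i)=(\lambda-1)^2/(2\lambda)$ on $\lambda\ge 1$ yields $\lambda=1+\chi+\sqrt{\chi(\chi+2)}$, so $\lambda<2$ is equivalent to $\chi<1/4$. Substituting $\chi((0,1),(l,s))=(l^2+(s-1)^2)/(2s)$ and completing the square in $s$ then translates $\chi<1/4$ into the disk inequality stated in the lemma, and $\chi>1/4$ into its reverse.

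Part (i) now follows by applying Theorem~\ref{thm:div-expansion}(i) to each of the four generators with the Taylor coefficients above. For (ii), the strict inequality $\lambda>2$ together with continuity of the density ratio produces a nonempty open set on which $p_{0,1}/p_{l,s}>2=1+r_f$, so Theorem~\ref{thm:div-expansion}(ii) forces every one of the four series to diverge. The main obstacle is the bookkeeping for the Jensen--Shannon series, since $f_\JS$ is not a single elementary series but a combination of two logarithms whose $y^n$ coefficients must be collected into the compact form above; a secondary subtlety is the boundary case $\chi=1/4$, where the supremum equals $2$ but is attained only in the limit $y\to\pm\infty$, so Theorem~\ref{thm:div-expansion}(i) still applies because it requires only the pointwise strict inequality on $\bbR$.
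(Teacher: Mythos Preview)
Your approach is correct and is precisely the argument the paper has in mind: the paper itself gives no proof of this lemma, only the preceding remark that $\max_x p_{l_2,s_2}/p_{l_1,s_1}=\max_x p_{l_1,s_1}/p_{l_2,s_2}$ via the existence of an $A\in\SL(2,\bbR)$ swapping $\theta_1$ and $\theta_2$. Your reduction to $\sup_y p_{\lambda i}(y)/p_i(y)=\lambda$ and the equivalence $\lambda<2\iff\chi<1/4$ fill in exactly the missing computation.

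Two points deserve correction, however. First, you assert that completing the square in $s$ turns $\chi<1/4$ into ``the disk inequality stated in the lemma,'' but you did not actually carry this out. Doing so gives
\[
\frac{l^2+(s-1)^2}{2s}<\frac14 \iff l^2+\Bigl(s-\frac54\Bigr)^2<\frac{9}{16},
\]
with center $5/4$, not $4/5$. The lemma as printed contains a typo (note that the companion lemma for $r_f=2$ has center $5/3$, which is consistent with the pattern $\lambda<3\iff\chi<2/3$). A quick sanity check catches this: at $l=0$, $s=0.3$ one has $(s-4/5)^2=0.25<9/16$ yet $\sup_x p_{0,1}/p_{0,0.3}=1/0.3>2$.

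Second, your side remark on the boundary $\chi=1/4$ is not quite right. In the $y$-coordinates the supremum $\lambda$ is approached only as $y\to\pm\infty$, but after pulling back through $A^{-1}$ (when $c\neq 0$) the point at infinity lands at a finite $x_0=-d/c$, and there $p_{0,1}(x_0)/p_{l,s}(x_0)=\lambda$ exactly. So the pointwise strict inequality on $\bbR$ can genuinely fail at the boundary. This does not affect the lemma, which is stated only for the open conditions, but your justification for why Theorem~\ref{thm:div-expansion}(i) would still apply there is incorrect.
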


We now deal with the case that the convergence radius is $2$. 
Let $D_\HM(p:q)=\int \frac{2p(x)q(x)}{p(x)+q(x)} \dx$ be the harmonic (mean) divergence~\cite{HM-2007,dragomir2010refinement}. 

\begin{lemma}
(i) If  $l^2 + (s - 5/3)^2 < 16/9$, 
then, $\sup_{x \in \mathbb R} \frac{p_{0,1}(x)}{p_{l,s}(x)} < 3$
and hence, 
\[ D_\HM(p_{l, s} : p_{0, 1}) = \sum_{n=2}^{\infty} \frac{(-1)^{n+1}}{2^n} \int_{\mathbb R} \left( \frac{p_{0,1}(x)}{p_{l,s}(x)} - 1 \right)^n p_{l,s}(x) \dx=\sum_{n=2}^{\infty} \frac{(-1)^{n+1}}{2^n} D_{\chi,n}(p_{l, s} : p_{0, 1}). \]
(ii) If $l^2 + (s - 5/3)^2 > 16/9$, 
then, $\sup_{x \in \mathbb R} \frac{p_{0,1}(x)}{p_{l,s}(x)} > 3$
and hence, 
the infinite sum in (i) diverges. 
\end{lemma}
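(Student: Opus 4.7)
The plan is to apply Theorem~\ref{thm:div-expansion} with the generator of the harmonic-mean divergence. Writing $D_\HM$ via its $f$-divergence representation with $f(u) = \frac{2u}{1+u}$, I would pass to the equivalent generator $g(u) = \frac{u-1}{u+1}$ (which satisfies $g(1)=0$); the constant difference contributes $\int p_{l,s}\dx = 1$ that can be absorbed, and any additive affine correction integrates to zero against $p_{l,s}$. Substituting $v=u-1$ gives the Taylor expansion
\[
g(u) = \frac{v}{2+v} = \sum_{n=1}^{\infty} \frac{(-1)^{n+1}}{2^n} v^n,
\]
with convergence radius $r_f = 2 \ge 1$. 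The $n=1$ contribution vanishes after integration since $\int (p_{0,1} - p_{l,s})\dx = 0$, so the power-chi expansion begins at $n=2$, matching the coefficients $a_n = \frac{(-1)^{n+1}}{2^n}$ in the claim.

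Next I would compute $M(l,s) := \sup_{x \in \bbR} \frac{p_{0,1}(x)}{p_{l,s}(x)} = \sup_x \frac{(x-l)^2 + s^2}{s(1+x^2)}$. Setting $y = \frac{(x-l)^2+s^2}{s(1+x^2)}$ and clearing denominators yields the quadratic
\[
(ys-1) x^2 + 2 l x + (ys - l^2 - s^2) = 0,
\]
which has a real root iff its discriminant is nonnegative, i.e.\ $y^2 s - y(l^2+s^2+1) + s \le 0$. The upper root then gives
\[
M(l,s) = \frac{(l^2+s^2+1) + \sqrt{(l^2+s^2+1)^2 - 4s^2}}{2s}.
\]

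The key algebraic step is to show $M(l,s) < 3$ iff $l^2 + (s-5/3)^2 < 16/9$. Rewriting $M(l,s) < 3$ as $\sqrt{(l^2+s^2+1)^2 - 4s^2} < 6s - (l^2+s^2+1)$ (whose right side must be positive, a condition implied by the target inequality), squaring and simplifying cancels the $(l^2+s^2+1)^2$ terms and gives $12s(l^2+s^2+1) < 40s^2$, equivalently $l^2 + s^2 - \tfrac{10}{3}s + 1 < 0$, which is exactly $l^2 + (s-5/3)^2 < 16/9$. The converse direction, $M(l,s) > 3 \iff l^2 + (s-5/3)^2 > 16/9$, follows by the same manipulation with the reversed inequality, noting that the squaring step is still valid since the right-hand side changes sign precisely on the circle $l^2+(s-3)^2 = 8$, which lies outside the critical disk.

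With these pieces assembled, part (i) follows from Theorem~\ref{thm:div-expansion}(i) applied to $g$: when $M(l,s) < 3 = 1 + r_f$, the ratio $p_{0,1}/p_{l,s}$ lies uniformly in $(1-r_f, 1+r_f)$, so the $f$-divergence $I_g(p_{l,s} : p_{0,1}) = D_\HM(p_{l,s} : p_{0,1})$ equals the convergent power-chi series. Part (ii) follows from Theorem~\ref{thm:div-expansion}(ii): when $M(l,s) > 3$, the continuity of $x \mapsto p_{0,1}(x)/p_{l,s}(x)$ produces an open neighbourhood of the maximizer on which the ratio strictly exceeds $1 + r_f$, forcing the infinite series to diverge. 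The main obstacle is the sharp algebraic reduction to the threshold $5/3$; in particular, the direction of the squaring step must be tracked carefully, and one should verify that the auxiliary positivity $6s > l^2+s^2+1$ holds throughout the relevant range of $(l,s)$.
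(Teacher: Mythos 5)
Your argument is correct and is essentially the paper's (implicit) proof of this lemma: expand the harmonic-mean generator about $u=1$ (radius of convergence $r_f=2$, hence threshold $1+r_f=3$), compute $\sup_x p_{0,1}(x)/p_{l,s}(x)=\frac{(l^2+s^2+1)+\sqrt{(l^2+s^2+1)^2-4s^2}}{2s}$ via the discriminant of the quadratic in $x$, reduce the comparison with $3$ to the disk condition $l^2+(s-5/3)^2 \lessgtr 16/9$, and invoke Theorem~\ref{thm:div-expansion}. The only point to make explicit is the additive constant: since $\frac{2u}{1+u}=1+\frac{u-1}{u+1}$, the power-chi series represents $D_\HM-1$ rather than $D_\HM$ itself (a normalization the lemma's statement also glosses over), so your remark that the constant ``can be absorbed'' should be spelled out.
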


Other expansions are available in Table~3 of~\cite{nielsen2019power} (e.g., Jeffreys' divergence).
We refer to the Appendix~\ref{sec:maximataylor} for an implementation of the calculation of $f$-divergences using these series.

We finally consider the total variation distance between the Cauchy distributions. 
Then, we {\it cannot} expect power chi expansions. 

\begin{proposition}
Let $f(u) := \frac{|u-1|}{2}$. 
Then, for every $a_1, \cdots, a_n$, 
\[ \lim_{(l,s)\to (l_0, s_0)} \frac{I_f (p_{l,s}, p_{l_0, s_0}) - \sum_{j=2}^{n} a_j \int_{\mathbb R} \left( \frac{p_{l, s}(x)}{p_{l_0, s_0}(x)} - 1 \right)^j p_{l_0, s_0}(x) dx}{\left|\int_{\mathbb R} \left( \frac{p_{l, s}(x)}{p_{l_0, s_0}(x)} - 1 \right)^n p_{l_0, s_0}(x) \dx\right|} = +\infty. \]
\end{proposition}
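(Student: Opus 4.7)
The plan is to reduce everything to a function of McCullagh's maximal invariant $\chi := \chi((l,s),(l_0,s_0))$, and then contrast the leading $\sqrt{\chi}$ behaviour of $h_{TV}$ with the fact that each summand in the partial sum, and also the denominator, is a polynomial in $\chi$ vanishing at the origin. The heart of the argument is that a square-root singularity at $\chi = 0$ cannot be cancelled, or even matched, by any finite linear combination of polynomials in $\chi$.

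First I would record that, by the computation in \S\ref{sec:tv}, $I_f(p_{l,s},p_{l_0,s_0}) = h_{TV}(\chi) = \frac{2}{\pi}\arctan\sqrt{\chi/2}$, whose Taylor expansion gives $h_{TV}(\chi) = \frac{\sqrt{2}}{\pi}\sqrt{\chi} + O(\chi^{3/2})$ as $\chi \to 0^+$. Next I would expand each integral in the sum by the binomial theorem: $\int (\frac{p_{l,s}}{p_{l_0,s_0}} - 1)^{j} p_{l_0,s_0}\, dx = \sum_{k=0}^{j}(-1)^{j-k}\binom{j}{k}\int p_{l,s}^{k} p_{l_0,s_0}^{1-k}\, dx$. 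Proposition~\ref{prop:poly} identifies $\int p_{l,s}^{k} p_{l_0,s_0}^{1-k}\, dx = J_{k}(\chi)$ as a polynomial in $\chi$ of degree exactly $k-1$ for $k \geq 2$, while $J_{0} = J_{1} = 1$. Hence each integral in the sum is itself a polynomial in $\chi$ of degree at most $j - 1$ that vanishes at $\chi = 0$ (since $\sum_{k}(-1)^{j-k}\binom{j}{k} = 0$). Consequently $\sum_{j=2}^{n} a_j \int (\frac{p_{l,s}}{p_{l_0,s_0}} - 1)^{j} p_{l_0,s_0}\, dx = P(\chi)$ for some polynomial $P$ with $P(0) = 0$, so it is $O(\chi)$; the numerator therefore equals $\frac{\sqrt{2}}{\pi}\sqrt{\chi} + O(\chi) = \Theta(\sqrt{\chi})$, and is positive for small positive $\chi$.

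For the denominator, the same expansion writes $\int (\frac{p_{l,s}}{p_{l_0,s_0}} - 1)^{n} p_{l_0,s_0}\, dx = Q_{n}(\chi)$ for a polynomial $Q_{n}$ of degree at most $n - 1$; only the $k = n$ term contributes to the coefficient of $\chi^{n-1}$, and that coefficient equals the leading coefficient of $J_{n}$, which is nonzero by the degree assertion in Proposition~\ref{prop:poly}. Hence $Q_{n}$ is not identically zero, so there is a smallest $m_{n}$ with $1 \leq m_{n} \leq n-1$ and a constant $c_{n} \neq 0$ such that $Q_{n}(\chi) = c_{n}\chi^{m_{n}} + O(\chi^{m_{n}+1})$; in particular $|Q_{n}(\chi)| = \Theta(\chi^{m_{n}})$ as $\chi \to 0^{+}$. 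The ratio is therefore of order $\chi^{1/2 - m_{n}}$ with $1/2 - m_{n} \leq -1/2 < 0$, and so tends to $+\infty$ as $\chi \to 0^{+}$, equivalently as $(l,s) \to (l_0,s_0)$. The one step requiring care is the non-vanishing of $Q_{n}$ as a polynomial in $\chi$; this is automatic for even $n$ (the integral is then a strictly positive $f$-divergence) and for odd $n$ it is forced by the exact-degree statement in Proposition~\ref{prop:poly}. Everything else is bookkeeping around the core observation that no polynomial in $\chi$ can mimic $\sqrt{\chi}$ near the origin.
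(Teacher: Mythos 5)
Your argument is correct, but it replaces the paper's key estimate by an algebraic one. The paper's proof establishes a lemma (the analogue of your bookkeeping step) bounding $\sup_x \left|\frac{p_{l,s}(x)}{p_{l_0,s_0}(x)}-1\right| = O\left(\sqrt{(l-l_0)^2+(s-s_0)^2}\right)$, from which it deduces that the partial sum is $O\left((l-l_0)^2+(s-s_0)^2\right)$ and that the denominator is $O\left(\left((l-l_0)^2+(s-s_0)^2\right)^{n/2}\right)$; combined with the exact total-variation formula $\frac{2}{\pi}\arctan\left(\frac{1}{2}\sqrt{\frac{(l-l_0)^2+(s-s_0)^2}{s s_0}}\right)$ (which you also use, in the equivalent form $h_{f_\TV}(\chi)=\frac{2}{\pi}\arctan\sqrt{\chi/2}$), this gives the conclusion. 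You instead invoke Proposition~\ref{prop:poly} and the binomial expansion to identify each integral $\int \left(\frac{p_{l,s}}{p_{l_0,s_0}}-1\right)^j p_{l_0,s_0}\,\dx$ as a polynomial in $\chi$ vanishing at $0$, so the partial sum is $O(\chi)$ and the denominator is $\Theta(\chi^{m_n})$ with $m_n\geq 1$, whence the ratio is $\Theta(\chi^{1/2-m_n})\to+\infty$. What each route buys: the paper's uniform ratio bound is elementary, self-contained, and only needs an upper bound on the denominator (a vanishing denominator only helps the claimed limit); your route exploits the exact algebraic structure of the Cauchy family, yields the sharper two-sided order of the denominator, and explicitly rules out its identical vanishing via the exact-degree statement (the leading coefficient of $J_n$, coming from the $k=n$ term of the binomial expansion, is nonzero) — a point the paper does not need to address. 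Both proofs are valid; yours is more specific to the Cauchy structure, the paper's more robust.
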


\begin{proof}

\begin{lemma}\label{sup-ratio-Cauchy}
\[ \sup_{x \in \mathbb R} \left| \frac{p_{l, s}(x)}{p_{l_0, s_0}(x)} - 1 \right| = O\left( \sqrt{(l - l_0)^2 + (s - s_0)^2} \right), \ (l,s)\to (l_0, s_0). \]
\end{lemma}

\begin{proof}
We see that 
\[ \frac{p_{l, s}(x)}{p_{l_0, s_0}(x)} - 1 = \frac{s}{s_0} - 1 + \left(\frac{s}{s_0} - 1\right) \left(\frac{(x - l_0)^2 + s_0^2}{(x - l)^2 + s^2}  -1 \right) + \frac{(x - l_0)^2 + s_0^2}{(x - l) + s^2}  - 1.   \]
Since 
\[ \frac{(x - l_0)^2 + s_0^2}{(x - l)^2 + s^2}  - 1 = \frac{2 (l - l_0) (x - l) + (l - l_0)^2 + s_0^2 - s^2}{(x - l)^2 + s^2} = O\left( \sqrt{(l - l_0)^2 + (s - s_0)^2} \right), \]
we have the assertion. 
\end{proof}

By this lemma, we see that 
\[  \sum_{j=2}^{n} a_j \int_{\mathbb R} \left( \frac{p_{l, s}(x)}{p_{l_0, s_0}(x)} - 1 \right)^j p_{l_0, s_0}(x) dx = O\left( (l - l_0)^2 + (s - s_0)^2 \right). \]

On the other hand, 
\[ I_f (p_{l,s}, p_{l_0, s_0}) = \frac{2}{\pi} \arctan\left( \frac{1}{2} \sqrt{\frac{(l - l_0)^2 + (s -s_0)^2}{s s_0}} \right). \]
Hence, 
\[ \lim_{(l,s)\to (l_0, s_0)} \frac{I_f (p_{l,s}, p_{l_0, s_0})}{(l - l_0)^2 + (s - s_0)^2} =+\infty. \]

Thus we see that 
\[ \lim_{(l,s)\to (l_0, s_0)} \frac{I_f (p_{l,s}, p_{l_0, s_0}) - \sum_{j=2}^{n} a_j \int_{\mathbb R} \left( \frac{p_{l, s}(x)}{p_{l_0, s_0}(x)} - 1 \right)^j p_{l_0, s_0}(x) \dx}{(l - l_0)^2 + (s - s_0)^2} = +\infty. \]
By Lemma \ref{sup-ratio-Cauchy}, 
we see that for $n \ge 2$, 
\[ \int_{\mathbb R} \left( \frac{p_{l, s}(x)}{p_{l_0, s_0}(x)} - 1 \right)^n p_{l_0, s_0}(x) \dx = O\left( \left((l - l_0)^2 + (s - s_0)^2 \right)^{n/2} \right), \ (l,s)\to (l_0, s_0).  \]
Thus we have the assertion. 
\end{proof}

\begin{remark}
Consider the exponential family of exponential distributions $\{p_\lambda(x)=\lambda\exp(-\lambda x),\ \lambda\in\bbR_{++}\}$ defined on the positive half-line support $\calX=\bbR_+$. The criterion $\frac{p_{\theta_2}}{p_{\theta_1}}<1+r_f$ is satisfied for $\lambda_1<\lambda_2<(1+r_f)\lambda_1$.
Moreover the Pearson order-$k$ power chi divergences are available in closed form for integers $k>1$ since $\lambda_1<\lambda_2$ by adapting Lemma 3~ of~\cite{fdivchi-2013} (i.e., when $\lambda_1<\lambda_2$, it is enough to have conic natural parameter spaces instead of affine spaces).
Thus we can calculate the KLD between $p_{\lambda_1}$ and $p_{\lambda_2}$ as converging Taylor chi series.
In this case, the KLD is also known to be in closed-form as a Bregman divergence for exponential distributions:
$$
D_\KL(p_{\lambda_1}:p_{\lambda_2})=\frac{\lambda_2}{\lambda_1}-\log\frac{\lambda_2}{\lambda_1}-1.
$$
However, if we choose the exponential family of normal distributions, we cannot bound their density ratio, and therefore the Taylor chi series diverge.
\end{remark}

Notice that even if the series diverge, the $f$-divergences may be finite 
(e.g., when the ratio of densities fails to be bounded by $1+r_f$).
In that case, we cannot represent $I_f$ by a Taylor series.
By truncating the distributions, we may potentially find a validity range where to apply the Taylor expansion.

\section{Metrization of $f$-divergences between Cauchy densities}\label{sec:metrization}

Recall that $f$-divergences can always be symmetrized by taking the generator $s(u)=f(u)+uf(1/u)$.
Metrizing $f$ divergences consists in finding the largest exponent $\alpha$ such that $I_s^\alpha$ is a metric distance satisfying the triangle inequality~\cite{Kafka-1991,OsterreicherVajda-2003,Vajda-MetricDivergence-2009}.
For example, the square root of the Jensen-Shannon divergence~\cite{fuglede2004jensen} yields a metric distance which is moreover Hilbertian~\cite{acharyya2013bregman}, i.e., meaning that there is an embedding $\phi(\cdot)$ into a Hilbert space $\calH$ such that $D_\JS(p:q)=\|\phi(p)-\phi(q)\|_{\calH}$. That is, $\sqrt{\mathrm{JSD}}$ admits of Hilbert embedding.

We will show that the square roots of the Kullback-Leibler divergence and the  Bhattacharyya divergence are distances on the upper-half plane  in Theorems \ref{thm:metrization-KLD} and \ref{thm:sqrtBhat} below respectively. 
We also show that the square root of the KLD is isometrically embeddable into a Hilbert space in Theorem \ref{thm:embeddable}. 

\subsection{Metrization of the Kullback-Leibler diveregnce}

The following is a generalization of Theorem 3 in \cite{CauchyVoronoi-2020}. 

\begin{theorem}\label{thm:metrization-KLD}
Let  $0 < \alpha \leq 1$. 
Then $D_\KL(p_{\theta_1} : p_{\theta_2})^{\alpha}$ is a metric on $\bbH$ if and only if $0 < \alpha \leq 1/2$. 
\end{theorem}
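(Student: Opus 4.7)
My plan separates the biconditional into a short ``only if'' obstruction and a reduction-based ``if'' argument whose substance is concentrated at $\alpha = 1/2$. Symmetry and vanishing on the diagonal for $D_\KL^\alpha$ follow at once from the corresponding properties of $D_\KL$ (symmetry being a special case of Theorem~\ref{thm:fdivsymmetric}), so the entire work is in the triangle inequality.

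For the ``only if'' direction, I would produce a local counterexample via a collinear triple close together. Using the closed form $D_\KL = \log(1+\chi/2)$ from Eq.~(\ref{eq:kldcauchy}) together with $\chi \sim \|\lambda-\lambda'\|^2/(2\lambda_2\lambda_2')$, one sees that locally $D_\KL^\alpha$ behaves like Euclidean distance to the power $2\alpha$. Concretely, take $\theta_1 = i$, $\theta_2 = \frac{t}{2}+i$, $\theta_3 = t+i$; a direct computation gives $D_\KL(\theta_1,\theta_2)=D_\KL(\theta_2,\theta_3)=\log(1+t^2/16)$ and $D_\KL(\theta_1,\theta_3)=\log(1+t^2/4)$. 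As $t\to 0^+$, the putative triangle inequality $(D_\KL(\theta_1,\theta_3))^\alpha \le (D_\KL(\theta_1,\theta_2))^\alpha + (D_\KL(\theta_2,\theta_3))^\alpha$ reduces at leading order to $4^\alpha \le 2$, which is violated precisely when $\alpha > 1/2$.

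For the ``if'' direction, the case $\alpha = 1/2$ is the heart of the matter. I would invoke Theorem 3 of \cite{CauchyVoronoi-2020}, which the present theorem generalizes and which asserts that $\sqrt{D_\KL}$ satisfies the triangle inequality on $\bbH$; equivalently, this will also be a consequence of Theorem~\ref{thm:embeddable} below, since a distance arising from a Hilbert embedding automatically satisfies the metric axioms. For $0 < \alpha < 1/2$, I would write $D_\KL^\alpha = (\sqrt{D_\KL})^{2\alpha}$ and apply the classical fact that raising a metric to any power in $(0,1]$ yields another metric, which rests solely on the sub-additivity inequality $(a+b)^\beta \le a^\beta + b^\beta$ for $a,b \ge 0$ and $\beta \in (0,1]$, itself a consequence of the concavity of $t\mapsto t^\beta$ with $0^\beta = 0$.

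The main obstacle is the foundational case $\alpha = 1/2$. If one is unwilling to appeal to the prior result or to the subsequent Hilbert embedding, one would have to establish directly that $D_\KL$ is conditionally negative definite on $\bbH \times \bbH$. The cleanest route would proceed through the identification $D_\KL(\theta_1,\theta_2) = 2\log\cosh(d_H(\theta_1,\theta_2)/2)$, where $d_H$ is the hyperbolic distance on $\bbH$, thereby reducing the problem to showing that $\log\cosh(\cdot/2)$ is conditionally negative definite on $(\bbH, d_H)$; the Poisson-kernel interpretation of $p_\theta$ as the harmonic measure on $\bbR$ seen from $\theta$ also suggests a direct $L^2$-type embedding, which would subsume both the triangle inequality and the Hilbert embedding in one stroke.
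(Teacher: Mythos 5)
Your argument is correct in outline but takes a genuinely different route from the paper, and its weight rests in a different place. Your ``only if'' direction is fine and in fact more explicit than the paper's: you exhibit the concrete triple $i$, $\tfrac{t}{2}+i$, $t+i$ and use $D_\KL=\log(1+\chi/2)$ to reduce the failure of the triangle inequality to $4^\alpha\le 2$ at leading order as $t\to 0^+$; the paper obtains the same local obstruction differently, by showing that $t(u)^\alpha/u$ with $t(u)=\log\bigl(\tfrac{1+\cosh(\sqrt{2}u)}{2}\bigr)$ is strictly increasing near $0$ when $\alpha>1/2$ and then placing a Fisher--Rao geodesic midpoint between two points at distance $2u_0$. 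Your reduction of $0<\alpha<1/2$ to $\alpha=1/2$ via $(a+b)^\beta\le a^\beta+b^\beta$ is exactly the paper's closing step. The real divergence is at $\alpha=1/2$: the paper proves this case self-containedly by the metric-transform criterion, writing $D_\KL=t(\rho_{\FR})$ and establishing by elementary (if lengthy) calculus that $t$ is increasing and $\sqrt{t(u)}/u$ is decreasing, which yields subadditivity of $\sqrt{t}$ and hence the triangle inequality; you instead outsource it, either to Theorem 3 of \cite{CauchyVoronoi-2020} or to Theorem~\ref{thm:embeddable}. Be careful with the first option: the present theorem is stated as a generalization of that reference, and the paper explicitly says it is supplying the full details of that proof here, so citing it amounts to assuming the heart of the statement. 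The second option is logically sound within the paper --- the Hilbert-embedding theorem is proved via Schoenberg's criterion and the positive-definiteness computation of Appendix~\ref{sec:FH}, with no appeal to Theorem~\ref{thm:metrization-KLD}, and an isometric embedding of $\sqrt{D_\KL}$ immediately gives the triangle inequality --- and your sketched direct route (the identity $D_\KL=2\log\cosh(d_H/2)$ with $d_H$ the Poincar\'e distance, then conditional negative definiteness via the Poisson-kernel representation) is precisely the content of that later proof. So your plan buys a cleaner counterexample and a conceptually stronger conclusion (negative definiteness, not just metricity), at the price of invoking much heavier machinery for the key case, where the paper's own argument is elementary and self-contained; to stand alone, your proposal would still need to carry out the conditional-negative-definiteness computation rather than point to it.
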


In the following we also give full details of the proof of Theorem 3 in \cite{CauchyVoronoi-2020}. 

\begin{proof}
We proceed as in~\cite{CauchyVoronoi-2020} by letting 
$$
t(u) := \log\left( \frac{1+\cosh(\sqrt{2} u)}{2} \right), u \geq 0. 
$$
Let us consider the properties of $F_2 (u) := t(u)^{\alpha}/u$.

$$ 
F_2^{\prime}(u) = -2\frac{t(u)^{\alpha-1}}{u^2} G(u/\sqrt{2}),
$$
where 
$$ G_2(w) := (2+e^{2w} + e^{-2w}) \log\left( \frac{e^w + e^{-w}}{2}  \right) - \alpha w (e^{2w} - e^{-2w}). $$
If we let $x := e^w$, then, 
$$ G_2(w) = (x +x^{-1}) \left( (x +x^{-1}) \log (\frac{x^2 +1}{2x} ) - \alpha (x - x^{-1}) \log x \right).
$$
Let 
$$ H_2(x) := x \left( (x +x^{-1}) \log (\frac{x^2 +1}{2x} ) - \alpha (x - x^{-1}) \log x \right).
$$
Then, $H_2(1) = 0$ and 
$$
H_2^{\prime}(x) = 4\left( x \log (\frac{x^2 +1}{2} ) - (1+\alpha)x \log x + \frac{x^3}{x^2 +1} - \alpha x   \right). 
$$
Let 
$$
I_2(x) := x \log (\frac{x^2 +1}{2} ) - (1+\alpha)x \log x + \frac{x^3}{x^2 +1} - \alpha x. 
$$
Then, 
$I_2(1) = 1/2 - \alpha$ and 
$$
I_2^{\prime}(x) = \log (\frac{x^2 +1}{2} ) - (1+\alpha) \log x + \frac{x^2 (3x^2 + 5)}{(x^2 +1)^2} - (1+2\alpha). 
$$

Consider the case that $\alpha > 1/2$. 
Then, $I_2(x) < 0$ for every $x > 1$ which is sufficiently close to $1$. 
Hence, $G_2(w) < 0$ for every $w > 0$ which is sufficiently close to $0$. 
Hence, $F_2^{\prime}(u) > 0$ for every $u > 0$ which is sufficiently close to $0$. 
This means that $F_2$ is strictly increasing near the origin.

Hence there exists $u_0 > 0$ such that 
$$ 2 t(u_0)^{\alpha} < t(2u_0)^{\alpha}. $$
Take $x_0, z_0 \in \bbH$ such that $\rho_{\FR}(x_0, z_0) = 2u_0$, where $\rho_{\FR}$ is the Fisher metric distance on $\bbH$. 
By considering the geodesic between $x_0$ and $z_0$, we can take $y_0 \in \bbH$ such that $\rho_{\FR}(x_0, y_0) = \rho_{\FR}(y_0, z_0) = u_0$. 

Finally we consider the case that $\alpha = 1/2$. 
Let 
$$ J_2(x) := (x^2 +1)^2 \log (\frac{x^2 +1}{2} ) - \frac{3}{2} (x^2 +1)^2 \log x + x^2 (3x^2 + 5) - 2(x^2 +1)^2.   $$
Then, $J_2(1) = 0$. 
If we let $y := x^2$, then, 
$$ J_2(x) = (y +1)^2 \log \left(\frac{y +1}{2} \right) - \frac{3}{4} (y +1)^2 \log y +(y^2 + y -2).    $$
Let $K_2(y) := J(\sqrt{y})$. 
Then, 
\begin{eqnarray*}
K_2^{\prime}(y) &=& 2(y +1) (\log (\frac{y +1}{2} ) +1)- \frac{3}{2} (y +1) \log y - \frac{3(y +1)^2}{4y}  +(2y+1),\\
 &=& y + (y+1)\left(2 \log \left(y +1 \right) - \frac{3}{2} \log y + \frac{9}{4} - \frac{3}{4y} - 2\log 2\right). 
\end{eqnarray*}
If $y > 1$, then, 
$$ 2 \log \left(y +1 \right) > \frac{3}{2} \log y
$$
and 
$$ \frac{9}{4} - \frac{3}{4y} - 2\log 2 > \frac{3}{2} - 2\log 2 > 0.$$
Then, $J_2(x) > J(1) = 0$ for every $x > 1$. 
Hence, $I_2(x) > I(1) = 0$ for every $x > 1$. 
Hence, $G_2(w) > 0$ for every $w > 0$. 
Hence, $F_2^{\prime}(u) < 0$ for every $u > 0$. 
This means that $F_2$ is strictly decreasing on $[0, \infty)$.
Thus we proved that $D_\KL(p_{\theta_1} : p_{\theta_2})^{1/2}$ gives a distance, hence $D_\KL(p_{\theta_1} : p_{\theta_2})^{\alpha}$ is also a distance for every $\alpha \in (0, 1/2)$.
\end{proof}


\subsection{Metrization of the Bhattacharyya divergence}

The Bhattacharyya divergence~\cite{bhattacharyya1943measure} is defined by
$$
D_{\Bhat}(p:q) := -\log \left( \int \sqrt{p(x)q(x)} \dx \right). 
$$

The term $\int \sqrt{p(x)q(x)} \dx$ is called the Bhattacharyya coefficient.
It is easy to see that $D_{\Bhat}(p:q) = 0$ iff $p=q$, and $D_{\Bhat}(p:q) =D_{\Bhat}(q:p)$. 

\begin{theorem}\label{thm:sqrtBhat}
$\sqrt{D_{\Bhat}(p_{\theta_1}:p_{\theta_2})}$ is a distance on $\mathbb H$.
\end{theorem}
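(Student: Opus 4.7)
The plan is to adapt the strategy used for the Kullback--Leibler divergence in the proof of Theorem~\ref{thm:metrization-KLD}. First I would establish that, on the Cauchy family, the Bhattacharyya divergence depends only on the Fisher--Rao distance $\rho_{\FR}$. The Bhattacharyya coefficient $\int_{\bbR}\sqrt{p_{\theta_1}(x)\,p_{\theta_2}(x)}\,\dx$ is symmetric in $(\theta_1,\theta_2)$ and is $\SL(2,\bbR)$-invariant by exactly the change-of-variable argument of Lemma~\ref{lemma:finv}; equivalently, the squared Hellinger divergence $2(1-\int\sqrt{pq}\,\dx)$ is an $f$-divergence with generator $(\sqrt{u}-1)^2$, so Theorem~\ref{thm:fdivsymmetric} expresses it, and hence $D_{\Bhat}$, as a function of $\chi$. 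Via Proposition~\ref{prop:McCullagh-mi} and the fact that $\chi$ is a monotone function of $\rho_{\FR}$ on $\bbH$, I may write $D_{\Bhat}(p_{\theta_1}:p_{\theta_2})=b(\rho_{\FR}(\theta_1,\theta_2))$ for a nondecreasing $b\colon[0,\infty)\to[0,\infty)$ with $b(0)=0$.

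Next I would reduce to a canonical pair. By invariance it suffices to take $\theta_1=i$ and $\theta_2=\mu i$ with $\mu\ge 1$; writing $u=\rho_{\FR}(i,\mu i)$, which is a fixed multiple of $\log\mu$, a direct calculation yields
\begin{equation*}
\int_{\bbR}\sqrt{p_{i}(x)\,p_{\mu i}(x)}\,\dx \;=\; \frac{2}{\pi\sqrt{\mu}}\,K\!\left(\sqrt{1-\mu^{-2}}\right),
\end{equation*}
where $K$ is the complete elliptic integral of the first kind (see~\S\ref{sec:cei}). This gives the explicit formula $b(u)=-\log\bigl(\tfrac{2}{\pi\sqrt{\mu(u)}}\,K(\sqrt{1-\mu(u)^{-2}})\bigr)$.

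The metric property will follow from the single analytic claim that $G(u):=\sqrt{b(u)}/u$ is nonincreasing on $(0,\infty)$. This is the classical criterion: $G$ nonincreasing is equivalent to subadditivity of $\sqrt{b}$, which together with the monotonicity of $\sqrt{b}$ and the triangle inequality for $\rho_{\FR}$ on $\bbH$ gives $\sqrt{D_{\Bhat}(x,z)}\le \sqrt{D_{\Bhat}(x,y)}+\sqrt{D_{\Bhat}(y,z)}$ for arbitrary triples in $\bbH$. Symmetry and positive-definiteness are immediate from $D_{\Bhat}(p:q)=D_{\Bhat}(q:p)$ and $D_{\Bhat}(p:q)=0\iff p=q$, so this would complete the proof.

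The hard part is the verification that $G$ is nonincreasing. I would differentiate the explicit elliptic representation using the classical identity $\tfrac{dK}{dk}=\tfrac{E(k)-(1-k^2)K(k)}{k(1-k^2)}$ to reduce the problem to a one-variable inequality in $\mu\in(1,\infty)$ between $K$ and the complete elliptic integral of the second kind $E$. The boundary behaviour is clean: as $u\to 0^+$, Taylor expansion near $k=0$ gives $b(u)\sim\alpha u^{2}$ so $G$ has a finite positive limit; as $u\to\infty$, the asymptotic $K(k)\sim\tfrac12\log\tfrac{16}{1-k^2}$ gives $b(u)\sim c' u$ so $G(u)\to 0$. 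The interior sign analysis would then proceed as it does for $F_2'$ in the proof of Theorem~\ref{thm:metrization-KLD}, exploiting the log-convexity of $K$ and the monotonicity of the ratio $E/K$ on $(0,1)$. Controlling the sign of $G'$ uniformly on $(0,\infty)$ via these elliptic-integral identities, and in particular ruling out any subinterval on which $G'>0$, is the principal obstacle.
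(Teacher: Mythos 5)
Your strategy coincides with the paper's: express $D_{\Bhat}$ as a function of $\chi$ (equivalently of $\rho_{\FR}$) by $\SL(2,\bbR)$-invariance, reduce to the pair $(i,\mu i)$, obtain the explicit Bhattacharyya coefficient $\frac{2}{\pi\sqrt{\mu}}\,\mathbf{K}$ evaluated at the corresponding modulus (your formula agrees with the paper's $I_3(ai,i)=\frac{2\sqrt{a}\,\mathbf{K}(1-a^2)}{\pi}$ after $a=1/\mu$ and the change of convention), and then invoke the metric-transform criterion: $\sqrt{t_{\FR\rightarrow\Bhat}}$ nondecreasing and $\sqrt{t_{\FR\rightarrow\Bhat}(s)}/s$ nonincreasing imply the triangle inequality via the Fisher--Rao metric. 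Up to this point the reduction is correct and is exactly the paper's proof.

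The genuine gap is the step you yourself flag as ``the principal obstacle'': you never establish that $G(u)=\sqrt{b(u)}/u$ is nonincreasing, and the tools you propose are not sufficient to do so. Differentiating the elliptic representation does reduce the claim to a one-variable inequality, but in the paper this inequality takes the form $H_4(x):=\bigl(x(2-x)+(x-1)\log(1-x)\bigr)\mathbf{K}(x)^2-2x\,\mathbf{K}(x)\mathbf{E}(x)+\log(1-x)\,\mathbf{E}(x)^2>0$ on $(0,1)$, and its verification occupies a full appendix (Section~\ref{sec:cei}). Monotonicity of $\mathbf{E}/\mathbf{K}$ and log-convexity of $\mathbf{K}$, which you cite, do not control the competing $\log(1-x)$-weighted terms; the paper needs the sharper Gauss--AGM upper bound $\mathbf{E}/\mathbf{K}\le \frac12-\frac{x}{4}+\frac{\sqrt{1-x}}{2}$ on most of the range, the Anderson--Vamanamurthy--Vuorinen logarithmic bound $\mathbf{E}/\mathbf{K}\le 2x^{1/2}/\log\frac{1+\sqrt{x}}{1-\sqrt{x}}$ near $x=1$, and an explicit two-regime split (at $x=0.998$) with separate estimates of each of the three terms. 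Your boundary asymptotics ($b(u)\asymp u^2$ as $u\to 0$, $b(u)\asymp u$ as $u\to\infty$) are consistent with this but only show that $G$ decreases ``on average''; they cannot rule out an interval where $G'>0$, which is precisely what must be excluded. As written, the proposal is therefore a correct plan with the decisive analytic inequality left unproved; a minor additional point is that the monotonicity of $b$ itself (needed for the transform criterion and asserted at the outset) should also be read off from the explicit formula, as the paper does by noting that $\frac{2}{\pi}e^{-s/\sqrt2}\mathbf{K}(1-e^{-2\sqrt2 s})$ is decreasing in $s$.
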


For exponential families, see \cite[Proposition 2]{CauchyVoronoi-2020} and \cite{nielsen2011burbea}. 
We cannot apply the method of \cite[Proposition 2]{CauchyVoronoi-2020} in a direct manner. 
We state the reason in the end of this section. 
We can also show that $D_{\Bhat}(p_{\theta_1}:p_{\theta_2})^{\alpha}$ is not a metric if $\alpha > 1/2$ in the same manner as in the proof of Theorem \ref{thm:metrization-KLD}.

\begin{proof}
We show the triangle inequality. 
We follow the idea in the proof of Theorem 3 in~\cite{CauchyVoronoi-2020}. 
We construct the metric transform $t_{\FR \rightarrow \Bhat}$ and show that $t_{\FR\rightarrow \Bhat}(s)$ is increasing and $\sqrt{t_{\FR \rightarrow \Bhat}(s)}/s$ is decreasing. 

Let $\rho_{\FR}$ be the Fisher-Rao distance. 
Then, by following the argument in the proof of   \cite[Theorem 3]{CauchyVoronoi-2020}, 
\[ \chi(z,w) = F_3(\rho_{\FR}(z,w)), \]
where we let 
\[ F_3(s) := \cosh(\sqrt{2} s) -1. \]

Let 
\[ I_3(z,w) := \int \sqrt{p_z (x) p_w(x)} dx. \]
Then, by the invariance of the $f$-divergences, 
\[ I_3(A.z, A.w) = I_3(z,w). \]
Hence we have that for some function $J_3$, $J_3(\chi(z,w)) = I_3(z,w)$. 
Hence, 
\[ \sqrt{D_{\Bhat}(p_{\theta_1}:p_{\theta_2})} = \sqrt{-\log J_3\left(F_3(\rho_{\FR}(\theta_1, \theta_2))\right)}. \]
We have that 
\[ t_{\FR \rightarrow \Bhat}(s) = -\log J_3(F_3(s)).  \]

It holds that for every $a \in (0,1)$, 
\[ J\left( \chi(ai,i)) \right) = I(ai,i). \]

By the change-of-variable $x = \tan \theta$ in the integral of $I(a i, i)$, 
it is easy to see that 
\[ I_3(ai,i) = \frac{2 \sqrt{a} \mathbf{K}(1-a^2)}{\pi}, \]
where $\mathbf{K}$ is the elliptic integral of the first kind. 
It is defined by\footnote{This is a little different from the usual definition. The usual one is $\mathbf{K}(t) = \int_0^{\pi/2} \frac{1}{\sqrt{1 - t^2 \sin^2 \theta}} d\theta$.} 
\[ \mathbf{K}(t) := \int_0^{\pi/2} \frac{1}{\sqrt{1 - t \sin^2 \theta}} d\theta, \ 0 \le t < 1. \]

Hence, 
\[ J_3 \left(\frac{(1-a)^2}{2a}\right) = \frac{2 \sqrt{a} \mathbf{K}(1-a^2)}{\pi}. \]

Since $$F_3(s) = \cosh(\sqrt{2} s) -1 = \frac{(1 - e^{-\sqrt{2} s})^2}{2 e^{-\sqrt{2} s}}, $$
we have that 
\[ J_3(F_3(s)) = \frac{2 e^{-s/\sqrt{2}} \mathbf{K}(1-e^{-2\sqrt{2} s})}{\pi}. \]
Since the above function is decreasing with respect to $s$, 
$t_{\FR \rightarrow \Bhat}(s)$ is increasing. 

Furthermore, we have that 
\begin{equation}\label{eq:sqrt-Bhat} 
\frac{\sqrt{t_{\FR \rightarrow \Bhat}(s)}}{s} = \sqrt{{-\frac{1}{s^2}}\log \left(  \frac{2 e^{-s/\sqrt{2}} \mathbf{K}(1-e^{-2\sqrt{2} s})}{\pi} \right) }. 
\end{equation}

This function is decreasing with respect to $s$. 
See Figure \ref{fig:fig1}. 
We can show this fact by using the results for the complete elliptic integrals. 
The full proof is somewhat complicated. 
See Section \ref{sec:cei}.
\end{proof}

\begin{remark}
It holds that 
\[ \lim_{s \to +0} \frac{\sqrt{t_{\FR \rightarrow \Bhat}(s)}}{s} = \frac{1}{8}, \textup{ and } \lim_{s \to +\infty} \frac{\sqrt{t_{\FR \rightarrow \Bhat}(s)}}{s} = 0. \]
\end{remark}

\begin{remark}
The squared Hellinger distance $H^2(p:q):=\frac{1}{2} \int \left(\sqrt{p(x)}-\sqrt{q(x)}\right)^2\dx$ (an $f$-divergence for $f_{\mathrm{Hellinger}}(u)=\frac{1}{2}(\sqrt{u}-1)^2$) satisfies that
\[ H^2(p_{\theta_1} : p_{\theta_2}) = 1 - \exp\left( -D_{\Bhat}(p_{\theta_1}:p_{\theta_2}) \right) = 1 - J_3(F_3(\rho_{\FR}(\theta_1, \theta_2))) \]
\[=  1- \frac{2 e^{- \rho_{\FR}(\theta_1, \theta_2)/\sqrt{2}} \mathbf{K}(1-e^{-2\sqrt{2} \rho_{\FR}(\theta_1, \theta_2)})}{\pi} \]
\[ = 1-  \frac{2 K\left( 1 - \left(1+ \chi(\theta_1, \theta_2) + \sqrt{\chi(\theta_1, \theta_2)(2+\chi(\theta_1, \theta_2))}\right)^{-2}\right)}{\pi \sqrt{1+ \chi(\theta_1, \theta_2) + \sqrt{\chi(\theta_1, \theta_2)(2+\chi(\theta_1, \theta_2))}}}. \label{eq:squaredHellinger} \]
The Hellinger distance $H(p_{\theta_1} : p_{\theta_2})$ is known to be a metric distance.
Notice that 
$$
h_{f_{\mathrm{Hellinger}}}(u)=1-  \frac{2 \mathbf{K}\left( 1 - \left(1+ u + \sqrt{u(2+u)}\right)^{-2}\right)}{\pi \sqrt{1+u + \sqrt{u(2+u)}}}
$$ 
and we check that $h_{f_{\mathrm{Hellinger}}}(0)=0$ 
since $\mathbf{K}(0)=\frac{\pi}{2}$.
\end{remark}

\begin{figure*}
\begin{center}
\includegraphics[width= 8cm, height= 6cm, bb= 0 0 846 594]{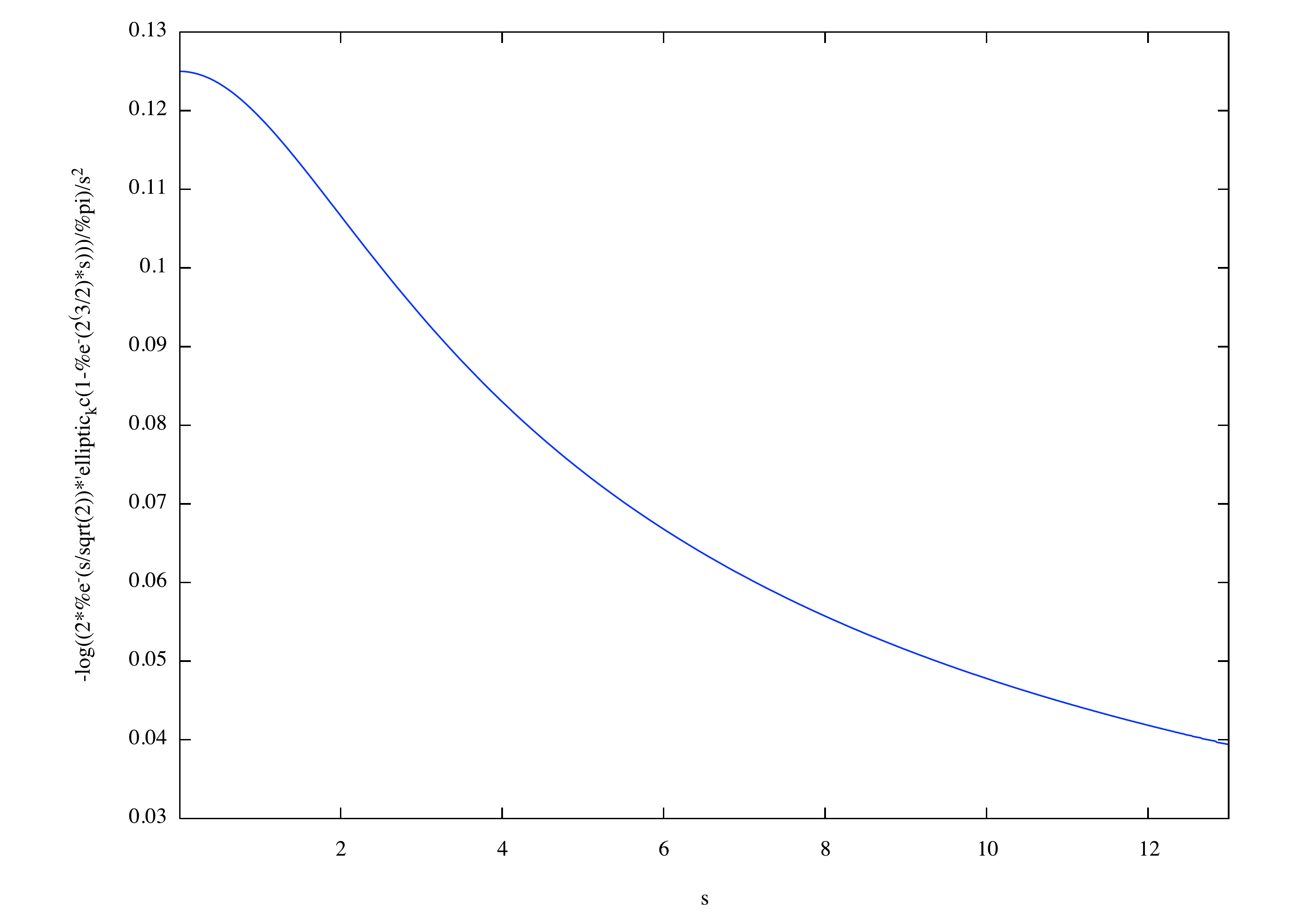}
\caption{Graph of $\frac{\sqrt{t_{\FR \rightarrow \Bhat}(s)}}{s}$}\label{fig:fig1}
\end{center}
\end{figure*}

\begin{remark}
More generally, let $\BC_\alpha[p:q]:=\int_\bbR p(x)^\alpha q(x)^{1-\alpha}\dx$ denote the $\alpha$-skewed Bhattacharyya coefficient for 
$\alpha\in \bbR\backslash\{0,1\}$ (also called the $\alpha$-Chernoff coefficient~\cite{Chernoff-2011,Chernoff-2013}).
The $\alpha$-skewed Bhattacharyya divergence is defined by
$$
D_{\Bhat,\alpha}(p:q) := -\log  \BC_\alpha[p:q] = -\log \int_\bbR p(x)^\alpha q(x)^{1-\alpha}\dx.   
$$

Using a computer algebra system\footnote{\url{https://maxima.sourceforge.io/}}, we can compute the $\alpha$-skewed Bhattacharyya coefficients for integers $\alpha$ in closed form.
For example, we find  the following closed-form for the definite integrals:
{\tiny
\begin{eqnarray*}
\BC_2[p:p_{l,s}]&=& \frac{{{s}^{2}}+{{l}^{2}}+1}{2 s},\\
\BC_3[p:p_{l,s}]&=& \frac{3 {{s}^{4}}+\left( 6 {{l}^{2}}+2\right) \, {{s}^{2}}+3 {{l}^{4}}+6 {{l}^{2}}+3}{8 {{s}^{2}}},\\
\BC_4[p:p_{l,s}]&=&\frac{5 {{s}^{6}}+\left( 15 {{l}^{2}}+3\right) \, {{s}^{4}}+\left( 15 {{l}^{4}}+18 {{l}^{2}}+3\right) \, {{s}^{2}}+5 {{l}^{6}}+15 {{l}^{4}}+15 {{l}^{2}}+5}{16 {{s}^{3}}}, \ \ \ \ and,\\
\BC_5[p:p_{l,s}] &=&\frac{35 {{s}^{8}}+\left( 140 {{l}^{2}}+20\right) \, {{s}^{6}}+\left( 210 {{l}^{4}}+180 {{l}^{2}}+18\right) \, {{s}^{4}}+\left( 140 {{l}^{6}}+300 {{l}^{4}}+180 {{l}^{2}}+20\right) \, {{s}^{2}}+35 {{l}^{8}}+140 {{l}^{6}}+210 {{l}^{4}}+140 {{l}^{2}}+35}{128 {{s}^{4}}}.
\end{eqnarray*}
}
\end{remark}

Furthermore, we give some remarks about the complete elliptic integrals of the first and second kinds.

\begin{remark}
(i) In practice, we can calculate efficiently $\mathbf{K}(t)$ using the arithmetic-geometric mean (AGM): 
$$
\mathbf{K}(t)=\frac{\pi}{ 2 \mathrm{AGM}(1,\sqrt{1-t^2})}
$$ 
where 
$\mathrm{AGM}(a,b)=\lim_{n\rightarrow \infty} a_n=\lim_{n\rightarrow \infty} g_n$ with $a_0=a$, $g_0=b$, $a_{n+1}=\frac{a_n+g_n}{2}$ and $g_{n+1}=\sqrt{a_ng_n}$.
The mean is called the arithmetic-geometric mean because it falls in-between the geometric mean and the arithmetic mean:
  $g_n\leq \mathrm{AGM}(a,b)\leq a_n$, where $g_n$ is an increasing sequence and $a_n$ is a decreasing sequence.
We see that 
$$
\mathrm{AGM}(a,b)=\frac{\pi}{4} \frac{a+b}{\mathbf{K}\left(\frac{a-b}{a+b}\right)}.
$$
One way to show this relation is using the invariance of the Cauchy distribution with respect to the Boole transform which is mentioned in Section \ref{sec:igls}. \\
(ii) Let $\mathbf{K}$ and $\mathbf{E}$ be the complete elliptic integrals of the first and second kinds respectively. 
We let\footnote{This is also a little different from the usual definition. The usual one is $E(t) := \int_0^{\pi/2} \sqrt{1 - t^2 \sin^2 \theta} d\theta.$} 
\[ \mathbf{E}(t) := \int_0^{\pi/2} \sqrt{1 - t \sin^2 \theta} d\theta. \]
The following expansion by C. F. Gauss in 1818 is well-known: 
\[ 1 - \frac{\mathbf{E}(x)}{\mathbf{K}(x)}  = \frac{x}{2} + \sum_{n \ge 1} 2^{n-1} (a_{n} - b_{n})^2, \ \ x \in (0,1), \]
where $(a_0, b_0) = (1,\sqrt{1-x})$ and $(a_{n+1}, b_{n+1}) = \left(\frac{a_n + b_n}{2}, \sqrt{a_n b_n}\right)$, $n \ge 0$. 
See \cite{Salamin} for more details. 

By investigating of the behaviors of $\frac{\sqrt{t_{\FR \rightarrow \Bhat}(s)}}{s}$ in Eq.~\ref{eq:sqrt-Bhat}, 
we get some approximation formulae of $1 - \frac{\mathbf{E}(x)}{\mathbf{K}(x)}$.  
See Lemma \ref{Gauss-AGM} below for example. 
By numerical computations, 
it holds that 
\[ 1 - \frac{\mathbf{E}(x)}{\mathbf{K}(x)}  =  \frac{x}{2} + \frac{x^2}{16} + \frac{x^3}{32} + \frac{41}{2048}x^4 +  \frac{59}{4096} x^5 + \frac{727}{65536} x^6 + O(x^7),  \]
\[ x \left( \frac{3}{2} + 4\frac{\log(2 K(x)/\pi)}{\log(1-x)} \right) = \frac{x}{2} + \frac{x^2}{16} + \frac{x^3}{32} + \frac{251}{12288}x^4 +  \frac{123}{8192} x^5 + \frac{34781}{2949120} x^6 + O(x^7)  \]
and 
\[ \frac{x\left(4-x-\sqrt{(4-3x)^2 + 4(2-x)(1-x)\log(1-x)}\right)}{4x+2(x-1)\log(1-x)} \]
\[= \frac{x}{2} + \frac{x^2}{16} + \frac{x^3}{32} + \frac{49}{3072} x^4 + \frac{41}{6144} x^5 + \frac{259}{491520} x^6 + O(x^7).  \]
See also \cite[Lemma 6.2]{KMY}. 
They are very close to each other  if $x > 0$ is close to $0$. 
For just a few of recent results about complete elliptic integrals and its applications, see \cite{KMY}, \cite{YQCZ} and the references therein. 
 
Table~\ref{tab:summary} summarizes the symmetric closed-form $f$-divergences 
$I_f(p_{\lambda}:p_{\lambda'})=h_f(\chi[p_{\lambda}:p_{\lambda'}])$ 
 between two univariate Cauchy densities $p_{\lambda}$ and $p_{\lambda'}$
that we obtained as a function $h_f$ of the chi-squared divergence $\chi[p_{\lambda}:p_{\lambda'}]=\frac{\|\lambda-\lambda'\|^2}{2\lambda_2\lambda_2'}$ (with $h_f(0)=0$).
\end{remark}

\begin{remark}
The proof of \cite[Proposition 2]{CauchyVoronoi-2020} is {\it not} applicable to the proof of Theorem \ref{thm:sqrtBhat} above, because it cannot be a Bregman divergence. 
See \cite{acharyya2013bregman}.
\end{remark}


\begin{table}
\centering
{\renewcommand{\arraystretch}{1.5}
\begin{tabular}{|l|l|l|}\hline
$f$-divergence name & $f(u)$ & $h_f(u)$ for $I_f[p_{\lambda_1}:p_{\lambda_2}]=h_f(\chi[p_{\lambda_1}:p_{\lambda_2}])$ \\ \hline\hline
Chi squared divergence & $(u-1)^2$ & $u$\\
Total variation distance & $\frac{1}{2}|u-1|$ & $\frac{2}{\pi}\arctan\left(\sqrt{\frac{u}{2}}\right)$\\
Kullback-Leibler divergence & $-\log u$ & $\log(1+\frac{1}{2}u)$\\
Jensen-Shannon divergence & $\frac{u}{2}\log\frac{2u}{1+u}-\frac{1}{2}\log\frac{1+u}{2}$ & $\log\left( \frac{2\sqrt{2+u}}{\sqrt{2+u}+\sqrt{2}} \right)$\\
Taneja $T$-divergence & $\frac{u+1}{2}\log\frac{u+1}{2\sqrt{u}}$ & $\log \left( \frac{1+\sqrt{1+\frac{u}{2}}}{2} \right)$,\\
LeCam-Vincze divergence & $\frac{(u-1)^2}{1+u}$ & $2-4\sqrt{\frac{1}{2(u+2)}}$\\
squared Hellinger divergence & $\frac{1}{2}(\sqrt{u}-1)^2$ & $1-  \frac{2 K\left( 1 - \left(1+ u + \sqrt{u(2+u)}\right)^{-2}\right)}{\pi \sqrt{1+u + \sqrt{u(2+u)}}}$ \\ \hline
\end{tabular}
}

\caption{Closed-form $f$-divergences between two univariate Cauchy densities expressed as a function $h_f$ of the chi-squared divergence $\chi[p_{\lambda}:p_{\lambda'}]=\frac{\|\lambda-\lambda'\|^2}{2\lambda_2\lambda_2'}$. 
The square root of the KLD, LeCam and squared Hellinger divergences between Cauchy densities yields metric distances.\label{tab:summary}  }
\end{table}

\subsection{The Chernoff information}\label{sec:CID}

The Chernoff information~\cite{Chernoff-2013} between two densities $p_1$ and $p_2$ is defined by:
$$
C(p_1 : p_2) := -\log \min_{a \in (0,1)} \int p_1 (x)^a p_2(x)^{1-a} \dx. 
$$

The Chernoff information provides an upper bound for the error probabilities of Bayes hypothesis testing~\cite{CT-2012} (Chapter 11). 

\begin{theorem}
For the univariate Cauchy location-scale families, the Chernoff information is equal to the Bhattacharyya divergence. 
\end{theorem}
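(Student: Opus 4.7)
The plan is to reduce the optimization over $a$ in the Chernoff information to the single value $a=1/2$ by showing that the skewed Bhattacharyya coefficient
\[
\BC_a(p_1:p_2) := \int p_1(x)^a\, p_2(x)^{1-a}\,\dx
\]
is, as a function of $a \in (0,1)$, both convex and symmetric about $a=1/2$ whenever $p_1$ and $p_2$ are univariate Cauchy densities. Granted this, a convex function symmetric about $1/2$ attains its minimum at $1/2$, so $\min_{a \in (0,1)} \BC_a(p_1:p_2) = \BC_{1/2}(p_1:p_2)$, and applying $-\log$ gives $C(p_1:p_2) = D_{\Bhat}(p_1:p_2)$.

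For the symmetry step, I would realize $\BC_a$ as an $f$-divergence up to an additive constant: the map $f_a(u) = 1 - u^{1-a}$ is convex on $(0,\infty)$ (since $f_a''(u) = a(1-a)u^{-a-1} > 0$ for $a \in (0,1)$) and satisfies $f_a(1)=0$, with $I_{f_a}(p:q) = 1 - \BC_a(p:q)$. Theorem~\ref{thm:fdivsymmetric} then yields $\BC_a(p_1:p_2) = \BC_a(p_2:p_1)$. Combined with the trivial relabelling identity $\BC_a(p:q) = \BC_{1-a}(q:p)$, this forces $\BC_a(p_1:p_2) = \BC_{1-a}(p_1:p_2)$, which is the desired reflection symmetry in the parameter $a$.

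For the convexity step, I would rewrite $\BC_a(p_1:p_2) = E_{p_2}\!\left[(p_1/p_2)^a\right]$ and recognize this as the moment generating function in $a$ of the log-likelihood ratio $\log(p_1/p_2)$ under $p_2$. Its logarithm is the cumulant generating function of that random variable, hence convex in $a$ (strictly so unless $p_1=p_2$, since the distribution of $\log(p_1/p_2)$ under $p_2$ is non-degenerate as soon as the Cauchy parameters differ). In particular $a\mapsto \BC_a$ is log-convex, and therefore convex, on $(0,1)$.

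The only non-routine point is the symmetry $\BC_a = \BC_{1-a}$, and that is precisely where Theorem~\ref{thm:fdivsymmetric} is used essentially: for a generic location-scale family this symmetry fails, and convexity alone would not pin the minimizer to $a=1/2$. Once symmetry and convexity are in hand, the identification with the Bhattacharyya divergence is immediate.
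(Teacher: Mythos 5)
Your proof is correct, and it rests on the same two pillars as the paper's: the symmetry of $f$-divergences between Cauchy densities and convexity in the exponent $a$. The execution differs, though, in a way worth noting. The paper works at the level of the cumulant function $\Lambda(a)=\log\int p_{\theta_1}^a p_{\theta_2}^{1-a}\,\dx$: it differentiates under the integral sign, uses the symmetry property to derive the stationarity identity $\int \sqrt{p_{\theta_1}p_{\theta_2}}\,\log(p_{\theta_1}/p_{\theta_2})\,\dx=0$ (so $\Lambda'(1/2)=0$), and concludes by $\Lambda''\ge 0$ via Cauchy--Schwarz. You instead prove the integrated statement $\BC_a(p_1:p_2)=\BC_{1-a}(p_1:p_2)$ directly, by recognizing $1-\BC_a$ as the $f$-divergence with the genuinely convex generator $f_a(u)=1-u^{1-a}$ and invoking Theorem~\ref{thm:fdivsymmetric} together with the relabelling $\BC_a(p:q)=\BC_{1-a}(q:p)$; convexity of $a\mapsto\BC_a$ then comes from log-convexity (H\"older, or the cumulant-generating-function viewpoint), and the minimum of a convex function symmetric about $1/2$ sits at $1/2$. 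Your route buys two small advantages: no differentiation under the integral is needed, and the symmetry theorem is only ever applied to a convex generator (the paper's intermediate identity amounts to applying the symmetry to the non-convex integrand $-u^{1-a}\log u$, which is true but requires the extended invariance argument rather than the $f$-divergence statement as literally formulated). The paper's derivative-based argument, in turn, is the infinitesimal version of your reflection symmetry, so the two proofs are equivalent in substance.
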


\begin{proof}
Let 
$$
\Lambda(a) := \log \int_{\mathbb R} p_{\theta_1} (x)^a p_{\theta_2} (x)^{1-a} \dx.
$$
This is finite for every $\bbR$, and is in  $C^{\infty}$ class on $\bbR$. 

We see that for every $a \in \mathbb R$, 
$$
\Lambda^{\prime}(a)  = \frac{ \int_{\mathbb R}  p_{\theta_1} (x)^a p_{\theta_2} (x)^{1-a} \log \frac{p_{\theta_1}(x)}{p_{\theta_2}(x)} \dx}{\int_{\mathbb R}  p_{\theta_1} (x)^a p_{\theta_2}(x)^{1-a} \dx}.
$$

By the symmetry of $f$-divergences, 
$$
\int_{\mathbb R}  p_{\theta_1} (x)^a p_{\theta_2} (x)^{1-a} \log \frac{p_{\theta_1}(x)}{p_{\theta_2}(x)} \dx =
 \int_{\mathbb R}  p_{\theta_2} (x)^a p_{\theta_1} (x)^{1-a} \log \frac{p_{\theta_2}(x)}{p_{\theta_1}(x)} \dx. 
$$

Hence, for $a=1/2$, 
\begin{equation}\label{eq:sym} 
\int_{\mathbb R}  p_{\theta_1} (x)^{1/2} p_{\theta_2} (x)^{1/2} \log \frac{p_{\theta_1}(x)}{p_{\theta_2}(x)} dx = 0. 
\end{equation} 
Hence, 
$ \Lambda^{\prime}(1/2) = 0$. 
By the Cauchy-Schwarz inequality, $ \Lambda^{\prime\prime}(a) \ge 0$. 
Hence $\Lambda(a)$ takes its minimum at $a=1/2$. 
\end{proof}

Thus the Chernoff information between two Cauchy distributions $p_{\lambda_1}$ and $p_{\lambda_2}$ can be computed from the Bhattacharyya coefficient $\BC_\alpha[p_{\lambda_1}:p_{\lambda_2}]:=\int \sqrt{p_{\lambda_1}(x)p_{\lambda_2}(x)}\dx$:
$$
C(p_{\lambda_1} : p_{\lambda_2})=-\log \BC_\alpha[p_{\lambda_1}:p_{\lambda_2}].
$$

Since the Bhattacharyya coefficient can be recovered from the squared Hellinger divergence:
$$
\BC_\alpha[p_{\lambda_1}:p_{\lambda_2}]=1-H^2 (p_{\lambda_1}:p_{\lambda_2}),
$$ 
we use the closed-form of the squared Hellinger divergence (Eq.~\ref{eq:squaredHellinger}) to recover the closed-form formula of the  Bhattacharyya coefficient.
The Bhattacharyya and Chernoff divergences are not $f$-divergences because they are not separable divergences.
Nevertheless, by abuse of notation, let us write $h_{\mathrm{Chernoff}}(u)=-\log \left(1-h_{\mathrm{Hellinger}}(u)\right)$.

\begin{remark}
We can compute Eq. \ref{eq:sym} by using the two formulas 4.386.3 and 4.386.4 in p.~588 of~\cite{gradshteyn2014table}.
However such approach is much more tedious than the above proof. 
\end{remark}

Additional material is available at \url{https://franknielsen.github.io/CauchyFdivergences/}

\section{Geometric properties of the metrizations of $f$-divergences}\label{sec:properties}

If a divergence $D$ is given, then we can define an associated Riemannian metric $g_D$ on the parameter space by following Eguchi \cite{Eguchi1983, Eguchi1992}. 
(See also Remark~\ref{rmk:connection}.)
Specifically, by regarding $D$ is a smooth function on $M \times M$ where $M$ is the space of parameters, 
we let 
\[ (g_D)_r (X_r, Y_r)  := -X_p Y_q D(p,q) |_{p=q=r}, \ r \in M, \]
where $X, Y$ are vector fields on $M$. 

It is known that if $D$ is the Kullback-Leibler divergence, then, $g_D$ is the Fisher metric. 
If $D$ is not the Kullback-Leibler divergence, then, we are not sure whether $g_D$ is the Fisher metric. 
However, $g_D$ is the Fisher metric for every smooth $f$-divergence between the Cauchy distribution. 

\begin{proposition}
Let $D_f$ be the $f$-divergence between the univariate Cauchy densities. 
Let $F$ be a function such that 
$$ D_f (p_{\theta_1} : p_{\theta_2}) = F(\chi(\theta_1, \theta_2)), \ \theta_1, \theta_2 \in \mathbb H.$$
Assume that $F$ is in $C^2 ([0, \infty))$. 
Then, the Riemannian metric $g_D$ is $F^{\prime}(0) \rho$, where $\rho$ is the Poincar\'e metric on $\mathbb H$. 
\end{proposition}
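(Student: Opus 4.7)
The plan is to reduce the computation to a two-term Taylor expansion and then read off the metric coefficients directly. Since the Eguchi construction assigns to any smooth divergence $D$ (satisfying $D(r,r)=0$) a symmetric bilinear form via
\[
(g_D)_r(X_r,Y_r) = -X_p Y_q D(p,q)\bigl|_{p=q=r},
\]
and since differentiating the identity $D(\lambda:\lambda)=0$ twice along the diagonal (using the symmetry of $D_f$ between Cauchy densities established earlier) gives
\[
D_f(p_\lambda : p_{\lambda+\epsilon}) = \tfrac{1}{2}(g_D)_{ij}(\lambda)\,\epsilon^i\epsilon^j + O(\|\epsilon\|^3),
\]
it is enough to extract the quadratic part of $\epsilon\mapsto F(\chi(\lambda,\lambda+\epsilon))$ at $\epsilon=0$.

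The key step is the expansion of $\chi$ near the diagonal. Writing $\lambda=(\lambda_1,\lambda_2)\in\bbH$ and $\lambda'=\lambda+\epsilon$,
\[
\chi(\lambda,\lambda+\epsilon) = \frac{\epsilon_1^2+\epsilon_2^2}{2\lambda_2(\lambda_2+\epsilon_2)} = \frac{\epsilon_1^2+\epsilon_2^2}{2\lambda_2^2} + O(\|\epsilon\|^3).
\]
Because $F(0)=D_f(p_\lambda:p_\lambda)=0$ and $F\in C^2([0,\infty))$, the Taylor expansion $F(t)=F'(0)t+\tfrac{1}{2}F''(0)t^2+o(t^2)$ together with $\chi(\lambda,\lambda+\epsilon)=O(\|\epsilon\|^2)$ gives
\[
D_f(p_\lambda:p_{\lambda+\epsilon}) = F'(0)\,\frac{\epsilon_1^2+\epsilon_2^2}{2\lambda_2^2} + O(\|\epsilon\|^3),
\]
since the $\chi^2$ contribution is already $O(\|\epsilon\|^4)$.

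Matching this with $\tfrac{1}{2}(g_D)_{ij}\epsilon^i\epsilon^j + O(\|\epsilon\|^3)$ forces $(g_D)_{ij}(\lambda) = F'(0)\,\delta_{ij}/\lambda_2^2$, which is precisely $F'(0)$ times the Poincaré metric tensor $\rho = (d\lambda_1^2+d\lambda_2^2)/\lambda_2^2$ on $\bbH$. The only subtlety worth flagging is the equivalence between the Eguchi mixed-derivative definition and the diagonal Hessian used here; this is where one invokes the symmetry $D_f(p_\lambda:p_{\lambda'})=D_f(p_{\lambda'}:p_\lambda)$ proved in Theorem~\ref{thm:fdivsymmetric}, so that differentiating $D_f(\lambda:\lambda)\equiv 0$ twice along the diagonal yields $\partial_{\lambda'_i}\partial_{\lambda'_j}D_f|_{\lambda=\lambda'} = -\partial_{\lambda_i}\partial_{\lambda'_j}D_f|_{\lambda=\lambda'}$. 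This is a routine but essential bookkeeping step; once it is in place, the conclusion $g_D = F'(0)\,\rho$ follows immediately, and along the way one also sees that the value of $F''(0)$ plays no role, which is consistent with the fact that all $\alpha$-connections on $\mathcal{C}$ coincide with the Levi-Civita connection (Remark~\ref{rmk:connection}).
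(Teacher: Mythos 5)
The paper states this proposition without an accompanying proof, so there is no in-paper argument to compare against; judged on its own, your argument is correct and reaches the right conclusion. The most direct route would be to apply Eguchi's definition to $D_f=F\circ\chi$ head-on: since $\chi(\lambda,\lambda)=0$ and all first partials of $\chi$ vanish on the diagonal, one gets $-\partial_{\lambda_i}\partial_{\lambda'_j}F(\chi(\lambda,\lambda'))\big|_{\lambda'=\lambda}=-F'(0)\,\partial_{\lambda_i}\partial_{\lambda'_j}\chi(\lambda,\lambda')\big|_{\lambda'=\lambda}$, and a one-line computation with $\chi(\lambda,\lambda')=\frac{\|\lambda-\lambda'\|^2}{2\lambda_2\lambda_2'}$ yields $-\partial_{\lambda_i}\partial_{\lambda'_j}\chi\big|_{\lambda'=\lambda}=\delta_{ij}/\lambda_2^2$, hence $g_D=F'(0)\rho$ immediately (a sanity check: for $F=h_{\KL}(u)=\log(1+u/2)$ this gives $\tfrac12\rho$, the Fisher metric, as it must). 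Your detour through the quadratic expansion $D_f(p_\lambda:p_{\lambda+\epsilon})=\tfrac12(g_D)_{ij}\epsilon^i\epsilon^j+o(\|\epsilon\|^2)$ is an equivalent, equally valid way to read off the same coefficients.

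Two small points of bookkeeping. First, the step you flag as requiring the symmetry of $D_f$ does not actually need it: the identity $\partial_{\lambda'_i}\partial_{\lambda'_j}D_f\big|_{\lambda'=\lambda}=-\partial_{\lambda_i}\partial_{\lambda'_j}D_f\big|_{\lambda'=\lambda}$ follows by differentiating in $\lambda_i$ the first-order condition $\partial_{\lambda'_j}D_f(\lambda,\lambda')\big|_{\lambda'=\lambda}=0$, which holds because $D_f\ge 0$ vanishes on the diagonal (or, here, simply because $\nabla\chi=0$ on the diagonal and $F(0)=0$); the symmetry established in Theorem~\ref{thm:fdivsymmetric} is a harmless but unnecessary crutch. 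Second, under the bare $C^2([0,\infty))$ hypothesis the generic second-order remainder is $o(\|\epsilon\|^2)$ rather than $O(\|\epsilon\|^3)$; this does not affect the matching of quadratic coefficients, and in your case the explicit rational form of $\chi$ justifies the stronger bound for the $F'(0)\chi$ term anyway. With these minor corrections your proof stands as a complete justification of the stated result.
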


For the (dual) connections induced by the $f$-divergence, see Remark~\ref{rmk:connection}. 
We remark that $\sqrt{2}\rho_{\textup{FR}}$ is identical with the Poincar\'e distance on $\mathbb H$.  

\begin{proposition}\label{prop:nongeo}
Let $\sqrt{D_{\textup{Bhat}}}$ and $\sqrt{D_{\textup{KL}}}$ be the distances between Cauchy densities. 
Then, neither $(\mathbb H, \sqrt{D_{\textup{Bhat}}})$ nor $(\mathbb H, \sqrt{D_{\textup{KL}}})$ is a geodesic metric space. 
\end{proposition}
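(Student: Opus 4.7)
The plan is to prove the stronger statement that no two distinct points of $\mathbb H$ admit a midpoint in either metric. Since a geodesic joining $\theta_1$ to $\theta_2$ in any metric space automatically yields a midpoint (the image of the parameter midpoint on any length-realising curve), ruling out midpoints suffices to rule out geodesics.

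The starting point is that, by Theorem~\ref{thm:fdivchisquared} together with the identity $\chi(z,w)=\cosh(\sqrt{2}\,\rho_{\FR}(z,w))-1$ recorded in the proof of Theorem~\ref{thm:sqrtBhat}, both distances factor through the Fisher-Rao metric: there exist strictly increasing functions $\phi_{\KL},\phi_{\Bhat}:[0,\infty)\to[0,\infty)$ with $\phi(0)=0$ such that $\sqrt{D_{\KL}(\theta_1,\theta_2)}=\phi_{\KL}(\rho_{\FR}(\theta_1,\theta_2))$ and $\sqrt{D_{\Bhat}(\theta_1,\theta_2)}=\phi_{\Bhat}(\rho_{\FR}(\theta_1,\theta_2))$. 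In particular, a midpoint in either $\sqrt{D}$-metric must be a point equidistant from $\theta_1$ and $\theta_2$ in the Fisher-Rao metric as well.

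The key analytic input is that $s\mapsto \phi(s)/s$ is strictly decreasing on $(0,\infty)$ for both $\phi=\phi_{\KL}$ and $\phi=\phi_{\Bhat}$. This is already done in the paper: for $\phi_{\KL}$ it is the case $\alpha=1/2$ of the proof of Theorem~\ref{thm:metrization-KLD}, where $F_2$ is shown to be strictly decreasing; for $\phi_{\Bhat}$ it is the elliptic-integral analysis establishing the monotonicity of $\sqrt{t_{\FR\rightarrow\Bhat}(s)}/s$ in the proof of Theorem~\ref{thm:sqrtBhat}. From the elementary fact that $\phi(0)=0$ together with $\phi(s)/s$ non-increasing forces $\phi$ to be subadditive, strict monotonicity upgrades this to strict subadditivity; in particular $\phi(s)<2\phi(s/2)$ for every $s>0$.

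The argument then concludes by contradiction. Suppose $\theta$ is a midpoint of distinct $\theta_1,\theta_2$ in $(\mathbb H,\sqrt{D})$ and set $s:=\rho_{\FR}(\theta_1,\theta_2)>0$. Applying $\phi^{-1}$ to the midpoint equalities gives $\rho_{\FR}(\theta_1,\theta)=\rho_{\FR}(\theta,\theta_2)=\phi^{-1}(\phi(s)/2)<s/2$, where the strict inequality uses $\phi(s)<2\phi(s/2)$ and the monotonicity of $\phi^{-1}$. The triangle inequality in the hyperbolic metric $\rho_{\FR}$ then yields $s\le 2\phi^{-1}(\phi(s)/2)<s$, a contradiction. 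The main (and only) obstacle is therefore the strict monotonicity of $\phi(s)/s$; since both cases are already handled in the paper, the proof of Proposition~\ref{prop:nongeo} reduces to this short metric-geometric deduction.
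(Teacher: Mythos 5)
Your proof is correct, and it takes a genuinely different route from the paper's. The paper argues by contradiction from the geodesic hypothesis itself: it projects a putative geodesic between $p_i$ and $p_{Ai}$ onto the imaginary axis, uses the monotonicity of $\sqrt{D}$ in $\chi$ together with the intermediate value theorem to produce a point $Bi$ at which the triangle inequality is an equality, and then contradicts this additivity for large $A$ via the asymptotics $\lim_{\chi\to\infty} D/\rho_{\FR}=1/\sqrt{2}$ (Eqs.~\eqref{eq:KL-infty}, \eqref{eq:Bhat-infty}), i.e.\ the failure is exhibited only at large separations along a hyperbolic ray. You instead exploit the fact, already established in the proofs of Theorems~\ref{thm:metrization-KLD} (the case $\alpha=1/2$, where $F_2'<0$) and \ref{thm:sqrtBhat} (the elliptic-integral analysis in \S\ref{sec:cei}, where $F_4'(u)<0$ for all $u>0$), that $\phi(s)/s$ is \emph{strictly} decreasing for $\phi=\phi_{\KL},\phi_{\Bhat}$, which gives $\phi(s)<2\phi(s/2)$ at every scale; combined with $\sqrt{D}=\phi(\rho_{\FR})$ and the triangle inequality for $\rho_{\FR}$, this rules out midpoints between any two distinct points, hence geodesics. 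Your argument is shorter (no projection or IVT step, no asymptotics at infinity) and proves a slightly stronger statement (no pair of distinct points admits a midpoint), but it genuinely needs the strictness of the monotonicity of $\phi(s)/s$, which the appendix supplies; one could also note that the inverse $\phi^{-1}$ is not really needed, since $\phi(\rho)=\phi(s)/2<\phi(s/2)$ together with monotonicity of $\phi$ already forces $\rho<s/2$. The paper's route, by contrast, only uses the (non-strict) monotonicity of $h_f$ plus the large-distance behavior, so the two proofs trade different inputs for the same conclusion.
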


\begin{proof}
Recall that $p_z (x) = \frac{\textup{Im}(z)}{\pi |x-z|^2}, z \in \mathbb H$. 

Assume that $(\mathbb H, \sqrt{D_{\textup{KL}}})$ is a geodesic metric space. 
Then, for every $A > 0$, 
there exists a continuous map $\gamma : [0,1] \to \mathbb H$ such that $\gamma(0) = i, \gamma(1) = Ai$, and 
\[ \sqrt{D_{\textup{KL}}\left(p_{i} : p_{Ai}\right)} = \sqrt{D_{\textup{KL}}(p_{i} : p_{\gamma(t)})} + \sqrt{D_{\textup{KL}}(p_{\gamma(t)} : p_{Ai})}  \]
for every $t \in (0,1)$.

Let $\widetilde \gamma (t) := \textup{Im}(\gamma (t)) i$. 
Then, 
\[ \chi(\widetilde \gamma (t_1), \widetilde \gamma (t_2)) \le \chi(\gamma (t_1), \gamma (t_2)), \ \ t_1, t_2 \in [0,1]. \]
Since $ \sqrt{D_{\textup{KL}}\left(p_{z} : p_{w}\right)}$ is increasing as a function of  $\chi(z,w)$, 
\[ D_{\textup{KL}}\left(p_{\widetilde \gamma (t_1)}: p_{\widetilde \gamma (t_2)}\right) \le D_{\textup{KL}}(p_{\gamma (t_1)}: p_{\gamma (t_2)}), \ \ t_1, t_2 \in [0,1]. \]
Since $\sqrt{D_{\textup{Bhat}}}$ is a distance, 
\[ \sqrt{D_{\textup{KL}}\left(p_{i} : p_{Ai}\right)} = \sqrt{D_{\textup{KL}}(p_{i} : p_{\widetilde\gamma(t)})} + \sqrt{D_{\textup{KL}}(p_{\widetilde\gamma(t)} : p_{Ai})}  \]
for every $t \in (0,1)$. 
Since $\widetilde \gamma$ is continuous, 
we see that 
\[ \sqrt{D_{\textup{KL}}\left(p_{i} : p_{Ai}\right)} = \sqrt{D_{\textup{KL}}(p_{i} : p_{Bi})} + \sqrt{D_{\textup{KL}}(p_{Bi} : p_{Ai})}, \ \  \ B \in (1,A),\]  
by the intermediate value theorem.

Let $a > 0$. 
Then, 
\[ \rho_{\textup{FR}}(i, a^2 i) = \rho_{\textup{FR}}(i, a i) + \rho_{\textup{FR}}(a i, a^2 i) = 2\rho_{\textup{FR}}(i, ai). \]
Hence, 
\[ \frac{7}{5} \sqrt{\rho_{\textup{FR}}(i, a^2 i)} < \sqrt{\rho_{\textup{FR}}(i, a i)} + \sqrt{\rho_{\textup{FR}}(a i, a^2 i)}. \]
Since 
\begin{equation}\label{eq:KL-infty}  
\lim_{\chi(z,w) \to \infty} \frac{D_{\textup{KL}}(p_z, p_w)}{\rho_{\textup{FR}}(z,w)} = \frac{1}{\sqrt{2}}, 
\end{equation}
we see that 
\[ \sqrt{D_{\textup{KL}}(p_i : p_{a^2 i})} < \sqrt{D_{\textup{KL}}(p_i : p_{a i})} + \sqrt{D_{\textup{KL}}(p_{ai} : p_{a^2 i})} \]
for sufficiently large $a > 0$.
Thus we see that  $(\mathbb H, \sqrt{D_{\textup{KL}}})$ is not a geodesic metric space. 

The proof for $\sqrt{D_{\textup{Bhat}}}$ goes in the same manner, because 
\begin{equation}\label{eq:Bhat-infty}  
\lim_{\chi(z,w) \to \infty} \frac{D_{\textup{Bhat}}(p_z:p_w)}{\rho_{\textup{FR}}(z,w)} = \frac{1}{\sqrt{2}}. 
\end{equation}

\end{proof}

\begin{proposition}\label{prop:complete}
The metric spaces $(\mathbb H, \sqrt{D_{\textup{KL}}})$ and $(\mathbb H, \sqrt{D_{\textup{Bhat}}})$ are both complete. 
\end{proposition}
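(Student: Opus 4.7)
The plan is to reduce completeness of both $(\mathbb H,\sqrt{D_\KL})$ and $(\mathbb H,\sqrt{D_\Bhat})$ to completeness of the Fisher--Rao (Poincar\'e) metric $\rho_\FR$ on $\mathbb H$, which is a classical fact about the hyperbolic plane. The essential input from earlier in the paper is that each of $\sqrt{D_\KL(p_z:p_w)}$ and $\sqrt{D_\Bhat(p_z:p_w)}$ depends only on $\chi(z,w)$ (by Theorem~\ref{thm:fdivchisquared} and the proof of Theorem~\ref{thm:sqrtBhat}), through a continuous, strictly increasing function vanishing at $0$; and in turn $\chi(z,w) = \cosh(\sqrt 2\,\rho_\FR(z,w)) - 1$. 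Therefore each of the two square-root distances is itself a continuous, strictly increasing function of $\rho_\FR(z,w)$ that vanishes at $\rho_\FR = 0$ and is locally invertible near $0$.

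I would then carry out the following three steps. First, given a Cauchy sequence $\{\theta_n\}$ for $\sqrt{D_\KL}$, I would quantitatively invert the above monotone relation to show that $\{\theta_n\}$ is also Cauchy for $\rho_\FR$: a bound $D_\KL(\theta_n,\theta_m) < \varepsilon^2$ translates, via $D_\KL = \log(1+\chi/2)$ and $\chi = \cosh(\sqrt 2\rho_\FR)-1$, into $\rho_\FR(\theta_n,\theta_m) < \tfrac{1}{\sqrt 2}\arccosh(2e^{\varepsilon^2} - 1)$, which shrinks to $0$ with $\varepsilon$. Second, I would invoke completeness of $(\mathbb H,\rho_\FR)$ to obtain a limit $\theta \in \mathbb H$ with $\rho_\FR(\theta_n,\theta) \to 0$. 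Third, by continuity of the map $s \mapsto \log(1 + (\cosh(\sqrt 2 s) - 1)/2)$ at $s=0$, I would conclude $D_\KL(\theta_n,\theta) \to 0$, and hence $\sqrt{D_\KL(p_{\theta_n}:p_\theta)} \to 0$.

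For $\sqrt{D_\Bhat}$ I would run the same three-step scheme, using now the monotone expression $D_\Bhat(p_z:p_w) = -\log J_3(F_3(\rho_\FR(z,w)))$ recorded in the proof of Theorem~\ref{thm:sqrtBhat}, together with the values $J_3(F_3(0)) = 1$ (from $\mathbf K(0) = \pi/2$) and the continuity and strict monotonicity of $J_3\circ F_3$ on $[0,\infty)$. No serious obstacle is anticipated: the only delicate point is the quantitative inversion in Step~1, but this is elementary from monotonicity of $\cosh$, $\log$, and $\mathbf K$. It is perhaps worth noting that, although Proposition~\ref{prop:nongeo} together with the limits (\ref{eq:KL-infty})--(\ref{eq:Bhat-infty}) shows that $\sqrt{D_\KL}$ and $\sqrt{D_\Bhat}$ are not quasi-isometric to $\rho_\FR$ globally (they grow like $\sqrt{\rho_\FR}$ at infinity), near $\rho_\FR = 0$ all three metrics are equivalent and induce the same topology, which is exactly what completeness requires.
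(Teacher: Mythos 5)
Your argument is correct, but it takes a different route from the paper's. The paper never invokes completeness of the hyperbolic plane: it observes that Cauchyness for $\sqrt{D_\KL}$ forces $\chi(z_n,z_m)\to 0$ (by monotonicity of $h_\KL$), then uses the explicit Euclidean description of the sublevel sets, namely $\chi(z,w)\le\delta$ iff $\bigl|w-(\Re(z)+i(1+\delta)\Im(z))\bigr|\le\sqrt{\delta(\delta+2)}\,\Im(z)$, to conclude that the sequence is confined to a Euclidean disk compactly contained in $\bbH$; it then extracts a Euclidean accumulation point by Bolzano--Weierstrass, uses continuity of $\chi$ to get $\chi(z_{k_n},z)\to 0$, and upgrades subsequential convergence to full convergence via the Cauchy property. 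You instead pull the Cauchy property back to the Fisher--Rao metric by quantitatively inverting $D_\KL=\log\bigl(1+\tfrac{1}{2}(\cosh(\sqrt{2}\rho_\FR)-1)\bigr)$ (resp.\ $D_\Bhat=-\log J_3(F_3(\rho_\FR))$ from the proof of Theorem~\ref{thm:sqrtBhat})) and cite completeness of $(\bbH,\rho_\FR)$, finishing with continuity at $\rho_\FR=0$. Both proofs are sound: yours is shorter modulo the classical completeness of the Poincar\'e metric and makes explicit that all three distances induce the same topology near the diagonal, while the paper's is more self-contained, needing only Bolzano--Weierstrass and the elementary disk characterization of $\chi$-balls. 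Two small remarks on your write-up: the quantitative inversion in Step~1 is not really needed --- monotonicity of $t_{\FR\to\Bhat}$ together with strict positivity of the divergence off the diagonal already gives ``$D$ small $\Rightarrow\rho_\FR$ small'' (take $\varepsilon=t(\delta)>0$), which is convenient for the Bhattacharyya case where the paper only records that $J_3\circ F_3$ is decreasing, not strictly so; and you should note (as the disk picture or the positivity of $\Im$ along $\rho_\FR$-convergence does) that the limit point indeed lies in $\bbH$ and not on its boundary.
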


\begin{proof}
Assume that $(z_n)_n$ is a Cauchy sequence with respect to $ \sqrt{D_{\textup{KL}}}$. 
Since $\sqrt{D_{\textup{KL}}\left(p_{z} : p_{w}\right)}$ is increasing as a function of  $\chi(z,w)$, 
we see that 
$\chi(z_n, z_m) \to 0, n,m \to \infty$. 
We see that 
$\chi(z,w) \le \delta$ if and only if 
$$\left|w - (\textup{Re}(z) + i (1+\delta)\textup{Im}(z)) \right| \le \sqrt{\delta (\delta+2)} \textup{Im}(z). $$
Hence $(z_n)_n$ is bounded. 
Let $z$ be an accumulation point of $(z_n)_n$. 
Then, $z_{k_n} \to z, \ n \to \infty$ with respect to the Euclid distance. 
Hence, 
$\chi(z_{k_n}, z) \to 0, n \to \infty$. 
Hence, 
$\sqrt{D_{\textup{KL}}\left(p_{z_{k_n}} : p_{z}\right)} \to 0, n \to \infty$. 
Since $(z_n)_n$ is a Cauchy sequence with respect to $ \sqrt{D_{\textup{KL}}}$, 
we see that 
$\sqrt{D_{\textup{KL}}\left(p_{z_{n}} : p_{z}\right)} \to 0, n \to \infty$. 
\end{proof}

Now by Hopf-Rinow's theorem (see \cite[Theorem 16]{Petersen2006}) and Propositions \ref{prop:nongeo} and \ref{prop:complete}, 
\begin{proposition}\label{prop:nonRiem}
Let $\sqrt{D_{\textup{Bhat}}}$ and $\sqrt{D_{\textup{KL}}}$ be the distances between Cauchy densities. 
Then, neither $\sqrt{D_{\textup{Bhat}}}$ or $\sqrt{D_{\textup{KL}}}$ between Cauchy densities is a Riemannian distance. 
\end{proposition}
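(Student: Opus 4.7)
The plan is essentially to package together the two preceding propositions via Hopf-Rinow's theorem, which is exactly what the lead-in sentence suggests. The key logical structure is a proof by contradiction: suppose one of the candidates $\sqrt{D_{\KL}}$ or $\sqrt{D_{\Bhat}}$ equals the Riemannian distance associated with some smooth Riemannian metric $g$ on $\mathbb{H}$. Then $(\mathbb{H}, g)$ is a Riemannian manifold whose induced metric space structure coincides with $(\mathbb{H}, \sqrt{D_{\KL}})$ or $(\mathbb{H}, \sqrt{D_{\Bhat}})$ respectively.

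By Proposition \ref{prop:complete}, this metric space is complete. Hopf-Rinow's theorem (as cited via Petersen \cite[Theorem 16]{Petersen2006}) asserts that a connected, complete Riemannian manifold is a geodesic metric space in the sense that any two points can be joined by a length-minimizing geodesic, and in particular the distance between any two points is realized as the length of such a curve. Applying Hopf-Rinow to our hypothetical Riemannian structure, we conclude that $(\mathbb{H}, \sqrt{D_{\KL}})$ (resp.\ $(\mathbb{H}, \sqrt{D_{\Bhat}})$) must be a geodesic metric space.

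This contradicts Proposition \ref{prop:nongeo}, which established that neither space admits length-minimizing midpoints joining $i$ to $Ai$ for large $A$. Therefore the initial hypothesis fails, and neither $\sqrt{D_{\KL}}$ nor $\sqrt{D_{\Bhat}}$ can be the distance function of any Riemannian metric on $\mathbb{H}$.

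I do not foresee a real obstacle here: the entire content of the theorem is a three-line logical deduction from already-established facts, and both Proposition \ref{prop:complete} (completeness) and Proposition \ref{prop:nongeo} (non-geodesicity) have been carefully proven above. The only minor technical point worth verifying is that the topology induced by $\sqrt{D_{\KL}}$ (and by $\sqrt{D_{\Bhat}}$) on $\mathbb{H}$ coincides with the standard Euclidean topology, so that a putative Riemannian structure would genuinely be one on the manifold $\mathbb{H}$. This is immediate from the fact that $\sqrt{D_{\KL}(p_z:p_w)}$ is a strictly increasing function of $\chi(z,w)$ (Theorem~2) and $\chi(z_n,z)\to 0$ iff $z_n\to z$ in the Euclidean sense for fixed $z \in \mathbb{H}$, together with the analogous monotonicity for $\sqrt{D_{\Bhat}}$ established in the proof of Theorem \ref{thm:sqrtBhat}.
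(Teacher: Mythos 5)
Your proof is correct and is essentially the paper's own argument: the paper deduces the proposition directly from Propositions \ref{prop:nongeo} and \ref{prop:complete} via Hopf--Rinow, exactly as you do. Your extra remark on the induced topology agreeing with the Euclidean one is a reasonable (implicitly assumed) detail and does not change the route.
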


\begin{remark}[alternative proof of Proposition \ref{prop:nonRiem}]
For $A \in SL(2, \mathbb R)$, 
let $\varphi_A (\theta) := A.\theta, \ \theta \in \mathbb H$. 
We first remark that every Riemannian distance $d$ on $\mathbb H$ which is preserved by every $\varphi_A$ has a form of $c \rho$ for some non-negative constant $c$. 
This is shown by two classical results in Riemannian geometry. 
We remark that every $SL(2, \mathbb{R})$ action to $\mathbb H$ is smooth and bijective. 
By the Myers-Steenrod theorem (see \cite[Theorem 18]{Petersen2006}), every $\varphi_A$ is a Riemannian isometry with respect to the Riemannian metric associated with $d$. 
It is well-known that 
if a Riemannian metric on $\mathbb H$ is a Riemannian isometry for every $\varphi_A$, 
then, it has a form of $c\rho_{\textup{FR}}$ for some constant $c$. 
This is usually stated in the much more general framework for homogeneous spaces. 
See \cite[Proposition X.3.1 and Theorem XI.8.6]{KobayashiNomizu} for example. 
By \eqref{eq:KL-infty} and \eqref{eq:Bhat-infty}, 
neither $\sqrt{D_{\textup{KL}}}$ or $\sqrt{D_{\textup{Bhat}}}$ has a form of $c\rho_{\textup{FR}}$ for some constant $c$. 
\end{remark}

We finally consider isometric embedding into a Hilbert space. 

\begin{theorem}\label{thm:embeddable}
The square root of the Kullback-Leibler divergence between Cauchy densities is isometrically embeddable  into a Hilbert space. 
\end{theorem}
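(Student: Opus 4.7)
The plan is to exhibit an explicit isometric embedding $\Phi : \bbH \to \calH$ into a Hilbert space with $\|\Phi(z) - \Phi(w)\|_{\calH}^2 = D_\KL(p_z : p_w)$ for all $z,w \in \bbH$, built from the Cauchy characteristic functions. The starting observation is the identity $4\,\Im z\,\Im w + |z-w|^2 = |z - \bar w|^2$, which rewrites Eq.~\eqref{eq:kldcauchy} as
\[
D_\KL(p_z : p_w) \;=\; \log\frac{|z - \bar w|^2}{4\,\Im z\,\Im w} \;=\; \log\frac{(y_1 + y_2)^2 + (x_1 - x_2)^2}{4\, y_1 y_2},
\]
where $z = x_1 + iy_1$ and $w = x_2 + iy_2$. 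The right-hand side has exactly the shape produced by a Frullani integral applied to differences of complex exponentials, which suggests the Hilbert space and embedding below.

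I would take $\calH := L^2\!\bigl((0,\infty),\, \mathrm{d}\xi/\xi;\, \bbC\bigr)$ and formally set $\Phi(z)(\xi) := e^{i\xi x - \xi y}$ for $z = x+iy \in \bbH$ (up to sign, the restriction of the Cauchy characteristic function to $\xi > 0$). Although $\Phi(z)$ itself is not in $\calH$ because $\mathrm{d}\xi/\xi$ is non-integrable near $0$, a first-order Taylor expansion gives $|\Phi(z)(\xi) - \Phi(w)(\xi)|^2 = O(\xi^2)$ as $\xi \to 0^+$, while the integrand decays exponentially at $+\infty$. Fixing any base point $z_0 \in \bbH$, the honest map is $\tilde\Phi(z) := \Phi(z) - \Phi(z_0)$, which sends $\bbH$ into $\calH$ and preserves all pairwise Hilbert distances.

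The isometry is then a direct computation. Setting $Y := y_1 + y_2$ and $\Delta := x_1 - x_2$ and expanding the modulus,
\[
|\Phi(z)(\xi) - \Phi(w)(\xi)|^2 \;=\; \bigl(e^{-2\xi y_1} - e^{-\xi(Y - i\Delta)}\bigr) + \bigl(e^{-2\xi y_2} - e^{-\xi(Y + i\Delta)}\bigr).
\]
Each bracketed difference is absolutely integrable against $\mathrm{d}\xi/\xi$, and the complex Frullani identity
\[
\int_0^\infty \frac{e^{-\alpha\xi} - e^{-\beta\xi}}{\xi}\,\mathrm{d}\xi \;=\; \log\frac{\beta}{\alpha}, \qquad \Re\alpha,\, \Re\beta > 0,
\]
gives contributions $\log\!\bigl((Y - i\Delta)/(2 y_1)\bigr)$ and $\log\!\bigl((Y + i\Delta)/(2 y_2)\bigr)$, whose sum telescopes to $\log\bigl((Y^2 + \Delta^2)/(4 y_1 y_2)\bigr) = D_\KL(p_z : p_w)$.

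The subtle step is the complex-parameter Frullani identity, since the individual exponentials $e^{-\alpha\xi}$ are \emph{not} integrable against $\mathrm{d}\xi/\xi$; the formula only makes sense for properly paired differences. This is settled by observing that both sides extend holomorphically in $(\alpha,\beta)$ to the product of right half-planes and agree on the positive reals (where it is classical), hence by analytic continuation they agree throughout. As a byproduct, the construction furnishes an independent proof that $\sqrt{D_\KL}$ is a distance on $\bbH$, consistent with the $\alpha = 1/2$ case of Theorem~\ref{thm:metrization-KLD}.
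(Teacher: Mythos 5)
Your construction is correct, and it is a genuinely different route from the paper's. You embed $z\mapsto \Phi(z)(\xi)=e^{i\xi z}$ (i.e.\ $e^{i\xi x-\xi y}$), shifted by a base point, into $L^2\bigl((0,\infty),\mathrm{d}\xi/\xi\bigr)$ and verify directly, via
\[
|\Phi(z)(\xi)-\Phi(w)(\xi)|^2=\bigl(e^{-2\xi y_1}-e^{-\xi(Y-i\Delta)}\bigr)+\bigl(e^{-2\xi y_2}-e^{-\xi(Y+i\Delta)}\bigr)
\]
together with the complex-parameter Frullani identity, that $\|\Phi(z)-\Phi(w)\|^2=\log\frac{|z-\bar w|^2}{4\,\Im z\,\Im w}=D_\KL(p_z:p_w)$; the absolute integrability of each bracket, the holomorphy/analytic-continuation justification of Frullani on the right half-planes, and the cancellation of the imaginary parts of the two logarithms all check out (the only cosmetic point is that your $\calH$ is a complex $L^2$ space, which one views as a real Hilbert space with the same norm). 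The paper instead passes to the hyperboloid model, reduces the claim via Schoenberg's criterion to conditional negative definiteness of $\log\cosh(d_{\mathbb L}/2)$, and proves that either by citing Faraut--Harzallah or, in Appendix F, by an elementary but longer argument built on the Poisson-kernel family $H_s(z,w)=\int P(z,x)^sP(w,x)^{1-s}\,\dmu(x)$, an intertwining formula, and positive definiteness of an auxiliary kernel. What your approach buys is brevity and explicitness: a concrete embedding (essentially the L\'evy--Khintchine/Frullani representation of the KLD formula, in the spirit of Simon's alternative proof mentioned in Appendix C), which in particular re-derives the triangle inequality for $\sqrt{D_\KL}$ without the monotonicity analysis of Theorem~\ref{thm:metrization-KLD}; what the paper's route buys is the family of positive definite kernels $H_s$ and its tie to the $\SL(2,\bbR)$-invariant, representation-theoretic framework, which also informs the discussion of the Bhattacharyya/Hellinger case in Remark~\ref{rem:FH}.
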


\begin{proof}
By \cite{Schoenberg1938}, 
it suffices to show that 
$$\sum_{i,j=1}^{n} c_i c_j D_{\textup{KL}} (p_{z_i}:p_{z_j}) \le 0$$ for every $(c_1, \cdots, c_n)$ such that $\sum_{i=1}^{n} c_i = 0$ and every $z_1, \cdots, z_n \in \Theta$. 

Let the hyperboloid model be 
\[ \mathbb L  := \{(x,y,z) \in \mathbb R^3 : z > 0, x^2 + y^2 - z^2 = -1\}. \]
Let 
\[ d_{\mathbb L} \left((x_1,y_1,z_1), (x_2,y_2,z_2)\right) := \cosh^{-1}\left( z_1 z_2 - x_1 x_2 - y_1 y_2 \right), \ \ (x_1,y_1,z_1), (x_2,y_2,z_2) \in \mathbb L.\]

Let $\phi_1 : \mathbb L \to \mathbb D$ be the map defined by 
\[ \phi_1(x,y,z) = \left(\frac{x}{1+z}, \frac{y}{1+z} \right). \]
 Let $\phi_2 : \mathbb D \to \mathbb H$ be the map defined by 
\[ \phi_2 (x,y) = \left(-\frac{2y}{(1-x)^2 + y^2}, \frac{1-x^2-y^2}{(1-x)^2 + y^2}\right). \]
Then, $\phi_1$ and $\phi_2$ are both bijective. 
Hence $\phi_2 \circ \phi_1$ is a bijection between $\mathbb H$ and $\mathbb L$. 

Hence it suffices to show that for $(x_1, y_1, z_1), \cdots, (x_n, y_n, z_n) \in \mathbb L$, 
\[ \sum_{i,j=1}^{n} c_i c_j \log\left(1+ \frac{\chi\left(\phi_2 (\phi_1(x_i,y_i,z_i)), \phi_2 (\phi_1(x_j,y_j,z_j))\right)}{2}\right) \le 0. \]

Since
\[ \chi(\phi_2 (w_1), \phi_2 (w_2)) = \frac{2|w_1 - w_2|^2}{(1-|w_1|^2)(1-|w_2|^2)}, \ \ w_1, w_2 \in \mathbb D, \]
we see that 
\[ \chi\left(\phi_2 (\phi_1(x_1,y_1,z_1)), \phi_2 (\phi_1(x_2,y_2,z_2))\right) = z_1 z_2 - x_1 x_2 - y_1 y_2 - 1 \]
\[=\cosh\left(d_{\mathbb L}\left((x_1,y_1,z_1), (x_2,y_2,z_2)\right)\right)-1, \ \ (x_1,y_1,z_1), (x_2,y_2,z_2) \in \mathbb L. \]

Hence,
it suffices to show that for $(x_1, y_1, z_1), \cdots, (x_n, y_n, z_n) \in \mathbb L$, 
\[ \sum_{i,j=1}^{n} c_i c_j \log\left(\frac{1+\cosh\left(d_{\mathbb L}\left((x_i,y_i,z_i), (x_j,y_j,z_j)\right)\right)}{2}\right) \le 0. \]
Since $2(\cosh (x/2))^2 = 1 + \cosh(x), x \in \mathbb R$, 
it suffices to show that for $(x_1, y_1, z_1), \cdots, (x_n, y_n, z_n) \in \mathbb L$, 
\[ \sum_{i,j=1}^{n} c_i c_j 2\log\left( \cosh \left(\frac{d_{\mathbb L}\left((x_i,y_i,z_i), (x_j,y_j,z_j)\right)}{2}\right) \right) \le 0. \]

Now we can apply Theorem 7.5 in Faraut-Harzallah \cite{Faraut-1974} in order to show the last inequality. 
\end{proof}

\begin{remark}\label{rem:FH}
(i)The proof of Theorem 7.5 in Faraut-Harzallah \cite{Faraut-1974} heavily depends on Takahashi's long paper \cite{Takahashi-1963} in representation theory.  
Faraut-Harzallah \cite{Faraut-1972} gave another derivation of  Theorem 7.5 in Faraut-Harzallah \cite{Faraut-1974}. 
However it heavily depends on Helgason's long paper \cite{Helgason-1970} in representation theory.  
By following the outline of \cite{Faraut-1972}, 
we give an elementary proof of Theorem \ref{thm:embeddable} without using the terminologies of representation theory. 
See Appendix \ref{sec:FH}.  \\
(ii) It is natural to consider whether the square root of the Bhattacharyya divergence $\sqrt{D_{\textup{Bhat}}}$  is isometrically embeddable into a Hilbert space. 
The squared Hellinger distance $H^2$ satisfies that  
\[ D_{\Bhat}(p_{z}:p_{w}) = -\log\left(1-H^2(p_{z} : p_{w}) \right) = -\log\left( \int_{\mathbb R} \sqrt{p_{z}(x)} \sqrt{p_{w}(x)} dx \right). \]
$\sqrt{D_{\textup{Bhat}}}$  is isometrically embeddable into a Hilbert space if and only if for every $s > 0$, 
As a function of $(z,w)$, $\left( \int_{\mathbb R} p_{z}(x) p_{w}(x) dx \right)^s$ is a positive definite kernel on $\mathbb H$. 
By the definition of the squared Hellinger distance, $\int_{\mathbb R} p_{z}(x) p_{w}(x) dx $ is positive definite. 
However, to our knowledge, it is not known whether $\left( \int_{\mathbb R} p_{z}(x) p_{w}(x) dx \right)^s$ is positive definite or not for $s \ne 1$.  
For Cauchy densities, 
we can show that 
\[ \int_{\mathbb R} p_{z}(x) p_{w}(x) dx = \frac{1}{\pi} \int_{0}^{\pi} \left(\cosh(d(z,w)) +\cos\theta \sinh (d(z,w)) \right)^{-1/2} d\theta, \ z, w \in \mathbb H, \]
where $d$ is the Poincar\'e distance. 
See Appendix \ref{sec:FH} for more details.  
\end{remark}

It is also natural to consider whether $(\mathbb H, \sqrt{D_{\textup{KL}}})$ or $(\mathbb H, \sqrt{D_{\textup{Bhat}}})$ is Gromov-hyperbolic. 

\begin{definition}
Let $(M,d)$ be a metric space. \\
(i) Let the {\it Gromov product} be 
\[ (x|y)_z := \frac{d(x,z) + d(y,z) - d(x,y)}{2}, \ \ \ x, y, z \in M.  \]
(ii) Let $\delta > 0$. 
We say that $(M,d)$ is {\it $\delta$-hyperbolic} if 
\[ (x|z)_w \ge \min\{(x|y)_w, (y|z)_w\} - \delta, \ \ x, y, z, w \in M. \]
We say that $(M,d)$ is {\it Gromov-hyperbolic} if it is $\delta$-hyperbolic for some $\delta > 0$. 
\end{definition}

It is known that $\mathbb H$ equipped with the Poincar\'e metric is Gromov-hyperbolic. (see Proposition 1.4.3 in \cite{Coornaert1990})

\begin{theorem}
Neither $(\mathbb H, \sqrt{D_{\textup{KL}}})$ or $(\mathbb H, \sqrt{D_{\textup{Bhat}}})$ is Gromov-hyperbolic. 
\end{theorem}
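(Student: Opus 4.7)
The plan is to exploit the asymptotic relations \eqref{eq:KL-infty} and \eqref{eq:Bhat-infty}, which give $\sqrt{D(p_z:p_w)} = 2^{-1/4}\sqrt{\rho_{\textup{FR}}(z,w)}\,(1+o(1))$ for $D \in \{D_{\textup{KL}}, D_{\textup{Bhat}}\}$ as $\rho_{\textup{FR}}(z,w)\to\infty$. On four collinear points with equal hyperbolic spacing, the snowflaked metric $\sqrt{\rho_{\textup{FR}}}$ fails the Gromov four-point inequality with an unbounded defect, and the asymptotic lets us transport this failure to $\sqrt{D_{\textup{KL}}}$ and $\sqrt{D_{\textup{Bhat}}}$.

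Concretely, I would fix the vertical geodesic $\{it : t > 0\}$ and, for a parameter $s>0$ that will tend to infinity, set $w = i$, $x = i e^{\sqrt{2}s}$, $y = i e^{2\sqrt{2}s}$, $z = i e^{3\sqrt{2}s}$. Using $\sqrt{2}\rho_{\textup{FR}}(it_1,it_2) = |\log(t_1/t_2)|$, the pairwise Fisher--Rao distances take the values $\rho_{\textup{FR}}(w,x)=\rho_{\textup{FR}}(x,y)=\rho_{\textup{FR}}(y,z)=s$, $\rho_{\textup{FR}}(w,y)=\rho_{\textup{FR}}(x,z)=2s$, and $\rho_{\textup{FR}}(w,z)=3s$. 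All six pairwise Fisher--Rao distances tend to infinity as $s\to\infty$, so the asymptotic applies uniformly.

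I then compute the three Gromov products at $w$. Writing $c = 2^{-1/4}$, the leading-order expressions are
\begin{align*}
(x|z)_w &= c s^{1/2}\,\alpha_1\,(1+o(1)), & \alpha_1 &= \tfrac{1+\sqrt{3}-\sqrt{2}}{2}\approx 0.659,\\
(x|y)_w &= c s^{1/2}\,\alpha_2\,(1+o(1)), & \alpha_2 &= \tfrac{\sqrt{2}}{2}\approx 0.707,\\
(y|z)_w &= c s^{1/2}\,\alpha_3\,(1+o(1)), & \alpha_3 &= \tfrac{\sqrt{2}+\sqrt{3}-1}{2}\approx 1.073.
\end{align*}
Since $\alpha_1 < \alpha_2 < \alpha_3$, one has $\min\{(x|y)_w,(y|z)_w\} - (x|z)_w = (\alpha_2-\alpha_1)\,c\,s^{1/2}(1+o(1))$, which diverges to $+\infty$ as $s\to\infty$. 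This contradicts the existence of any $\delta>0$ for which $(x|z)_w \geq \min\{(x|y)_w,(y|z)_w\}-\delta$ holds uniformly, so $(\mathbb H,\sqrt{D_{\textup{KL}}})$ is not Gromov-hyperbolic. The identical argument, with \eqref{eq:Bhat-infty} replacing \eqref{eq:KL-infty}, disposes of $(\mathbb H,\sqrt{D_{\textup{Bhat}}})$.

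The main (mild) obstacle is making the $o(1)$ error control uniform across the six pairs used in the Gromov products. Since the defect is $\Theta(s^{1/2})$ and every pairwise error is $o(s^{1/2})$ (as all pairwise Fisher--Rao distances are $\Theta(s)$ with $s\to\infty$), the $o(s^{1/2})$ slack cannot absorb the positive constant $\alpha_2-\alpha_1>0$, so this is a bookkeeping step rather than a genuine difficulty.
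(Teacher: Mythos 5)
Your proof is correct and takes essentially the same route as the paper: you use the identical configuration $i,\,ni,\,n^{2}i,\,n^{3}i$ on the imaginary axis (your $n=e^{\sqrt{2}s}$), and the unbounded defect rests on the same inequality $2\sqrt{2}>1+\sqrt{3}$, the only cosmetic differences being that you check the Gromov-product definition directly rather than the equivalent four-point criterion of \cite{Coornaert1990} invoked in the paper, and that you feed in the asymptotics \eqref{eq:KL-infty}--\eqref{eq:Bhat-infty} instead of the closed form of $D_{\textup{KL}}$ along the imaginary axis and the elliptic-integral bound of Lemma~\ref{AVV2} for $D_{\textup{Bhat}}$. The exact constant is immaterial here (for the Kullback--Leibler case the ratio $D_{\textup{KL}}/\rho_{\textup{FR}}$ in fact tends to $\sqrt{2}$ rather than $1/\sqrt{2}$, i.e.\ $\sqrt{D_{\textup{KL}}}\sim 2^{1/4}\sqrt{\rho_{\textup{FR}}}$), since any fixed positive factor multiplies all three Gromov products equally and cancels from the defect.
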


\begin{proof}
By Proposition 1.6 in \cite{Coornaert1990}, 
$(M,d)$ is not Gromov-hyperbolic if and only if 
\[ \sup_{x,y,z,w \in M} \left(d(x,y) + d(z,w) - \max\{d(x,z) + d(y,w), d(x,w) + d(y,z) \} \right) = +\infty. \]

We first consider $(\mathbb H, \sqrt{D_{\textup{KL}}})$. 
For $0 < a < b$, 
\[ \sqrt{D_{\textup{KL}}(p_{ai}:p_{bi})} = \sqrt{\log\left(\frac{b}{4a} + \frac{a}{4b} + \frac{1}{2} \right)}. \]
Hence, 
for $k \ge 1$, 
\[ \lim_{n \to \infty} \sup_{a > 0} \left|  \sqrt{D_{\textup{KL}}(p_{ai}:p_{an^k i})} - \sqrt{k\log n} \right| = \lim_{n \to \infty} \left|  \sqrt{D_{\textup{KL}}(p_{i}:p_{n^k i})} - \sqrt{k\log n} \right| = 0. \]
Hence, 
\begin{multline*} 
\lim_{n \to \infty} \biggl(\sqrt{D_{\textup{KL}}(p_{i}:p_{n^2 i})} + \sqrt{D_{\textup{KL}}(p_{ni}:p_{n^3 i})} \\
- \max\{\sqrt{D_{\textup{KL}}(p_{i}:p_{n i})} + \sqrt{D_{\textup{KL}}(p_{n^2 i}:p_{n^3 i})}, \sqrt{D_{\textup{KL}}(p_{i}:p_{n^3 i})} + \sqrt{D_{\textup{KL}}(p_{ni}:p_{n^2 i})} \} \biggr) 
\end{multline*}
\[ = \lim_{n \to \infty} \sqrt{D_{\textup{KL}}(p_{i}:p_{n^2 i})} + \sqrt{D_{\textup{KL}}(p_{ni}:p_{n^3 i})} - \sqrt{D_{\textup{KL}}(p_{i}:p_{n^3 i})} - \sqrt{D_{\textup{KL}}(p_{ni}:p_{n^2 i})} = +\infty. \]

We second consider $(\mathbb H, \sqrt{D_{\textup{Bhat}}})$. 
For $0 < a < b$, 
\[ \sqrt{D_{\textup{Bhat}}(p_{ai}: p_{bi})} = \sqrt{\frac{1}{2} \log \frac{b}{a} - \log\left( \frac{2}{\pi} \mathbf{K}\left(1 - \frac{a^2}{b^2}\right)\right)}. \]

By Lemma \ref{AVV2} in Appendix, 
\[ \log(4m) \le \mathbf{K}\left(1 - \frac{1}{m^2}\right) \le 2\log(4m), \ m \ge 2.  \]

Hence, 
for $k \ge 1$, 
\[ \lim_{n \to \infty} \sup_{a > 0} \left|  \sqrt{D_{\textup{Bhat}}(p_{ai}:p_{an^k i})} - \sqrt{\frac{k}{2}\log n} \right| = \lim_{n \to \infty} \left|  \sqrt{D_{\textup{Bhat}}(p_{i}:p_{n^k i})} - \sqrt{\frac{k}{2}\log n} \right| = 0. \]

Hence, 
\begin{multline*}
\lim_{n \to \infty} \biggl(\sqrt{D_{\textup{Bhat}}(p_{i}:p_{n^2 i})} + \sqrt{D_{\textup{Bhat}}(p_{ni}: p_{n^3 i})} -\\
 \max\{\sqrt{D_{\textup{Bhat}}(p_{i}: p_{n i})} + \sqrt{D_{\textup{Bhat}}(p_{n^2 i}:p_{n^3 i})}, \sqrt{D_{\textup{Bhat}}(p_{i}:p_{n^3 i})} + \sqrt{D_{\textup{Bhat}}(p_{ni}: p_{n^2 i})} \} \biggr)  = +\infty.
 \end{multline*}
\end{proof}

Now we see that both of the metrics $\sqrt{D_{\textup{KL}}}$ and $\sqrt{D_{\textup{Bhat}}}$ are locally related with the Poincar\'e metric, however, in global, they are completely different from the Poincar\'e metric. 

\bibliographystyle{plain}
\bibliography{fdivCauchy}

\appendix
\section{Information geometry of location-scale families}\label{sec:igls}\label{app:sympstatmdf}

The Fisher information matrix~\cite{locationscaleMfdMitchell-1988,CauchyVoronoi-2020} (FIM) of a location-scale family with continuously differentiable standard density $p(x)$ with full support $\bbR$ is
 $$
I(\lambda)=\frac{1}{s^2}\mattwotwo{a^2}{c}{c}{b^2},$$
 where 
\begin{eqnarray*}
a^2 &=& E_p\left[\left(\frac{p'(x)}{p(x)}\right)^2\right],\\
b^2 &=& E_p\left[\left(1+x\frac{p'(x)}{p(x)}\right)^2\right],\\
c &=& E_p\left[\frac{p'(x)}{p(x)} \left(1+ x \frac{p'(x)}{p(x)} \right) \right].
\end{eqnarray*}
When the standard density is even (i.e., $p(x)=p(-x)$), we get a diagonal Fisher matrix that
can reparameterize with 
$$
\theta(\lambda)=\left(\frac{a}{b}\lambda_1,\lambda_2\right)
$$ so that the Fisher matrix with respect to $\theta$
becomes 
$$
I_\theta(\theta)=\frac{b^2}{\theta_2^2}\, \mattwotwo{1}{0}{0}{1}.
$$ 
It follows that the Fisher-Rao geometry is hyperbolic with curvature
$\kappa=-\frac{1}{b^2}<0$, and that the Fisher-Rao distance is 
$$
\rho_{p}(\lambda_1,\lambda_2)= b\ \rho_U\left(\left(\frac{a}{b}l_1,s_1\right),\left(\frac{a}{b}l_2,s_2\right)\right)
$$
 where 
$$
\rho_U(\theta_1,\theta_2)=\arccosh\left(1+\chi(\theta_1,\theta_2)\right),$$ 
where $\arccosh(u)=\log(u+\sqrt{u^2-1})$ for $u>1$. 

For the Cauchy family, we have $a^2=b^2=\frac{1}{2}$ (curvature $\kappa=-\frac{1}{b^2}=-2$) and the Fisher-Rao distance is
$$
\rho_\FR(p_{\lambda_1}:p_{\lambda_2})=\frac{1}{\sqrt{2}} \, \arccosh(1+\chi(\lambda_1,\lambda_2)).
$$
Notice that if we let $\theta=l+is$ then the metric in the complex upper plane $\bbH$ is $\frac{|\dtheta|^2}{\Im(\theta)^2}$
where $|x+iy|=\sqrt{x^2+y^2}$ denotes the complex modulus, and $\theta\in\bbH:=\{x+iy \st x\in\bbR, y\in\bbR_{++}\}$.

It has been shown that Amari's dual $\pm\alpha$-connections~\cite{IG-2016} $\leftsup{\alpha}\Gamma$ all coincide with the Levi-Civita metric connection~\cite{locationscaleMfdMitchell-1988} $\Gamma=\leftsup{g}\Gamma$ for the Cauchy family since the Amari-Chentsov's totally symmetric cubic tensor $T$ vanishes (i.e., $T_{ijk}=0$). That is, the $\alpha$-geometry coincides with the Fisher-Rao geometry for the Cauchy family~\cite{CauchyVoronoi-2020}, for all $\alpha\in\bbR$.
The $2^3=8$ Christoffel functions defining the Levi-Civita metric connection~\cite{locationscaleMfdMitchell-1988} for the Cauchy family are:
\begin{eqnarray*}
\Gamma_{11}^1&=&\Gamma_{22}^1=\Gamma_{12}^2=\Gamma_{21}^2=0,\\
\Gamma_{12}^1&=&\Gamma_{21}^1=\Gamma_{22}^2=-\frac{1}{s},\\
\Gamma_{11}^2&=&\frac{1}{s}.
\end{eqnarray*}

Next, we recall the symplectic manifold construction of Goto and Umeno~\cite{goto2018maps} for the family of Cauchy distributions (see also~\cite{Noda-2011} for additional details):
The Fisher information metric tensor (FIm) is
$$
g_{l,s}=\frac{\dl^2+\ds^2}{2 s^2}.
$$

A vector field $K$ is a Killing vector field when the Lie derivative $\calL$ of the metric $g$ with respect to $K$ is zero: $\calL_K g=0$, i.e. the vector field $K$ preserves the metric (the flow induced by Killing vector field $K$ is a continuous isometry).
The three Killing vector fields on $TM$ are
\begin{eqnarray*}
K_1 &=& (l^2-s^2)\partial_l+2ls\partial_v,\\
K_2 &=& l\partial_l+s\partial_s,\\
K_3 &=& \partial_l.
\end{eqnarray*}

Consider the almost complex structure $J=\ds\otimes\partial_l-\dl\otimes\partial_s$ and the Levi-Civita connection $\nabla^\LC$ induced by
 the Fisher information metric.
Then $(M,g,J,\nabla^\LC)$ is a symplectic statistical manifold (Definition 4.14 of~\cite{goto2018maps}, see also~\cite{Noda-2011}) equipped with the symplectic form
$\omega=-\frac{1}{2s^2}\dl\wedge\ds$ with the set of canonical coordinates $(l,\frac{1}{2s})$.
We have $\calL_{K_1} \omega=\calL_{K_2} \omega=\calL_{K_3} \omega=0$.

The information geometry of the wrapped Cauchy family is investigated in~\cite{cao2014statistical}.
Goto and Umeno~\cite{goto2018maps} regards the Cauchy distribution as an invariant measure of the generalized Boole transforms and they
model the Cauchy manifold is modeled as a symplectic statistical manifold.
The Boole  transform $\frac{1}{2}\left(X-\frac{1}{X}\right)$ of a standard Cauchy random variable $X$  yields a standard Cauchy random variable. 
See Subsection \ref{subsec:Boole} below. 
See~\cite{CauchyLetac-1977} for a description of the functions preserving Cauchy distributions.

\section{Relationship between the parametric family}\label{sec:Knight}

We can interpret that the invariance of Cauchy $f$-divergence in Lemma \ref{lemma:finv} arises from  a relationship between the parametric family as in Assumption \ref{ass:inv} below rather than the definition of the Cauchy density itself, 
although it is shown that they are equivalent to each other by \cite{mccullagh1996mobius,goto2018maps}. 
This measure-theoretic viewpoint is clear and  useful. 
As an application, we can give a simple, alternative proof of \cite[Proposition 3.1 and Theorem 3.1]{goto2018maps}. 

\subsection{measure-theoretic framework}

Let $(X, \mu)$ be a measure space. 
Let $\varphi : \Theta \cup X \to \Theta \cup X$ be a map such that $\varphi(\Theta) \subset \Theta$ and $\varphi(X) \subset X$. 
Assume that $\varphi|_X$ is measurable. 
For $\theta \in \mathbb H$, let 
$P_{\theta}(dx) := p_{\theta} (x) \mu(dx)$, 
where $p_{\theta}$ is non-negative measurable function on $X$ and $P_{\theta}(dx)$ is a probability measure on $X$. 

\begin{assumption}\label{ass:inv}
$P_{\varphi(\theta)} = P_{\theta} \circ \varphi^{-1}$ for every $\theta$ and $\varphi$. 
\end{assumption}

We consider one-dimensional location-scale families. 
We assume that $X = \mathbb R$, $\Theta = \mathbb H$ and $\mu$ is the Lebesgue measure. 
Let $(U_i)_i$  be at most countable disjoint open sets of $\mathbb R$ such that $\mu(\mathbb R \setminus (\cup_i U_i)) = 0$ and $\varphi|_{U_i}$ is smooth and injective for each $i$. 

\begin{lemma}\label{lem:density}
\[ p_{\varphi(\theta)}(x) = \sum_i \frac{p_{\theta}(\varphi_{i}^{-1}(x))}{\left|\varphi^{\prime}(\varphi_{i}^{-1}(x))\right|} 1_{\varphi(U_i)}(x), \ \textup{ a.e. } x. \]
\end{lemma}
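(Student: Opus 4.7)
The plan is to verify the formula by computing the pushforward measure in two ways and invoking the uniqueness of densities. Fix a Borel set $A \subset \mathbb R$. On the one hand, by the definition of $p_{\varphi(\theta)}$,
\[ P_{\varphi(\theta)}(A) = \int_A p_{\varphi(\theta)}(x)\,\dx. \]
On the other hand, by Assumption \ref{ass:inv},
\[ P_{\varphi(\theta)}(A) = (P_\theta \circ \varphi^{-1})(A) = \int_{\varphi^{-1}(A)} p_\theta(y)\,\dy. \]

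Next I would decompose the preimage. Since the $U_i$ are disjoint open sets with $\mu\bigl(\mathbb R \setminus \bigcup_i U_i\bigr) = 0$, we have
\[ \int_{\varphi^{-1}(A)} p_\theta(y)\,\dy = \sum_i \int_{\varphi^{-1}(A) \cap U_i} p_\theta(y)\,\dy. \]
Since $\varphi|_{U_i} = \varphi_i$ is smooth and injective, $\varphi^{-1}(A) \cap U_i = \varphi_i^{-1}(A \cap \varphi(U_i))$. The injectivity of the smooth $\varphi_i$ ensures that $\varphi_i$ is strictly monotone on $U_i$ and that $\{y \in U_i : \varphi'(y) = 0\}$ is $\mu$-null (its image under $\varphi_i$ is also a $\mu$-null set by Sard's theorem, or directly since a monotone function has a derivative almost everywhere that governs the change of variable). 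Apply the one-dimensional change of variable $x = \varphi_i(y)$ on $U_i$:
\[ \int_{\varphi_i^{-1}(A \cap \varphi(U_i))} p_\theta(y)\,\dy = \int_{A \cap \varphi(U_i)} \frac{p_\theta(\varphi_i^{-1}(x))}{\bigl|\varphi'(\varphi_i^{-1}(x))\bigr|}\,\dx = \int_A \frac{p_\theta(\varphi_i^{-1}(x))}{\bigl|\varphi'(\varphi_i^{-1}(x))\bigr|}\,1_{\varphi(U_i)}(x)\,\dx. \]

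Summing over $i$ (interchanging sum and integral is justified since the integrands are non-negative, by Tonelli), we obtain
\[ \int_A p_{\varphi(\theta)}(x)\,\dx = \int_A \sum_i \frac{p_\theta(\varphi_i^{-1}(x))}{\bigl|\varphi'(\varphi_i^{-1}(x))\bigr|}\,1_{\varphi(U_i)}(x)\,\dx \]
for every Borel set $A$. The a.e.\ uniqueness of Radon-Nikodym derivatives then yields the claimed identity.

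The only subtle point is handling the points where $\varphi'$ vanishes on some $U_i$ and the measurability of each $\varphi(U_i)$. The former is a $\mu$-null issue (isolated critical points of a smooth injective real-valued function), while the latter follows because a smooth injective function on an open interval is strictly monotone, hence an open map, so each $\varphi(U_i)$ is open and Borel. Neither affects the conclusion since the statement is only claimed almost everywhere.
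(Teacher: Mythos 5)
Your proof is correct and follows essentially the same route as the paper: both rest on Assumption~\ref{ass:inv} (the pushforward identity), the decomposition of $\mathbb{R}$ into the $U_i$ up to a null set, the change of variables on each $U_i$, and the a.e.\ uniqueness of densities, with your choice to test against indicator functions of Borel sets rather than general nonnegative measurable functions being an immaterial difference. The only caveat is your parenthetical claim that injectivity of the smooth $\varphi_i$ forces $\{y \in U_i : \varphi'(y)=0\}$ to be $\mu$-null --- this is false in general (a smooth strictly increasing function can have a critical set of positive measure), but the paper's own proof makes the same implicit nondegeneracy assumption when invoking the change-of-variable formula, and it is harmless in the intended applications (M\"obius and Boole transformations), where $\varphi'$ never vanishes on the $U_i$.
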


\begin{proof}
By Assumption \ref{ass:inv} and the change of variable formula, 
it holds that for every nonnegative measurable function  $f$, 
 \[ \int_{\mathbb R} f(x) p_{\varphi (\theta)}(x)  dx =   \int_{\mathbb R} f(\varphi(x))  p_{\theta}(x) dx = \sum_i \int_{U_i} f(\varphi(x))  p_{\theta}(x) dx\]
\begin{equation}\label{eq:inv-1} 
 = \sum_i \int_{\varphi(U_i)} f(y)  \frac{p_{\theta}(\varphi_{i}^{-1}(y))}{\left|\varphi^{\prime}(\varphi_{i}^{-1}(y))\right|} dy. 
\end{equation}
Thus we have the assertion. 
\end{proof}

\begin{proposition}\label{prop:div-inv}
Assume that $f : (0,\infty) \to \mathbb R$ is smooth and $f(1) = 0$ and convex. 
Let $D_f (p_{\theta_1} : p_{\theta_2})$ be the $f$-divergence between $p_{\theta_1}$ and $p_{\theta_2}$, that is, 
\[ D_f (p_{\theta_1} : p_{\theta_2}) := \int_{\mathbb R} f\left( \frac{p_{\theta_1}(x)}{p_{\theta_2}(x)} \right) p_{\theta_1}(x) dx. \]
Assume that $\{\varphi_i (U_i)\}_i$ are disjoint. 
Then, 
\[ D_f \left(p_{\varphi(\theta_1)} : p_{\varphi(\theta_2)} \right) = D_f (p_{\theta_1} : p_{\theta_2}). \]
\end{proposition}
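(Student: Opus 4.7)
The plan is to combine Lemma \ref{lem:density} with the disjointness hypothesis on $\{\varphi_i(U_i)\}_i$ to reduce the $f$-divergence between the transformed densities to the $f$-divergence between the originals by a change-of-variable argument, one piece $\varphi(U_i)$ at a time. The disjointness is crucial: it collapses the sum in Lemma \ref{lem:density} to a single term whenever $x \in \varphi(U_j)$, so that the Jacobian weights attached to $p_{\varphi(\theta_1)}$ and $p_{\varphi(\theta_2)}$ will cancel inside the argument of $f$.

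First I would write, for $x \in \varphi(U_j)$ (up to a Lebesgue null set, since $\mu(\mathbb R\setminus \bigcup_i U_i)=0$ and hence $\mu(\mathbb R\setminus \bigcup_i \varphi(U_i))=0$ as $\varphi|_{U_i}$ is a smooth injection),
\[
p_{\varphi(\theta_k)}(x) \;=\; \frac{p_{\theta_k}(\varphi_j^{-1}(x))}{|\varphi'(\varphi_j^{-1}(x))|}, \qquad k=1,2,
\]
so that the ratio simplifies to
\[
\frac{p_{\varphi(\theta_1)}(x)}{p_{\varphi(\theta_2)}(x)} \;=\; \frac{p_{\theta_1}(\varphi_j^{-1}(x))}{p_{\theta_2}(\varphi_j^{-1}(x))}.
\]
Splitting $\int_{\mathbb R}$ into $\sum_j \int_{\varphi(U_j)}$ and inserting these expressions then gives
\[
D_f(p_{\varphi(\theta_1)}:p_{\varphi(\theta_2)}) \;=\; \sum_j \int_{\varphi(U_j)} f\!\left(\frac{p_{\theta_1}(\varphi_j^{-1}(x))}{p_{\theta_2}(\varphi_j^{-1}(x))}\right)\frac{p_{\theta_1}(\varphi_j^{-1}(x))}{|\varphi'(\varphi_j^{-1}(x))|}\, dx.
\]

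Next I would apply the change of variable $y = \varphi_j^{-1}(x)$ in each summand, which carries $\varphi(U_j)$ back to $U_j$ and converts $dx$ into $|\varphi'(y)|\, dy$; the Jacobian exactly cancels, yielding
\[
D_f(p_{\varphi(\theta_1)}:p_{\varphi(\theta_2)}) \;=\; \sum_j \int_{U_j} f\!\left(\frac{p_{\theta_1}(y)}{p_{\theta_2}(y)}\right) p_{\theta_1}(y)\, dy \;=\; \int_{\mathbb R} f\!\left(\frac{p_{\theta_1}(y)}{p_{\theta_2}(y)}\right) p_{\theta_1}(y)\, dy,
\]
where the last equality uses again that the $U_j$ cover $\mathbb R$ up to a null set and are disjoint. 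This is precisely $D_f(p_{\theta_1}:p_{\theta_2})$.

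The only subtle point, which I think is where one has to be careful rather than technically hard, is the justification that Lemma \ref{lem:density} can be read pointwise a.e.\ on each $\varphi(U_j)$ with only the $j$-th summand surviving; this is exactly what the disjointness hypothesis $\{\varphi_i(U_i)\}$ gives us, and without it the ratio $p_{\varphi(\theta_1)}/p_{\varphi(\theta_2)}$ would contain cross terms that do not telescope. Smoothness of $f$ and convexity/$f(1)=0$ play no role in the computation itself — they are assumed only to ensure $D_f$ is a well-defined $f$-divergence — so the genuine content of the proposition lies entirely in this measure-theoretic bookkeeping.
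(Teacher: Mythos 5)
Your proof is correct and takes essentially the same route as the paper: both arguments combine the a.e.\ density identity $p_{\theta}(x)=p_{\varphi(\theta)}(\varphi(x))\,|\varphi'(x)|$ (i.e.\ Lemma~\ref{lem:density} collapsed to a single term by the disjointness of the $\varphi_i(U_i)$) with the change of variables carried by $\varphi$; the only cosmetic difference is that the paper performs the substitution globally via Assumption~\ref{ass:inv} (Eq.~\eqref{eq:inv-1}) while you do it piecewise on each $\varphi(U_j)$. One small caveat: your parenthetical inference that $\mu(\mathbb R\setminus\bigcup_i\varphi(U_i))=0$ because each $\varphi|_{U_i}$ is a smooth injection is not valid in general (a smooth injection can have an image whose complement has positive measure), but it is also unnecessary, since Lemma~\ref{lem:density} shows $p_{\varphi(\theta_1)}$ vanishes a.e.\ off $\bigcup_i\varphi(U_i)$, so that set contributes nothing to the integral you split.
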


The assumption that $\{\varphi_i (U_i)\}_i$ are disjoint is crucial. 
See Remark \ref{rem:inv-fail}. 

\begin{proof}
By using \eqref{eq:inv-1} and the fact that $\varphi$ is bijective except a measure zero set, 
we see that 
\begin{equation*}
p_{\theta}(x) = p_{\varphi (\theta)}(\varphi(x)) |\varphi^{\prime}(x)| , \ \ \textup{ a.e. } x. 
\end{equation*}
Hence, 
\[ D_f \left(p_{\varphi(\theta_1)} : p_{\varphi(\theta_2)} \right) = \int_{\mathbb R}  f\left( \frac{p_{\varphi(\theta_1)}(\varphi(x))}{p_{\varphi(\theta_2)}(\varphi(x))} \right) p_{\theta_1}(x) dx = \int_{\mathbb R}  f\left( \frac{p_{\theta_1}(x)}{p_{\theta_2}(x)} \right) p_{\theta_1}(x) dx.  \]
\end{proof}

\subsection{M\"obius transformations}

For $A = \begin{pmatrix} a & b \\ c & d \end{pmatrix}  \in SL(2, \mathbb{R})$, 
let 
$\varphi_A (z) = A \cdot z := \dfrac{az+b}{cz+d}$.   
This is well-defined on $\overline{\mathbb H}$ if $c = 0$, and on $\overline{\mathbb H} \setminus \{-d/c\}$ if $c \ne 0$. 
If $c \ne 0$, then we let 
$\varphi_A (-d/c) := a/c$.
Then, $\varphi_A$ is a bijection on $\mathbb R$. 
This also holds if $c=0$. 

For $\theta \in \mathbb H$, let 
$P_{\theta}(dx) := p_{\theta} (x) dx$. 
Assume that $p_{\theta} (x) > 0$ for every $x \in \mathbb R$. 
This is a probability measure on $\mathbb R$. 

\begin{lemma}
$P_{\varphi_A (\theta)} = P_{\theta} \circ \varphi_{A}^{-1}$ for every $A \in SL(2, \mathbb R)$ and $\theta \in \mathbb H$. 
\end{lemma}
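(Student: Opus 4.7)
The plan is to verify the push-forward identity by a direct change of variables in integrals against bounded measurable test functions. It suffices to show that for every bounded Borel $f\colon\mathbb R\to\mathbb R$,
\[
\int_{\mathbb R} f(y)\, p_{\varphi_A(\theta)}(y)\, dy \;=\; \int_{\mathbb R} f(\varphi_A(x))\, p_{\theta}(x)\, dx,
\]
since the right-hand side is $\int f\,d(P_\theta\circ\varphi_A^{-1})$ by the definition of push-forward.

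The two workhorse identities that I would record first are the standard consequences of $A\in\SL(2,\bbR)$: for all $z,w\in\bbC$ with $cz+d,cw+d\neq 0$,
\[
\Im(\varphi_A(z)) \;=\; \frac{\Im(z)}{|cz+d|^2}, \qquad |\varphi_A(z)-\varphi_A(w)|^2 \;=\; \frac{|z-w|^2}{|cz+d|^2\,|cw+d|^2}.
\]
Both follow from $\varphi_A(z)-\varphi_A(w)=\frac{(ad-bc)(z-w)}{(cz+d)(cw+d)}$ together with $ad-bc=1$. A third useful consequence is $\varphi_A'(x)=\frac{1}{(cx+d)^2}$ on $\mathbb R\setminus\{-d/c\}$, so $|\varphi_A'(x)|=|cx+d|^{-2}$.

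Now I would perform the substitution $y=\varphi_A(x)$ on the open set $\mathbb R\setminus\{-d/c\}$ (a set of full Lebesgue measure), on which $\varphi_A$ is a $C^1$ diffeomorphism onto $\mathbb R\setminus\{a/c\}$ (or simply $\mathbb R$ if $c=0$, in which case there is nothing to worry about). Then $dy=|cx+d|^{-2}\,dx$, so that
\[
\int_{\mathbb R} f(\varphi_A(x))\,\frac{\Im(\theta)}{\pi|x-\theta|^2}\,dx
\;=\; \int_{\mathbb R} f(y)\,\frac{\Im(\theta)\,|cx+d|^2}{\pi|x-\theta|^2}\,dy,
\]
where on the right $x=\varphi_A^{-1}(y)$. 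Specializing the chord identity with $w=\theta$ yields
\[
\frac{|cx+d|^2}{|x-\theta|^2} \;=\; \frac{1}{|c\theta+d|^2\,|y-\varphi_A(\theta)|^2},
\]
and combining this with the imaginary-part identity turns the integrand into exactly
\[
f(y)\,\frac{\Im(\varphi_A(\theta))}{\pi\,|y-\varphi_A(\theta)|^2} \;=\; f(y)\,p_{\varphi_A(\theta)}(y),
\]
which is the desired equality.

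The only genuine nuisance is the single pole $x=-d/c$ (and the corresponding missing point $y=a/c$ in the image), but since both exceptional sets are Lebesgue-null and all integrands are nonnegative or bounded against a finite measure, they can be ignored. Everything else is bookkeeping with the two Möbius identities above, so I do not anticipate a conceptually hard step; the only care needed is to write the substitution on the complement of the pole and to invoke the identities in the correct direction.
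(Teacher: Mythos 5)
Your proof is correct: the two M\"obius identities $\Im(\varphi_A(z))=\Im(z)/|cz+d|^2$ and $|\varphi_A(z)-\varphi_A(w)|^2=|z-w|^2/(|cz+d|^2|cw+d|^2)$, the Jacobian $|\varphi_A'(x)|=|cx+d|^{-2}$, and the substitution $y=\varphi_A(x)$ off the Lebesgue-null pole $x=-d/c$ are exactly what is needed, and specializing the chord identity at $w=\theta$ turns the integrand into $f(y)\,p_{\varphi_A(\theta)}(y)$ as you claim. The paper itself states this lemma without proof (it is McCullagh's classical result, cited in the main text), but the identical change of variables and identities are carried out in the paper's proof of Lemma~\ref{lemma:finv} on the $\SL(2,\bbR)$-invariance of the Cauchy $f$-divergences, so your argument is essentially the same computation, merely packaged as a pushforward identity tested against bounded Borel functions.
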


By \cite{KnightCauchy-1976}, such parametric location-scale family is restricted to the univariate Cauchy distribution. 
Let $\chi$ be the maximal invariant. 
By Proposition \ref{prop:div-inv}, 
we have Theorem \ref{thm:fdivsymmetric}.

\begin{proposition}
Let 
\[ C(x; \ell, s) := \frac{s}{\pi} \frac{1}{(x-\ell)^2 + s^2}, \ \ x, \ell \in \mathbb{R}, s > 0. \]
Assume that $x \ne -d/c$ if $c \ne 0$. 
Let $x^{\prime} := \varphi_A  (x)$, 
$$\ell^{\prime} := \textup{Re}(\varphi_A  (\ell + is)) = \frac{(a\ell+b)(c\ell + d) + acs^2}{(c\ell + d)^2 + c^2 s^2}$$ 
and 
$$s^{\prime} := \textup{Im}(\varphi_A  (\ell + is)) = \frac{s}{(c\ell + d)^2 + c^2 s^2}.$$
Then, 
$\varphi_A^{-1}(x^{\prime}) = \{x\}$, and 
\[ C(x^{\prime}; \ell^{\prime}, s^{\prime}) = \frac{C(x; \ell, s)}{\left|\varphi_A^{\prime} (x) \right|}.  \]
\end{proposition}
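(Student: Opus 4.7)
The plan is to reduce everything to two well-known algebraic identities for the M\"obius transformation $\varphi_A$ with $A=\begin{pmatrix}a&b\\c&d\end{pmatrix}\in\SL(2,\bbR)$, namely the derivative formula and the difference formula, and then do a one-line combination.

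First I would establish $\varphi_A^{-1}(x')=\{x\}$ from the fact that $\varphi_A$ is a bijection of $\bbR\cup\{\infty\}$ whose inverse is $\varphi_{A^{-1}}$, which is already implicit in the conventions fixed above the statement (with the pole $-d/c$ mapped to $a/c$ when $c\ne 0$). Since the statement excludes $x=-d/c$, the preimage under $\varphi_A$ of $x'=\varphi_A(x)$ is the single point $x$.

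Next I would compute the two key identities. Using $ad-bc=1$,
\[
\varphi_A'(x)=\frac{a(cx+d)-c(ax+b)}{(cx+d)^2}=\frac{1}{(cx+d)^2},
\]
so $|\varphi_A'(x)|=(cx+d)^{-2}$. The second identity, valid for all complex $z,w$ for which both sides make sense, is
\[
\varphi_A(z)-\varphi_A(w)=\frac{(ad-bc)(z-w)}{(cz+d)(cw+d)}=\frac{z-w}{(cz+d)(cw+d)}.
\]
Applying this with $z=x\in\bbR$ and $w=\ell+is\in\bbH$, and taking squared moduli, yields
\[
|x'-\ell'-is'|^2=\frac{|x-\ell-is|^2}{|cx+d|^2\,|c(\ell+is)+d|^2}=\frac{(x-\ell)^2+s^2}{(cx+d)^2\bigl((c\ell+d)^2+c^2s^2\bigr)}.
\]
Since $x'$ is real while $\ell'+is'=\varphi_A(\ell+is)$, the left-hand side equals $(x'-\ell')^2+(s')^2$.

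Finally I would combine these with the stated formula $s'=s\bigl[(c\ell+d)^2+c^2s^2\bigr]^{-1}$. The factor $(c\ell+d)^2+c^2s^2$ cancels cleanly, giving
\[
\frac{s'}{(x'-\ell')^2+(s')^2}=\frac{s\,(cx+d)^2}{(x-\ell)^2+s^2},
\]
and dividing by $\pi$ gives $C(x';\ell',s')=(cx+d)^2\,C(x;\ell,s)=C(x;\ell,s)/|\varphi_A'(x)|$, which is the claimed identity. There is no real obstacle here: the only thing to be careful about is keeping track of the fact that $x$ is real while $\ell+is$ is genuinely in $\bbH$, so that the squared modulus on the left of the difference-formula step correctly produces $(x'-\ell')^2+(s')^2$ and not some expression with cross terms. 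Everything else is direct algebra using $ad-bc=1$.
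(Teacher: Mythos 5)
Your proof is correct, and it takes a genuinely different route from the paper's. The paper does not compute anything directly: it invokes the distributional invariance $P_{\varphi_A(\theta)}=P_\theta\circ\varphi_A^{-1}$ (the preceding lemma, i.e.\ McCullagh's result that $A.X\sim\Cauchy(A.\theta)$), feeds it into the general change-of-density lemma of its measure-theoretic framework to get $C(x';\ell',s')=C(x;\ell,s)/|\varphi_A'(x)|$ for a.e.\ $x$, and then upgrades this to every $x\neq -d/c$ by continuity of both sides. You instead verify the identity pointwise by pure algebra, using $\varphi_A'(x)=(cx+d)^{-2}$ and the difference formula $\varphi_A(z)-\varphi_A(w)=(z-w)/\bigl((cz+d)(cw+d)\bigr)$ with $z=x\in\bbR$, $w=\ell+is\in\bbH$, which immediately gives $(x'-\ell')^2+(s')^2$ in terms of $(x-\ell)^2+s^2$ and lets the factor $(c\ell+d)^2+c^2s^2$ cancel against $s'$. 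Your argument is more elementary and self-contained: it does not presuppose the invariance of the Cauchy family under M\"obius actions (indeed it essentially re-proves it at the density level), and it avoids the a.e.-plus-continuity step. What the paper's route buys is uniformity with its surrounding framework: the same Lemma on densities of pushforwards is reused verbatim for the non-injective Boole transformations in the next subsection, where the preimage has two points and the density formula acquires a sum, so the proposition appears there as a special case of one general mechanism rather than as a separate computation.
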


This assertion essentially corresponds to \cite[Proposition 3.1 and Theorem 3.1]{goto2018maps}. 

\begin{proof}
We remark that $\varphi_A$ is bijective. 
Hence $\varphi_A^{-1}(x^{\prime}) = \{x\}$. 
By Lemma \ref{lem:density}, 
\[ C(x^{\prime}; \ell^{\prime}, s^{\prime}) = \frac{C(x; \ell, s)}{\left|\varphi_A^{\prime} (x) \right|}, \ \textup{ a.e. } x.  \]
Since the functions in the left and right hand sides in the above display are both continuous on $\mathbb R \setminus \{-d/c\}$, 
we have the assertion. 
\end{proof}

\begin{remark}
(i) Since $\varphi_A  (\mathbb H) \subset \mathbb H$, $\varphi_A $ defines a flow on $\mathbb H$. \\
(ii) If $c \ne 0$, then, 
$\left\{-d/c + yi : y \in \mathbb R \right\}$ and $\mathbb R$ are invariant manifolds. 
The map restricted on $\left\{-d/c + yi : y \in \mathbb R \right\}$ is $s \mapsto 1/(c^2 s)$, and the map restricted on $ \mathbb R $ is $\ell \mapsto \varphi_A  (\ell)$. 
\end{remark}

\subsection{Boole transformations}\label{subsec:Boole}

We give an alternative simultaneous proof of \cite[Proposition 3.1 and Theorem 3.1]{goto2018maps} themselves. 

For $a > 0$, let 
$$\varphi_{a}(z) := \begin{cases} a(z - z^{-1}) \ z \in \overline{\mathbb{H}}  \setminus\{0\} \\ 0 \ \ \ \ \  \  \ \ \ \ \ z = 0\end{cases}.$$

\begin{proposition}
Assume that $x \ne 0$. 
Let $x^{\prime} := \varphi_a  (x)$, 
$$\ell^{\prime} := \textup{Re}(\varphi_a  (\ell + is)) = a \ell \frac{\ell^2 + s^2 - 1}{\ell^2 + s^2}$$ 
and 
$$s^{\prime} := \textup{Im}(\varphi_a  (\ell + is)) = a s \frac{\ell^2 + s^2 + 1}{\ell^2 + s^2}. $$
Then, 
$\varphi_{a}^{-1}(x^{\prime}) = \{x, -1/x\}$, and 
\[ C(x^{\prime}; \ell^{\prime}, s^{\prime}) = \frac{C(x; \ell, s)}{\left|\varphi_a^{\prime} (x) \right|} + \frac{C(-1/x; \ell, s)}{\left|\varphi_a^{\prime} (-1/x) \right|}.  \]
\end{proposition}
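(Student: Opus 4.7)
The plan is to verify both assertions by direct computation, facilitated by McCullagh's complex parametrization $C(x;\ell,s)=\Im(\zeta)/(\pi\,|x-\zeta|^2)$ with $\zeta=\ell+is\in\bbH$.

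For the preimage, the equation $\varphi_a(y)=x'$ rewrites as $ay^2-x'y-a=0$. The product of its two roots is $-a/a=-1$, so if $x$ is one solution then $-1/x$ must be the other; in particular, both preimages are nonzero and lie in $U_1\cup U_2:=(-\infty,0)\cup(0,\infty)$, the two maximal open sets on which $\varphi_a$ is a diffeomorphism onto $\bbR$.

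For the density identity, I would first compute $\varphi_a'(z)=a(1+z^{-2})$, which gives $|\varphi_a'(x)|=a(x^2+1)/x^2$ and $|\varphi_a'(-1/x)|=a(x^2+1)$. Next, writing $\zeta'=\varphi_a(\zeta)$, the key simplification is
\[
x'-\zeta'=a(x-\zeta)+a\!\left(\tfrac{1}{\zeta}-\tfrac{1}{x}\right)=\frac{a(x-\zeta)(x\zeta+1)}{x\zeta},
\]
giving $|x'-\zeta'|^2=a^2|x-\zeta|^2\,|x\zeta+1|^2/(x^2|\zeta|^2)$. Together with the identity $\Im(\zeta')=\Im(\zeta)(|\zeta|^2+1)/|\zeta|^2$, which already matches the stated formula for $s'$ (and yields $\ell'$ upon taking real parts), this produces a clean closed form for $C(x';\ell',s')$.

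On the right-hand side, using $|{-1/x}-\zeta|^2=|x\zeta+1|^2/x^2$, the sum reduces to
\[
\frac{s\,x^2}{\pi\,a(x^2+1)}\left(\frac{1}{|x-\zeta|^2}+\frac{1}{|x\zeta+1|^2}\right).
\]
Placing over a common denominator, the whole identity collapses onto the purely algebraic statement
\[
|x-\zeta|^2+|x\zeta+1|^2=(1+x^2)(1+|\zeta|^2),
\]
which I would verify by expanding in $\ell$ and $s$ and observing that the $2x\ell$ cross-terms cancel neatly. This polynomial identity is the substantive step; once it is in hand, the comparison with $C(x';\ell',s')$ is immediate. The main place requiring care is keeping the two branches associated with $U_1$ and $U_2$ properly separated so that they contribute as the two summands of Lemma~\ref{lem:density} rather than being double-counted.
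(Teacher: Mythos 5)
Your proof is correct, but it follows a genuinely different route from the paper's. You verify the density identity pointwise by direct algebra: the preimage claim via the quadratic $ay^2-x'y-a=0$ (product of roots $-1$), the derivative values $|\varphi_a'(x)|=a(x^2+1)/x^2$ and $|\varphi_a'(-1/x)|=a(x^2+1)$, the difference formula $x'-\zeta'=a(x-\zeta)(x\zeta+1)/(x\zeta)$, and $\Im(\zeta')=\Im(\zeta)(|\zeta|^2+1)/|\zeta|^2$, after which everything collapses onto the identity $|x-\zeta|^2+|x\zeta+1|^2=(1+x^2)(1+|\zeta|^2)$, which indeed holds since the $2x\Re(\zeta)$ cross-terms cancel. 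The paper instead first proves the pushforward invariance $\int F(y)\,C(y;\varphi_a(\theta))\,dy=\int F(\varphi_a(x))\,C(x;\theta)\,dx$ for all nonnegative Borel $F$, using the symmetry $\varphi_a(y)=\varphi_a(-1/y)$, the same difference formula, and the same algebraic identity (in the equivalent form $\frac{(x^2+1)(|\theta|^2+1)}{|x\theta+1|^2}=1+\frac{|x-\theta|^2}{|x\theta+1|^2}$); it then deduces the two-branch density formula almost everywhere from Lemma~\ref{lem:density} and upgrades to every $y\ne 0$ by continuity. Your computation is more elementary and yields the exact pointwise identity directly, with no measure-theoretic lemma or a.e.-to-everywhere step; the paper's route buys the measure-level invariance itself, which is the point of that appendix (an alternative proof of the Goto--Umeno results), whereas in your approach that invariance would be recovered afterwards by integrating over the two branches. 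One small caveat, shared with the statement as written: under the paper's convention $\varphi_a(0):=0$, the set-theoretic claim $\varphi_a^{-1}(x')=\{x,-1/x\}$ needs $x\ne\pm 1$ (otherwise $0$ is also a preimage of $x'=0$); your quadratic argument implicitly describes only the nonzero preimages, which is all the density identity uses.
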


\begin{proof}
For ease of notation we let $\theta := \ell + si$. 
We remark that 
\begin{equation}\label{eq:sym-GU} 
\varphi_a (y) = \varphi_a (-1/y), \ \ y \ne  0, 
\end{equation}
and 
\begin{equation}\label{eq:difference}
|\varphi_a (x) - \varphi_a (\theta)| = a |x - \theta| \frac{|x \theta + 1|}{|x \theta|}, \ \ x \ne 0, \theta \in \mathbb{H}. 
\end{equation}
Let $F$ be a non-negative Borel measurable function. 
By \eqref{eq:difference}  and the change of variable formula with $y = \varphi_a (x)$, 
\[ \int_{\mathbb R} F(y) C(y; \varphi_a (\theta)) dy = \int_0^{\infty} F(\varphi_a (x)) C(x;\theta) \frac{(x^2 + 1)(|\theta|^2 + 1)}{|x \theta + 1|^2}  dx \]
\[= \int_{-\infty}^0 F(\varphi_a (x)) C(x;\theta) \frac{(x^2 + 1)(|\theta|^2 + 1)}{|x \theta + 1|^2} dx. \]
Hence, 
\[ \int_{\mathbb R} F(y) C(y; \varphi_a (\theta)) dy = \int_{\mathbb R} F(\varphi_a (x)) C(x;\theta) \frac{(x^2 + 1)(|\theta|^2 + 1)}{2 |x \theta + 1|^2}  dx \]
\[ =  \int_{\mathbb R} F(\varphi_a (x)) C(x;\theta) dx  +  \int_{\mathbb R} F(\varphi_a (x)) C(x;\theta) \left( \frac{(x^2 + 1)(|\theta|^2 + 1)}{2 |x \theta + 1|^2}  -1\right) dx. \]
Since 
$$\frac{(x^2 + 1)(|\theta|^2+1)}{|x\theta + 1|^2} = 1 + \frac{|x - \theta|^2}{|x\theta + 1|^2},$$
it holds that 
\[ \int_{\mathbb R} F(\varphi_a (x)) C(x;\theta) \left( \frac{(x^2 + 1)(|\theta|^2 + 1)}{2 |x \theta + 1|^2}  -1\right) dx\]
\[  = \frac{1}{2}\int_{\mathbb R} F(\varphi_a (x)) C(x;\theta) \left( \frac{|x - \theta|^2}{|x\theta + 1|^2}  -1\right) dx\]
\[ = \frac{s}{2\pi}\int_{\mathbb R} \frac{F(\varphi_a (x))}{|x\theta + 1|^2}  dx - \frac{1}{2}\int_{\mathbb R} F(\varphi_a (x)) C(x;\theta) dx. \]
By the change of variable formula and \eqref{eq:sym-GU}, 
\[ \frac{s}{\pi}\int_{\mathbb R} \frac{F(\varphi_a (x))}{|x\theta + 1|^2}  dx = \int_{\mathbb R} F(\varphi_a (x)) C(x;\theta) dx, \]
and hence, 
\[ \int_{\mathbb R} F(\varphi_a (x)) C(x;\theta) \left( \frac{(x^2 + 1)(|\theta|^2 + 1)}{2 |x \theta + 1|^2}  -1\right) dx = 0. \]
Thus we obtain that 
\[ \int_{\mathbb R} F(y) C(y; \varphi_a (\theta)) dy  =  \int_{\mathbb R} F(\varphi_a (x)) C(x;\theta) dx. \]

Let two functions $\varphi_{a, \pm}$ be the restrictions of $\varphi_a$ to $(0, \infty)$ and $(-\infty, 0)$ respectively. 
Then, by Lemma \ref{lem:density}, 
\[ C(y; \varphi_a (\theta))  = \frac{C(\varphi_{a,+}^{-1}(y);\theta)}{\varphi_a^{\prime}(\varphi_{a,+}^{-1}(y))} + \frac{C(\varphi_{a,-}^{-1}(y);\theta)}{\varphi_a^{\prime}(\varphi_{a,-}^{-1}(y))}, \ \textup{ a.e. } y.  \]
Since the functions in the left and right hand sides in the above display are both continuous on $\mathbb R \setminus \{0\}$, 
\[ C(y; \varphi_a (\theta))  = \frac{C(\varphi_{a,+}^{-1}(y);\theta)}{\varphi_a^{\prime}(\varphi_{a,+}^{-1}(y))} + \frac{C(\varphi_{a,-}^{-1}(y);\theta)}{\varphi_a^{\prime}(\varphi_{a,-}^{-1}(y))}, \ \textup{ for every } y \ne 0.  \] 
\end{proof}

\begin{remark}\label{rem:inv-fail}
\[ D_f \left(p_{\varphi_a (\theta_1)} : p_{\varphi_a (\theta_2)} \right) \ne D_f (p_{\theta_1} : p_{\theta_2})\] 
for $a = 2, \theta_1 = i$ and $\theta_2 = 2i$. 
\end{remark}

\section{Revisiting the KLD between Cauchy densities}\label{sec:KLDCauchy}

We shall prove the following result~\cite{KLCauchy-2019} using complex analysis:
$$
D_\KL(p_{l_1,s_1}:p_{l_2,s_2}) =
\log\left( \frac{\left(s_{1}+s_{2}\right)^{2}+\left(l_{1}-l_{2}\right)^{2}}{4 s_{1} s_{2}}\right).
$$
 
\begin{proof}
$$
D_\KL(p_{l_1, s_1} : p_{l_2, s_2} ) 
=  \frac{s_1}{\pi} \int_{\bbR} \frac{\log ((z-l_2)^2  + s_2^2) }{(z - l_1)^2 + s_1^2} \dz 
$$

\begin{equation}\label{eq:expand} 
- \frac{s_1}{\pi} \int_{\bbR} \frac{\log ((z-l_1)^2  +s_1^2)}{(z - l_1)^2 + s_1^2} \dz + 
\log \frac{s_1}{s_2}. 
\end{equation}

As a function of $z$, 
$$
 \frac{\log (z-l_2 + i s_2)}{z - l_1 + i s_1} 
 $$
 is holomorphic on the upper-half plane $\{x+yi : y > 0\}$. 
By the Cauchy integral formula~\cite{needham1998visual}, we have that for sufficiently large $R$, 
$$
\frac{1}{2\pi i } \int_{C_R^{+}} \frac{\log (z-l_2 + i s_2)}{(z - l_1)^2 + s_1^2} \dz = 
\frac{\log(l_1 - l_2 + i(s_2 + s_1))}{2 s_1 i},
$$
where 
$$
C_R^{+} := \{z : |z| = R, \Im(z) > 0\} \cup \{z : \Im(z) = 0, |\Re(z)| \leq R \}.
$$

Hence, {by $R \to +\infty$,}  we get 
\begin{equation}\label{eq:upper}
\frac{s_1}{\pi} \int_{\bbR} \frac{\log (z-l_2 + i s_2)}{(z - l_1)^2 + s_1^2} \dz = 
\log(l_1 - l_2 + i(s_2 + s_1)).
\end{equation}

As a function of $z$, 
$$
 \frac{\log (z-l_2 - i s_2)}{z - l_1 - i s_1} 
 $$
 is holomorphic on the lower-half plane $\{x+yi : y < 0\}$. 
By the Cauchy integral formula again, we have that for sufficiently large $R$, 
$$
\frac{1}{2\pi i } \int_{C_R^{-}} \frac{\log (z-l_2 - i s_2)}{(z - l_1)^2 + s_1^2} \dz = 
\frac{\log(l_1 - l_2-  i(s_2 + s_1))}{-2 s_1 i},  
$$
where 
$$
C_R^{-} := \{z : |z| = R, \Im(z) < 0\} \cup \{z : \Im(z) = 0, |\Re(z)| \le R \}.
$$

Hence,  {by $R \to +\infty$,}  we get
\begin{equation}\label{eq:lower}
\frac{s_1}{\pi} \int_{\bbR} \frac{\log (z-l_2 - i s_2)}{(z - l_1)^2 + s_1^2} \dz = 
\log(l_1 - l_2 - i(s_2 + s_1)).
\end{equation}

{By Eq.~\ref{eq:upper} and Eq.~\ref{eq:lower},} we have that 
\begin{equation}\label{eq:different}  
\frac{s_1}{\pi} \int_{\bbR} \frac{\log ((z-l_2)^2  + s_2^2) }{(z - l_1)^2 + s_1^2} \dz =
 \log\left( (l_1 - l_2)^2 + (s_1 + s_2)^2 \right).
\end{equation}

In the same manner, we have that 
\begin{equation}\label{eq:same}  
\frac{s_1}{\pi} \int_{\bbR} \frac{\log ((z-l_{{1}})^2  + s_{{1}}^2) }{(z - l_1)^2 + s_1^2} \dz =
 \log(4 s_1^2). 
\end{equation}

By substituting Eq.~\ref{eq:different} and Eq.~\ref{eq:same} into Eq.~\ref{eq:expand}, we obtain the formula Eq.~\ref{eq:kldcauchy}.
\end{proof}

\begin{remark}
Thomas Simon \cite{Simon-2020} also obtained an alternative proof of \cite{KLCauchy-2019}, 
which uses the L\'evy-Khintchine formula and the potential formula for the infinitely divisible distributions, and the Frullani integral. 
\end{remark}

\section{Revisiting the chi-squared divergence between Cauchy densities}\label{sec:chisquared}

\begin{proposition}
\begin{equation}\label{eq:chidcauchy}
D_\chi^{N} (p_{l_1, s_1} : p_{l_2, s_2} ) = \frac{(l_1 - l_2)^2 + (s_1-s_2)^2}{2s_1s_2}. 
\end{equation}
\end{proposition}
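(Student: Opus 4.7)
The plan is to reduce the problem to a contour-integral computation in essentially the same spirit as the complex-analytic proof of the KLD formula in Appendix~\ref{sec:KLDCauchy}. Writing
\[ D_\chi^N(p_{l_1,s_1} : p_{l_2,s_2}) = \int_\bbR \frac{p_{l_1,s_1}(x)^2}{p_{l_2,s_2}(x)}\,\dx - 1 = \frac{s_1^2}{\pi s_2}\,J - 1, \qquad J := \int_\bbR \frac{(x-l_2)^2 + s_2^2}{\bigl((x-l_1)^2 + s_1^2\bigr)^2}\,\dx, \]
it suffices to evaluate $J$ in closed form and verify that $\frac{s_1^2}{\pi s_2}J = \frac{(l_1-l_2)^2 + s_1^2 + s_2^2}{2s_1 s_2}$, after which subtracting $1$ yields the desired identity.

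To evaluate $J$, I would consider the meromorphic function $F(z) = \dfrac{(z-l_2-is_2)(z-l_2+is_2)}{(z-l_1-is_1)^2(z-l_1+is_1)^2}$ on $\bbC$. Closing the real contour in the upper half-plane by a large semicircle $C_R^+$ and letting $R\to\infty$, the semicircular contribution vanishes since $|F(z)|=O(|z|^{-2})$, and the Cauchy residue theorem gives $J = 2\pi i\,\Res_{z=l_1+is_1} F(z)$. The only pole enclosed is the double pole at $z_0 := l_1+is_1$, so
\[ \Res_{z=z_0} F(z) = \lim_{z\to z_0}\frac{d}{dz}\!\left[\frac{(z-l_2)^2 + s_2^2}{(z - l_1 + is_1)^2}\right]. \]
Carrying out this derivative, evaluating at $z_0$ using $z_0 - l_1 + is_1 = 2is_1$ and $z_0 - l_2 = (l_1 - l_2) + is_1$, and simplifying the resulting expression gives $\Res_{z=z_0} F(z) = \frac{(l_1-l_2)^2 + s_1^2 + s_2^2}{4 i s_1^3}$, whence $J = \frac{\pi\bigl((l_1-l_2)^2 + s_1^2 + s_2^2\bigr)}{2 s_1^3}$.

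Substituting back yields $\frac{s_1^2}{\pi s_2}J - 1 = \frac{(l_1-l_2)^2 + s_1^2 + s_2^2}{2 s_1 s_2} - 1 = \frac{(l_1-l_2)^2 + (s_1-s_2)^2}{2 s_1 s_2}$, which is Eq.~\eqref{eq:chidcauchy}. As a shortcut one could instead first invoke the location-scale invariance $I_f(p_{l_1,s_1}:p_{l_2,s_2}) = I_f(p:p_{(l_2-l_1)/s_1,\,s_2/s_1})$ from Section~\ref{sec:fdivsymmetic} to reduce to the case $(l_1,s_1)=(0,1)$, where the residue computation at $z_0 = i$ is marginally cleaner; the final formula then follows by substitution. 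The only mildly delicate step is the residue evaluation at the double pole, which is a routine but error-prone calculation; organizing the differentiation as $\frac{d}{dz}[\,(z-l_2)^2+s_2^2\,]\cdot(z-l_1+is_1)^{-2} + \bigl((z-l_2)^2+s_2^2\bigr)\cdot\frac{d}{dz}(z-l_1+is_1)^{-2}$ and plugging in $z=z_0$ gives the cancellation of the imaginary cross terms that produces the clean numerator $(l_1-l_2)^2 + s_1^2 + s_2^2$.
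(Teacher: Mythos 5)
Your proof is correct and follows essentially the same route as the paper: both reduce the divergence to $\int p^2/q\,\dx - 1$ and evaluate the rational integral by closing the contour in the upper half-plane and computing the residue at the single enclosed double pole (the paper phrases this as the Cauchy integral formula for a derivative, $F'(l_2+is_2)$, which is the same calculation). The only cosmetic difference is that you integrate $p_{l_1,s_1}^2/p_{l_2,s_2}$ (matching the Neyman definition literally) while the paper integrates $p_{l_2,s_2}^2/p_{l_1,s_1}$; since the resulting formula is symmetric in the two parameters, this is immaterial.
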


\begin{proof}
We first remark that 
$$
D_\chi^{N} (p_{l_1, s_1} : p_{l_2, s_2} ) 
=  \int_{\bbR} \frac{p_{l_2, s_2}^2(x)}{p_{l_1, s_1}(x)} \dx -1.  
$$

Let 
$
F(z) := \frac{(z-l_1)^2 + s_1^2}{(z - l_2 + i s_2)^2}
$. 
Then, this is holomorphic on the upper-half plane $\bbH$, and, 
$$ 
\frac{p_{l_2, s_2}(x)^2}{p_{l_1, s_1}(x)}  = \frac{s_2^2}{\pi s_1} \frac{F(x)}{(x - l_2 - i s_2)^2}. 
$$

By the Cauchy integral formula~\cite{needham1998visual}, we have that for sufficiently large $R$, 
$$
\frac{1}{2\pi i } \int_{C_R^{+}} \frac{F(z)}{(z - l_2 - i s_2)^2} \dz = F^{\prime}(l_2 + is_2),
$$
where 
$
C_R^{+} := \{z : |z| = R, \Im(z) > 0\} \cup \{z : \Im(z) = 0, |\Re(z)| \leq R \}
$.

Since 
$$
F^{\prime}(z) = 2\frac{(z-s_1)(z-l_2 + is_2) - (z-l_1)^2 -s_1^2}{(z - l_2 + i s_2)^3}, 
$$
 we have that 
$$
\int_{C_R^{+}} \frac{F(z)}{(z - l_2 - i s_2)^2} \dz = \frac{\pi}{2} \frac{(l_1 - l_2)^2 + s_1^2 + s_2^2}{s_2^3}. 
$$

Now, by $R \to \infty$, we obtain the formula Eq.~\ref{eq:chidcauchy}.
\end{proof}

\section{Total variation between densities of a location family}\label{sec:tvlf}
Consider a location family with {\em even} standard density $p(-x)=p(x)$.
Then $p(x-l_1)=p(x-l_2)=p(l_2-x)$ when $x=\frac{l_1+l_2}{2}$.
Let $\Phi(a)=\int_{-\infty}^a p(x)\dx$ denote the standard cumulative density function, 
$\Phi_{l,s}(a)=\int_{-\infty}^a p(\frac{x-l}{s})\dx=\Phi(\frac{a-l}{s})$ 
with  $\Phi_{l,s}(-\infty)=0$ and $\Phi_{l,s}(+\infty)=1$.
We have $\int_{a}^b p(x)\dx=\Phi(b)-\Phi(a)$ and $\int_a^{+\infty} p_{l,s}(x)\dx=1-\Phi(\frac{a-l}{s})$.

Then the total variation distance between $p_{l_1}$ and $p_{l_2}$ is
\begin{eqnarray*}
D_\TV(p_{l_1}:p_{l_2}) &=& \frac{1}{2}\left( \int_{-\infty}^{\frac{l_1+l_2}{2}} |p_{l_1}(x)-p_{l_2}(x)|\dx
+\int_{\frac{l_1+l_2}{2}}^{+\infty} |p_{l_2}(x)-p_{l_1}(x)|\dx\right)\\
&=& 2\Phi\left(\frac{|l_1-l_2|}{2s}\right) -1 \leq 1
\end{eqnarray*}

\begin{proposition}
The total variation between two densities $p_{l_1}$ and $p_{l_2}$ of a location family with even standard density is 
$2\Phi\left(\frac{|l_1-l_2|}{2s}\right) -1$.
\end{proposition}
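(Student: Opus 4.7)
The plan is to reduce the total variation integral to a signed integral via the reflection symmetry about the midpoint $x^\star := (l_1+l_2)/2$, which is the unique crossing point of $p_{l_1,s}$ and $p_{l_2,s}$ (by evenness of $p$). Without loss of generality assume $l_1 < l_2$, and set $\delta := \frac{|l_1-l_2|}{2s}$. The key observation is that $(-\infty, x^\star]$ should serve as a Scheff\'e set, i.e., the half-line on which $p_{l_1,s} \ge p_{l_2,s}$, so that the TV integral collapses to a single signed integral on that half-line.

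Granting the Scheff\'e property, the second step is a direct CDF computation. Using the location-scale relation $\Phi_{l,s}(a) = \Phi\bigl(\tfrac{a-l}{s}\bigr)$ together with the evenness identity $\Phi(-t) = 1-\Phi(t)$, one gets
\[ \int_{-\infty}^{x^\star} p_{l_1,s}(x)\,\dx = \Phi(\delta), \qquad \int_{-\infty}^{x^\star} p_{l_2,s}(x)\,\dx = \Phi(-\delta) = 1-\Phi(\delta). \]
Subtracting yields $\int_{-\infty}^{x^\star}(p_{l_1,s}-p_{l_2,s})\,\dx = 2\Phi(\delta)-1$, and since $\int(p_{l_1,s}-p_{l_2,s})\,\dx = 0$, the companion signed integral on $[x^\star,\infty)$ is exactly its negative. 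Combining the two halves with sign-consistency,
\[ 2 D_\TV(p_{l_1,s}:p_{l_2,s}) = \int|p_{l_1,s}(x)-p_{l_2,s}(x)|\,\dx = 2\bigl(2\Phi(\delta)-1\bigr), \]
and dividing by $2$ produces the stated formula.

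The main obstacle is justifying the Scheff\'e property in the first step, since it is not automatic from mere evenness. The clean way to settle it is to change variables to $y = (x-x^\star)/s$, turning the difference $p_{l_1,s}(x)-p_{l_2,s}(x)$ into a positive multiple of $p(y+\delta)-p(y-\delta)$. If $p$ is additionally unimodal about $0$ (which holds for the Cauchy density and for all even standard densities of interest in the paper), then this expression is non-negative precisely when $y \le 0$, supplying the required sign control. Without that extra hypothesis one would instead appeal to the variational characterization $D_\TV = \sup_A |P_{l_1}(A)-P_{l_2}(A)|$ and identify a maximizing set by reflection symmetry about $x^\star$; the clean closed-form in terms of $\Phi(\delta)$ would then need to be re-examined on a case-by-case basis.
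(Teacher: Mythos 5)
Your proof is correct and follows essentially the same route as the paper: split the total variation integral at the midpoint $\frac{l_1+l_2}{2}$ and evaluate the two signed halves via the cumulative distribution function $\Phi$, using evenness and the location-scale relation. Your discussion of the Scheff\'e/sign condition is in fact a worthwhile addition: the paper's computation silently assumes $p_{l_1,s}\ge p_{l_2,s}$ on the left half-line, which does not follow from evenness alone (it can fail for an even bimodal standard density) but does hold under unimodality, and your change of variables to $p(y+\delta)-p(y-\delta)$ supplies exactly the justification needed for the Cauchy case treated in the paper.
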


For the Cauchy distribution, since we have
$$
\Phi_{l,s}(x)=\frac{1}{\pi}\arctan\left(\frac{x-l}{s}\right)+\frac{1}{2},
$$
we recover $D_\TV(p_{l_1}:p_{l_2}) =\frac{2}{\pi} \arctan\left(\frac{|l_2-l_1|}{2s}\right)$.

The total variation formula extends to any fixed scale location families.

\section{Complete elliptic integrals}\label{sec:cei}

This section is devoted to the details of the proof of \eqref{eq:sqrt-Bhat} in the proof of Theorem \ref{thm:sqrtBhat}. 

\begin{proof}
Let
\[ F_4(u) := \frac{-\log\left(2 e^{-u/4} \mathbf{K}(1-e^{-u}) /\pi \right)}{u^2}. \]

We consider the derivative. 
\[ F_4^{\prime}(u) = \frac{-1}{u^2} \left( \frac{1}{4} + e^{-u} \frac{\mathbf{K}^{\prime}(1-e^{-u})}{\mathbf{K}(1-e^{-u})} - \frac{2}{u} \log\left(2\mathbf{K}(1-e^{-u}) /\pi \right) \right). \]

Now it suffices to show that for every $u > 0$, 
\[ \frac{1}{4} + e^{-u} \frac{\mathbf{K}^{\prime}(1-e^{-u})}{\mathbf{K}(1-e^{-u})} - \frac{2}{u} \log\left(2 \mathbf{K}(1-e^{-u}) /\pi \right) > 0. \] 

Let $x := 1 - e^{-u}$. 
Then, it suffices to show that for every $x \in (0,1)$, 
\[  \frac{1}{4} + (1-x) \frac{\mathbf{K}^{\prime}(x)}{\mathbf{K}(x)} + \frac{2}{\log(1-x)} \log\left(2 \mathbf{K}(x) /\pi \right) > 0. \]
Let 
\[ G_4(x) :=  \log\left(2 \mathbf{K}(x) /\pi \right) + (\log(1-x))\left( \frac{1}{8} + \frac{1-x}{2} \frac{\mathbf{K}^{\prime}(x)}{\mathbf{K}(x)}  \right). \]
It suffices to show that $G_4(x) < 0$ for every $x \in (0,1)$. 

We see that $G_4(0) = 0$. 
Hence it suffices to show that $G_4^{\prime}(x) < 0$ for every $x \in (0,1)$. 
By Lemma \ref{K-deri} below, 
\[ G_4(x) =   \log\left(2 \mathbf{K}(x) /\pi \right) + (\log(1-x))\left( \frac{3}{8} + \frac{1}{4x} \left( .\frac{\mathbf{E}(x)}{\mathbf{K}(x)} - 1\right) \right). \]

By Lemmas \ref{K-deri} and \ref{EK-deri} below, 
\[ G_4^{\prime}(x) = -\frac{H_4(x)}{8x^2 (1-x)}, \]
where we let 
\[ H_4(x) := (x(2-x) + (x-1)\log(1-x)) \mathbf{K}(x)^2 -2x \mathbf{K}(x)\mathbf{E}(x) + \log(1-x) \mathbf{E}(x)^2. \]
Then it suffices to show that $H_4(x) > 0$ for every $x \in (0,1)$. 
Since $-2x < 0$ and $ \log(1-x) < 0$, 
by noting Lemma \ref{Gauss-AGM} below, 
it holds that 
\[ \frac{H_4(x)}{\mathbf{K}(x)^2} \ge  (x(2-x) + (x-1)\log(1-x)) -2x I_4(x) + \log(1-x) I_4(x)^2, \]
where we let 
\[ I_4(x) := \frac{1}{2} - \frac{x}{4} + \frac{\sqrt{1-x}}{2}. \]

Our main idea is to use different estimates for $H(x)/\mathbf{K}(x)^2$ on a neighborhood of $1$ and on the compliment of it. 

\begin{lemma}
For $x \le 0.998$, 
\[ (x(2-x) + (x-1)\log(1-x)) -2x I(x) + \log(1-x) I(x)^2 > 0.\]
\end{lemma}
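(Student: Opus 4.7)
The plan is to make the substitution $t = \sqrt{1-x}$, noting that $I_4(x) = (2 - x + 2\sqrt{1-x})/4 = (1+t)^2/4$ and $\log(1-x) = 2\log t$. Using the algebraic identity $(1+t)^4 - 16t^2 = (1-t)^2(1+6t+t^2)$, a direct computation reduces the left-hand side of the stated inequality to $\frac{(1-t)^2}{8}\,\psi(t)$, where
\[ \psi(t) := 4(1-t^2) + (1+6t+t^2)\log t. \]
Since $(1-t)^2/8 > 0$ on $(0,1)$, the task becomes: show $\psi(t) > 0$ for $t \in [\sqrt{0.002}, 1)$.

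Next, I would analyze the shape of $\psi$ on $(0,1)$. A direct calculation gives $\psi(1) = \psi'(1) = \psi''(1) = 0$, $\psi'''(1) = -2$, and $\psi'''(t) = 2(t^2 - 3t + 1)/t^3$, whose only zero in $(0,1)$ is $(3-\sqrt{5})/2$; hence $\psi''$ is unimodal on $(0,1)$. The Taylor expansion $\psi(t) = -(t-1)^3/3 + O((t-1)^4)$ yields $\psi''(t) = -2(t-1) + O((t-1)^2) > 0$ just below $t = 1$, and combined with $\psi''(0^+) = -\infty$ this forces $\psi''$ to have a unique zero on $(0,1)$, so $\psi'$ is first decreasing then increasing there. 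From the same Taylor expansion, $\psi'(t) = -(t-1)^2 + O((t-1)^3) < 0$ just below $1$, while $\psi'(0^+) = +\infty$ and $\psi'(1) = 0$; this forces $\psi'$ to have a unique zero $t_* \in (0,1)$, so $\psi$ is strictly increasing on $(0, t_*)$ and strictly decreasing on $(t_*, 1)$. Since $\psi$ decreases to $\psi(1) = 0$ on $(t_*, 1)$, automatically $\psi > 0$ on $(t_*, 1)$ and $\psi(t_*) > 0$; combined with $\psi(0^+) = -\infty$, $\psi$ has exactly one root $\tau \in (0, t_*)$ in $(0,1)$, and $\psi > 0$ throughout $(\tau, 1)$.

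The lemma therefore reduces to $\tau < \sqrt{0.002}$, equivalently $\psi(\sqrt{0.002}) > 0$. I would verify this using the bound $\log\sqrt{500} < 3.12$ (which follows from $e^{3.12} > 22.6 > \sqrt{500}$):
\[ \psi(\sqrt{0.002}) = 4 \cdot 0.998 - (1 + 6\sqrt{0.002} + 0.002)\log\sqrt{500} > 3.992 - 1.271 \cdot 3.12 > 0. \]
The main obstacle is the cumulative bookkeeping of three nested derivative analyses, each combining a boundary limit with sign information from the next higher derivative to localise a unique interior sign change; with this structure in place, the numerical verification at $t = \sqrt{0.002}$ is tight but elementary.
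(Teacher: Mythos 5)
Your proof is correct and takes essentially the same route as the paper: the substitution $t=\sqrt{1-x}$ and the factorization $(1+t)^4-16t^2=(1-t)^2(1+6t+t^2)$ reduce the claim to positivity of $\psi(t)=4(1-t^2)+(1+6t+t^2)\log t$ on $[\sqrt{0.002},1)$, which is exactly the paper's inequality $\log y>4(y^2-1)/(y^2+6y+1)$ since $\psi(y)=(1+6y+y^2)P_4(y)$, and both arguments finish with a monotonicity analysis plus one numerical evaluation near $\sqrt{0.002}\approx 0.045$. The only difference is bookkeeping: the paper differentiates $P_4$ once and its derivative factors as $(y-1)^2(y^2-10y+1)$ over a positive denominator, giving the increasing-then-decreasing shape immediately (turning point $5-2\sqrt{6}$) and allowing the check at the rounder point $y=0.041$, whereas your $\psi$ needs three nested derivative analyses and a tighter (but, as you verify, still valid) evaluation at $t=\sqrt{0.002}$.
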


\begin{proof}
Let $y := \sqrt{1-x}$. 
Then, 
\[ (x(2-x) + (x-1)\log(1-x)) -2x I(x) + \log(1-x) I(x)^2 > 0 \]
is equivalent with 
\[ \log y > 4\frac{y^2 -1}{y^2 + 6y + 1}. \]
Let $$P_4(y) := \log y - 4\frac{y^2 -1}{y^2 + 6y + 1}.$$
Then, $P_4(1) = 0$. 
By considering the derivative of $P$, 
it is increasing $y < 5 - 2\sqrt{6}$ and decreasing $y > 5 - 2\sqrt{6}$. 

We see that 
$
P_4 (y) > 0, \ \ y > 0.041.
$
Now the assertion follows from the fact that 
\[ 0.998 < 1 - (0.041)^{2}. \]
\end{proof}

Now it suffices to show that $H(x) > 0$ for $x > 0.998$.

\begin{lemma}\label{lem:1st}
\[ x(2-x) + (x-1)\log(1-x) \ge 1, \ \ x \in (0.998,1). \]
\end{lemma}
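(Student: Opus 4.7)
The plan is to substitute $u = 1-x$ to clear the logarithmic singularity and reduce the inequality to a clean comparison. With $u \in (0, 0.002)$, we have $x(2-x) = (1-u)(1+u) = 1 - u^2$ and $(x-1)\log(1-x) = -u\log u$, so
\[ x(2-x) + (x-1)\log(1-x) = 1 - u^2 - u\log u. \]
Hence the claim is equivalent to the one-variable inequality
\[ -u^2 - u\log u \ge 0 \quad \text{for } u \in (0, 0.002), \]
or, dividing by $u > 0$, to $\log u \le -u$, i.e.\ $u \le e^{-u}$.

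I would then verify this last inequality for $u \in (0, 0.002)$ by a trivial two-step comparison: the standard estimate $e^{-u} \ge 1-u$ (convexity of the exponential) gives $e^{-u} \ge 1 - 0.002 = 0.998$, while $u < 0.002$, so $e^{-u} \ge 0.998 > 0.002 > u$. This closes the inequality strictly on the open interval; at the endpoint $x = 1$ (i.e.\ $u = 0$) the expression extends continuously to $1$ via $\lim_{u \to 0^+} u \log u = 0$, so the non-strict bound $\ge 1$ holds throughout $(0.998, 1)$.

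There is no real obstacle here: the substitution $u = 1-x$ immediately converts the awkward $(x-1)\log(1-x)$ term into $-u\log u$, and after the obvious cancellation of the constant $1$ one is left with an elementary scalar inequality that holds with room to spare on an interval as small as $(0, 0.002)$. The only thing to be mindful of is recording that the limit $u \log u \to 0$ as $u \to 0^+$ makes the boundary value at $x = 1$ equal to $1$, so that the inequality is consistent with the endpoint.
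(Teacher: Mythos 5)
Your proof is correct, and it takes a genuinely different route from the paper. The paper argues by monotonicity: it sets $g_4(x) := x(2-x) + (x-1)\log(1-x)$, notes $g_4(1) = 1$ (as a limit, since $(x-1)\log(1-x)\to 0$), computes $g_4'(x) = 3-2x+\log(1-x)$, and observes this is negative for $x>0.9$, so $g_4$ decreases toward its limit value $1$ and is therefore $\ge 1$ on the whole interval $(0.9,1)$, in particular on $(0.998,1)$. You instead substitute $u = 1-x$, which cleanly gives $x(2-x) + (x-1)\log(1-x) = 1 - u^2 - u\log u$, so the claim reduces after dividing by $u>0$ to the elementary inequality $-\log u \ge u$ (equivalently $u \le e^{-u}$) for $u\in(0,0.002)$, which you verify with $e^{-u} \ge 1-u \ge 0.998 > 0.002 > u$. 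Your argument avoids differentiation and the limiting/monotonicity step entirely and makes the enormous slack visible; the paper's calculus argument, on the other hand, establishes the bound on the larger interval $(0.9,1)$, though only $(0.998,1)$ is needed downstream. The only cosmetic remark is that the endpoint discussion at $x=1$ is unnecessary, since the interval is open and the non-strict inequality already follows from your strict one on $(0.998,1)$.
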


\begin{proof}
Let $g_4(x) := x(2-x) + (x-1)\log(1-x)$. 
Then, $g_4(1) = 1$ and 
\[ g_4^{\prime}(x) = 3-2x + \log(1-x).\]
This is negative if $x > 0.9$. 
\end{proof}

\begin{lemma}\label{lem:2nd}
\[ 2x .\frac{\mathbf{E}(x)}{\mathbf{K}(x)} < \frac{1}{2}, \ \ x \in (0.998,1). \]
\end{lemma}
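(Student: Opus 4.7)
The plan is to reduce the assertion to a single numerical check at the left endpoint $x = 0.998$ via monotonicity, and then bound the two elliptic integrals there using tools already established in the paper.

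First I would observe that $\mathbf{E}(x)$ is strictly decreasing on $(0,1)$ (since the integrand $\sqrt{1-x\sin^2\theta}$ is pointwise decreasing in $x$) while $\mathbf{K}(x)$ is strictly increasing (the integrand $(1-x\sin^2\theta)^{-1/2}$ is pointwise increasing). Consequently, the ratio $\mathbf{E}(x)/\mathbf{K}(x)$ is decreasing on $(0,1)$, and combined with $2x < 2$ on the interval, the inequality $2x\,\mathbf{E}(x)/\mathbf{K}(x) < 1/2$ for $x \in (0.998, 1)$ reduces to the single estimate $4\,\mathbf{E}(0.998) < \mathbf{K}(0.998)$.

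For the lower bound on $\mathbf{K}(0.998)$ I would apply Lemma \ref{AVV2} with $m = \sqrt{500}$, so that $1 - 1/m^2 = 0.998$, yielding $\mathbf{K}(0.998) \ge \log(40\sqrt{5}) \approx 4.494$. For the upper bound on $\mathbf{E}(0.998)$ I would use the elementary inequality $\sqrt{1-u} \le 1 - u/2 - u^2/8$ valid for $u \in [0,1]$; this follows by squaring, since $(1-u/2-u^2/8)^2 - (1-u) = u^3/8 + u^4/64 \ge 0$ and the left-hand side $1-u/2-u^2/8$ is positive on $[0,1]$. Integrating against $u = x\sin^2\theta$ using $\int_0^{\pi/2}\sin^2\theta\,d\theta = \pi/4$ and $\int_0^{\pi/2}\sin^4\theta\,d\theta = 3\pi/16$ gives the clean estimate
\[ \mathbf{E}(x) \le \pi\left(\tfrac{1}{2} - \tfrac{x}{8} - \tfrac{3x^2}{128}\right), \]
which at $x = 0.998$ evaluates to approximately $1.106$, so that $4\mathbf{E}(0.998) \le 4.422 < 4.494 \le \mathbf{K}(0.998)$, completing the proof.

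The only delicate point is tightness of the numerical margin. Using merely $\sqrt{1-u} \le 1 - u/2$ would yield $\mathbf{E}(0.998) \le 1.179$ and fail; the quadratic correction $-u^2/8$ is essential to squeeze $4\mathbf{E}(0.998)$ below $\mathbf{K}(0.998)$. Should the remaining slack of $\approx 0.07$ prove uncomfortable, one more term of the Taylor expansion of $\sqrt{1-u}$ supplies additional margin without any structural change to the argument.
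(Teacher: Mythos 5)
Your proof is correct, and while it shares the paper's overall skeleton (monotonicity of $\mathbf{E}/\mathbf{K}$ plus $2x<2$, reducing everything to a single-point estimate of the ratio), it handles the decisive numerical step quite differently. The paper invokes Lemma~\ref{EK-deri} for the monotonicity and then simply asserts the numerical fact $\mathbf{E}(0.995)/\mathbf{K}(0.995)<1/4$ without justification — an assertion that is in fact extremely tight (the true value is about $0.2498$), so it cannot be recovered by crude bounds and implicitly rests on a numerical evaluation. You instead work at the natural endpoint $x=0.998$, where there is real slack ($\mathbf{E}/\mathbf{K}\approx 0.223$), derive the monotonicity of the ratio directly from the pointwise monotonicity of the integrands (equally valid, and independent of Lemma~\ref{EK-deri}), and then prove the endpoint inequality $4\mathbf{E}(0.998)<\mathbf{K}(0.998)$ rigorously: the lower bound $\mathbf{K}(0.998)\ge \log(40\sqrt{5})\approx 4.494$ comes from Lemma~\ref{AVV2}, and the upper bound $\mathbf{E}(x)\le \pi\bigl(\tfrac12-\tfrac{x}{8}-\tfrac{3x^2}{128}\bigr)$ follows from the verified inequality $\sqrt{1-u}\le 1-\tfrac{u}{2}-\tfrac{u^2}{8}$ integrated with $\int_0^{\pi/2}\sin^2\theta\,d\theta=\pi/4$ and $\int_0^{\pi/2}\sin^4\theta\,d\theta=3\pi/16$, giving $4\mathbf{E}(0.998)\le 4.423<4.494$; all the arithmetic checks out. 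What your route buys is a self-contained proof (modulo Lemma~\ref{AVV2}, already in the paper's appendix) that replaces an unproved and delicately close numerical claim with explicit, verifiable bounds; your closing remark that the linear truncation $\sqrt{1-u}\le 1-u/2$ would not suffice is also accurate and correctly identifies why the quadratic term is needed.
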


\begin{proof}
We see that 
\[ \frac{d}{dx} \left( x .\frac{\mathbf{E}(x)}{\mathbf{K}(x)}\right) \le  2.\frac{\mathbf{E}(x)}{\mathbf{K}(x)}  -\frac{1}{2}. \]

By Lemma \ref{EK-deri} below and the fact that 
\[ \frac{\mathbf{E}(0.995)}{\mathbf{K}(0.995)} < \frac{1}{4},\] 
we see that 
\[  2.\frac{\mathbf{E}(x)}{\mathbf{K}(x)}  \le \frac{1}{2}, \ \ x > 0.995. \]

Hence, 
\[ 2x .\frac{\mathbf{E}(x)}{\mathbf{K}(x)} < 2 \frac{\mathbf{E}(0.995)}{\mathbf{K}(0.995)} < \frac{1}{2}. \]
\end{proof}

\begin{lemma}\label{lem:3rd}
\[ -\log(1-x) \left( .\frac{\mathbf{E}(x)}{\mathbf{K}(x)}  \right)^2 < \frac{1}{2}, \ \ x \in (0.998,1). \]
\end{lemma}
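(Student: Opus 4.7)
My plan is to combine the lower bound on $\mathbf{K}(x)$ from Lemma \ref{AVV2} with a tight upper bound on $\mathbf{E}(x)$ valid on the narrow interval $(0.998,1)$, and then reduce the resulting inequality to a one-variable calculus exercise in $u := -\log(1-x)$.

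First, for $x \in (0.998,1)$ one has $m := 1/\sqrt{1-x} > 22 \ge 2$, so Lemma \ref{AVV2} gives
$$
\mathbf{K}(x) \;=\; \mathbf{K}(1-1/m^2) \;\ge\; \log(4m) \;=\; \log 4 + \tfrac{1}{2}u, \quad u > \log 500 > 6.
$$
Second, I will bound $\mathbf{E}(x)$ above by $1.01$. Starting from $\mathbf{E}(1)=1$ and the derivative formula $\mathbf{E}'(t) = (\mathbf{E}(t)-\mathbf{K}(t))/(2t)$ from Lemma \ref{EK-deri}, together with $\mathbf{E}(t) \ge 1$ for $t \le 1$ (by monotonicity, which also follows from $\mathbf{E}' \le 0$) and the AVV upper bound $\mathbf{K}(t) \le 2\log(4/\sqrt{1-t})$ valid on $(3/4,1)$, integration yields
$$
\mathbf{E}(x) - 1 \;=\; \int_x^1 \frac{\mathbf{K}(t)-\mathbf{E}(t)}{2t}\,dt \;\le\; \frac{1}{2x}\int_x^1 \bigl(2\log(4/\sqrt{1-t})-1\bigr)\,dt \;=\; \frac{(1-x)\bigl(2\log 4 - \log(1-x)\bigr)}{2x}.
$$
Evaluating the right-hand side at $x=0.998$ gives at most $(0.002)(8.98)/1.996 < 0.01$, and since $\mathbf{E}$ is decreasing, $\mathbf{E}(x) \le 1.01$ throughout $(0.998,1)$.

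Combining the two bounds,
$$
-\log(1-x)\left(\frac{\mathbf{E}(x)}{\mathbf{K}(x)}\right)^{2} \;\le\; \frac{(1.01)^{2}\,u}{\bigl(\log 4 + u/2\bigr)^{2}}.
$$
An elementary calculation shows that $\varphi(u) := u/(\log 4 + u/2)^2$ attains its global maximum on $(0,\infty)$ at $u = 2\log 4$, with value $\varphi(2\log 4) = 1/(2\log 4)$. Therefore the quantity above is bounded by $(1.01)^2/(2\log 4) = 1.0201/2.7726 < 0.37$, which is strictly less than $1/2$, finishing the proof.

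The main obstacle is obtaining a sharp enough \emph{explicit} upper bound on $\mathbf{E}(x)$ near $x=1$: the trivial bound $\mathbf{E}(x) \le \pi/2$ would make the right-hand side of the key estimate exceed $1/2$ on a nontrivial subinterval of $(0.998,1)$, because $\varphi(u)$ decays only like $4/u$. Genuinely exploiting $\mathbf{E}(1)=1$ through the integrated derivative formula (or, equivalently, through the asymptotic $\mathbf{E}(x) = 1 + O\bigl((1-x)\log(1/(1-x))\bigr)$) is what makes the argument work.
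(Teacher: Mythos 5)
Your proof is correct, but it follows a genuinely different route from the paper's. The paper bounds the ratio directly via Lemma \ref{AVV}, $\frac{\mathbf{E}(x)}{\mathbf{K}(x)} \le \frac{2\sqrt{x}}{\log(1+\sqrt{x})-\log(1-\sqrt{x})}$, and reduces the claim to the nonnegativity of $h_4(x)=\bigl(\log(1+\sqrt{x})-\log(1-\sqrt{x})\bigr)^2+8x\log(1-x)$ on $(0.998,1)$, which it establishes by showing $h_4$ is increasing there and checking $h_4(0.998)>0$. You instead bound numerator and denominator separately: $\mathbf{K}(x)\ge \log 4+\tfrac{u}{2}$ with $u=-\log(1-x)$ from the Carlson bound (Lemma \ref{AVV2}), and $\mathbf{E}(x)\le 1.01$ by integrating $\mathbf{E}'(t)=\frac{\mathbf{E}(t)-\mathbf{K}(t)}{2t}$ from $t=1$ (note this derivative formula is not literally Lemma \ref{EK-deri}, which concerns $\mathbf{E}/\mathbf{K}$, but it is standard and indeed follows from combining Lemmas \ref{K-deri} and \ref{EK-deri}), together with $\mathbf{E}\ge 1$, the upper Carlson bound $\mathbf{K}(t)\le 2\log(4/\sqrt{1-t})$ (which holds on all of $[0,1)$ since $\frac{4}{3+x}\le\frac{4}{3}<2$, so your restriction to $(3/4,1)$ is unnecessary but harmless), evaluation at $x=0.998$, and monotonicity of $\mathbf{E}$; the problem then collapses to maximizing $\varphi(u)=u/(\log 4+u/2)^2$, giving the bound $(1.01)^2/(2\log 4)<0.37<\tfrac12$, and your computations check out. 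What your approach buys is modularity and transparency: a clean one-variable maximization replaces the somewhat delicate monotonicity analysis of $h_4$, and it makes explicit that the key input is $\mathbf{E}(x)=1+O\bigl((1-x)\log\frac{1}{1-x}\bigr)$ near $x=1$ (as you rightly observe, $\mathbf{E}\le\pi/2$ would not suffice on part of the interval); what the paper's approach buys is that it only invokes the single ratio estimate of Lemma \ref{AVV} and stays entirely within the chain of lemmas it has already set up for the surrounding argument. Both proofs ultimately require one numerical check at the endpoint $x=0.998$.
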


\begin{proof}
We use Lemma \ref{AVV} below. 
It suffices to show that 
\[  \frac{2 x^{1/2}}{\log(1+x^{1/2}) - \log(1-x^{1/2})} \le \sqrt{\frac{1}{-2\log(1-x)}}, \ \ x \in (0.998,1). \]
This is equivalent with 
\[ h_4(x) := \left( \log(1+x^{1/2}) - \log(1-x^{1/2}) \right)^2 + 8x\log(1-x) \ge 0, \ \ x \in (0.998,1).  \]
We see that 
\[ h_4^{\prime}(x) = -2\frac{\log(1-\sqrt{x}) - \log(1+\sqrt{x}) + 2\sqrt{x}(x + (x-1)\log(1-x))}{(1-x)\sqrt{x}}. \]

It is easy to see that 
\[ \log(1-\sqrt{x}) - \log(1+\sqrt{x}) + 2\sqrt{x}(x + (x-1)\log(1-x)) < 0,  \ \ x \in (0.998,1). \]
Hence $h_4$ is increasing at least on $(0.998,1)$.  
Now use the fact that $h_4(0.998) > 0$. 
\end{proof}

By Lemmas \ref{lem:1st}, \ref{lem:2nd} and \ref{lem:3rd}, 
we see that $H_4(x) > 0$ for $x > 0.998$. 
The proof of Eq. \ref{eq:sqrt-Bhat} is completed. 
\end{proof}

\subsection{Some Lemmas concerning the complete elliptic integrals}

In this subsection, we collect standard results about the complete elliptic integrals. 

\begin{lemma}\label{K-deri}
\[ \mathbf{K}^{\prime}(x) = -\frac{\mathbf{K}(x)}{2x} + \frac{\mathbf{E}(x)}{2x(1-x)}. \]
\end{lemma}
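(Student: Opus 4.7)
The plan is to prove the lemma by differentiating under the integral sign and reducing the resulting integral to $\mathbf{E}(x)$ via a boundary-term trick. Differentiating the definition
\[ \mathbf{K}(x) = \int_0^{\pi/2} (1-x\sin^2\theta)^{-1/2} d\theta \]
under the integral (justified by dominated convergence on any compact subinterval of $(0,1)$) gives
\[ \mathbf{K}^{\prime}(x) = \frac{1}{2}\int_0^{\pi/2} \frac{\sin^2\theta}{(1-x\sin^2\theta)^{3/2}} d\theta. \]
First I would apply the algebraic identity $\sin^2\theta = \frac{1-(1-x\sin^2\theta)}{x}$ to split this into
\[ \mathbf{K}^{\prime}(x) = \frac{1}{2x}\,J(x) - \frac{1}{2x}\,\mathbf{K}(x), \quad \text{where } J(x) := \int_0^{\pi/2} (1-x\sin^2\theta)^{-3/2} d\theta. \]
So the whole task reduces to establishing the identity $J(x) = \mathbf{E}(x)/(1-x)$, after which the stated formula follows immediately.

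To prove this identity, I would compute the derivative of the auxiliary function $\theta \mapsto \sin\theta\cos\theta\,(1-x\sin^2\theta)^{-1/2}$. A direct computation using $\cos^2\theta-\sin^2\theta=1-2\sin^2\theta$ and $x\sin^2\theta\cos^2\theta = x\sin^2\theta(1-\sin^2\theta)$, followed by the substitution $u = 1-x\sin^2\theta$ to cancel the like terms, yields
\[ \frac{d}{d\theta}\!\left[\frac{\sin\theta\cos\theta}{\sqrt{1-x\sin^2\theta}}\right] = \frac{\sqrt{1-x\sin^2\theta}}{x} - \frac{1-x}{x\,(1-x\sin^2\theta)^{3/2}}. \]
Integrating from $0$ to $\pi/2$, the left-hand side vanishes (both endpoints give zero), which gives exactly $0 = \mathbf{E}(x)/x - (1-x)J(x)/x$, hence $J(x) = \mathbf{E}(x)/(1-x)$.

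Substituting this back into the expression for $\mathbf{K}^{\prime}(x)$ yields
\[ \mathbf{K}^{\prime}(x) = \frac{\mathbf{E}(x)}{2x(1-x)} - \frac{\mathbf{K}(x)}{2x}, \]
which is the claim. The only real work is the bookkeeping in the derivative computation; I would expect that step to be the main (though entirely routine) obstacle, and everything else is immediate from differentiation under the integral and the fundamental theorem of calculus.
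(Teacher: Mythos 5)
Your proof is correct, and it is worth noting that the paper itself offers no proof of this lemma: it appears in the subsection that merely ``collects standard results about the complete elliptic integrals,'' so your argument supplies a derivation the paper leaves implicit. You also correctly work in the paper's nonstandard convention $\mathbf{K}(x)=\int_0^{\pi/2}(1-x\sin^2\theta)^{-1/2}\,d\theta$ (parameter $x$ rather than modulus $k$ with $k^2$), which is exactly what the stated formula requires. The one step you flagged as the main bookkeeping obstacle checks out: with $s=\sin^2\theta$ the numerator of
\[
\frac{d}{d\theta}\left[\frac{\sin\theta\cos\theta}{\sqrt{1-x\sin^2\theta}}\right]
=\frac{(1-2s)(1-xs)+xs(1-s)}{(1-x\sin^2\theta)^{3/2}}
\]
simplifies to $1-2s+xs^2=\frac{(1-xs)^2-(1-x)}{x}$, which is precisely your claimed identity; integrating over $[0,\pi/2]$ kills the boundary term and gives $\int_0^{\pi/2}(1-x\sin^2\theta)^{-3/2}\,d\theta=\mathbf{E}(x)/(1-x)$, and the lemma follows. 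Differentiation under the integral is unproblematic as you say, since for $x$ in a compact subinterval of $(0,1)$ the integrand and its $x$-derivative are uniformly bounded. An equivalent route would be to pull the standard Legendre derivative formula $dK/dk=E/(k(1-k^2))-K/k$ through the substitution $k=\sqrt{x}$, but your self-contained integration-by-parts argument is just as short and avoids invoking an external formula.
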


\begin{lemma}\label{EK-deri}
\[ \frac{d}{dx}\left(.\frac{\mathbf{E}(x)}{\mathbf{K}(x)} \right) = -\frac{1}{2x} + \frac{1}{x} .\frac{\mathbf{E}(x)}{\mathbf{K}(x)} - \frac{1}{2x(1-x)} \left( .\frac{\mathbf{E}(x)}{\mathbf{K}(x)}\right)^2 \le 0. \]
In particular, 
$\mathbf{E}/ \mathbf{K}$ is strictly decreasing. 
\end{lemma}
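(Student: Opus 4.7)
The plan is to derive the formula via the quotient rule applied to $\mathbf{E}/\mathbf{K}$, and then to analyze the sign by viewing the right-hand side as a quadratic in $u := \mathbf{E}(x)/\mathbf{K}(x)$ whose discriminant turns out to be negative on $(0,1)$.

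First I would establish the companion identity to Lemma~\ref{K-deri}, namely
\[
\mathbf{E}'(x) = \frac{\mathbf{E}(x) - \mathbf{K}(x)}{2x}.
\]
This follows by differentiating $\mathbf{E}(x) = \int_0^{\pi/2} \sqrt{1 - x\sin^2\theta}\,d\theta$ under the integral sign, yielding $\mathbf{E}'(x) = -\frac{1}{2}\int_0^{\pi/2} \sin^2\theta\,(1-x\sin^2\theta)^{-1/2}\,d\theta$, and then rewriting $x\sin^2\theta = 1 - (1-x\sin^2\theta)$ so that the integral splits as $\mathbf{K}(x) - \mathbf{E}(x)$, up to the factor $1/x$.

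With this identity and Lemma~\ref{K-deri} in hand, I would apply the quotient rule. Writing $u := \mathbf{E}/\mathbf{K}$, a direct computation gives
\[
\frac{d}{dx}\!\left(\frac{\mathbf{E}(x)}{\mathbf{K}(x)}\right) = \frac{\mathbf{E}'\mathbf{K} - \mathbf{E}\mathbf{K}'}{\mathbf{K}^2} = -\frac{1}{2x} + \frac{u}{x} - \frac{u^2}{2x(1-x)},
\]
after collecting the $1/(2x)$ and $1/(2x(1-x))$ contributions coming from the two derivative identities. This is precisely the stated formula.

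For the inequality, I multiply the right-hand side by $2x(1-x) > 0$, reducing the claim $\tfrac{d}{dx}(\mathbf{E}/\mathbf{K}) \le 0$ to the elementary assertion
\[
Q(u) \;:=\; u^2 - 2(1-x)\,u + (1-x) \;\ge\; 0.
\]
The discriminant of $Q$ equals $4(1-x)^2 - 4(1-x) = -4x(1-x)$, which is strictly negative on $(0,1)$. Since $Q$ has a positive leading coefficient, $Q(u) > 0$ for every real $u$, which in fact gives strict negativity of $(\mathbf{E}/\mathbf{K})'$ on $(0,1)$ and hence the asserted strict monotonicity. The only mildly nonobvious step is the quadratic-in-$u$ reformulation, but it is essentially forced by the structure of the derivative (terms in $1$, $u$, and $u^2$ with alternating signs), so I do not anticipate a substantive obstacle.
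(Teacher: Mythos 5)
Your proposal is correct. Note that the paper does not actually prove this lemma: it is listed in the appendix among "standard results about the complete elliptic integrals" and invoked without argument, so your derivation supplies a proof where the paper offers none. Your two ingredients are exactly the right ones for the paper's convention ($\mathbf{K}(x)=\int_0^{\pi/2}(1-x\sin^2\theta)^{-1/2}\,d\theta$, argument $x=k^2$): the identity $\mathbf{E}'(x)=\frac{\mathbf{E}(x)-\mathbf{K}(x)}{2x}$ obtained by differentiating under the integral, combined with Lemma~\ref{K-deri}, and the quotient rule indeed gives $u'=-\frac{1}{2x}+\frac{u}{x}-\frac{u^2}{2x(1-x)}$ with $u=\mathbf{E}/\mathbf{K}$. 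Multiplying by $2x(1-x)>0$ reduces the sign claim to $Q(u)=u^2-2(1-x)u+(1-x)\ge 0$, and your discriminant computation $4(1-x)^2-4(1-x)=-4x(1-x)<0$ is correct for $x\in(0,1)$, yielding strict negativity of $u'$ and hence strict monotonicity; a marginally slicker way to phrase the same step is to complete the square, $Q(u)=\bigl(u-(1-x)\bigr)^2+x(1-x)>0$, which makes the strict positivity visible without mentioning discriminants and does not even require any a priori bound on $u$. The only cosmetic remark is that the formula is to be read on the open interval $(0,1)$ (at $x=0$ the individual terms are singular although the limit of $u'$ exists), which is all that is needed for the "strictly decreasing" conclusion used later in the paper.
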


\begin{lemma}\label{Gauss-AGM}
\[ .\frac{\mathbf{E}(x)}{\mathbf{K}(x)} \le \frac{1}{2} - \frac{x}{4} + \frac{\sqrt{1-x}}{2}, \ x \in [0,1).\]
\end{lemma}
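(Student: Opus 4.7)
I plan to prove this via Landen's transformation (the engine behind the Gauss AGM expansion for $1-\mathbf{E}/\mathbf{K}$ invoked earlier in the paper). Setting $y := \sqrt{1-x} \in (0,1]$, the right-hand side equals $(1+y)^2/4$, so the claim is equivalent to
\[ D(x) \;:=\; (1+y)^2\,\mathbf{K}(x) \;-\; 4\,\mathbf{E}(x) \;\ge\; 0 \quad \text{on } [0,1), \]
and we have the trivial boundary value $D(0) = 4(\pi/2) - 4(\pi/2) = 0$.

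The descending Landen transformation produces $y_1 := 2\sqrt{y}/(1+y)$ and $x_1 := \bigl((1-y)/(1+y)\bigr)^2 = 1-y_1^2$, together with the classical identities
\[ \mathbf{K}(x) = \frac{2}{1+y}\,\mathbf{K}(x_1), \qquad \mathbf{E}(x) = (1+y)\,\mathbf{E}(x_1) - y\,\mathbf{K}(x). \]
Substituting these into the definition of $D(x)$ and using $(1+y_1)^2 = (1+\sqrt{y})^4/(1+y)^2$, a short algebraic manipulation yields the pivotal \emph{recursive identity}
\[ D(x) \;=\; \frac{(1-\sqrt{y})^4}{1+y}\,\mathbf{K}(x_1) \;+\; (1+y)\,D(x_1). \]
The key algebraic step is the elementary polynomial identity $2(1+6y+y^2) - (1+\sqrt{y})^4 = (1-\sqrt{y})^4$, which is verified by expanding both sides in powers of $\sqrt{y}$.

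I will then iterate. The Landen sequence $x_0 := x, \, x_1, \, x_2, \ldots$ satisfies $x_n \to 0$ doubly exponentially, since $1 - y_{n+1} = (1-\sqrt{y_n})^2/(1+y_n)$. A Taylor expansion of $\mathbf{E}(x)/\mathbf{K}(x)$ against $(1+\sqrt{1-x})^2/4$ near $x=0$ (using the standard series $\mathbf{K}(x) = (\pi/2)\sum_n \binom{2n}{n}^2 x^n / 16^n$ and the analogous one for $\mathbf{E}$) shows they agree through cubic order while $D(x)$ has a strictly positive leading $x^4$ coefficient; hence $D(x_n) \ge 0$ for all sufficiently large $n$. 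The recursive identity then propagates non-negativity back through the iteration: both summands on its right side are non-negative once $D(x_{n+1}) \ge 0$ and $y_n \in (0,1]$, so $D(x_n) \ge 0$ in turn for $n-1, n-2, \ldots, 0$, giving $D(x) \ge 0$ as desired. The main obstacle is the algebraic derivation of the recursive identity; once that is in hand, the rest is routine induction together with a short asymptotic check near $x=0$.
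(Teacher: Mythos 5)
Your proposal is correct, and it takes a genuinely different route from the paper: the paper gives no proof of Lemma \ref{Gauss-AGM} at all, listing it among ``standard results'' on complete elliptic integrals, the implicit justification being Gauss's 1818 expansion $1-\frac{\mathbf{E}(x)}{\mathbf{K}(x)}=\frac{x}{2}+\sum_{n\ge1}2^{n-1}(a_n-b_n)^2$ quoted earlier (hence the label). Note that the lemma does not follow from that expansion by keeping only the first correction term: with $y=\sqrt{1-x}$ the claim is equivalent to $\sum_{n\ge1}2^{n-1}(a_n-b_n)^2\ge\frac{(1-y)^2}{4}$, whereas $(a_1-b_1)^2=\frac{(1-\sqrt{y})^4}{4}$ alone is too small, so a self-contained argument has real content. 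Your route is sound: the descending Landen identities $\mathbf{K}(x)=\frac{2}{1+y}\mathbf{K}(x_1)$ and $\mathbf{E}(x)=(1+y)\mathbf{E}(x_1)-y\mathbf{K}(x)$ (with $y_1=\frac{2\sqrt{y}}{1+y}$, $x_1=1-y_1^2$) do yield, via the identity $2(1+6y+y^2)-(1+\sqrt{y})^4=(1-\sqrt{y})^4$, the exact recursion $D(x)=\frac{(1-\sqrt{y})^4}{1+y}\mathbf{K}(x_1)+(1+y)D(x_1)$, which I verified; and the local step also checks out, since $\frac{(1+y)^2}{4}-\frac{\mathbf{E}(x)}{\mathbf{K}(x)}=\bigl(\frac{41}{2048}-\frac{40}{2048}\bigr)x^4+O(x^5)=\frac{x^4}{2048}+O(x^5)$, consistent with the numerical expansions the paper itself records in the remark preceding Table \ref{tab:summary}, so $D>0$ on some interval $(0,\delta)$, $x_n\to0$ gives $D(x_N)\ge0$ for large $N$, and the downward induction closes (with $D(0)=0$ covering $x=0$). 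What your approach buys is a short, elementary, self-contained proof that avoids summing the Gauss series or appealing to the literature; what the paper's citation-style treatment buys is brevity. In a final write-up you should supply (or cite) the two Landen identities and display the fourth-order coefficient computation; it is also worth noting that iterating your recursion forward re-derives a Gauss-type series, so the two viewpoints are closely related.
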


The following is due to Anderson,  Vamanamurthy, and Vuorinen \cite{AVV}. 

\begin{lemma}[{\cite[Theorem 3.6]{AVV}}]\label{AVV}
\[  .\frac{\mathbf{E}(x)}{\mathbf{K}(x)} \le \frac{2 x^{1/2}}{\log(1+x^{1/2}) - \log(1-x^{1/2})}, \ x \in [0,1). \]
\end{lemma}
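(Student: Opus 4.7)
The plan is to substitute $u = \sqrt{x}$ and move everything to one side: set
\[ F(u) := 2u\,\mathbf{K}(u^2) - \mathbf{E}(u^2)\left( \log(1+u) - \log(1-u) \right), \qquad u\in[0,1). \]
Since $\mathbf{K}(x), \mathbf{E}(x) > 0$ and $\log((1+u)/(1-u)) > 0$ on $(0,1)$, the stated inequality is equivalent to $F(u) \ge 0$ on $[0,1)$. Clearly $F(0) = 0$ (both terms vanish), so it suffices to show $F'(u) \ge 0$ on $(0,1)$.

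Differentiating termwise, and using Lemma~\ref{K-deri} to rewrite $4u^2\mathbf{K}'(u^2) = \frac{2\mathbf{E}(u^2)}{1-u^2} - 2\mathbf{K}(u^2)$, I expect massive cancellation:
\[ F'(u) = 2\mathbf{K}(u^2) + 4u^2\mathbf{K}'(u^2) - 2u\,\mathbf{E}'(u^2)\log\tfrac{1+u}{1-u} - \tfrac{2\mathbf{E}(u^2)}{1-u^2}, \]
and after substitution the two ``$\mathbf{K}(u^2)$'' terms and the two ``$\mathbf{E}(u^2)/(1-u^2)$'' terms cancel, leaving
\[ F'(u) = -2u\,\mathbf{E}'(u^2)\log\tfrac{1+u}{1-u}. \]
A short computation (differentiating $\mathbf{E}(x)=\int_0^{\pi/2}\sqrt{1-x\sin^2\theta}\,d\theta$ under the integral and writing $x\sin^2\theta = 1 - (1-x\sin^2\theta)$) yields the standard identity $\mathbf{E}'(x) = (\mathbf{E}(x)-\mathbf{K}(x))/(2x)$. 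Plugging this in gives
\[ F'(u) = \frac{\mathbf{K}(u^2) - \mathbf{E}(u^2)}{u}\,\log\tfrac{1+u}{1-u}. \]
Now $\mathbf{K}(x) \ge \mathbf{E}(x)$ follows immediately from the pointwise inequality $1/\sqrt{1-x\sin^2\theta} \ge \sqrt{1-x\sin^2\theta}$ for $x\in[0,1)$, so $F'(u)\ge 0$ throughout $(0,1)$ and the desired inequality follows.

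I do not expect a genuine obstacle here: the whole proof is essentially a one-variable monotonicity argument. The only mildly delicate point is the calculation that simplifies $F'(u)$; the key is to use Lemma~\ref{K-deri} \emph{before} attempting to collect terms, after which the formula for $\mathbf{E}'$ is the only remaining ingredient and it can be derived in two lines directly from the integral representation of $\mathbf{E}$. No case analysis near $u=1$ is needed because the claimed inequality is a differentiable one on $[0,1)$ and both sides are continuous up to the endpoint in the sense that the ratio $\mathbf{E}/\mathbf{K}\to 0$ while $2\sqrt{x}/\log((1+\sqrt{x})/(1-\sqrt{x}))\to 0$ as $x\uparrow 1$.
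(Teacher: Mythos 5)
Your proof is correct: every step checks out in the paper's convention (argument $=$ squared modulus), namely $4u^2\mathbf{K}'(u^2)=\frac{2\mathbf{E}(u^2)}{1-u^2}-2\mathbf{K}(u^2)$ from Lemma~\ref{K-deri}, the identity $\mathbf{E}'(x)=\frac{\mathbf{E}(x)-\mathbf{K}(x)}{2x}$, the resulting cancellation giving $F'(u)=\frac{\mathbf{K}(u^2)-\mathbf{E}(u^2)}{u}\log\frac{1+u}{1-u}\ge 0$, and the pointwise bound $\mathbf{K}\ge\mathbf{E}$; together with $F(0)=0$ this yields the claim (at $x=0$ the right-hand side is read as its limit $1$, where equality holds). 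Note, however, that the paper does not prove this lemma at all: it is quoted verbatim from Anderson--Vamanamurthy--Vuorinen \cite[Theorem 3.6]{AVV}, whose derivation rests on the monotonicity machinery for ratios of complete elliptic integrals developed there (essentially l'H\^opital-type monotone-ratio arguments for $\mathbf{E}/\mathbf{K}$ against $\mathrm{arctanh}$). Your route is therefore genuinely different in character: by clearing denominators and substituting $u=\sqrt{x}$ you reduce the statement to a single one-variable monotonicity check that uses only the two classical derivative formulas (one of which, Lemma~\ref{K-deri}, the paper already records) and the trivial inequality $\mathbf{K}\ge\mathbf{E}$. What this buys is a short, self-contained proof that could replace the external citation; what the citation buys is access to the sharper results of \cite{AVV} (e.g.\ the complementary lower bounds and asymptotics) that a single monotonicity computation does not give.
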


The following is due to Eq. (1.1) in \cite{Carlson1985}. See also Eq. (6.2) in \cite{AVV}. 

\begin{lemma}\label{AVV2}
\[ \log \left(\frac{4}{\sqrt{1-x}}\right) \le \mathbf{K}(x) \le \frac{4}{3+x} \log \left(\frac{4}{\sqrt{1-x}}\right), \ \ x \in [0,1). \]
\end{lemma}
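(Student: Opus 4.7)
The plan is to establish each inequality by a monotonicity argument based on the derivative identity of Lemma \ref{K-deri} and its counterpart $\mathbf{E}^{\prime}(x) = (\mathbf{E}(x) - \mathbf{K}(x))/(2x)$, together with the classical asymptotic $\mathbf{K}(x) = \log(4/\sqrt{1-x}) + o(1)$ as $x \to 1^{-}$. In each case I will introduce an auxiliary function that matches the logarithmic bound at the endpoint $x = 1$, check its sign at $x = 0$ directly, and then show its derivative has constant sign via a further reduction to an elementary comparison between $\mathbf{E}$ and $\mathbf{K}$.

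For the lower bound, set $\phi(x) := \mathbf{K}(x) - \log(4/\sqrt{1-x})$. Then $\phi(0) = \pi/2 - \log 4 > 0$ and $\phi(x) \to 0$ as $x \to 1^{-}$, so it suffices to show $\phi$ is nonincreasing. A direct computation with the derivative identities gives
\[ \phi^{\prime}(x) = \frac{h(x)}{2x(1-x)}, \qquad h(x) := \mathbf{E}(x) - (1-x)\mathbf{K}(x) - x. \]
One verifies $h(0) = h(1) = 0$ and $h^{\prime}(x) = \mathbf{K}(x)/2 - 1$; since $\mathbf{K}$ is strictly increasing from $\pi/2$ to $\infty$, $h^{\prime}$ changes sign exactly once, so $h$ dips below zero and returns to it, forcing $h \le 0$ on $[0,1]$ and hence $\phi^{\prime} \le 0$.

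For the upper bound, I would mirror the construction with $\psi(x) := (3+x)\mathbf{K}(x) - 4\log(4/\sqrt{1-x})$. Then $\psi(0) = 3\pi/2 - 4\log 4 < 0$ and $\psi(x) \to 0$ as $x \to 1^{-}$, so it suffices to show $\psi$ is nondecreasing. The same calculus yields $\psi^{\prime}(x) = H(x)/[2x(1-x)]$ with
\[ H(x) := (3+x)\mathbf{E}(x) - (3-x)(1-x)\mathbf{K}(x) - 4x, \quad H(0) = H(1) = 0, \]
and one more step gives
\[ H^{\prime}(x) = 2\mathbf{E}(x) + \tfrac{3(1-x)}{2}\mathbf{K}(x) - 4. \]

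The main obstacle is to certify that $H \ge 0$ on $[0,1]$, i.e.\ that $H^{\prime}$ has exactly one zero on $(0,1)$ with sign change from $+$ to $-$. Note $H^{\prime}(0) = 7\pi/4 - 4 > 0$ and $H^{\prime}(1^{-}) = -2 < 0$. Differentiating once more,
\[ H^{\prime\prime}(x) = \frac{7\mathbf{E}(x) - (7+3x)\mathbf{K}(x)}{4x} = \frac{7(\mathbf{E}(x) - \mathbf{K}(x)) - 3x\,\mathbf{K}(x)}{4x}. \]
Both summands in the numerator are nonpositive on $[0,1)$, since $\mathbf{E}(x) \le \mathbf{K}(x)$ is immediate from $\sqrt{1-x\sin^2\theta} \le 1 \le 1/\sqrt{1-x\sin^2\theta}$ integrated termwise. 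Hence $H^{\prime\prime} < 0$ on $(0,1)$, so $H^{\prime}$ is strictly decreasing and admits a unique zero, and $H$ rises from $0$ to a single maximum and returns to $0$; thus $H \ge 0$, $\psi^{\prime} \ge 0$, and $\psi \le 0$, completing the upper bound.
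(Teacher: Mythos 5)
Your proposal is correct: I checked the derivative computations against the paper's parametrization (where $\mathbf{K}(x)=\int_0^{\pi/2}(1-x\sin^2\theta)^{-1/2}\,d\theta$, so $\mathbf{K}'$ is as in Lemma~\ref{K-deri} and $\mathbf{E}'(x)=(\mathbf{E}(x)-\mathbf{K}(x))/(2x)$), and indeed $h'(x)=\tfrac{1}{2}\mathbf{K}(x)-1$, $H'(x)=2\mathbf{E}(x)+\tfrac{3(1-x)}{2}\mathbf{K}(x)-4$, and $H''(x)=\bigl(7\mathbf{E}(x)-(7+3x)\mathbf{K}(x)\bigr)/(4x)<0$ on $(0,1)$, so the sign patterns you describe ($h\le 0$, $H\ge 0$, hence $\phi$ nonincreasing and $\psi$ nondecreasing, each with limit $0$ at $x=1^-$) do give both inequalities. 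Your route is genuinely different from the paper's, because the paper offers no proof at all: Lemma~\ref{AVV2} is imported verbatim from Carlson (1985), Eq.~(1.1), and Anderson--Vamanamurthy--Vuorinen, Eq.~(6.2). What your argument buys is a self-contained verification using only the identities already recorded in the appendix (Lemma~\ref{K-deri} and the analogous $\mathbf{E}'$ formula) plus one classical external fact, the endpoint asymptotic $\mathbf{K}(x)-\log\bigl(4/\sqrt{1-x}\bigr)\to 0$ as $x\to 1^{-}$, which you need for the boundary values $\phi(1^-)=\psi(1^-)=0$ and $h(1)=H(1)=0$ (equivalently $(1-x)\mathbf{K}(x)\to 0$ and $\mathbf{E}(1)=1$); that asymptotic is standard and is logically prior to the global inequality, so there is no circularity, but it should be cited explicitly if the proof is written out. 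What the paper's citation buys, by contrast, is brevity and access to the sharper monotonicity statements of the quoted references (e.g.\ that $\mathbf{K}(x)-\log(4/\sqrt{1-x})$ and $(3+x)\mathbf{K}(x)/\log(4/\sqrt{1-x})$ are monotone), of which the lemma is only the two-sided consequence actually used.
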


\section{Negative definiteness of the KLD between Cauchy densities}\label{sec:FH}

In this section, we give an elementary proof of Theorem \ref{thm:embeddable}. 
We first give an outline of the proof. 
Our proof follows the strategy of \cite{Faraut-1972} and consists of three steps. 
We do not need to introduce the hyperboloid space $\mathbb L$. 

Step 1. \ Let $d$ be the Poincar\'e distance on $\mathbb H$. 
We remark that $d = \sqrt{2}\rho_{\FR}$. 
Then, $\cosh(d(z,w)) = 1 + \chi(z,w)$ and  
\[ 2 \log \cosh \left(\frac{d(z,w)}{2} \right) = \log \left( 1 + \frac{\chi(z,w)}{2} \right). \]

We see that for every $r \ge 0$, 
\[ 2 \log \cosh \left(\frac{r}{2} \right) = \lim_{s \to +0} \frac{1}{s} \left( 1 - \frac{1}{2\pi} \int_{-\pi}^{\pi}  \left( \cosh(r) + \cos \theta \sinh (r) \right)^{-s} d\theta \right).\]

Hence it suffices to show that 
\[ H_s (z,w) := \frac{1}{2\pi} \int_{-\pi}^{\pi}  \left( \cosh(d(z,w)) + \cos \theta \sinh (d(z,w)) \right)^{-s} d\theta, \ z,w \in \mathbb H, \]
is positive definite for every $s \in (0,1)$. 

Step 2. \ Let $$P(z,x) := \frac{\textup{Im}(z)}{|x-z|^2} (x^2 + 1), \ z \in \mathbb H, x \in \mathbb R,$$
and $\mu(dx) := \dfrac{dx}{\pi (x^2 + 1)}$. 

Then we see that 
\[ H_s (z,w) = \int_{\mathbb R} P(z,x)^s P(w,x)^{1-s} \mu(dx) \]
\[ = C(s) \iint_{\mathbb R^2} P(w,x)^{1-s} P(z,y)^{1-s} \left( \frac{(x-y)^2}{(x^2 + 1) (y^2 + 1)} \right)^{-s} \mu(dx)\mu(dy), \]
where $C(s)$ is a positive constant depending only on $s$. 

Step 3. \ Let $z_1, \cdots, z_n \in \mathbb H$ and $c_1, \cdots, c_n \in \mathbb R$ with $\sum_{i=1}^{n} c_i = 0$. 
Let 
$$\varphi_s (x) := \sum_{i=1}^{n} c_i P(z_i, x)^{1-s}, \ x \in \mathbb R,$$
which is continuous on $\mathbb R$. 

Let $k_s (x,y) := \left( \frac{(x-y)^2}{(x^2 + 1) (y^2 + 1)} \right)^{-s}$, 
which is a positive definite kernel on $\mathbb R$. 

Thus we see that 
\[ \sum_{i.j=1}^{n} c_i c_j H_s (z_i, z_j) = \frac{C(s)}{\pi^2} \iint_{\mathbb R^2} \frac{\varphi_s (x) \varphi_s (y) k_s (x,y)}{(x^2 + 1)(y^2 + 1)} dxdy \ge 0. \]
 
Now we proceed to the full proof. 

Step 1. \  
It is known that (see formula no.4.224.9 in \cite{gradshteyn2014table})
\[ 2 \log \cosh \left(\frac{r}{2}\right) = \frac{1}{2\pi} \int_{-\pi}^{\pi} \log\left(\cosh(r) + \cos\theta \sinh (r) \right) d\theta, \ \ r \ge 0. \]

We see that for $r \ge 0$, 
\[ \left|\log (\cosh(r) + \cos\theta \sinh (r)) \right| \le r.  \]

Since for $t > 0$, $\lim_{s \to +0} \frac{1 - t^{-s}}{s} = \log t$ and $|\frac{1 - t^{-s}}{s}| \le |\log t|$, 
\[ \int_{-\pi}^{\pi} \log\left(\cosh(r) +\cos\theta \sinh(r) \right) d\theta = \lim_{s \to +0} \int_{-\pi}^{\pi} \frac{1 - \left(\cosh(r) +\cos\theta \sinh (r) \right)^{-s}}{s}d\theta, \ r > 0, \]
by the Lebesgue convergence theorem. 
This convergence also holds for $r=0$. 

Step 2. \ 
\begin{lemma}\label{lem:transfer}
\[ \frac{1}{2\pi} \int_{-\pi}^{\pi} \left(\cosh(r) +\cos\theta \sinh (r) \right)^{-s} d\theta = \int_{\mathbb R} P(e^r i, x)^s \mu(dx).  \]
\end{lemma}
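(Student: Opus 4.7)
The plan is a direct computation via the Weierstrass (tangent half-angle) substitution $x = \tan(\theta/2)$, which converts the angular integral on $(-\pi,\pi)$ into an integral on $\mathbb{R}$ that matches the definition of the right-hand side.

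First, I would unfold the right-hand side explicitly. With $z = e^{r} i$ we have $\mathrm{Im}(z) = e^{r}$ and $|x-z|^{2} = x^{2} + e^{2r}$, so
\[ P(e^{r}i, x) = \frac{e^{r}(x^{2}+1)}{x^{2} + e^{2r}}, \quad \text{hence} \quad \int_{\mathbb{R}} P(e^{r}i,x)^{s}\, \mu(dx) = \frac{e^{rs}}{\pi}\int_{\mathbb{R}} \frac{(x^{2}+1)^{s-1}}{(x^{2}+e^{2r})^{s}}\, dx. \]
This is the target expression.

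Next I would handle the left-hand side. Under $x = \tan(\theta/2)$, the standard identities $\cos\theta = (1-x^{2})/(1+x^{2})$ and $d\theta = 2\,dx/(1+x^{2})$ give, after combining $\cosh r$ and $\sinh r$ with common denominator $1+x^{2}$, the key algebraic identity
\[ \cosh(r) + \cos\theta \sinh(r) = \frac{e^{-r}(x^{2} + e^{2r})}{1+x^{2}}. \]
Raising to the power $-s$ and multiplying by $d\theta/(2\pi)$ yields
\[ \frac{1}{2\pi}(\cosh r + \cos\theta \sinh r)^{-s}\, d\theta = \frac{e^{rs}}{\pi}\,\frac{(1+x^{2})^{s-1}}{(x^{2}+e^{2r})^{s}}\, dx. \]
Since $\theta \mapsto \tan(\theta/2)$ is a bijection from $(-\pi,\pi)$ onto $\mathbb{R}$, integrating both sides over $\theta \in (-\pi,\pi)$ produces exactly the right-hand side computed in the first paragraph.

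There is no real obstacle here; the only point requiring a brief check is the algebraic identity for $\cosh r + \cos\theta \sinh r$ under the substitution, which follows from rewriting $\cosh r \pm \sinh r = e^{\pm r}$. One should also observe that the integrand is continuous and positive on $(-\pi,\pi)$ for $r > 0$ (the denominator $x^{2}+e^{2r}$ is bounded away from zero), so no convergence or boundary issue arises and the change of variables is justified without further care. The case $r = 0$ is trivial since both sides equal $1$.
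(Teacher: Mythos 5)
Your proof is correct and follows the same route as the paper: the tangent half-angle substitution $x=\tan(\theta/2)$, the identity $\cosh r+\cos\theta\sinh r=\frac{e^{2r}+x^{2}}{e^{r}(1+x^{2})}=P(e^{r}i,x)^{-1}$, and the relation $d\theta=\frac{2\,dx}{1+x^{2}}$, which together convert one side into the other. You merely spell out the unfolding of $P(e^{r}i,x)$ and $\mu(dx)$ more explicitly than the paper does.
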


\begin{proof}
Let $x = \tan \frac{\theta}{2}$. 
Then, 
$d\theta = \dfrac{2}{1+x^2} dx$ and 
\[ \cosh(r) +\cos\theta \sinh (r) = \frac{e^{2r} + x^2}{e^r (1+x^2)} = \frac{1}{P(e^r i, x)}. \]
\end{proof}

\begin{lemma}\label{lem:rot-inv}
For $A \in SO(2)$ and $z \in \mathbb H$, 
\[ \int_{\mathbb R} P(A.z,x)^s \mu(dx) =  \int_{\mathbb R} P(z,x)^s \mu(dx). \]
\end{lemma}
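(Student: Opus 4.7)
The plan is to perform the change of variable $x = A.y$ (the Möbius action of $A$ on the real line) in the integral, and show that both the integrand $P(A.z,x)^s$ and the reference measure $\mu$ are transported back to $P(z,y)^s$ and $\mu$ respectively. Thus the argument rests on two separate invariances: joint invariance of the kernel $P$ under the diagonal action of $SL(2,\mathbb R)$, and invariance of $\mu$ under the $SO(2)$-action.

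Write $A=\bigl(\begin{smallmatrix} a & b \\ c & d \end{smallmatrix}\bigr)\in SO(2)$, so that $c^2+d^2=1$ and $A.i=i$. The standard Möbius identities give $\Im(A.z)=\Im(z)/|cz+d|^2$ and, for real $y$, $|A.y-A.z|^2=|y-z|^2/((cy+d)^2|cz+d|^2)$. Using $y^2+1=|y-i|^2$ for real $y$ together with $A.i=i$ and $|ci+d|^2=1$, one gets the key identity $(A.y)^2+1=(y^2+1)/(cy+d)^2$. Plugging these three identities into the definition of $P$ produces the clean cancellation $P(A.z,A.y)=P(z,y)$. At the same time, since $dx=dy/(cy+d)^2$, the computation $\mu(dx)=dx/(\pi(x^2+1))=dy/(\pi(y^2+1))=\mu(dy)$ follows immediately, exhibiting $\mu$ as the unique (up to scale) $SO(2)$-invariant probability measure on $\mathbb R$ under this action. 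Combining the two invariances:
\[ \int_{\mathbb R} P(A.z,x)^s\,\mu(dx)=\int_{\mathbb R} P(A.z,A.y)^s\,\mu(dy)=\int_{\mathbb R} P(z,y)^s\,\mu(dy). \]

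An alternative and conceptually simpler route, available because of the earlier work in the paper, is to observe that $P(z,x)\,\mu(dx)=p_z(x)\,dx$ and hence
\[ \int_{\mathbb R} P(z,x)^s\,\mu(dx)=\int_{\mathbb R} p_z(x)^{s}\,p_i(x)^{1-s}\,dx, \]
which is the $(1-s)$-skewed Bhattacharyya-type coefficient between $p_z$ and $p_i$. Applying Lemma~\ref{lemma:finv} with the generator $f(u)=u^{1-s}$ to the pair $(\theta_1,\theta_2)=(z,i)$ and using $A.i=i$ for $A\in SO(2)$ yields the assertion at once. (Strictly speaking, Lemma~\ref{lemma:finv} is stated for convex $f$, but its proof is purely a change-of-variable argument and works verbatim for this choice.)

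The step most likely to require care is the identity $(A.y)^2+1=(y^2+1)/(cy+d)^2$ for real $y$, since it crucially uses $A.i=i$ (and therefore the $SO(2)$ hypothesis, rather than merely $SL(2,\mathbb R)$). Everything else is routine Möbius bookkeeping.
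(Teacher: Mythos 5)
Your proof is correct and takes essentially the same route as the paper: the change of variable $x=A.y$ together with the identities $P(A.z,A.y)=P(z,y)$ and $\mu(dx)=\mu(dy)$, which you simply spell out in more detail (the key point being $|ci+d|^2=1$, i.e.\ $A.i=i$, for $A\in SO(2)$). Your alternative argument via $\int P(z,x)^s\,\mu(dx)=\int p_z(x)^s p_i(x)^{1-s}\,dx$ and the change-of-variable invariance of Lemma~\ref{lemma:finv} is also valid, but the main argument matches the paper's proof.
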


\begin{proof}
Let 
$A = \begin{pmatrix} \cos\theta & -\sin\theta \\ \sin\theta & \cos\theta \end{pmatrix}$. 
Let $y \in \mathbb R$ such that $x = A.y$. 
Then, 
\[ P(A.z, A.y) = P(z,y) \]
and 
\[ \mu(dx) = \frac{1}{\pi} \frac{1}{(A.y)^2 + 1} \frac{dx}{dy} dy = \frac{1}{\pi} \frac{1}{y^2 + 1} dy = \mu(dy). \]
\end{proof}

Now we introduce a group structure on $\mathbb H$. 
For $z = z_1+iz_2$ and $w = w_1+iw_2$, let 
\[ zw := (z_1 + z_2 w_1) + i z_2 w_2. \]
This gives a group structure on $\mathbb H$. 
It holds that 
$$ w^{-1} = \frac{-w_1 + i}{w_2}, \ w = w_1+iw_2$$
and the unit element is the imaginary unit $i$. 

We see that 
$\chi(w^{-1}z,i) = \chi(z,w), z,w \in \mathbb H$ and hence 
\begin{equation}\label{eq:gi-dist}
d(w^{-1}z,i) = d(z,w), z,w \in \mathbb H.
\end{equation}

\begin{lemma}\label{lem:inverse}
For $z,w \in \mathbb H$, 
\[ H_s (z,w) = \int_{\mathbb R} P(w^{-1} z, x)^s \mu(dx).  \]
\end{lemma}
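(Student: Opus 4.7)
The plan is to combine the invariance of the Poincaré distance under the group operation (Equation~\eqref{eq:gi-dist}) with Lemmas~\ref{lem:transfer} and \ref{lem:rot-inv} already at our disposal. By Step~1 and Lemma~\ref{lem:transfer}, the quantity $H_s(z,w)$ depends on $(z,w)$ only through $r := d(z,w)$, and in fact
\[ H_s(z,w) = \frac{1}{2\pi}\int_{-\pi}^{\pi} (\cosh r + \cos\theta \sinh r)^{-s}\,d\theta = \int_{\mathbb{R}} P(e^r i, x)^s \mu(dx). \]
On the other side, Equation~\eqref{eq:gi-dist} tells us that $w^{-1}z$ lies on the hyperbolic circle of radius $r$ around $i$. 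So the task reduces to showing that any point $\zeta \in \mathbb{H}$ with $d(\zeta, i) = r$ satisfies $\int_{\mathbb{R}} P(\zeta, x)^s \mu(dx) = \int_{\mathbb{R}} P(e^r i, x)^s \mu(dx)$.

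To obtain this equality, the plan is to realize $w^{-1}z$ as the image of $e^r i$ under some rotation $A \in SO(2)$, and then quote Lemma~\ref{lem:rot-inv}. The intermediate lemma proved inside Proposition~\ref{prop:McCullagh-mi} already establishes that for every $\zeta \in \mathbb{H}$ there exist $\lambda > 0$ and $A \in SO(2)$ with $A.\zeta = \lambda i$; applied to $\zeta = w^{-1}z$ and combined with the identity $d(\lambda i, i) = |\log \lambda|/\sqrt{2}$... wait, more directly, since $d$ is $SO(2)$-invariant and equals $d(\lambda i, i)$, we get $\lambda \in \{e^r, e^{-r}\}$. In the case $\lambda = e^{-r}$ we replace $A$ by $\bigl(\begin{smallmatrix}0 & -1\\ 1 & 0\end{smallmatrix}\bigr) A^{-1}\in SO(2)$, using that $\bigl(\begin{smallmatrix}0 & -1\\ 1 & 0\end{smallmatrix}\bigr).(e^{-r}i) = e^r i$, to arrange $w^{-1}z = B.(e^r i)$ for some $B \in SO(2)$.

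Chaining everything together yields
\[ \int_{\mathbb{R}} P(w^{-1}z, x)^s \mu(dx) = \int_{\mathbb{R}} P(B.(e^r i), x)^s \mu(dx) = \int_{\mathbb{R}} P(e^r i, x)^s \mu(dx) = H_s(z,w), \]
where the middle equality is Lemma~\ref{lem:rot-inv} and the last one is Lemma~\ref{lem:transfer}.

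The main conceptual step is the reduction of an arbitrary point of $\mathbb{H}$ at distance $r$ from $i$ to a point on the positive imaginary axis via $SO(2)$, but this is exactly the content of the intermediate step already carried out in the proof of Proposition~\ref{prop:McCullagh-mi}, so no new geometric work is needed. The only minor subtlety to handle cleanly is the sign choice $\lambda = e^{\pm r}$, which is absorbed by an explicit rotation in $SO(2)$.
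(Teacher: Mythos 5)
Your argument is correct and follows essentially the same route as the paper's own proof: use Eq.~\eqref{eq:gi-dist} to replace $(z,w)$ by the point $w^{-1}z$ at distance $r=d(z,w)$ from $i$, rotate it onto $e^{r}i$ by an element of $\SO(2)$ (exactly the intermediate lemma from the proof of Proposition~\ref{prop:McCullagh-mi}, with the $\pi/2$-rotation absorbing the case $\lambda=e^{-r}$), and conclude by Lemmas~\ref{lem:transfer} and~\ref{lem:rot-inv}. The only blemishes are cosmetic and do not affect validity: the momentary $|\log\lambda|/\sqrt{2}$ (that is the Fisher--Rao normalization, whereas for the Poincar\'e distance $d(\lambda i,i)=|\log\lambda|$, which is what you then actually use), and the explicit formula for the rotation carrying $e^{r}i$ to $w^{-1}z$, which should be $A^{-1}$ composed with the $\pi/2$-rotation rather than $\bigl(\begin{smallmatrix}0&-1\\1&0\end{smallmatrix}\bigr)A^{-1}$ --- but since Lemma~\ref{lem:rot-inv} only requires the existence of some rotation relating the two points, this slip is harmless.
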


\begin{proof}
By \eqref{eq:gi-dist}, we can assume that $w=i$. 
Then there exists $A \in SO(2)$ such that $e^{d(z,i)} i = A.z$. 
Now the assertion follows from Lemmas \ref{lem:transfer} and \ref{lem:rot-inv}. 
\end{proof}

For $w = w_1 + iw_2 \in \mathbb H$ and $x \in \mathbb R$, 
we let $wx := w_2 x + w_1$.

\begin{lemma}\label{lem:ratio}
\[ P(w^{-1}z, x) P(w, wx) = P(z, wx), \ \ z, w \in \mathbb H, x \in \mathbb R. \]
\end{lemma}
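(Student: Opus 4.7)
The plan is a direct coordinate computation, since the statement reduces to a cocycle-type identity for the Poisson factor $P$ under the affine action of $\mathbb{H}$ on $\mathbb{R}$. Write $z = z_1 + iz_2$, $w = w_1 + iw_2$ with $z_2, w_2 > 0$, and recall from the setup that $wx = w_2 x + w_1$ and $w^{-1} = (-w_1 + i)/w_2$. Applying the group law $(a+ib)(c+id) = (a + bc) + i\,bd$ gives $w^{-1} z = (z_1 - w_1)/w_2 + i\, z_2/w_2$, so $\Im(w^{-1} z) = z_2/w_2$.

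Next I will compute the two moduli that appear in the definition $P(\zeta, y) = \Im(\zeta)(y^2 + 1)/|y - \zeta|^2$. Using the formula for $w^{-1} z$,
$|x - w^{-1} z|^2 = \bigl(x - (z_1 - w_1)/w_2\bigr)^2 + (z_2/w_2)^2 = \bigl((w_2 x + w_1 - z_1)^2 + z_2^2\bigr)/w_2^2 = |wx - z|^2 / w_2^2.$
Separately, $wx - w = w_2 x - i w_2$, so $|wx - w|^2 = w_2^2 (x^2 + 1)$. Substituting into $P$ yields
$P(w^{-1} z, x) = z_2 w_2 (x^2 + 1) / |wx - z|^2$ and $P(w, wx) = \bigl((wx)^2 + 1\bigr) / \bigl(w_2 (x^2 + 1)\bigr).$

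Multiplying the two expressions, the factors $w_2$ and $x^2 + 1$ cancel, leaving
$P(w^{-1} z, x)\, P(w, wx) = z_2 \bigl((wx)^2 + 1\bigr) / |wx - z|^2 = P(z, wx),$
which is exactly the claim. There is no real obstacle beyond bookkeeping; the entire content of the lemma is that the two factors of $w_2$ produced by $\Im(w^{-1} z) = z_2/w_2$ and by $|wx - w|^2 = w_2^2(x^2 + 1)$ are precisely the ones needed to convert $z_2(x^2 + 1)/|x - w^{-1} z|^2$ into $\Im(z)\bigl((wx)^2 + 1\bigr)/|wx - z|^2$. The only mild care needed is to recognize that the affine action $x \mapsto wx$ on $\mathbb{R}$ is what makes the identity $|x - w^{-1} z|\cdot|wx - w|/(x^2+1) = |wx - z|$ transparent.
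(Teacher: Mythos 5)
Your computation is correct, and it is essentially the paper's own proof: both arguments compute $w^{-1}z=(z_1-w_1)/w_2+i\,z_2/w_2$ from the affine group law and then verify the identity by writing out $P(w^{-1}z,x)$, $P(w,wx)$, and $P(z,wx)$ in coordinates, with the factors $w_2$ and $x^2+1$ cancelling in the product. Your version merely makes the intermediate modulus computations $|x-w^{-1}z|^2=|wx-z|^2/w_2^2$ and $|wx-w|^2=w_2^2(x^2+1)$ explicit.
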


\begin{proof}
Since 
\[ w^{-1}z = \frac{z_1 - w_1 + i z_2}{w_2}, \ \ z = z_1 + iz_2, w = w_1 + iw_2, \]
we see that 
\[ P(w^{-1}z, x) = \frac{z_2 w_2}{(z_1 - w_1 - w_2 x)^2 + z_2^2} (x^2 + 1). \]
We also see that 
\[ P(z,wx) = \frac{z_2 ((w_2 x + w_1)^2+1)}{(z_1 - w_1 - w_2 x)^2 + z_2^2} \]
and \[ P(w,wx) = \frac{(w_2 x + w_1)^2+1}{w_2 (x^2 + 1)}. \]
The assertion follows from these identities. 
\end{proof}

\begin{proposition}\label{prop:prop1}
\[ H_s (z,w) = \int_{\mathbb R} P(z,x)^s P(w,x)^{1-s} \mu(dx), \ \ z, w \in \mathbb H.\] 
\end{proposition}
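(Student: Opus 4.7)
The plan is to combine Lemma \ref{lem:inverse} with Lemma \ref{lem:ratio} via a change of variable that pushes the ``$w^{-1}$'' hidden inside $P(w^{-1}z,x)$ back out to the factor $P(w,\cdot)$ in the integrand. Specifically, starting from $H_s(z,w) = \int_{\mathbb R} P(w^{-1}z, x)^s \mu(dx)$, I would substitute $y = wx = w_2 x + w_1$ in the integral.

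First I would use Lemma \ref{lem:ratio} to rewrite the integrand: since $P(w^{-1}z, x) = P(z,wx)/P(w,wx)$, we get $P(w^{-1}z,x)^s = P(z,y)^s P(w,y)^{-s}$ after the substitution $y = wx$. Second, I would compute how the reference measure $\mu$ transforms. A direct calculation gives $x^2+1 = |y-w|^2/w_2^2$ and $dx = dy/w_2$, so
$$\mu(dx) = \frac{dx}{\pi(x^2+1)} = \frac{w_2 \, dy}{\pi |y-w|^2} = P(w,y) \, \mu(dy),$$
where the last equality uses the definition $P(w,y) = w_2(y^2+1)/|y-w|^2$.

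Combining the two steps, the factor of $P(w,y)^{-s}$ from the integrand and the factor of $P(w,y)$ from the Jacobian merge to give $P(w,y)^{1-s}$, and one obtains the claimed formula
$$H_s(z,w) = \int_{\mathbb R} P(z,y)^s P(w,y)^{1-s} \mu(dy).$$

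There is no real obstacle here: the entire proposition is essentially a one-line change of variables once Lemmas \ref{lem:inverse} and \ref{lem:ratio} are in hand. The only point to be slightly careful about is keeping the group action $wx = w_2 x + w_1$ (acting on $\mathbb R$) straight versus the group multiplication on $\mathbb H$ (which is the one showing up in $w^{-1}z$), and checking that the Jacobian computation yields exactly the Poisson-type factor $P(w,y)$ rather than a constant multiple of it.
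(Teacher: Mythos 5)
Your proof is correct and is essentially the paper's own argument: both start from Lemma \ref{lem:inverse}, rewrite the integrand via Lemma \ref{lem:ratio}, and perform the change of variable $y = wx = w_2 x + w_1$, where the Jacobian computation $\mu(dx) = \frac{w_2\,dy}{\pi|y-w|^2} = P(w,y)\,\mu(dy)$ turns the factor $P(w,y)^{-s}$ into $P(w,y)^{1-s}$. Your explicit identification of the transformed measure with $P(w,y)\,\mu(dy)$ is just a slightly more transparent phrasing of the same step.
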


\begin{proof}
By Lemmas \ref{lem:inverse} and \ref{lem:ratio}, 
\[ H_s (z,w) = \int_{\mathbb R} P(z, wx)^s P(w, wx)^{-s} \mu(dx). \]
Let $y = wx = w_2 x + w_1$. 
Then, $\mu(dx) = \dfrac{w_2}{\pi |y-w|^2} dy$. 
Hence, 
\[ \int_{\mathbb R} P(z, wx)^s P(w, wx)^{-s} \mu(dx) = \int_{\mathbb R} P(z,y)^s P(w,y)^{1-s} \mu(dy). \]
\end{proof}

\begin{lemma}\label{lem:intertwin-pre}
For every $s \in (0,1/2)$, there exists a positive constant $C(s)$ such that for every $a \in \mathbb R$ 
\[ (1+a^2)^{-s} = \frac{C(s)}{\pi} \int_{\mathbb R} \frac{|x+a|^{-2s}}{(1+x^2)^{1-s}} dx. \]
\end{lemma}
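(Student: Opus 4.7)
The plan is to prove the identity by evaluating the integral on the right-hand side in closed form using a Laplace transform representation of each factor, then reducing the resulting multiple integral to a Beta function. Denote $I(a) := \int_{\mathbb R} |x+a|^{-2s}(1+x^2)^{s-1}\,dx$, which converges for $0 < s < 1/2$ (the tail behaves like $|x|^{-2}$, and the singularity at $x=-a$ is integrable since $2s<1$).

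First, I would use the identity $A^{-\alpha} = \Gamma(\alpha)^{-1}\int_0^\infty t^{\alpha-1}e^{-tA}\,dt$ for $A,\alpha>0$ to write
\[ |x+a|^{-2s} = \frac{1}{\Gamma(s)}\int_0^\infty u^{s-1}e^{-u(x+a)^2}\,du, \qquad (1+x^2)^{s-1} = \frac{1}{\Gamma(1-s)}\int_0^\infty v^{-s}e^{-v(1+x^2)}\,dv. \]
By Tonelli (all integrands nonnegative), the triple integral may be reordered so that $x$ is integrated first. Completing the square gives $u(x+a)^2 + vx^2 = (u+v)(x + \frac{ua}{u+v})^2 + \frac{uv\,a^2}{u+v}$, so the Gaussian $x$-integral evaluates to $\sqrt{\pi/(u+v)}\,\exp(-uva^2/(u+v))$.

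Next, I would make the substitution $u = vr$ with $r>0$; the factor $e^{-v}$ in the $(u,v)$-integrand combines with $\exp(-vra^2/(r+1))$ to yield $\exp(-v(1 + ra^2/(r+1)))$, and the powers of $v$ collapse to $v^{-1/2}$. Performing the resulting elementary $v$-integral gives
\[ I(a) = \frac{\pi}{\Gamma(s)\Gamma(1-s)}\int_0^\infty \frac{r^{s-1}}{\sqrt{\,1 + r(1+a^2)\,}}\,dr. \]
A final change of variables $t = r(1+a^2)$ extracts the factor $(1+a^2)^{-s}$ and leaves the Beta integral $\int_0^\infty t^{s-1}(1+t)^{-1/2}dt = B(s,\tfrac{1}{2}-s) = \Gamma(s)\Gamma(\tfrac{1}{2}-s)/\Gamma(\tfrac{1}{2})$, whose convergence at infinity is precisely the restriction $s < 1/2$. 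This yields the claimed identity with
\[ C(s) = \frac{\sqrt{\pi}\,\Gamma(1-s)}{\Gamma(\tfrac{1}{2}-s)}, \]
which is positive for $s \in (0,1/2)$.

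The main obstacle is essentially bookkeeping rather than conceptual: one must verify that the Laplace representations hold and apply Tonelli correctly, and track the exponents carefully through the two substitutions. The constraint $s \in (0,1/2)$ is forced exactly at the last step by the convergence of $B(s,1/2-s)$, confirming the hypothesis is sharp for this proof strategy.
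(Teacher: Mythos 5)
Your proof is correct, and it takes a genuinely different route from the paper. The paper proves the lemma with the single substitution $x=\tan\theta$, which turns the integral into $\int_{-\pi/2}^{\pi/2}|\sin\theta+a\cos\theta|^{-2s}\,d\theta$; writing $\sin\theta+a\cos\theta=\sqrt{1+a^2}\,\cos(\theta-\phi)$ and using periodicity then pulls out the factor $(1+a^2)^{-s}$, leaving $C(s)$ as (essentially) the reciprocal of $\int_{-\pi}^{\pi}|\cos\theta|^{-2s}\,d\theta$ — a two-line argument that exploits the same rotational structure used elsewhere in the paper (Lemma on rotation invariance, circular Cauchy picture). Your argument instead runs through Gamma-function (Laplace) representations of both factors, a Gaussian integral after completing the square, the substitution $u=vr$, and a final Beta integral; all steps check out (Tonelli applies since the integrands are nonnegative, the exponent bookkeeping is right, and $\int_0^\infty t^{s-1}(1+t)^{-1/2}dt=B(s,\tfrac12-s)$ indeed forces $s<\tfrac12$). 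What your route buys is an explicit closed form $C(s)=\sqrt{\pi}\,\Gamma(1-s)/\Gamma(\tfrac12-s)$, which agrees with the paper's cosine-moment constant after evaluating $\int_{-\pi}^{\pi}|\cos\theta|^{-2s}\,d\theta=2\sqrt{\pi}\,\Gamma(\tfrac12-s)/\Gamma(1-s)$ (and in fact cleans up a harmless constant-tracking slip in the paper's displayed chain); what the paper's route buys is brevity and a transparent link to the $\SO(2)$ symmetry underlying the whole section. Either proof suffices for the application, since only positivity of $C(s)$ is used downstream.
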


\begin{proof}
Let $x = \tan \theta, |\theta| < \pi/2$. 
Then, $d\theta = \cos^2 \theta dx = \frac{1}{1+x^2} dx$ and  
\[ \frac{(x+a)^{2}}{1+x^2} = (\sin \theta + a \cos \theta)^2. \]
Hence, 
\[ \int_{\mathbb R} \frac{|x|^{-2s}}{(1+(x-a)^2)^{1-s}} dx = \int_{-\pi/2}^{\pi/2} \left| \sin \theta + a \cos \theta \right|^{-2s} d\theta. \]
By symmetry, 
\[ \int_{-\pi/2}^{\pi/2} \left| \sin \theta + a \cos \theta \right|^{-2s} d\theta = \frac{1}{2} \int_{-\pi}^{\pi} \left| \sin \theta + a \cos \theta \right|^{-2s} d\theta = \pi (1+a^2)^{-s} \int_{-\pi}^{\pi} \left| \cos \theta \right|^{-2s} d\theta. \]
The assertion holds if we let $C(s) := \left( \int_{-\pi}^{\pi} \left| \cos \theta \right|^{-2s} d\theta \right)^{-1}$. 
\end{proof}

\begin{lemma}[intertwining formula]\label{lem:intertwin}
For every $s \in (0,1/2)$, $w \in \mathbb H$ and $y \in \mathbb R$, 
\begin{equation}\label{eq:intertwin} 
P(w,y)^{s} = C(s) \int_{\mathbb R} P(w,x)^{1-s} \left( \frac{(x-y)^2}{(x^2 + 1) (y^2 + 1)} \right)^{-s} \mu(dx). 
\end{equation} 
\end{lemma}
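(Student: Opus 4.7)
The plan is to reduce the general identity to the already-established case $w = i$, i.e., Lemma \ref{lem:intertwin-pre}, by exploiting the affine change of variable associated to the group action $x \mapsto wx = w_2 x + w_1$ introduced before Lemma \ref{lem:ratio}. First I would unfold both sides of \eqref{eq:intertwin} into raw $|x-w|$, $|x-y|$ expressions. Writing $w = w_1 + i w_2$ with $w_2 = \mathrm{Im}(w) > 0$, the left-hand side equals
\[
P(w,y)^s = \frac{w_2^{s}\,(y^2+1)^{s}}{|y-w|^{2s}},
\]
while the right-hand side, after multiplying $P(w,x)^{1-s}$ against the kernel and against $\mu(\dx) = \dx/(\pi(x^2+1))$, simplifies (the $(x^2+1)$ factors cancel) to
\[
\frac{C(s)\,w_2^{1-s}\,(y^2+1)^{s}}{\pi} \int_{\bbR} \frac{\dx}{|x-w|^{2(1-s)}\,|x-y|^{2s}}.
\]
Thus \eqref{eq:intertwin} is equivalent to the bare claim
\[
w_2^{\,2s-1}\,|y-w|^{-2s} \;=\; \frac{C(s)}{\pi} \int_{\bbR} \frac{\dx}{|x-w|^{2(1-s)}\,|x-y|^{2s}}.
\]

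Next I would reduce this to the case $w = i$ by the substitution $u = w_2 x + w_1$ (so $x = (u-w_1)/w_2$, $\dx = \du/w_2$). Under this change, $|x-i|^2 = x^2+1 = |u-w|^2/w_2^{2}$ and $|x - (y-w_1)/w_2|^2 = |u - y|^2/w_2^{2}$, so a direct tracking of the Jacobian and the powers of $w_2$ turns the right-hand integral above into
\[
\frac{1}{w_2}\int_{\bbR}\frac{\du}{(u^2+1)^{1-s}\,|u-a|^{2s}}, \qquad a := \frac{y-w_1}{w_2}.
\]
Applying Lemma \ref{lem:intertwin-pre} to the inner integral yields $\frac{\pi}{C(s)}(1+a^2)^{-s}$, and since $1+a^2 = |y-w|^2/w_2^{2}$, the factor becomes $w_2^{2s}|y-w|^{-2s}$. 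Collecting the outer $1/w_2$ gives $w_2^{2s-1}|y-w|^{-2s}$, which is exactly the left-hand side.

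There is no real conceptual obstacle here: the lemma is a computation, and the only place where care is needed is keeping track of the exponents of $\mathrm{Im}(w) = w_2$ (which cancel as $(2s-1) = -1 + 2s$ only after the Jacobian and the rescaling of both $|u-w|$ and $|u-y|$ are included). The assumption $s \in (0, 1/2)$ is inherited from Lemma \ref{lem:intertwin-pre} and is exactly what guarantees local integrability of $|x-y|^{-2s}$ near $x = y$ and decay of $|x-w|^{-2(1-s)}$ at infinity, so Fubini/change of variable pose no issue.
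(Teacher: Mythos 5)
Your proof is correct and is essentially the paper's own argument: the paper likewise unfolds the definitions of $P$, the kernel and $\mu$, performs the same affine change of variable (there written as $\xi=w-y$, $t=x-y$, $u=(t-\Re(\xi))/\Im(\xi)$, which is exactly your map $x\mapsto (x-w_1)/w_2$), and then invokes Lemma~\ref{lem:intertwin-pre}. The only nit is the swapped roles of $x$ and $u$ in how you state the substitution, which does not affect the computation.
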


\begin{proof}
Let $\xi := w - y$ and $t := x-y$. 
Then, \eqref{eq:intertwin} holds if and only if 
\begin{equation}\label{eq:1st-reduction}
\left(\frac{\textup{Im}(\xi)}{|\xi|^2}\right)^s  = \frac{C(s)}{\pi} \int_{\mathbb R} \left( \frac{\textup{Im}(\xi)}{|\xi - t|^2} \right)^{1-s} |t|^{-2s} dt. 
\end{equation}

Let $u := (t - \textup{Re}(\xi))/\textup{Im}(\xi)$. 
Then, 
\[ \int_{\mathbb R} \left( \frac{\textup{Im}(\xi)}{|\xi - t|^2} \right)^{1-s} |t|^{-2s} dt = (\textup{Im}(\xi))^{-s} \int_{\mathbb R} \left(\frac{1}{1+u^2}\right)^{1-s} \left| u + \frac{\textup{Re}(\xi)}{\textup{Im}(\xi)} \right|^{-2s} du. \]
Hence \eqref{eq:1st-reduction} holds if and only if 
\[ \left( \left( \frac{\textup{Re}(\xi)}{\textup{Im}(\xi)} \right)^2+ 1 \right)^{-s} = \frac{C(s)}{\pi} \int_{\mathbb R} \left(\frac{1}{1+u^2}\right)^{1-s} \left| u + \frac{\textup{Re}(\xi)}{\textup{Im}(\xi)} \right|^{-2s} du, \]
which follows from Lemma \ref{lem:intertwin-pre}. 
\end{proof}

By Proposition \ref{prop:prop1} and Lemma \ref{lem:intertwin}, 

\begin{proposition}\label{prop:prop2}
For every $s \in (0,1/2)$, 
\[ H_s (z,w) = C(s) \iint_{\mathbb R^2} P(w,x)^{1-s} P(z,y)^{1-s} \left( \frac{(x-y)^2}{(x^2 + 1) (y^2 + 1)} \right)^{-s} \mu(dx)\mu(dy), \ \ z, w \in \mathbb H.\] 
\end{proposition}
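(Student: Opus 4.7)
The plan is to derive Proposition~\ref{prop:prop2} directly by inserting the intertwining formula of Lemma~\ref{lem:intertwin} into the representation of $H_s(z,w)$ given by Proposition~\ref{prop:prop1}, and then interchanging the order of integration by Fubini's theorem. More concretely, starting from
\[ H_s(z,w) = \int_{\mathbb R} P(z,y)^{s} P(w,y)^{1-s} \mu(dy), \]
I would apply Lemma~\ref{lem:intertwin} with the roles $(w,y)\leadsto (z,y)$ to rewrite
\[ P(z,y)^{s} = C(s) \int_{\mathbb R} P(z,x)^{1-s} \left( \frac{(x-y)^2}{(x^2+1)(y^2+1)} \right)^{-s} \mu(dx), \]
plug this into the formula for $H_s(z,w)$, and pull the constant $C(s)$ out of both integrals.

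The next step is the Fubini swap. To justify it, I would check absolute integrability of
\[ (x,y) \mapsto P(z,x)^{1-s} P(w,y)^{1-s} \left( \frac{(x-y)^2}{(x^2+1)(y^2+1)}\right)^{-s} \]
with respect to $\mu(dx)\mu(dy)$. The factors $P(z,x)^{1-s}$ and $P(w,y)^{1-s}$ are bounded on compact sets and, when multiplied against $\mu(dx)\mu(dy) = \frac{dx\,dy}{\pi^2(x^2+1)(y^2+1)}$, decay like $|x|^{-2(1-s)-2}$ and $|y|^{-2(1-s)-2}$ at infinity, so tails cause no problem for $s\in(0,1/2)$. The only remaining concern is the diagonal singularity $|x-y|^{-2s}$, which is locally integrable in $\mathbb R^2$ precisely because $2s<1$. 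Once Fubini applies, performing the $y$-integration first and invoking Proposition~\ref{prop:prop1} (or equivalently just recognizing the inner integral as the convolution already accounted for in $H_s$) yields the claimed identity.

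The only real obstacle is the Fubini justification near the diagonal $\{x=y\}$, but $s\in(0,1/2)$ makes $|x-y|^{-2s}$ locally $L^1$ in two dimensions, so this reduces to a straightforward tail-versus-diagonal estimate; no deeper ingredient is needed. Notice also that the symmetric role of $z$ and $w$ in the final expression is a consistency check: $H_s(z,w)=H_{1-s}(w,z)$ follows by a second application of the intertwining formula (to $P(w,y)^{1-s}$ with $s$ replaced by $1-s$, valid in the same range after extending by symmetry in $s\leftrightarrow 1-s$, which corresponds to interchanging the two integration variables in the final double integral).
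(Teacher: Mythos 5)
Your proposal is correct and follows essentially the same route as the paper: substitute the intertwining formula of Lemma~\ref{lem:intertwin} for the $s$-power factor in the representation of Proposition~\ref{prop:prop1} and combine the two integrals (the kernel's symmetry in $x,y$ accounts for the labelling in the statement). One small remark: since the integrand is nonnegative, Tonelli's theorem makes the interchange automatic, so no integrability estimate is needed; in fact your stated decay rate is slightly off because $P(z,x)=\frac{\Im(z)}{|x-z|^2}(x^2+1)$ is bounded in $x$ (tending to $\Im(z)$) rather than decaying, but this is immaterial to the argument.
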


Step 3. \ 
 \begin{lemma}
 $k_s (x,y)$ is a positive definite kernel on $\mathbb R$. 
 \end{lemma}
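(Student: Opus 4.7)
The plan is to factor $k_s$ as a product of two manifestly positive definite kernels. Write
\[
k_s(x,y) \;=\; (x^{2}+1)^{s}(y^{2}+1)^{s}\,|x-y|^{-2s} \;=\; \phi(x)\phi(y)\,K(x,y),
\]
where $\phi(x):=(x^{2}+1)^{s}$ and $K(x,y):=|x-y|^{-2s}$. The rank one factor $\phi(x)\phi(y)$ is trivially positive semi-definite, so by the Schur product theorem it suffices to show that $K$ is a positive definite kernel on tuples of distinct points of $\mathbb{R}$.

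For this I would use the elementary gamma identity
\[
|z|^{-2s} \;=\; \frac{1}{\Gamma(s)}\int_{0}^{\infty} t^{s-1}\,e^{-t z^{2}}\,dt, \qquad z\neq 0,\; s>0,
\]
obtained by the change of variable $u=tz^{2}$ in the definition of $\Gamma(s)$. Applied to $z=x-y$, this exhibits $K(x,y)$ as the integral, with respect to the positive measure $t^{s-1}\,dt/\Gamma(s)$ on $(0,\infty)$, of the Gaussian kernels $G_{t}(x,y):=e^{-t(x-y)^{2}}$. Each $G_{t}$ is positive definite by Bochner's theorem, since
\[
G_{t}(x,y) \;=\; \int_{\mathbb{R}} \frac{e^{-\xi^{2}/(4t)}}{\sqrt{4\pi t}}\,e^{i\xi x}\,\overline{e^{i\xi y}}\,d\xi
\]
is the Fourier transform of a (strictly positive) Gaussian measure. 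Consequently, for any distinct $x_{1},\dots,x_{n}\in\mathbb{R}$ and any $c_{1},\dots,c_{n}\in\mathbb{C}$,
\[
\sum_{i,j=1}^{n} c_{i}\overline{c_{j}}\,K(x_{i},x_{j}) \;=\; \frac{1}{\Gamma(s)}\int_{0}^{\infty} t^{s-1}\biggl(\sum_{i,j} c_{i}\overline{c_{j}}\,G_{t}(x_{i},x_{j})\biggr)\,dt \;\geq\; 0,
\]
since the inner sum is nonnegative for every $t>0$.

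Multiplying by the rank one kernel $\phi(x)\phi(y)$ then gives $\sum_{i,j} c_{i}\overline{c_{j}}\,k_{s}(x_{i},x_{j})\geq 0$ for distinct points, which is the required positive definiteness. The only conceptual subtlety, and the one place where I would be careful, is the diagonal singularity $k_{s}(x,x)=+\infty$: this does not affect positive definiteness as a kernel on distinct points, and in the use made of the lemma in Step~3 of the proof of Theorem~\ref{thm:embeddable} the kernel is integrated against the continuous densities $\varphi_{s}(x)\varphi_{s}(y)/((x^{2}+1)(y^{2}+1))$; because $2s<1$ the singularity $|x-y|^{-2s}$ is locally integrable on $\mathbb{R}^{2}$, so the nonnegativity passes to the double integral either by Riemann sum approximation or, more cleanly, by applying Fubini directly to the Gaussian representation of $K$, in which the integrand is everywhere finite and nonnegative.
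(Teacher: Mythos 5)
Your proof is correct, but it takes a genuinely different route from the paper's. The paper never isolates the Riesz factor $|x-y|^{-2s}$: it uses the identity $(x^2+1)(y^2+1)-(xy+1)^2=(x-y)^2$ to write $k_s$ as the limit $r\to 1^-$ of $k^{(r)}_s(x,y)=\bigl(1-r\,(xy+1)^2/((x^2+1)(y^2+1))\bigr)^{-s}$, notes that $(xy+1)^2/((x^2+1)(y^2+1))$ is positive definite (Schur products of the rank-one kernel $1/((x^2+1)(y^2+1))$ with $(xy)^2+2xy+1$), expands $(1-u)^{-s}$ in a binomial series with nonnegative coefficients, and lets $r\to1^-$. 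You instead factor $k_s(x,y)=\phi(x)\phi(y)\,|x-y|^{-2s}$ and prove positive definiteness of the Riesz kernel via the subordination formula $|z|^{-2s}=\Gamma(s)^{-1}\int_0^\infty t^{s-1}e^{-tz^2}\,dt$ together with Bochner for the Gaussians; this makes transparent that the lemma is just the classical positive definiteness of $|x-y|^{-2s}$ for $0<2s<1$, conjugated by a rank-one factor. Both arguments share the degenerate feature $k_s(x,x)=+\infty$, so the finite-sum statement is only meaningful off the diagonal, and the real content is the integrated inequality in Step~3 of Theorem~\ref{thm:embeddable}; you correctly flag this. The paper's $r$-regularization is built precisely so that Step~3 can run with bounded continuous kernels (Riemann sums for $k^{(r)}_s$, then convergence theorems in $r$), whereas with your lemma Step~3 must go through the Gaussian representation; that Fubini interchange is legitimate, but the correct justification is absolute integrability (Tonelli), supplied by the bound $\iint |\varphi_s(x)\varphi_s(y)|\,k_s(x,y)/((x^2+1)(y^2+1))\,dx\,dy\le\sum_{i,j}|c_i||c_j|H_s(z_i,z_j)<\infty$ already established in the paper --- not ``nonnegativity of the integrand,'' since $\varphi_s$ changes sign --- and one should also note that $\varphi_s(x)(x^2+1)^{s-1}\in L^1(\mathbb R)$ (because $s<1/2$), which is what makes the Bochner step for each fixed Gaussian rigorous. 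With these small clarifications your argument is a clean and somewhat more conceptual alternative to the paper's series-plus-regularization proof.
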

 
 \begin{proof}
 For $r \in (0,1)$, let 
 \[ k^{(r)}_s (x,y) := \left(1-r \frac{(xy + 1)^2}{(x^2 + 1)(y^2 + 1)}\right)^{-s}. \]
 
 Since $(x,y) \mapsto \frac{1}{(x^2 + 1)(y^2 + 1)}$ and $(x,y) \mapsto (xy)^2 + 2xy + 1$ are both positive definite kernels on $\mathbb R$, 
 $(x,y) \mapsto \frac{(xy + 1)^2}{(x^2 + 1)(y^2 + 1)}$ is also a positive definite kernel on $\mathbb R$. 
 
 By the Taylor expansion, 
 \[ (1-x)^{-s} = \sum_{n=0}^{\infty} a_n x^n, \ |x| < 1,  \]
 for $a_n \ge 0, n = 0,1, \cdots$. 
 Hence $ k^{(r)}_s (x,y)$ is a positive definite kernel on $\mathbb R$. 
 Since $\lim_{r \to 1-0}  k^{(r)}_s (x,y) = k_s (x,y)$, 
$ k^{(r)}_s (x,y)$ is a positive definite kernel on $\mathbb R$. 
 \end{proof}
 
By this and the quadrature rule for the Riemannian integral for continuous functions, 
it holds that for every $a < b$ and $r \in (0,1)$, 
 \[ \iint_{[a,b]^2} \frac{\varphi_s (x) \varphi_s (y) k^{(r)}_s (x,y)}{(x^2 + 1)(y^2 + 1)} dxdy \ge 0. \]

Since $0 \le k^{(r)}_s (x,y) \le k_s (x,y)$, 
 \[ \iint_{\mathbb R^2} \frac{   |\varphi_s (x) \varphi_s (y)|  k^{(r)}_s (x,y)}{(x^2 + 1)(y^2 + 1)} dxdy \le \iint_{\mathbb R^2} \frac{   |\varphi_s (x) \varphi_s (y)|  k_s (x,y)}{(x^2 + 1)(y^2 + 1)} dxdy \le \sum_{i.j=1}^{n} |c_i| |c_j| H_s (z_i, z_j) < + \infty.  \]
By the Lebesgue convergence theorem,  
we see that for every $r \in (0,1)$, 
 \[ \iint_{\mathbb R^2}  \frac{  \varphi_s (x) \varphi_s (y)  k^{(r)}_s (x,y)}{(x^2 + 1)(y^2 + 1)} dxdy = \lim_{n \to \infty} \iint_{[-n,n]^2} \frac{\varphi_s (x) \varphi_s (y) k^{(r)}_s (x,y)}{(x^2 + 1)(y^2 + 1)} dxdy \ge 0. \]
and furthermore, 
 \[ \iint_{\mathbb R^2}  \frac{  \varphi_s (x) \varphi_s (y)  k_s (x,y)}{(x^2 + 1)(y^2 + 1)} dxdy = \lim_{r \to 1-0} \iint_{\mathbb R^2}  \frac{  \varphi_s (x) \varphi_s (y)  k^{(r)}_s (x,y)}{(x^2 + 1)(y^2 + 1)} dxdy \ge 0. \]
This completes the proof.

\section{Code snippet for Taylor expansions of $f$-divergences}\label{sec:maximataylor}
We provide below a code using the {\sc Maxima}\footnote{\url{https://maxima.sourceforge.io/}} software to calculate the truncated Taylor series of $f$-divergences between two Cauchy distributions.

{\small
\begin{verbatim}
Cauchy(x,l,s) := (s/(%pi*((x-l)**2+s**2)));
KLCauchy(l1,s1,l2,s2) := log(((s1+s2)**2+(l1-l2)**2)/(4*s1*s2)) ;
l1:0;
s1:1;
l2:0.6;
s2:6/5;
k:40;
testcond: (9/16)-(l2**2+(s2-(4/5))**2);
print("Is condition>0 for Taylor expansion?:",testcond);
Cauchy1:Cauchy(x,l1,s1);
Cauchy2:Cauchy(x,l2,s2);
print("Exact KL");
KLCauchy(l1,s1,l2,s2);
ExactKL:float(%);
print("KL numerical integration:");
kla: quad_qagi( Cauchy1*log(Cauchy1/Cauchy2), x, minf, inf,'epsrel=1d-10);
NumKL:float(kla[1]);

for i:2 while (i<=k)  
do( r[i]: quad_qagi( (Cauchy1-Cauchy2)**i/Cauchy2**(i-1), x, minf, inf,'epsrel=1d-10),
 print(i,r[i][1]));

print("KL Taylor truncated series:");
TaylorKL: sum( (((-1)**i)/i)*r[i][1], i, 2, k);
print("Exact:",ExactKL,"Numerical:",NumKL,"Trunc. Taylor", TaylorKL);
print("Error |Taylor-Exact|",abs(TaylorKL-ExactKL));
\end{verbatim}
}

\end{document}